\documentclass[12pt,twoside,beltcrest]{ociamthesis} 

\usepackage{amssymb}
\usepackage[rightcaption]{sidecap}
\usepackage{graphicx}
\usepackage{subcaption}
\usepackage{amsthm,amsmath,amssymb,mathrsfs}
\usepackage{latexsym}
\usepackage[english]{babel}
\usepackage{verbatim}
\usepackage{color}
\usepackage{hyperref}
\usepackage{cancel}
\usepackage{cite}
\usepackage{dsfont}
\usepackage{comment}

\let\origdoublepage\cleardoublepage\newcommand
{\clearemptydoublepage}{
  \clearpage
  {\pagestyle{empty}\origdoublepage}
}

\newcommand{\bea}{\begin{eqnarray}}
\newcommand{\eea}{\end{eqnarray}}
\newcommand{\beq}{\begin{equation}}
\newcommand{\eeq}{\end{equation}}
\newcommand{\nn}{\nonumber}
\newcommand{\bal}{\begin{align}}
\newcommand{\eal}{\end{align}}
\newcommand{\bit}{\begin{itemize}}
\newcommand{\eit}{\end{itemize}}

\newcommand{\s}{\sigma}

\newcommand{\meas}{{\mu(\nu)}}
\newcommand{\rar}{\rightarrow}

\newcommand{\abs}[1]{\vert #1\vert}
\newcommand{\avg}[1]{\left\langle #1 \right\rangle}
\newcommand{\expect}[2]{\left\langle #1 \right\rangle_{#2}}

\newcommand{\half}{{\frac{1}{2}}}

\newcommand{\quarter}{{\frac{1}{4}}}
\newcommand{\third}{{\frac{1}{3}}}

\newcommand{\threehalves}{{\frac{3}{2}}}

\newcommand{\dha}{{d_H}}
\newcommand{\ds}{{d_s}}

\newcommand{\La}{\Lambda}

\newcommand{\C}[1]{{\mathcal{#1}}}

\newcommand{\R}[1]{{\mathrm{#1}}}

\newcommand{\BB}[1]{{\mathbb{#1}}}

\newcommand{\tr}{\mathscr{T}}
\newcommand{\trs}{\mathscr{F}}

\newcommand{\leaveout}[1]{}

\newcommand{\CGW}{{\mathrm{GW}}}

\newtheorem{theorem}{Theorem}
\newtheorem{lemma}[theorem]{Lemma}

\newtheorem{assume}[theorem]{Assumption}





\title{Spectral dimension in graph models of causal quantum gravity}     
\author{Georgios Giasemidis}             
\college{Kellogg College}                    

\degree{Doctor of Philosophy}     
\degreedate{Trinity 2013}             

\begin{document}

\baselineskip=21pt plus1pt

\setcounter{secnumdepth}{3}
\setcounter{tocdepth}{3}

\maketitle                                          
\clearemptydoublepage
\vspace*{\fill}
\begin{quote}

{\emph{The effort to understand the Universe is one of the very few things that lifts human
life a little above the level of farce, and gives it some of the grace of tragedy.
}}
\begin{flushright}
--- Steven Weinberg (1976) \cite{Weinberg:1993tf}
\end{flushright}

\end{quote}
\vspace*{\fill}
\clearemptydoublepage
\begin{abstractseparate}
The phenomenon of scale dependent spectral dimension has attracted special interest in the quantum gravity community over the last eight years. It was first observed in computer simulations of the causal dynamical triangulation (CDT) approach to quantum gravity and refers to the reduction of the spectral dimension from 4 at classical scales to 2 at short distances. 
Thereafter several authors confirmed a similar result from different approaches to quantum gravity. 

Despite the contribution from different approaches, no analytical model was proposed to explain the numerical results as the continuum limit of CDT. In this thesis we introduce graph ensembles as toy models of CDT and show that both the continuum limit and a scale dependent spectral dimension can be defined rigorously. First we focus on a simple graph ensemble, the random comb. It does not have any dynamics from the gravity point of view, but serves as an instructive toy model to introduce the characteristic scale of the graph, study the continuum limit and define the scale dependent spectral dimension. 

Having defined the continuum limit, we study the reduction of the spectral dimension on more realistic toy models, the multigraph ensembles, which serve as a radial approximation of CDT. We focus on the (recurrent) multigraph approximation of the two-dimensional CDT whose ensemble measure is analytically controlled. The latter comes from the critical Galton-Watson process conditioned on non-extinction. Next we turn our attention to transient multigraph ensembles, corresponding to higher-dimensional CDT. Firstly we study their fractal properties and secondly calculate the scale dependent spectral dimension and compare it to computer simulations. We comment further on the relation between Ho\v rava-Lifshitz gravity, asymptotic safety, multifractional spacetimes and CDT-like models.
\end{abstractseparate}

\clearemptydoublepage
\begin{acknowledgements}
Above all, I would like to thank my D.Phil. supervisor, Dr. John F. Wheater, for introducing me to this interesting area of research and for his constant guidance and assistance during the projects. This thesis would not have been completed without his critical insights. I am also thankful to him for being patient with the difficulties I faced in this field and for giving me the freedom to work on other projects too.

It is my pleasure to thank my two collaborators and friends Stefan Zohren and Miguel Tierz, with whom I had an enjoyable and fertile  collaboration. They both guided and advised me at different stages of the D.Phil. and have left a huge impact on my learning and research. They successfully passed on their enthusiasm about research and were always a source of motivation for me. Max Atkin also deserves a thank you for many clarifications and collaboration during my first year.

My sincere thanks also go to a very good friend and physicist, Nikos Kaplis, for the endless and fruitful discussions about physics and other topics as well. 
I also enjoyed the friendly environment of my office which I consecutively shared with Andrei Constantine, Maxime Gabella, Cyril Matti, Niels Martens, Neil Robinson and Curt von Keyserlingk, who I also thank.

Furthermore, I feel indebted to my family, for their mental and practical support from the very beginning of my academic years. It is my parents' financial generosity that fully supported me during my first year when no other source of funding was available. Without their financial backing I would have never taken the decision to start this program. 

Regarding funding, I would like to acknowledge the scholarships by the A. S. Onassis Public Benefit Foundation grant F-ZG 097/ 2010-2011 and A. G. Leventis Foundation for covering my main expenses and university/college fees respectively from 2010 to 2013. 

Foremost, at least a big thank you to Vicky, she knows why.

\vspace{2cm}

To Charalampos, Georgia, \\
Konstantina, Popi, Nikos, Aggelos,  \\
and Vicky.
\end{acknowledgements}
\clearemptydoublepage

\begin{romanpages}                       
\tableofcontents                               
\listoffigures                                      
\end{romanpages}                          

\begin{chapter}{Introduction}

One of the most challenging and ongoing projects in theoretical physics is the problem of reconciling general relativity with quantum mechanics, known as quantum gravity. 

On one hand, our understanding about gravity is underlined in general relativity, which describes gravitational phenomena at large scales, from solar-sized systems to (clusters of) galaxies. General relativity provides us with a powerful mathematical framework to further describe how the universe evolved and galaxies were formed with great experimental accuracy. Within this framework gravity is the space-time itself.  

On the other hand, quantum mechanics explains how particles interact at sub-atomic level.  The quantum theory not only gave us a new mathematical framework, but also a different conceptual interpretation of nature at small scales. Quantum mechanics as formulated in Schr\"odinger's equation is non-relativistic, since time is treated as an external parameter. Its relativistic extension, quantum field theory (QFT), met with extensive success in many fields of physics, most importantly in particle and condensed matter physics. 

Quantum gravity therefore aims to understand how gravity ``behaves" at tiny scales, where it was soon clear that general relativity is not well-behaved under conventional QFT methods \cite{'tHooft:1974bx}.  It is not only a difficult mathematical problem but a conceptual one as well, because physicists have to understand the ``nature'' of space-time at the smallest scales. Since then, several approaches to quantum gravity have been proposed, each starting from different first principles of theoretical physics (for an excellent historical review on the subject consult \cite{Rovelli:2000aw, Rovelli:2004qg}). The difference usually resides on the degrees of freedom to be quantised. For example, in string theory, the fundamental degrees of freedom are string-like configurations, instead of the metric field, which are quantised using the postulations of QFT.  
Others, e.g. loop quantum gravity, consider the metric field of general relativity on equal footing with quantum mechanics and it is the metric field which is quantised. Other non-perturbative formulations include the lattice regularisation (e.g. dynamical triangulation), the asymptotic safety conjecture and many more. 
However, none of these approaches is conclusive simply because quantum gravity effects are believed to become measurable close to the Planck scale, an energy scale of order $10^{15}$ times bigger than current scales probed at the LHC
\footnote{This statement might be placed in contrast to analogue models, where quantum gravitational effects can be imitated by condensed matter systems \cite{Visser:2007du}.}.
%

Despite the diverse proposals, string theory and loop quantum gravity have gained considerable popularity and dominated the research debate on the subject. For example, string theory unveiled a rich mathematical structure, with lots of symmetries and dualities, which attempts to solve the problem of quantising gravity in a broader unified manner including the standard model of particle physics. Despite its beauty, the formulation comes with a high cost; string theorists have to reconsider the dimensionality of the world, which might be ten or eleven dimensional \cite{Zwiebach:2004tj}
\footnote{We choose on purpose to focus on and mention only the higher-dimensional space-time, ignoring other assumptions of the theory, like supersummetry.}. 
That is, our four-dimensional universe is embedded into a higher-dimensional geometry. The six or seven dimensions are still present in our world but are so tiny and curled up (compactified) that they have not actually been detected by any experimental measurement. The idea of extra dimensions is not new in physics and goes back to the pioneering work of Kaluza and Klein who tried to unify gravitation and electromagnetism by considering a five-dimensional space-time \cite{Kaluza:1921, Klein:1926}. Another modern idea for a highly-wrapped fifth dimension is considered in the so-called Randall-Sundrum models \cite{Randall:1999ee}.  The conclusion of the discussion so far is that although our world looks four-dimensional macroscopically, the description of quantum gravity phenomena might alter the number of dimensions microscopically. 

All the above examples propose a higher-dimensional universe. However, there are also theoretical proposals which indicate a lower-dimensional space-time at very high energies. For example, black hole physics is an area where quantum gravity effects become important and our standard notion of dimensionality might be reviewed too. 't Hooft suggested the possibility of \textit{dimensional reduction} in a black hole, where the observable degrees of freedom can be interpreted as Boolean numbers on a $2+1$-dimensional lattice \cite{'tHooft:1993gx}. A second example comes from the asymptotic safety scenario line of research, where the authors reported that the existence of a non-Gaussian UV fixed point for gravity modifies the graviton propagator at Planck scales in a way which imitates a two-dimensional theory \cite{Lauscher:2001ya}. Then they suggested that space-time undergoes a dimensional reduction from $4$ at large scales to $2$ at the Planck scale
\footnote{We review the asymptotic safety scenario and the argument for dimensional reduction in chapter \ref{physics}.}.

In order to be as rigorous as possible we need to adopt new definitions for the effective dimension of a geometry. Such definitions are commonly used in the discretised models of quantum geometry \cite{Ambjorn:1997}, and are known as the spectral and the Hausdorff dimensions (to be defined later). The former corresponds to the dimension that a random walker experiences in a diffusion process. In 2005, Ambj{\o}rn, Jurkiewicz and Loll studied diffusion on four-dimensional discretised geometries, defined according to the Causal Dynamical Triangulations (CDT) approach to quantum gravity, and they found that the spectral dimension varies from $4$ in the classical limit to $2$ at scales where quantum effects should be important \cite{Ambjorn:2005db}.  
Following this result, Launcher and Reuter computed the spectral dimension in the context of asymptotic safety and also found a reduction of the spectral dimension from $4$ to $2$ \cite{Lauscher:2005qz}. The running of the spectral dimension seems to serve as a dynamical mechanism that might regulate general relativity at short scales so that the theory possesses a non-Gaussian fixed point.

From this point onwards, many other approaches of quantum gravity reported that the spectral dimension is not fixed but varies with the scale \cite{Carlip:2009kf, Carlip:2009km}. We should underline that this change happens dynamically and no two-dimensional structure is embedded into the four-dimensional space-time. Therefore this kind of dimensional reduction is in contrast to string theory compactifications of the higher-dimensional geometry. This phenomenon is termed as the \textit{scale-dependent} spectral dimension or \textit{dynamical dimensional reduction}.

Despite the accumulation of results for dynamical dimensional reduction, the first evidence coming from numerical results on four-dimensional CDT had a significant drawback;  the outcome could be an artefact of computer simulations. Since then, little progress has been made in analytically understanding the numerical results coming from the CDT approach and showing that they remain valid when taking the continuum limit. The central goal of this thesis is twofold. First, we intend to define the continuum limit of discretised objects and show how this definition accommodates a scale dependent spectral dimension. Second we aim to understand the mechanism that is hidden behind the numerical results of higher-dimensional CDT.

Our methodology combines ideas from quantum geometry, statistical physics and graph theory and consists of new mathematical techniques in the study of quantum gravity, e.g. random walks on random graphs, branching stochastic models (Galton-Waltson processes) and probability.

This thesis is organised as follows. In chapter \ref{motivation}, we give the motivation for conducting the current research. Since it is important to understand the background, we review the CDT approach to quantum gravity and its main results, focusing on the scale dependent spectral dimension, due to its importance to the rest of the thesis. 
We devote chapter \ref{graphs} to presenting definitions and notions that will be essential in understanding the following computations. We first give an introduction to graphs and graph ensembles, spectral and Hausdorff dimensions of graphs and elaborate on the generating function techniques for extracting the spectral dimension of graphs. We then discuss the link between gravity models and graph ensembles and concentrate on a special class of graphs, the Galton-Watson trees. After these two introductory chapters, we are now equipped appropriately to present the novel computations.

Chapter \ref{combs} is based on article \cite{Atkin:2011ak}. We start our research for scale dependent spectral dimension on graph ensembles working with a simple class of trees, the combs. After reviewing basic facts about combs, we formulate the continuum limit of such objects which exhibit  a spectral dimension which varies with the scale of the diffusion. Next, we apply this formulation to three comb ensembles and show that it does qualitatively mimic the scale dependent spectral dimension phenomenon observed in CDT. The first two toy models, the simple measure and the power-law measure, exhibit one scale where the spectral dimension changes, whereas the third toy model random comb presents two characteristic scales. In the latter model the spectral dimension exhibits an intermediate plateau of constant spectral dimension, the apparent spectral dimension. We conclude this chapter by commenting on the reasons we should go beyond random combs and work with more realistic graph models. 

The content of chapter \ref{multigraphs} originates from article \cite{Giasemidis:2012rf} and proceedings \cite{Giasemidis:2012pu}. Following the conclusions of the previous chapter, we proceed to multigraph ensembles. We begin with the relationship among (two-dimensional) causal triangulations, Galton-Watson trees and (recurrent) multigraphs and present basic properties of the latter. First, we apply the continuum limit formalism, developed in the previous chapter, to a particular recurrent multigraph ensemble, which exhibits dynamical reduction of the spectral dimension from 2 at large distances to 1 at small scales. We comment on the physical interpretation of this result. Next, we continue with the transient multigraph ensembles, which are considered as ``radial" approximations of higher-dimensional CDT. Before applying the continuum formalism, we explore its properties and actually derive an important relation between the spectral, Hausdorff dimensions and the anomalous exponent of graph resistance. In the absence of analytical results of higher-dimensional CDT, we introduce a few assumptions which determine the measure of transient multigraphs. We justify the origin of those ansatz and apply the continuum limit which results in a running spectral dimension varying from 4 in the IR to 2 in the UV limit, as observed in CDT simulations. 

Chapter \ref{physics} includes material that first appeared in articles \cite{Giasemidis:2012qk} and \cite{Giasemidis:2012pv}. In essence, this chapter discusses the physical implications of our computations so far. We start by extending the continuum formalism and derive the return probability density of the four-dimensional model considered in the previous chapter. The congruence with numerical simulations assures us that both the multigraph approximation and the assumptions are correct and reflect the correct degrees of freedom which are responsible for the reduction of the spectral dimension in four-dimensional CDT. Beyond this, we also consider the multigraph approximation of three-dimensional CDT. Our results reproduce the reduction from 3 in the IR to 2 in the UV as is apparent in the Monte-Carlo simulations of three-dimensional CDT, but seemingly disagree with the applied fit to the data. However, studying the data in more detail, we show that our model fits well. We devote the last section of the chapter to present a plethora of evidence for dynamical dimensional reduction from other proposals of quantum gravity. We emphasise the similarities, potential connections, but also the (fundamental) differences amongst them. 

Finally, we summarise and  conclude in chapter \ref{conclusions}. In appendices \ref{Appendix_combs} and \ref{Appendix_multigraphs} we provide complementary material regarding chapters \ref{combs} and \ref{multigraphs} respectively. 

\end{chapter}
\clearemptydoublepage
\begin{chapter}{Motivation}
\label{motivation}

This chapter serves as a motivation for the research presented in this thesis. In order to present the results in a self-contained way  we start by briefly discussing the ``sum over all histories" approach to quantum gravity and give reasons for a lattice regularisation of the theory, the ``dynamical triangulations'' (DT) method. We explain why DT fails to serve as a four-dimensional theory of quantum gravity. We then review the \textit{Causal Dynamical Triangulation} (CDT) approach to quantum gravity and discuss the consequences of the theory, e.g. its phase diagram and the phenomenon of \textit{scale dependent spectral dimension}. The latter is at the core of this thesis and the main motivation for what follows. 

\section{The ``sum over all histories" approach to quantum gravity}

In a very instructive paper \cite{Misner:1957fq} in 1957, Charles Misner outlined the possible approaches one has to follow in order to ``attack" the quantum gravity problem. Among those approaches, he introduced and elaborated on the possibility of defining quantum gravity through the functional integral formalism which had been completed and popularised by Richard Feynman working on quantum electrodynamics. According to C. Misner  \cite{Misner:1957fq} ``\textit{the problem of formulating the Feynman quantisation of general relativity was originally suggested by Professor J. A. Wheeler, whose idea was simply to write }
\beq \label{Wheelers_suggestion}
\int \exp\left ((i/\hbar) (\text{Einstein action}) \right ) d (\text{field theories})\text{''}.
\eeq

According to the path integral formalism of quantum mechanics, one has to take into account all possible histories that a system follows between the initial and final states and weight every history with a phase which is proportional to the classical action of the theory. This short description of path integral formalism justifies Wheeler's suggestion \eqref{Wheelers_suggestion} as a possible way to quantise the gravitational field. This is the so called ``\textit{sum over all histories}" line of research which attempts to define the theory. 

From the beginning, C. Misner realised quickly that such an object is difficult to handle. Later, 't Hooft and Veltman, in their seminal paper \cite{'tHooft:1974bx}, showed that gravity coupled to matter is perturbatively non-renormalisable at one-loop level, with pure gravity being renormalisable, and a later result by Goroff and Sagnotti \cite{Goroff:1985sz}  confirmed that pure gravity is non-renormalisable too at two-loop level. To tame the divergences of quantum gravity, Stelle considered higher-derivative theories of gravity. The latter were perturbatively renormalisable but non-unitary \cite{Stelle:1977hd}
\footnote{A modern treatment which remedies both the power-counting renormalisability and the unitarity problems has been proposed by Ho\v{r}ava who introduced gravity models with anisotropic scaling between space and time \cite{Horava:2009uw}. We will review this approach in chapter \ref{physics}.}.
Perturbative non-renormalisability led Weinberg  to the conjecture that gravity might have a non-Gaussian fixed point, the so called \textit{asymptotic safety scenario} \cite{Weinberg:1979ud}. At the same time, S. Hawking was advocating the \textit{path integral approach to quantum gravity} \cite{Hawking:1978jz, Hawking:1979pi, Hawking:1993eq}, working in the Euclidean signature to avoid some of the pathologies of the path integral in the Lorentzian signature.    

\section{A lattice regularisation of the path integral - Dynamical Triangulations}
Attempts to study the gravitational path integral went on, but it was now clear that a non-perturbative way to regulate the theory was needed. This is provided by the \textit{lattice regularisation}. This research program started with the Euclidean counterpart of \eqref{Wheelers_suggestion} defined as \cite{Hawking:1978jz, Hawking:1979pi, Hawking:1993eq} \beq \label{Eu_path_int} 
\hat Z = \int _{M} \C{D} [g_{\mu \nu}] e^{-\hat S_{EH}[g_{\mu \nu}]}
\eeq
where $M$ is a four-dimensional Riemannian manifold, $[g_{\mu \nu}]$ the diffeomorphism equivalent classes of metrics, and $\hat S_{EH} =  \frac{1}{16 \pi G_N}\int _{M} d^4y \sqrt{g} \left (2\Lambda - R(y) \right )$ is the Euclidean Einstein-Hilbert action. Lattice regularisation refers to the process of discretising the space-time with $N$ \textit{triangles} (for $d=2$), or higher simplices (for $d>2$), which are the building blocks of the lattice and have lattice spacing $a$, which serves as a regulator for the theory. Intuitively speaking, to recover the continuum space-time, we have to perform the \textit{continuum limit}, which corresponds to simultaneously taking the number of building blocks to infinity, $N\to\infty$, and the lattice spacing to zero, $a \to 0$, by keeping the total volume fixed. In contrast to other lattice field theories, e.g. the Ising model and QCD, where the lattice is fixed, in the current theory of gravity, the lattice is the space-time itself and  therefore is a \textit{dynamical triangulation} 
\footnote{We adopt the terminology ``triangulation'' to refer to the higher-dimensional, $d>2$, lattices too, which are composed of tetrahedra and four-simplices.}.
Within this research program, which is called dynamical triangulation (DT), the gravitational path integral \eqref{Eu_path_int} is approximated by a sum over all triangulated geometries, $\tr$, and takes the form
\beq \label{Z_triang}
\hat Z_{DT} = \sum _{\tr \in \trs} \frac{1}{C(\tr)} e^{-\hat S_{ER}(\tr)}
\eeq
where $\trs$ is the set of all triangulations, $C(\tr)$ is the order of the automorphism group of  $\tr$ and $\hat S_{ER}(\tr) = \lambda N_4 (\tr) - \nu N_2(\tr)$ is the four-dimensional Euclidean Einstein-Regge action. Here $\lambda$ and $\nu$ are the bare cosmological constant and bare inverse Newton's constant respectively. In Einstein-Regge action $N_4$ is the total number of four-simplices in the triangulation, which corresponds to the volume term, and $N_2$ the total number of triangles, where curvature resides (for more details on Regge calculus see \cite{Regge:1961px, Ambjorn:2012jv}). 

One of the main observables in this approach to quantum gravity is the dimensionality of the geometry. For example, the Hausdorff, $\dha$, and the spectral, $\ds$, dimensions probe different characteristics of the quantum geometry and unveil its fractal properties. The former is related to the volume growth of a ball of radius $R$, $|B_R| \sim R^{\dha}$, for large $R$, whereas the latter is related to a diffusion process on the quantum geometry (for a rigorous definition see section \ref{sdsd}).  

DT in two dimensions is analytically tractable, since the curvature term contributes a constant due to the Gauss-Bonnet theorem, and it was found that the Hausdorff dimension is $\dha= 4$; a value which seems ``unnatural'' and indicates that the quantum geometry is fractal \cite{Kawai:1993cj, Ambjorn:1995dg, Ambjorn:1995rg, Ambjorn:1998fd}. The source of the problem originates from the appearance of the so called baby universes and their domination in the continuum limit \cite{Ambjorn:1998xu,  Ambjorn:1998fd}.  
As a result of the proliferation of the baby universes the Hausdorff dimension increases from 2, which one would expect ``naturally", to 4. However, the spectral dimension of two-dimensional Euclidean gravity is $\ds = 2$, which demonstrates that the quantum geometry retains some two-dimensional features~
\footnote{This is proven in the physics literature under a few physical assumptions and has also been observed in computer simulations \cite{Ambjorn:1997jf, Ambjorn:1998fd}. A rigorous mathematical proof is missing. However, it was recently proven in \cite{Gurevich:2012rp} that the uniform infinite planar triangulation is recurrent, hence $\ds \leq2$.} 
\cite{Ambjorn:1997jf, Ambjorn:1998fd}.

In dimensions $d>2$, DT was studied mainly through computer simulations due to the lack of analytical tools \cite{Ambjorn:1991pq, Agishtein:1991cv, Ambjorn:1997}. In particular, in these higher-dimensional models of Euclidean quantum gravity one has to further explore the phase diagram of the theory and search for  values of the parameters where critical behaviour can be found. For example, looking at \eqref{Z_triang} in four-dimensions, there is a critical line $\lambda _c (\nu)$ such that for values $\lambda > \lambda _c (\nu)$ the partition function is well-defined, whereas for values $\lambda < \lambda _c(\nu)$ the partition function is divergent%
\footnote{This analysis is valid only under the assumption that the number of triangulations with fixed number four-simplices, $N_4$, is exponentially bounded. This assumption was confirmed by computer simulations \cite{Ambjorn:1994tza}.}. 
Critical behaviour can be found at the boundary, i.e. as $\lambda$ approaches $\lambda _c(\nu)$ from above, where the infinite volume limit, $N_4 \to \infty$, is achieved (figure \ref{4dDT_phase_diagram}). 
\begin{SCfigure}
\includegraphics[scale=0.5]{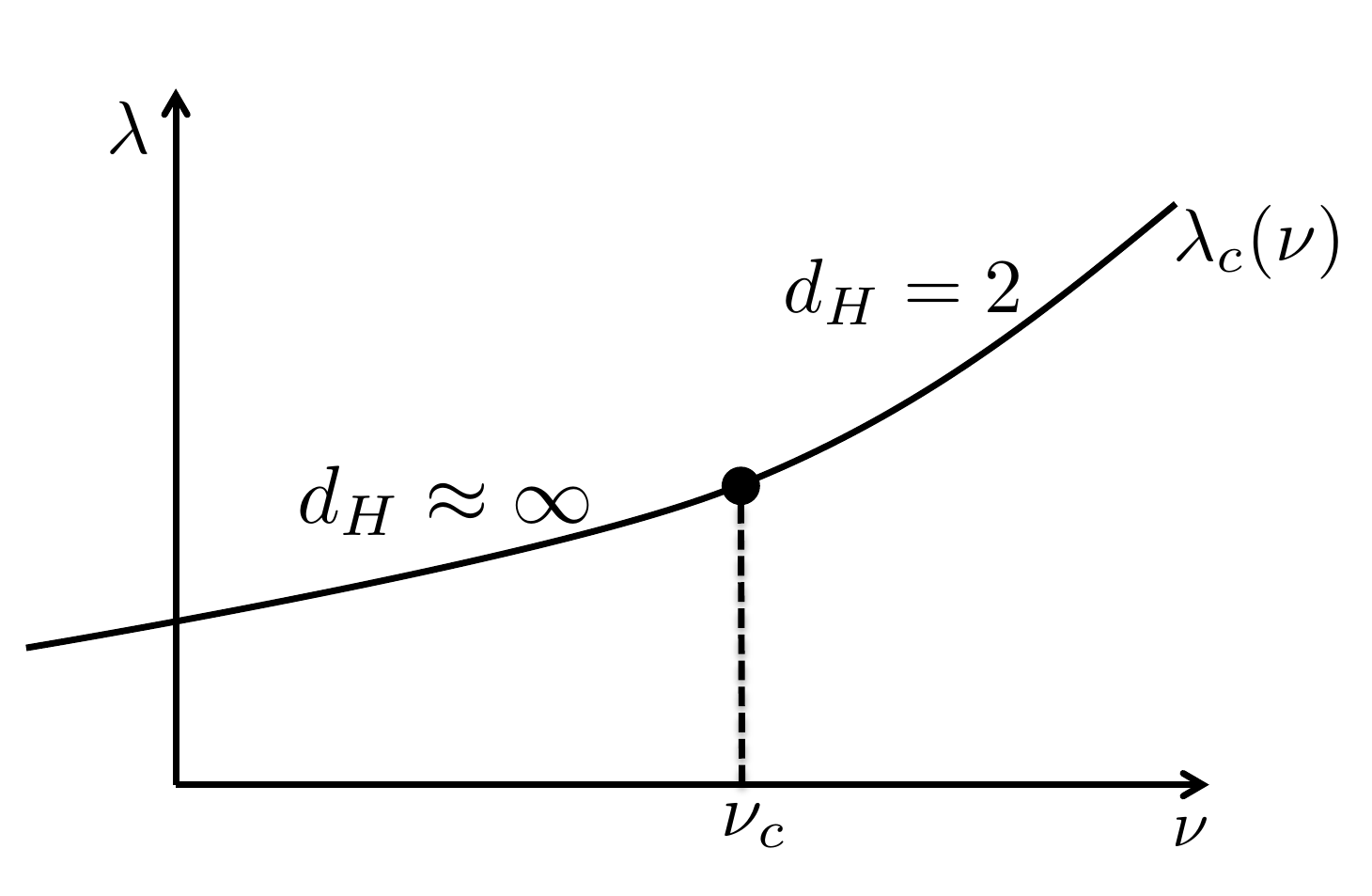}
\caption{A quantitative phase diagram of four-dimensional DT. The critical line is approached as $\lambda \searrow \lambda_c(\nu)$ from above \cite{Ambjorn:1997}.}
\label{4dDT_phase_diagram}
\end{SCfigure}
Along this line, there is a critical point $\nu_c$ which separates the two phases of the theory. For values $\nu < \nu_c$, the geometry is in the \textit{crumpled phase}, which is characterised by high connectivity and large Hausdorff dimension, i.e. $\dha \approx \infty$. For values $\nu > \nu_c$ along the critical line, the quantum geometry enters the \textit{branch polymer phase} which is dominated by branch polymer-like configurations with $\dha = 2$ \cite{Agishtein:1991cv, Ambjorn:1991pq, Ambjorn:1997}.  
Studying the phase transition at the point $\nu_c$, evidence was found  that it is of first order, a result which ruled out the possibility of defining a continuum limit of an extended geometry with finite Hausdorff dimension \cite{Bialas:1996wu, deBakker:1996zx}. 

In conclusion, even though DT started as a lattice approach to study non- perturbative aspects of non-critical bosonic string theory, it was later realised that it could also be a mechanism for regulating the four-dimensional theory and be a candidate method for quantising gravity. However, as we saw the Euclidean version of the model appears to fail to reproduce basic features of classical gravity, like the dimensionality and the emergence of a de-Sitter-like classical geometry.

\section{Causal Dynamical Triangulations}
As we already commented in the previous section, the main reasons that the DT program failed were \textit{i)} the absence of any Lorentzian feature in the theory, \textit{ii)} the existence and dominance of baby universes at the cut-off scale.  
To remedy this, J. Ambj\o rn and R. Loll,  introduced the Causal Dynamical Triangulation approach to gravity \cite{Ambjorn:1998xu}, which is a lattice definition of the gravitational path integral in the \textit{Lorentzian} signature, discretised by triangulations which have \textit{causal} structure, i.e. a \textit{global time foliation}. In particular, the basic ingredient is the formal functional integral of the metric field defined in the spirit of \eqref{Wheelers_suggestion}
\beq \label{Lor_path_int}
Z = \int _{M}\C{D} [g_{\mu \nu}] e^{i S_{EH}[g_{\mu \nu}]}.
\eeq
Following similar arguments as above, the triangulated counterpart takes the form
\beq
Z_{CDT} = \sum _{\tr \in \trs_c} \frac{1}{C(\tr)} e^{i S_{ER}(\tr)},
\eeq
where the sum is now restricted only on the set of triangulations with causal structure, $\trs_c$. A $d$-dimensional causal triangulation consists of spatial hyper-surfaces of fixed topology $\Sigma ^{d-1}$ which are triangulated by equilateral $d-1$ simplices with link length $a_s$, and labelled by discrete proper time $t_n$. Successive spatial hyper-surfaces are connected by $d$-simplices such that they are arranged in layers  (see figure \ref{causal_4simplices}). 
\begin{figure}
\begin{subfigure}[b]{.5 \textwidth}
\includegraphics[scale=0.3]{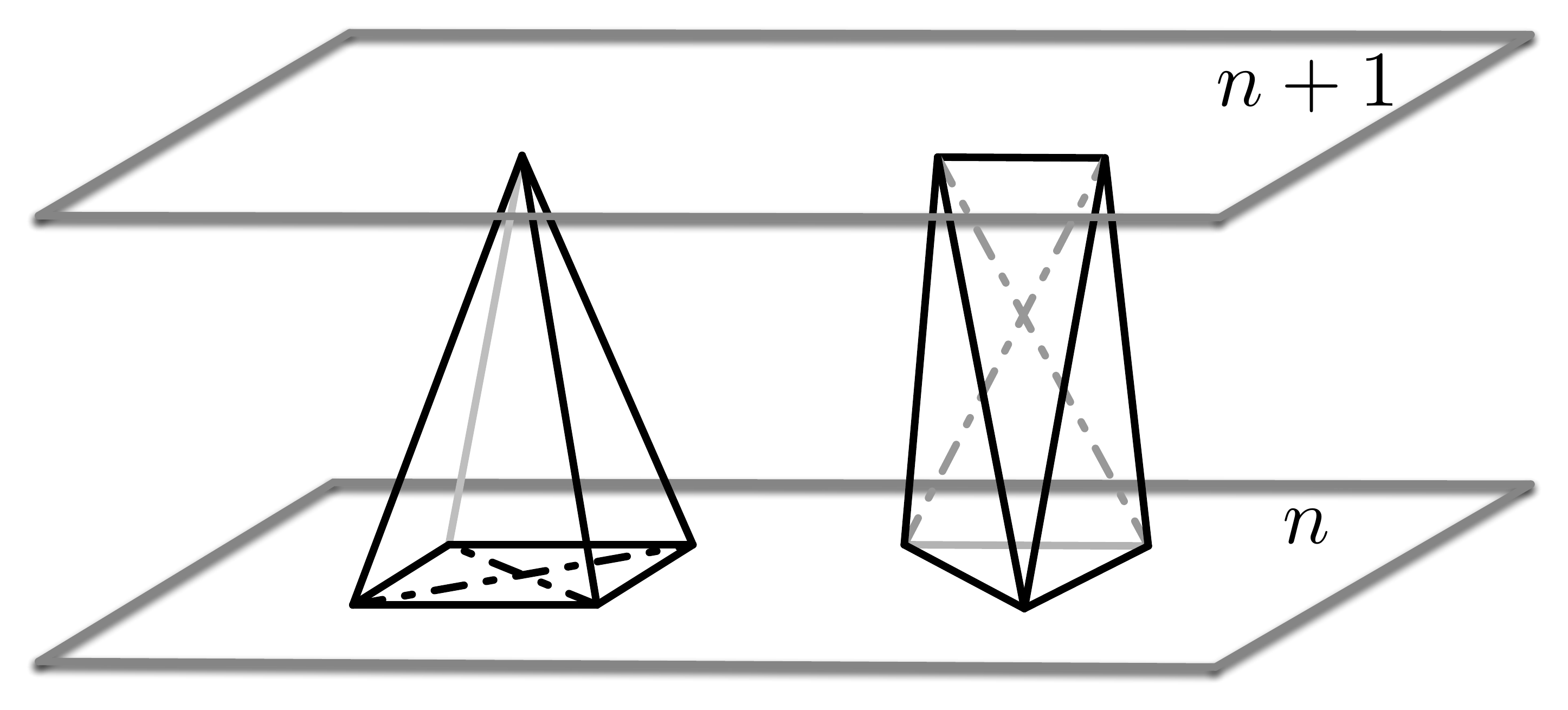}
\caption{The two building blocks of four-dimensional CDT; The (4,1) and (3,2) simplex are on the left and right respectively. Dotted lines correspond to three-simplices.}
\label{causal_4simplices}
\end{subfigure} \qquad
\begin{subfigure}[b]{.5 \textwidth}
\centering
\includegraphics[scale=0.2]{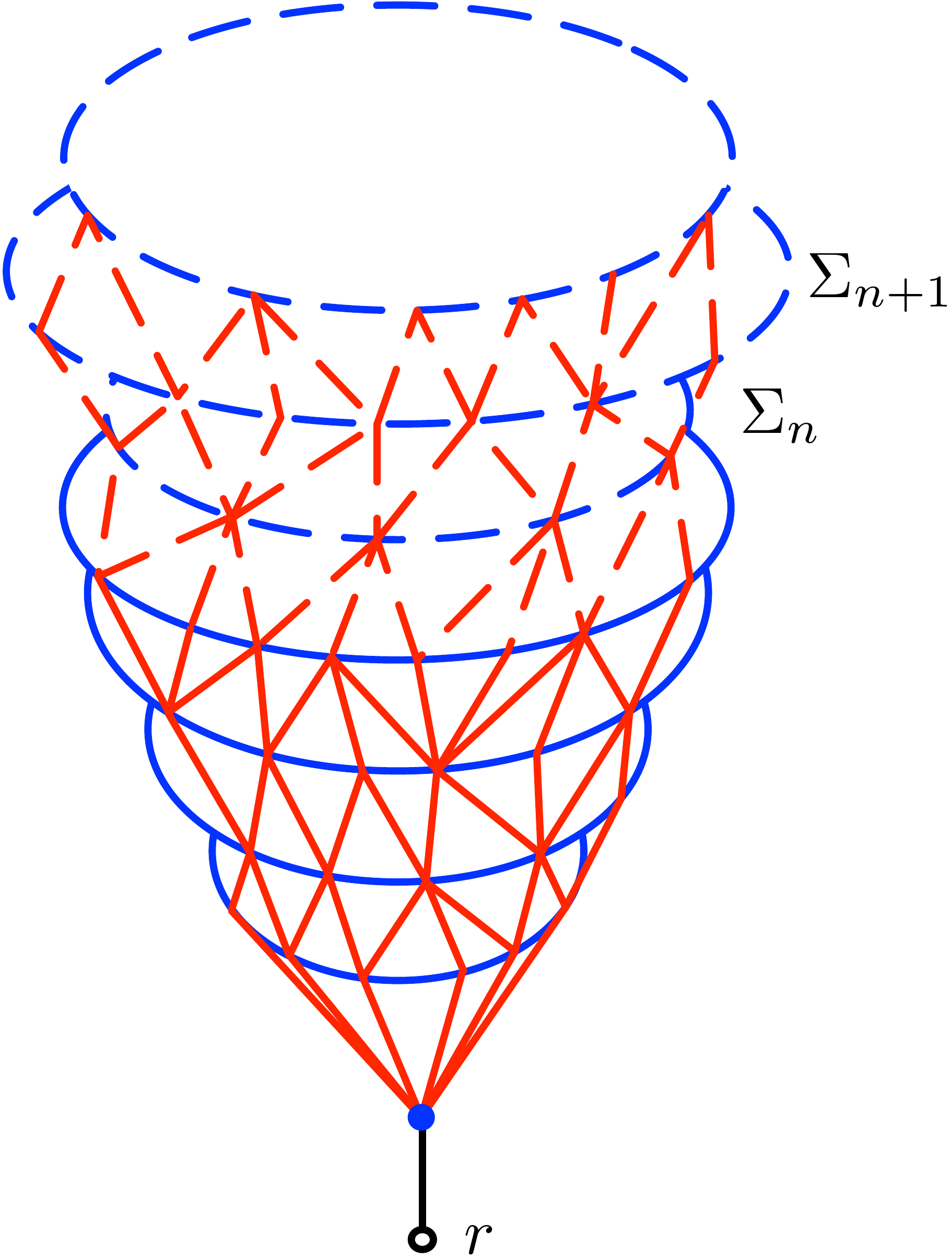}
\caption{A (rooted) two-dimensional causal triangulation with cylindrical topology. Spatial hyper-surfaces $\Sigma _n$ are depicted with blue line, whereas time-like links corresponds to red lines.}
\label{2d_cyl_causal_triangulation}
\end{subfigure}
\caption{The causal structure of a triangulated geometry.}
\label{causal structure}
\end{figure}
The total topology of the manifold is $I \times \Sigma^{d-1}$, which defines a global proper time foliation. From this construction, one observes two type of links; space-like links which lie in the spatial sectors of the triangulation and have square lattice length $a_s^2$, and time-like links which connect adjacent hyper-surfaces, and have lattice square spacing $a_t^2 = \alpha a^2_s$, for some relative scaling parameter $\alpha < 0$ (figure \ref{2d_cyl_causal_triangulation} shows a two-dimensional analogue). 

The importance of the causal assumption is twofold. Firstly, it forbids the formation of baby universes remedying the defects of DT. Secondly, it makes the Wick rotation from Lorentzian to Euclidean signature well defined for every triangulation, even though such rotation is not generally defined for continuum geometries. It is equivalent to a sign changing $\alpha \to - \alpha$ in the complex lower-half plane, i.e. $\sqrt{-\alpha} = - i \sqrt{\alpha}$, so that $a_s >0$ and $a^2_t = |\alpha| a^2_s >0$  
\footnote{Note that, after Wick-rotation, the Euclidean simplices put some lower bounds on the values of the parameter $|\alpha|$ \cite{Ambjorn:2012jv}.}.  
Under this change, the Einstein-Hilbert action is analytically continued to its Euclidean counterpart by $i S_{EH}(\alpha) \to - \hat S_{EH} (-\alpha)$, leading to
\beq \label{Euc_causal_path_int}
\hat Z_{CDT} = \sum _{\tr \in \trs_c} \frac{1}{C(\tr)} e^{- \hat S_{ER}(\tr)}.
\eeq
Having determined a well defined way to Wick-rotate the partition function to Euclidean signature, the problem has been reduced to a statistical physics problem which can been studied through Monte-Carlo simulations in $d>2$. 

As in the DT program, one has to search for critical behaviour in the phase-diagram of the theory \cite{Ambjorn:2005qt, Ambjorn:2010hu, Kommu:2011wd}. Taking the infinite volume limit by fine-tuning $\lambda$ to its critical value, the phase diagram of CDT exhibits three phases (figure \ref{CDT_phase_diagram}). 
\begin{figure}
\begin{center}
\includegraphics[scale=0.6]{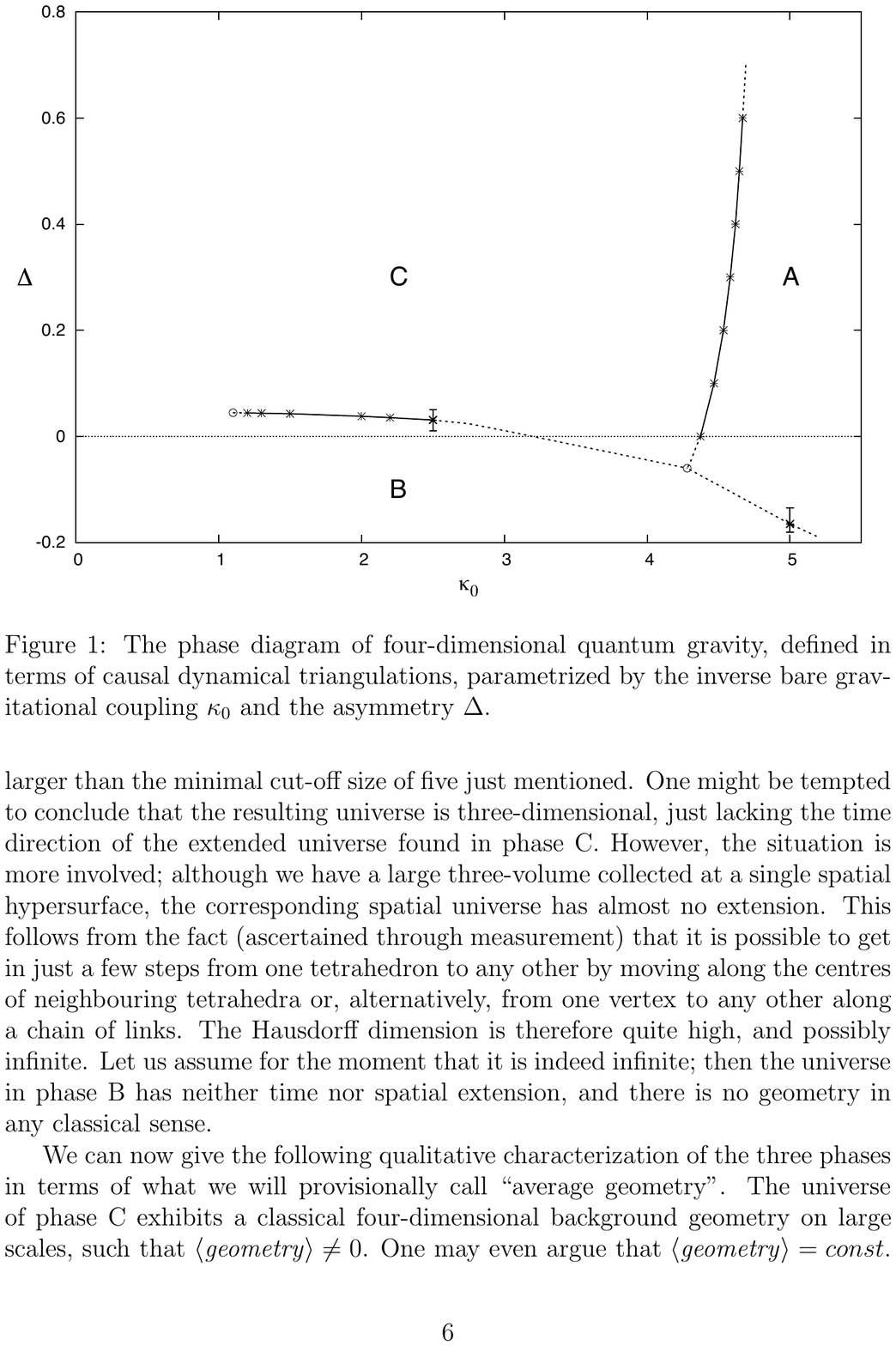}
\caption{Phase diagram of four-dimensional CDT in terms of the coupling constant $\kappa_0$, which replaces $\nu$, and the asymmetry parameter $\Delta$, which encodes the dependance of the Euclidean action on the relative length $\alpha$ between space-like and time-like links; $\Delta=0$ for $\alpha=1$. The coupling $\lambda$ does not appear because it has been tuned to its critical value \cite{Ambjorn:2010hu}. 
}
\label{CDT_phase_diagram}
\end{center}
\end{figure}
Phases A and B are non-physical and are believed to be remnants of the branched polymer and crumpled phases of DT respectively. In addition, contrary to DT, the phase diagram possesses a new phase, where an extended four-dimensional de-Sitter-like geometry emerges completely from quantum fluctuations. The existence of the latter phase is crucially important since the theory succeeds in reproducing samples of the observed universe in the semi-classical limit \cite{Ambjorn:2004qm, Ambjorn:2011ph}. 

Furthermore, one can extract more information from the phase diagram. From the theory of critical phenomena, it is well-known that a continuum limit of a lattice theory is described by a quantum field theory at a second or higher order phase transition. In \cite{Ambjorn:2010hu, Ambjorn:2011cg, Ambjorn:2012ij}, the authors studied the transitions between phases A-C and B-C numerically and found strong evidence that the former is of first order, whereas the latter is of second order. The appearance of the latter allows us to define a continuum theory of quantum gravity at the transition line making contact to other non-perturbative formulations of quantum gravity, e.g. the asymptotic safety scenario and Ho\v rava-Lifshitz gravity, depending on whether the UV fixed point is isotropic or anisotropic respectively \cite{Ambjorn:2010hu, Ambjorn:2013tki}. 

To summarise this brief introduction, one would state that CDT is a \textit{non- perturbative} and \textit{background independent} approach to approximate the gravitational path integral \eqref{Wheelers_suggestion} through a lattice regularisation. To conclude, we add another significant outcome of the theory, which is the dimensionality of the physical phase C. Computers simulations showed that the Hausdorff dimension of the physical phase is $\dha=4$ which is the expected result for a smooth classical geometry \cite{Ambjorn:2013tki}. In addition, while studying diffusion on ensembles of four-dimensional causal triangulations in \cite{Ambjorn:2005db} and a few years later independently in \cite{Kommu:2011wd}, the authors found that the spectral dimension of the universe is 4 in the classical, (IR), limit, which is regarded as another justification for the theory. Surprisingly they found that the spectral dimension reduces to the value of 2 in the UV limit. This phenomenon, known as \textit{dynamical dimensional reduction} or \textit{scale dependent spectral dimension}, was the first result suggesting that one notion of dimensionality of the universe does not remain constant but changes with the scale. In the next section we concentrate on definitions and details of the spectral dimension since it is an essential part of our research motivation.

\subsection{Scale dependent spectral dimension}
\label{sdsd}
Before discussing the reduction of the spectral dimension, which is a feature of  diffusion on \textit{quantum} geometries, we will first introduce definitions and properties of the spectral dimension of a \textit{classical} geometry, following the discussion in \cite{Ambjorn:2005db, Benedetti:2009ge}.

\textit{Spectral dimension of a classical geometry}: 
A non-rigorous and intuitive statement about the spectral dimension is that it corresponds to the dimension that a \textit{random walker} ``experiences" in a \textit{diffusion process}. In mathematical terms, consider a $d$-dimensional Riemannian manifold $M$, $d$ being the topological dimension, with metric $g_{\mu \nu}$. The diffusion process in governed by the diffusion equation 
\beq \label{diffusion_eq}
\frac{\partial K_g(y,y_0,\s)}{\partial \s} = \Delta _g K_g(y,y_0,\s)
\eeq 
with initial condition localised at a point $y_0$
\beq \label{initial_condition}
K_g(y,y_0,\s=0) = \frac{\delta ^d(y-y_0)}{\sqrt{g(y)}}
\eeq 
where $\s$ is the \textit{diffusion time}, $K_g(y,y_0,\s)$, the \textit{heat kernel}, is the probability density for diffusion from point $y_0$ to $y$ in diffusion time $\s$ and $\Delta _g = g^{\mu \nu} \nabla _{\mu} \nabla_{\nu}$ is the Laplace - Beltrami operator. However, here we should clarify that for diffusion in $d=3$ spatial dimensions, \eqref{diffusion_eq} becomes the heat equation and $\s$ corresponds to the physical time. However for $d$ space-time dimensions, $\s$ is no longer the real time, but rather should be considered as a fictitious time. The role of the latter will become clear in the following lines. 

The heat kernel can be equally expressed in terms of eigenvalues $\lambda_i$ and eigenvectors $\phi _i (y)$ of the operator $\Delta _g$
\beq \label{probability_density}
K_g(y,y_0,\s) = \langle y| e^{-\s \Delta _g} |y_0\rangle =\sum _{i} e^{-\lambda _i \s} \phi _i (y) \phi ^*_i (y_0).
\eeq
From the heat kernel we further define the return probability density in diffusion time $\s$, termed as the \textit{heat trace}, by
\beq \label{return_probability_density}
P_g(\s) = \frac{\int _{M} dy {\sqrt{g}} K_g (y,y,\s)}{V_g}
\eeq 
where we normalised over the volume of the manifold $V_g = \int _{M} dy {\sqrt{g}}$. In terms of eigenvalues (spectrum), \eqref{return_probability_density} takes the form
\beq \label{heat_trace}
P_g(\s) = \frac{\sum_i e^{-\lambda_i \sigma}}{V_g}
\eeq
which justifies the name heat trace. Expression \eqref{heat_trace} tells us that only small eigenvalues, of order $\lambda_i \lesssim 1/\s$, contribute to the return probability density, whereas large eigenvalues are exponentially suppressed. This sets a relationship between the scale which is being probed by the diffusion process and the diffusion time. Another important result for the return probability density is that it can be expressed in terms of curvature invariants of the geometry, the so-called \textit{heat trace expansion}
\beq \label{heat_trace_expansion}
P_g(\s) = \frac{1}{(4 \pi \sigma )^{d/2} V_g} \sum _{n=0} ^{\infty} a_n \sigma ^n ,
\eeq
where $a_0 = V_g$, $a_1 = \frac{1}{6} \int _M dy \sqrt{g} R(y)$ and $a_n, \ n\geq 2$, include higher order terms in scalar curvature $R$, Ricci tensor, $R_{\mu \nu}$, and Riemann tensor, $R_{\mu \nu \rho \tau} $. Having defined the return probability density, the spectral dimension of the geometry $M$ is defined by
\footnote{Our notation should not cause any confusion. We denote by $\ds$ the spectral dimension of discrete objects, like simplicial geometry and graphs, where the diffusion time is discrete. $D_s (\s)$ denotes the scale dependent spectral dimension of continuum geometries and is a function of continuous diffusion time $\s$.}
\beq \label{ds_def}
D_s(\s) := -2 \frac{d \ln P_g(\s)}{d \ln \s},
\eeq
which in terms of the curvature invariants \eqref{heat_trace_expansion} reads
\beq \label{ds_expansion}
D_s(\s) = d - 2 \frac{\sum _{n=1} ^{\infty} n a_n \sigma ^n}{\sum _{n=0} ^{\infty} a_n \sigma ^n}.
\eeq
For infinite flat manifolds, $a_n =0$, for all  $n\geq 1$, therefore the value of the spectral dimension agrees with the topological dimension for all diffusion times. Additionally, for a generic geometry with non-zero curvature, one observes that for small diffusion times $D_s (\s \gtrsim 0) \approx d$, whereas for larger diffusion times, the diffusion process ``explores" larger ``neighbourhoods" of the starting point which results in experiencing the curvature effects, thus $D_s(\s >> 1) < d$. In addition finite volume effects also alter the value of the spectral dimension. In this case, for $\s >> V^{2/d}_g$, only the zero eigenvalue contributes to \eqref{heat_trace} and the return probability density tends to one, meaning a reduction of the spectral dimension. 

\textit{Spectral dimension of a quantum geometry}: 
Our discussion so far has been restricted to a classical level, in the sense that we considered diffusion on a given manifold. To consider diffusion on a quantum geometry, we have to take into account all possible configurations (histories) of the geometry according to the spirit of the ``sum over all histories" approach described in previous sections. This means that we have to take the ensemble average of the return probability density defined as a gravitational path integral
\beq \label{P_ensemble_average}
\langle P (\s) \rangle _Z = \hat Z^{-1} \int \C{D} [g_{\mu \nu}] e ^{-\hat S_{EH}(g_{\mu \nu})} P_g (\s).
\eeq
Now the spectral dimension of the quantum geometry is defined through \eqref{ds_def} by replacing $P_g(\s)$ with the ensemble average $\langle P (\s) \rangle _Z$. For diffusion in a quantum geometry, $\s$ corresponds to the scale at which diffusion process probes the quantum geometry. Thus, large diffusion times, i.e. $\sigma \to \infty$, probe the infrared (IR) characteristics of the geometry, while the $\s \to 0$ limit probes the ultraviolet (UV) features of it. 

In order to determine \eqref{P_ensemble_average} within the CDT framework we write the triangulated analogue of it in the spirit of \eqref{Euc_causal_path_int}
\beq \label{P_ensemble_average_discr}
\langle P (\s) \rangle _Z = \frac{1}{\hat Z_{CDT}} \sum _{\tr \in \trs_c} \frac{1}{C(\tr)}e ^{-\hat S_{ER} (\tr)} P_{\tr}(\s).
\eeq
To compute the spectral dimension one should evaluate \eqref{P_ensemble_average_discr} using the partition function \eqref{Euc_causal_path_int} and then take the infinite volume limit in which $\lambda$ is tuned towards its  critical value and $\nu$ is expressed in terms of the  inverse renormalized Newton's constant $1/G$. At present this is analytically out of reach but Monte Carlo simulations of random walks (discrete diffusion) on four-dimensional CDTs of fixed $N_4$ \cite{Ambjorn:2005db} yield a scale dependent spectral dimension given by 
\bea \label{DsJan}
D_s(\sigma)=  4.02-\frac{119}{54+\sigma}=
\begin{cases}
1.80 \pm 0.25, & \sigma\to 0,\\
4.02 \pm 0.1, & \sigma \to \infty,
\end{cases} 
\eea
where the three constants were determined from the data range $\sigma \in [40,400]$. This range was chosen to avoid discreteness effects on one side and finite volume/curvature effects on the other side (see also figure \ref{scale_dependent_ds}). In addition, quantum effects are believed to be negligible when the spectral dimension reaches its large scale (topological) value.
\begin{SCfigure}
\includegraphics[scale=0.7]{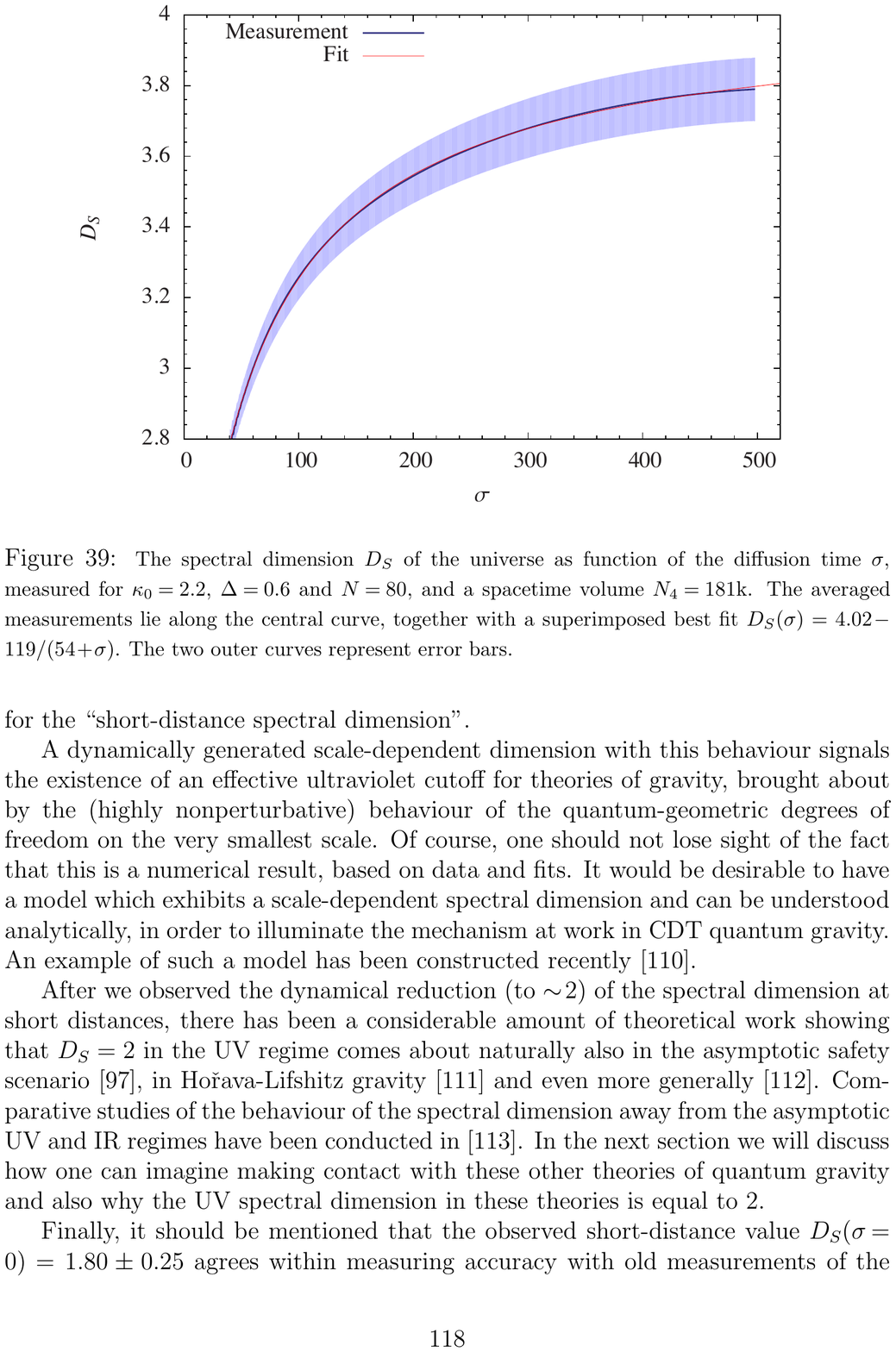}
\caption{The phenomenon of dynamical reduction of the spectral dimension in non-perturbative simplicial quantum gravity \cite{Ambjorn:2012jv}.}
\label{scale_dependent_ds}
\end{SCfigure}

A possible objection to \eqref{DsJan} is that the simulations are inevitably affected by finite size effects and the dimensional reduction observed might simply be an artefact of the discreteness scale. However, assuming that this expression can be extrapolated to continuum physics,  the return probability density \eqref{P_ensemble_average} for four-dimensional CDT in the \emph{continuum limit} \cite{Ambjorn:2005db} would be
\bea  \label{intro-P}
\avg{P(\sigma)}_Z \sim \frac{1}{\sigma^2}  \frac{1}{ 1 + const.\, G / \sigma},
\eea
where ``$\sim$" denotes equality up to multiplicative logarithmic corrections.

\section{Conclusion and outlook}
\label{motivation_outlook}
A quantitative confirmation of \eqref{DsJan} was given in \cite{Kommu:2011wd} using an independent computer code for the first time. Moreover, results from three-dimensional CDT \cite{Benedetti:2009ge} show qualitative agreement with the four-dimensional model. In particular, the classical limit of three-dimensional CDT gives rise to a de-Sitter-like space-time and the spectral dimension dynamically flows from the value of $3$ in the IR to the value of $2$ in the UV limit. However, we will postpone the details of the three-dimensional model for section \ref{3dim}.

Since the novel work in \cite{Ambjorn:2005db} and the first evidence of a scale-dependent spectral dimension, other authors have studied the behaviour of the spectral dimension and confirmed both quantitatively and qualitatively the results from Monte Carlo simulations. However their approaches were motivated by different models to quantum gravity, e.g. renormalisation group analysis \cite{Lauscher:2005qz} and Ho\v rava-Lifshitz gravity \cite{Horava:2009if} among others \cite{Carlip:2009km, Carlip:2009kf}. Despite the agreement with results from other models, little progress has been made in analytically understanding the numerical results coming from the CDT approach and showing that they remain valid when taking the continuum limit. This is the aim of this thesis. Based on work done in a series of published articles \cite{Atkin:2011ak, Giasemidis:2012qk, Giasemidis:2012rf, Giasemidis:2012pv} we introduce simplified graph toy models which capture features of CDT and exhibit the phenomenon of dynamical reduction of the spectral dimension. 

\end{chapter}
\clearemptydoublepage
\begin{chapter}{Preliminaries}
\label{graphs}

As explained in the previous chapter the main motivation for this thesis is to obtain an analytical understanding of the numerical simulations of the spectral dimensional flow in higher-dimensional CDT. Despite the lack of an analytical solution to the full CDT model at the moment, we attempt to achieve our goal by studying simplified graph models and   using ideas and concepts from the field of random infinite graphs. The reasons for studying random graphs are multiple \cite{Durhuus:2011co}. First, triangulations are graphs themselves. Second, random surfaces have a ``tree" phase (the branch polymer phase of DT). Studying the properties of trees, e.g. dimensionality, we probe properties of  the fractal phases of quantum geometry. Third, random trees encode information about random triangulations, via bijections (for example see figure \ref{bijection}). 
Fourth, many random graphs serve as toy models for studying further fractal aspects, like the dimensionality, of random surfaces. Specifically, studying random walks on random graphs is equivalent to the (discrete) diffusion process taking place in simulations of CDT. Choosing the appropriate graph ensemble for studying the diffusion might be a way to simplify the complexity of the problem, get better analytical understanding of the numerical results and approach our goal.  For all these reasons random graphs play an essential role in the study of random surfaces and the CDT approach to quantum gravity. We start this chapter by introducing the  basic concepts and tools of graph theory. Then, we proceed by emphasising the link between triangulations and other important graphs, especially the random trees and branching processes. 

\section[Basic elements of graph theory]{Basic elements of graph theory \footnote{We mostly follow the definitions in \cite{Harris:2008, Durhuus:2009zz}.}}
A graph $G$ consists of two sets; the set of vertices $V(G) = \{ v_1, v_2, \ldots ,v_N \}$ and the set of edges $E(G)$, which is the set of unordered pairs of vertices $\{(v_i, v_j)\}$. The \textit{order} and the \textit{size} of the graph $G$ are the cardinality of its vertex and edge set respectively. We will denote the size of the graph by $|G|$. Given any two vertices $u, v \in V(G)$, they are said to be \textit{adjacent} if $(u,v) \in E(G)$. Otherwise, they are \textit{non-adjacent}. In addition we say that an edge $e$ is incident to the vertex $v$ if $e = (v,u)$. The \textit{degree} of a vertex $v$, denoted by $\sigma (v)$, is the number of edges containing the vertex $v$. It is convenient in our analysis to pick a special vertex, called the \textit{root}, $r$, which has only one neighbour, i.e. $\sigma (r) =1$. A graph with a root vertex is called a \textit{rooted graph} \footnote{In some textbooks, e.g. \cite{Drmota:2009}, rooted graphs are considered graphs with a special vertex of arbitrary degree. Within this definition graphs with an extra special vertex of degree one are called \textit{planted graphs}.}.

A \textit{path} in $G$ is a sequence of different edges $\{(v_0, v_1), (v_1,v_2), \ldots, (v_{k-1}, v_k) \}$ where $v_0$ and $v_k$ are called the \textit{end vertices}. If the end vertices are the same, i.e. $v_0 = v_k$ the path forms a \textit{circuit} (or loop or cycle).    
Moreover, a graph is \textit{connected} if every pair of vertices can be joined by a path. We further denote by $d(v,u)$ the \textit{graph distance}  between two vertices $v$ and $u$ which is defined to be the minimal number of edges in a path connecting them.  
Additionally, we define the height $h(v)$ of a vertex $v$ to be the graph distance from root $r$ to $v$, i.e. $h(v) \equiv d (r,v)$, and $S_k(G)$ to be the set of all vertices of $G$ having height $k$. We denote by $B_R(G,v)$ the ball of radius $R$ centred at vertex $v$, which is the subgraph of $G$ spanned by vertices at graph distance at most $R$ from $v$.

Finally a graph is said to be \textit{planar} if it can be drawn in the plane in such a way that pairs of edges intersect only at vertices, if at all. A drawing of a planar graph $G$ in the plane in which edges intersect only at vertices is called \textit{planar embedding}. Two planar graphs are considered identical if one can be continuously deformed into the other in $\BB{R}^2$. 

In this thesis, we are mainly interested in rooted planar connected infinite graphs, i.e. both sets $E(G)$ and $V(G)$ have infinite elements. In particular the infinite graphs are constructed from the thermodynamic limit of fixed size $N$ graphs, as $N\to \infty$. Additionally, all graphs that will be considered are \textit{locally finite}, i.e. $\sigma (v)$ is finite for any $v \in V(G)$. We will also consider connected graphs with no circuits, called \textit{trees}, and graphs where we allow repeated elements in our set of edges, i.e. \textit{multigraphs} (see figure \ref{multigraph}).

\section{Random walks on graphs}
\label{random walks on graphs}
A \textit{walk} in a graph is a sequence of (not necessarily distinct) vertices $v_0, v_1, \ldots, v_k$ such that $(v_i, v_{i+1}) \in E(G)$ for $i = 0, 1, \ldots, k-1$.   
We shall denote such a walk by $\omega: v_0 \to v_k$ and call $v_0$ the \textit{origin} and $v_k$ the \textit{end} of the walk. In addition, the \textit{length} of the walk, $|\omega|$, is the number of its edges, counting repetitions, $\Omega _G$ is the set of all walks on $G$, and $\omega(t)$ is the location of the walk $\omega \in \Omega_G$ at time $t$. In a random walk $\omega$, at each time step, a walker at $\omega(t)$ steps to one of the neighbouring vertices $\omega(t+1)$ with probability given by the ratio 
\beq \label{prob_next_step}
\frac{\text{number of edges from} \ \omega(t) \ \text{to} \ \omega(t+1)}{\text{total number of edges incident to vertex} \ \omega(t)}. 
\eeq
In particular, in the case of tree graphs the walker steps with equal probability to one of the neighbouring vertices which is given by $1/\sigma (\omega(t))$. Note that a random walk which is at the root at time $t$ moves to vertex $1$ with probability one.  

Consider a random walk on planar rooted graphs with vertices of finite degree. For simplicity we assume that the random walker starts from the root (this choice does not affect the value of the spectral dimension, defined below).  Given a graph $G$, let $p_G(t)$ be the probability the walker is at the root after $t$ steps  and define $p_G^{1}(t)$ to be the probability that the walker returns to the root at time $t$ \textit{for the first time} after $t=0$, with the convention $p_G^{1}(0)=0$ \cite{Durhuus:2005fq}
\bea
\label{return_prob}
p_G(t)    &=& \sum _{\omega \in \Omega_G: |\omega| = t}\BB{P} \left (\{ \omega (t) = 0 | \omega(0) = 0\}\right ), \\
\label{first_return_prob}
p_G^1(t) &=& \sum _{\omega \in \Omega_G: |\omega| = t} \BB{P} \left (\{ \omega (t) = 0 | \omega(0) = 0, \omega (t') > 0, 0 <t'<t \}\right ).
\eea
These two probabilities are related to each other by decomposing the random walk into an arbitrary number of first returns to the root \cite{Jonsson:1997gk, Durhuus:2005fq}
\bea \label{return_prob_decomp}
p_G(t) &=& \delta _{t,0} + p_G^1(t) + \sum _{{t_1, t_2: \atop  t_1 + t_2 = t}} p_G^1 (t_1) p_G^1 (t_2) + \dots + \sum _{{t_1, \ldots, t_n : \atop t_1 + \ldots +t_n = t}} p_G^1 (t_1)  \cdot \ldots \cdot p_G^1 (t_n) +\ldots \nn\\
           &=& \delta _{t,0} + \sum _{n=1}^{\infty} \sum _{{t_1, \ldots, t_n : \atop t_1 + \ldots + t_n =t}} \prod _{k=1} ^{n}p_G^1 (t_k) 
\eea
We should note that there are graphs, such as the combs, trees and multigraphs, in which both return probabilities $p_G(t)$ and $p_G^1(t)$ vanish for odd time steps because the walker needs an even number of steps for return to the root. However this does \textit{not} hold for causal triangulations.

\section{Spectral and Hausdorff dimension of graphs}
\label{ds and dh}
There are two key notions of dimension which characterise the fractal properties of a graph $G$.  The first, the \textit{spectral dimension} $\ds$, is related to a random walk (discrete diffusion) and is given by the asymptotic behaviour of the return probability at large times, $p_G(t) \sim t ^{-\ds/2}$, as $t\to \infty$. For the formal definition we write \cite{Durhuus:2009zz}
\beq \label{ds_graph_def}
\ds := - 2 \lim _{t \to \infty} \frac{\ln p_G(t)}{\ln t}
\eeq 
provided that the limit exists. If the graph is finite then $\ds=0$, because the random walk will return to the root with probability 1 after infinitely many steps. On the other hand if the graph is infinite one has $\ds \geq 1$. 

The second essential notion is the Hausdorff dimension $\dha$ which is defined 
\beq \label{dh_def}
\dha :=  \lim _{R \to \infty} \frac{\ln |B_R(G,v)|}{\ln R}
\eeq
provided the limit exists. For finite graphs $\dha = 0$, whereas if $G$ is infinite then $\dha \geq 1$ \cite{Durhuus:2009zz}. It is important to note that the existence and value of the limit do not depend on the choice of the vertex $v$. For this reason from now on we will consider balls of radius $R$ centred at the root  denoting them by $B_R(G)$ or $B_R$ as a shorthand notation.

\textit{$\dha$ vs $\ds$}: As we will see in detail in the next sections, these two definitions of dimensionality do not always agree, because they are related to different characteristics of the graph. The Hausdorff dimension is related to the volume growth, whereas the spectral dimension is sensitive to the graph's connectivity. In particular, it was proved in \cite{Coulhon:2000rw} that for fixed graphs under certain assumptions the following inequalities should be satisfied
\beq \label{ds_inequalities}
\dha \geq \ds \geq \frac{2 \dha}{ 1+\dha}.
\eeq
 We will further comment on these inequalities in section \ref{from triangulations to trees}.

\subsection{Spectral dimension via generating function}
Next we introduce the generating functions of the return and first return probabilities respectively being \cite{Durhuus:2005fq}
\beq \label{Q_and_P_def}
Q_G(s) = \sum _{t=0}^{\infty} p_G (t) s^t \qquad \text{and} \qquad P_G(s) = \sum_{t=0}^{\infty} p_G^1 (t) s^t 
\eeq 
Following the decomposition \eqref{return_prob_decomp} the generating functions are related by (see figure \ref{P_and_Q_fig} for a graphical representation)
\begin{figure}
\begin{center}
\includegraphics[scale=0.5]{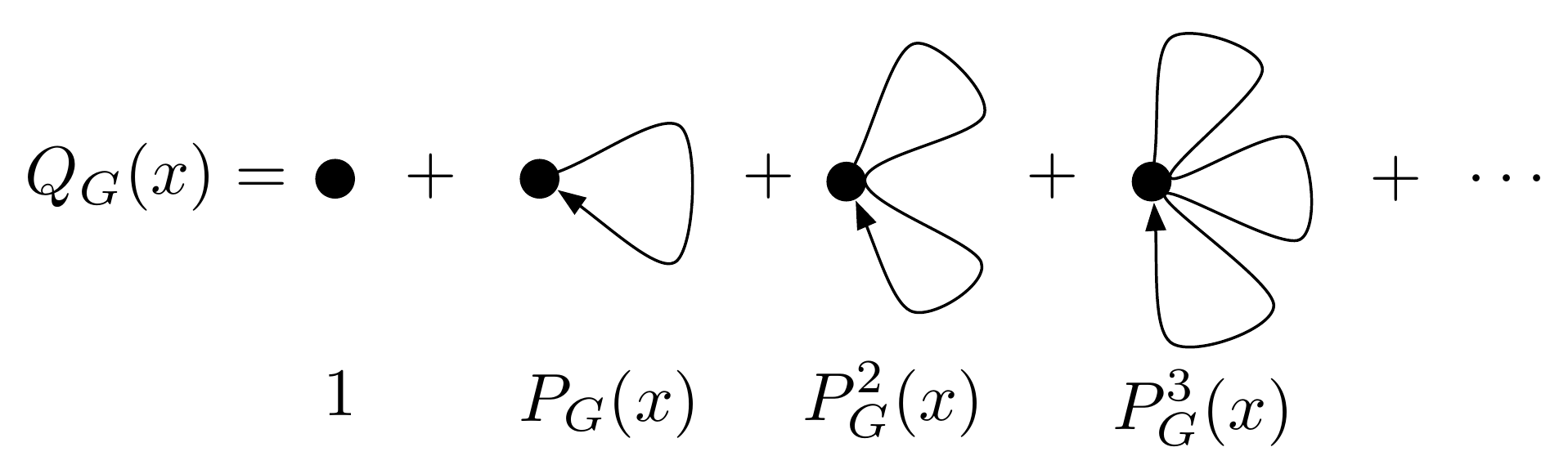}
\caption{Decomposition of the generating function $Q_G(x)$ into first returns.}
\label{P_and_Q_fig}
\end{center}
 \end{figure}
\beq \label{Q_decomp_of_P}
Q_G(s) = \sum _{n=0}^{\infty} ( P_G(s) )^n = \frac{1}{1- P_G(s)}.
\eeq
From the definition \eqref{Q_and_P_def} we notice that $P_G(s)$ is analytic in the unit disc and $|P_G(s)|<1$ for $|s|<1$. 
It is convenient to change variable, $s^2 = 1- x$, assuming $0 \leq x \leq1$, and use the definitions
\beq
\label{Q_and_P_def_x}
Q_G(x) = \sum _{t=0}^{\infty} p_G (t) (1-x)^{t/2} \qquad \text{and} \qquad P_G(x) = \sum_{t=0}^{\infty} p_G^1 (t) (1-x)^{t/2}. 
\eeq
The asymptotic behaviour of $Q_G(s)$ and $P_G(s)$ as $s \to 1$ is equivalent to the asymptotics of $Q_G(x)$ and $P_G(x)$ as $x \to 0$.
The generating function method is a useful tool for extracting the spectral dimension  from the asymptotic behaviour of $Q_G(x)$, as $x\to 0$, provided it diverges. Indeed a Tauberian theorem \cite[chapter XIII]{Feller:1965b} tells us that  
\beq \label{ds_via_Q}
p_G(t) \sim t^{-\ds/2}, \ \ t\to\infty \ \ \Leftrightarrow \ \ Q_G(x) \sim x ^{-1 +\ds/2} \qquad \text{as} \qquad x\to 0,
\eeq
given that $Q_G(x)$ diverges at this limit. We should emphasise that  by $f(x)\sim g(x)$ it is meant that there is a constant $0< x_0 \leq1$ such that  
\beq 
g(x) \psi_{-}(x)\le f(x)\le g(x) \psi_{+}(x) \quad \text{for} \qquad 0<x<x_0,
\eeq
where $\psi_-(x)$ and $\psi_+(x)$ are slowing varying functions at $0$
\footnote{A function $\psi (x)$ is slowly varying at $x_s$ if $\lim _{x \to x_s}\psi (\lambda x)/\psi(x) =1$ for any $\lambda>0$. An example of a slowly varying function at $0$ is the logarithm, e.g. $\psi(x) = (\log (x))^{c}$, where $c$ is an arbitrary constant.}.

\textit{Example.} For instructive reasons we proceed with the calculation of the spectral dimension of the simplest graph, a half line with vertices $\{r, s_1, s_2, \ldots\}$ which is denoted by the subscript $\infty$ (this pedagogical example has also been  considered in \cite{Durhuus:2005fq, Stefansson:2012sd, Giasemidis:2012rf}). We focus on the calculation of the \textit{first return} generating function $P_{\infty} (x)$. From its definition we observe that every step of the walker contributes a factor $s=\sqrt{1-x}$ to $P_{\infty} (x)$. Firstly, the walker leaves the root with probability 1 going to vertex $s_1$. There, it has two possibilities; either the walker steps back to the root with probability 1/2 and the diffusion is over contributing to $P_{\infty} (x) = \half (1-x)$. Or, being at $s_1$, the walker could leave to the right with probability 1/2 and diffuse in the semi-infinite graph until he/she returns to $s_1$ for the first time. The generating function of the latter diffusion is identical to  $P_{\infty} (x)$ because the random walk from the root and back is identical with the walk from vertex $s_1$ to the right and back again. Then the walker can either go to the root with probability 1/2 and then the process is over contributing $P_{\infty} (x) = \half (1-x) + \half (1-x)P_{\infty} (x)$, or step to the right and so on. Therefore one gets the following recursion relation for the generating function
\beq \label{P_half_line}
P_{\infty} (x) = \half (1-x) \sum_{n=0} ^{\infty} \left (\half P_{\infty}(x) \right )^n = \frac{1-x}{2- P_{\infty} (x)}
\eeq
which can be solved for $P_{\infty}(x)$ and gives
\beq \label{P_infty}
P_{\infty}(x) = 1 - \sqrt{x} \Rightarrow Q_{\infty} (x) = \frac{1}{\sqrt{x}}.
\eeq
So, using the definition \eqref{ds_via_Q} we conclude that $d_s = 1$ as expected since the half line is the simplest graph without any structure and its spectral dimension should be equal to the topological dimension. 

\textit{Recurrent vs transient graphs.} From definition \eqref{ds_via_Q} one observes that when the generating function $Q_G(x)$ diverges as $x\to0$ then $\ds \leq 2$.   
In this case, expression \eqref{Q_decomp_of_P} implies $P_G(0) =1$ which implies, in the view of definition \eqref{Q_and_P_def_x}, $\sum_{t=1}^{\infty} p^1_G(t)=1$, which means that the probabilities of first return at any time  sum to one and therefore the random walker always returns to the root. Random walks on such graphs are called \textit{recurrent}. Examples of recurrent graphs are the generic random tree \cite{Durhuus:2006vk}, random combs \cite{Durhuus:2005fq}, random brushes \cite{Jonsson:2008rb} and non-generic trees \cite{Jonsson:2011cn, Stefansson:2012sd}.

One the other hand, when $\ds\geq2$ the generating function $Q_G(0)$ is finite which implies that $P_G(0) <1$
\footnote{The boundary case $\ds=2$ is sensitive to the slowing varying terms $\psi(x)$ and might correspond either to recurrent or transient random walks.}.
 Thus there is a non-zero probability that the random walker escapes to infinity and the random walk is characterised as \textit{non-recurrent} or \textit{transient}. Examples of the latter are the biased random walks on combs \cite{Elliot:2007br} and multigraphs induced by higher-dimensional causal triangulations (see chapter \ref{multigraphs}) \cite{Giasemidis:2012rf}. In the case of a transient random walk we define the spectral dimension through the derivative of $Q_G(x)$ of lowest degree which is diverging via the relation
\beq
Q_G^{(k)} (x) \sim x^{-(k+1) + \ds/2}, \qquad \text{as} \qquad x\to0,
\eeq  
for $2k \leq \ds < 2 (k+1)$.
In section \ref{resistance} we will discuss recurrence in terms of \text{graph resistance}.

\section{Random (infinite) graphs}
\label{random infinite graphs}
In the previous sections we introduced several definitions of the properties of a single graph $G$. However, we are also interested in the fractal properties  of an \textit{ensemble} of graphs, or a \textit{random graph}, which is a set $\C{G}$ of graphs equipped with a probability measure $\mu$. Measurable subsets $\C{A}$ of $\C{G}$ are called \textit{events} and $\mu (\C{A})$ is the probability of the event $\C{A}$. We also denote by $\avg{\cdot}_{\mu}$ the expectation with respect to the measure $\mu$. Furthermore we define the \textit{annealed} Hausdorff dimension $\bar \dha$ and \textit{annealed} spectral dimension $\bar \ds$ of an ensemble  $\R{G} = (\C{G},\mu)$ by \cite{Durhuus:2009zz}
\bea
\bar \dha &=&\lim_{R\to \infty} \frac{\ln \avg{|B_R(G)|}_{\mu}}{\ln R} ,\\
 \bar \ds  &=& -2 \lim _{t \to \infty} \frac{\ln \avg{p_G(t)}_{\mu}}{\ln t}. 
\eea
respectively, provided that the limits exist.  Using generating function techniques, $\bar \ds$ is also determined via the ensemble average of the generating function defined as
\beq
\bar Q (x) \equiv \avg{Q_G(x)}_{\mu} := \int Q_G(x) d\mu \sim x^{-1+\bar\ds/2},  
\eeq
provided it is diverging as $x\to0$. 

Additionally to the annealed dimensions, one can make stronger statements about the values of $\dha$ and $\ds$ for \textit{almost all} graphs in the ensemble. What we mean by saying ``almost all" is that there is only a finite set of graphs with different values of the Hausdorff or spectral dimensions. Thus the probability of ``picking'' such a graph from an infinite set is zero. Or, in other words, the probability of finding a graph in the ensemble with the value $\dha$ is one. Formally we say that the Hausdorff dimension, for example,  of a random graph is almost surely $\dha$, if there exists a subset $\C{G}_0$ of $\C{G}$ such that $\mu(\C{G}_0) =1 $ and such that every $G \in \C{G}_0$ has Hausdorff dimension $\dha$ 
\footnote{By abuse of notation, we denote the annealed dimensions by $\ds$ and $\dha$ too in the rest of the thesis, even though the value of the annealed dimension might be different from the ``almost sure'' value (see for example \cite{Jonsson:2011cn}). However this should not cause further confusion because it will be clear from the discussion to what dimension we refer to.}.

As we have already mentioned we are mostly interested in ensembles of infinite graphs where the measure $\mu$ is obtained as a limit of measures $\mu _N$ defined on sets of finite graphs of size $N \in \BB{N}$. 
This limit must be understood in the weak sense, meaning that 
\beq
\int_{\C{G}} f d\mu_N \stackrel{N\to\infty}{\longrightarrow}  \int_{\C{G}} f d\mu
\eeq
for all bounded continuous functions $f$ on $\C{G}$ \cite{Durhuus:2009zz}.

\section{From causal triangulations to trees}
\label{from triangulations to trees}
The discussion and definitions so far have been applied to arbitrary graphs. In this section we focus on random triangulations, which are the building blocks for the discretised models of gravity considered here, and show their relation to planar tree graphs. This relation arises naturally if one considers two-dimensional discretised models of gravity. It was soon realised, that the problem of calculating the loop functions of those models is reduced to a purely combinatorial problem of counting the number of distinct triangulations with given $S^2$-topology and boundaries. This problem can be solved either by using Tutte's recursive decomposition \cite{Tutte:1963ac, Tutte:1968en} or by matrix model techniques used by physicists \cite{Brezin:1977sv, Ambjorn:1997} or by bijective proofs used by mathematicians \cite{Bouttier:2004pm}. The latter method exploits the bijection between discretised surfaces and specific classes of trees, which have an easier enumeration. A well-known bijection relates quadrangulations of the sphere to well-labelled trees \cite{Schaeffer:1998cd, Chassaing:2002rp}. This bijection can be extended between triangulations and labelled mobile trees \cite{Bouttier:2004pm, Ambjorn:2013csx}. In other words, discretisations of the sphere without any causal structure are in one-to-one correspondence with labelled trees 
\footnote{We skip the details of those bijections since they are not of any particular interest for the discussion in the following chapters and the interested reader should consult the references mentioned above.}.
\begin{SCfigure}
\centering
\includegraphics[scale=0.35]{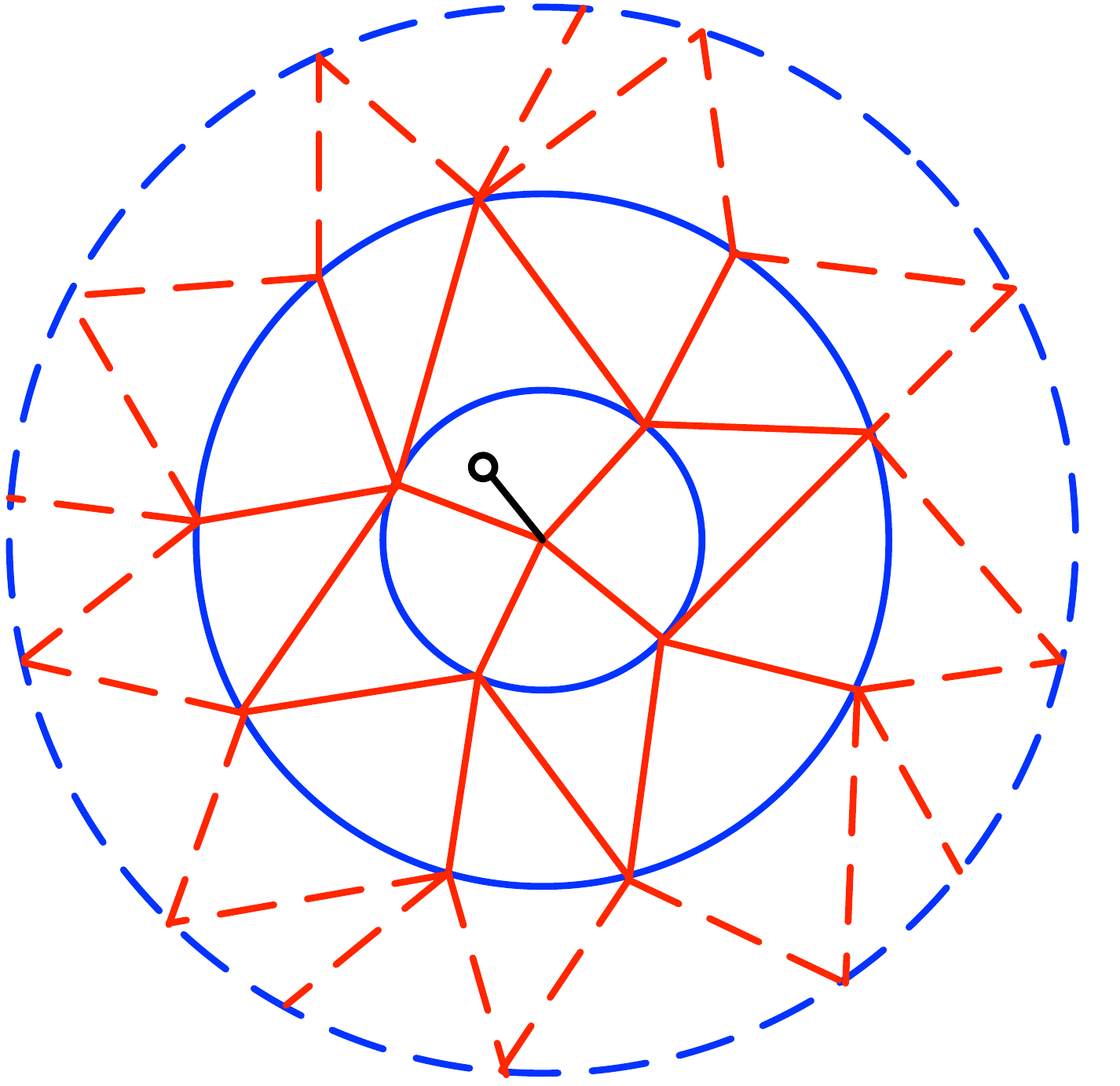} 
\caption{An example of a planar causal triangulation. Blue and red links correspond to space-like and time-like edges respectively.}
\label{planar_causal_triangulation}
\end{SCfigure}

By introducing the causal constraint, causal triangulations are in bijection with (unlabelled) trees. The bijection was studied in detail in \cite{Durhuus:2009sm} and here we elaborate because it plays an essential role in the following discussion. In particular, a causal triangulation $\tr$ consists of two sets of edges; the space-like edges which link vertices at the same height $k$ from the root, $S_k (\tr)$, and form a cycle, and the time-like edges which link vertices at heights $k$ and $k+1$  (see figure \ref{planar_causal_triangulation}). 
From this construction, every triangle consists of one space-like edge and two time-like edges. We further denote $\Sigma _k$ to be the subgraph of the triangles bounded by the cycles $S_k(\tr)$ and $S_{k+1}(\tr)$. The number of triangles in a strip is given by $\Delta (\Sigma_k) = |S_k(\tr)| + |S_{k+1}(\tr)|$. If the causal triangulation is finite, it is decorated appropriately \cite{Durhuus:2009sm}. Now the bijection is defined as follows. 

\textit{From causal triangulations $\tr$ to rooted planar trees $T$} (see figure \ref{bijection}). First, we mark one edge from $S_0(\tr)$ to $S_1(\tr)$ and attach a new edge $(r,S_0)$ such that the marked edge is the rightmost to $(r,S_0)$. Second, we retain all the edges from $S_0(\tr)$ to $S_1(\tr)$. Third, for $k \geq1$, delete the rightmost time-like edge from vertex $v \in S_k(\tr)$ to $S_{k+1}(\tr)$. Apply this process for any vertex $v\in S_k(\tr)$ and repeat for any $k$. Any decorations are deleted. Fourth, delete all the space-like edges. The resulting graph is a rooted (unlabelled) tree graph. 
\begin{figure}[t]
\centering
\begin{subfigure}{0.4\textwidth}
\includegraphics[scale=0.25]{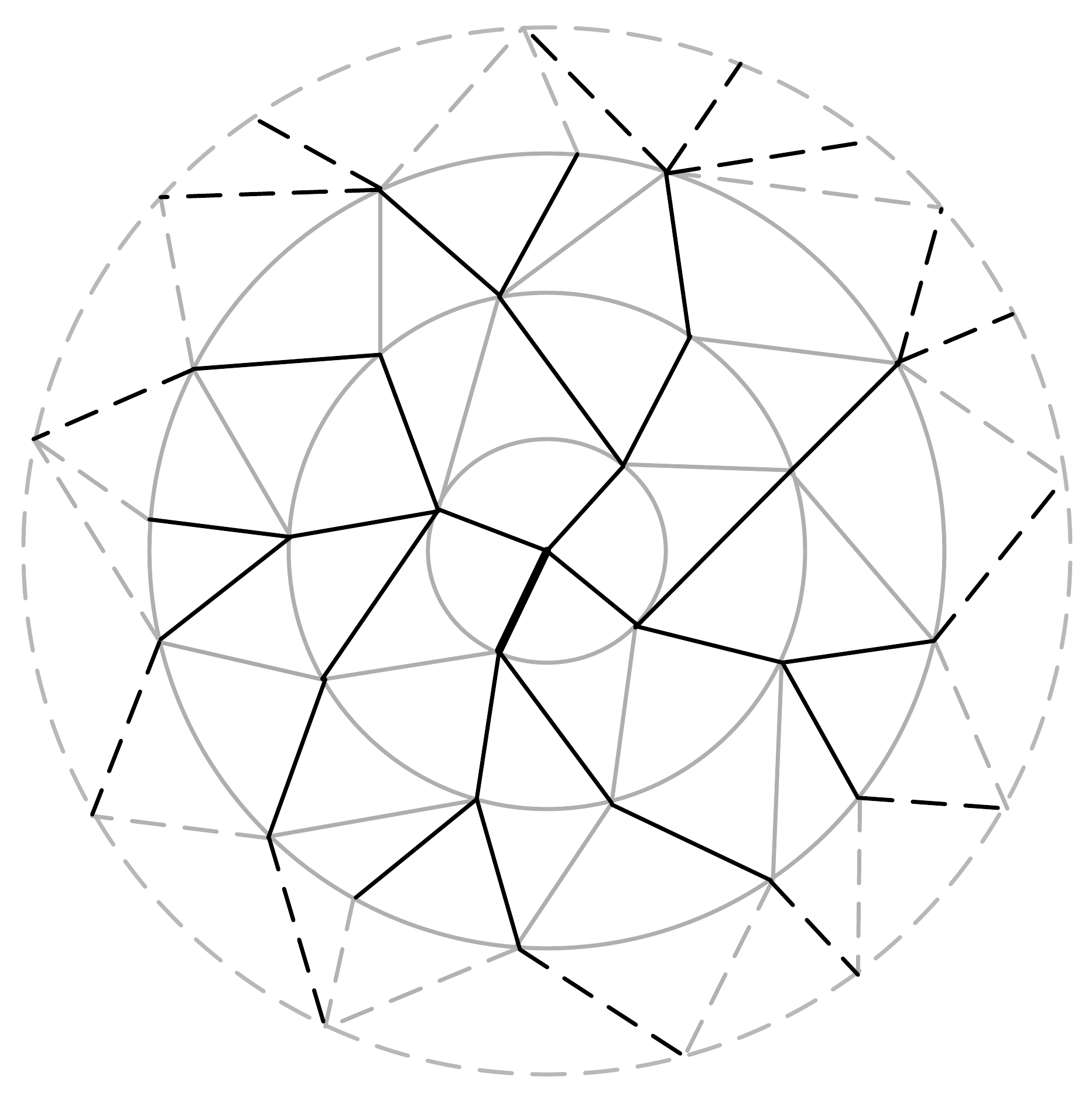} 
\centering \caption{A causal triangulation}
\label{causal_triang}
\end{subfigure}
\qquad \qquad 
\begin{subfigure}{0.4\textwidth}
\includegraphics[scale=0.25]{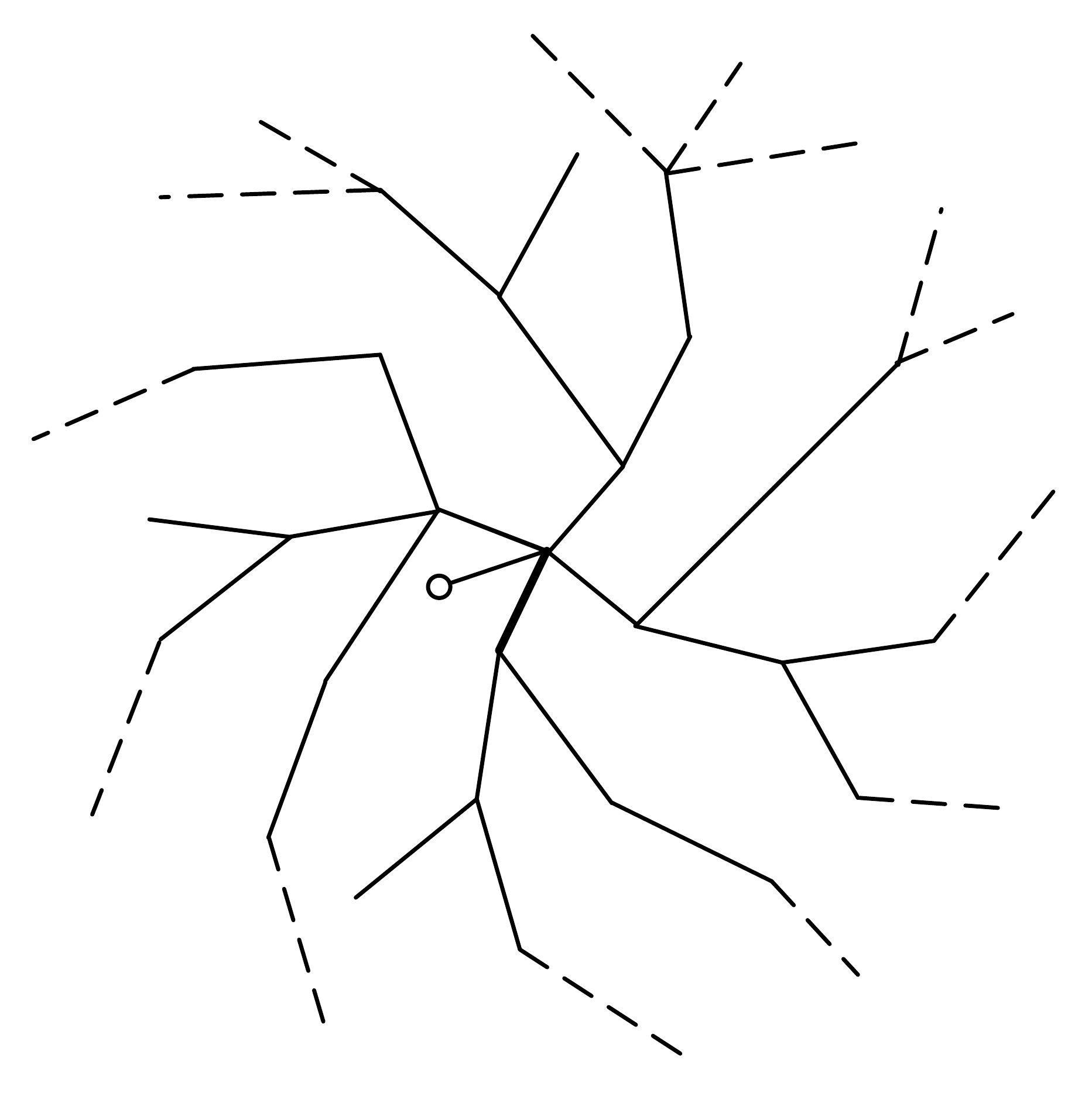} 
\centering \caption{Tree bijection}
\label{tree}
\end{subfigure}
\caption{From causal triangulation to trees. The bijection works as follows using the definitions of Section \ref{from triangulations to trees}; delete the rightmost edge of each vertex from $S_k$ to $S_{k+1}$ (time-like edges) and all edges connecting vertices at height $k$ (space-like edges) - deleted edges are drawn as grey lines. Add a vertex $r$ (empty circle) and mark an edge from $S_0$ to $S_1$ which is the rightmost edge with respect to the edge $(r, S_0)$ (fat line). Dashed lines encode the fact that both graphs are infinite and extend beyond finite height.}
\label{bijection}
\end{figure}

\textit{From rooted planar trees $T$ to causal triangulations $\tr$}. First, mark the rightmost edge from $S_1(T)$ to $S_2(T)$ and delete the root and the edge attached to it. Second, for every vertex $v \in S_{k}(T)$, $k\geq2$, add a new edge at the rightmost of $v$, which links $v$ with a vertex in $S_{k+1}(T)$ such that the new edge does not cross any existing edges. Third, draw edges linking vertices at the same height $k$, i.e. $S_{k}(T)$, for any $k$ and add decorations to the cycle of maximum height.  The resulting graph is a planar causal triangulation. 

Additionally, we denote by $D_k(T)$ the set of edges in a tree $T$ connecting vertices at height $k-1$ to vertices at height $k$. Clearly $|D_k(T)| = |S_k(T)|$. 

Bijections are not only useful for enumeration problems. One can also extract information about the measure of an ensemble which is inherited from the measure of its bijection. We will return to this point and elaborate in the next section. However, the two sides of the bijection do not necessarily share every feature, a simple example being the dimensionality. A particular example is the uniform infinite causal triangulation (UICT), essentially a planar CDT constrained never to die out, which has $\dha = 2$  and $\ds \leq 2$, both on average and almost surely \cite{Durhuus:2009sm}. On the other side of the bijection we get the generic random tree (GRT) which has $\dha = 2$  and $\ds = 4/3$, both on average and almost surely,   \cite{Durhuus:2006vk, Durhuus:2009sm, Stefansson:2012sd}, saturating the right hand side of \eqref{ds_inequalities} (see section \ref{CGRT} for more details). Therefore one observes that the GRT induces the value of the Hausdorff dimension into the UICT due to the nature of the bijection \cite[Theorem 3]{Durhuus:2009sm}, whereas the bijection carries no information about the spectral dimension. 

Another useful mapping, first considered in \cite{Durhuus:2009sm}, is the following. Starting from a causal triangulation we contract all the vertices at the same height into a new vertex, but retaining all the time-like edges. The resulting graphs are considered as reduced models of causal triangulations, so-called \textit{multigraph ensembles}, which are constructed from the discretised half line by allowing multiple edges between adjacent vertices with some probability distribution (figure \ref{multigraph})
\footnote{We postpone the description of this map and further details about multigraphs till chapter \ref{multigraphs}.}.
\begin{SCfigure}
\centering
\includegraphics[scale=0.35]{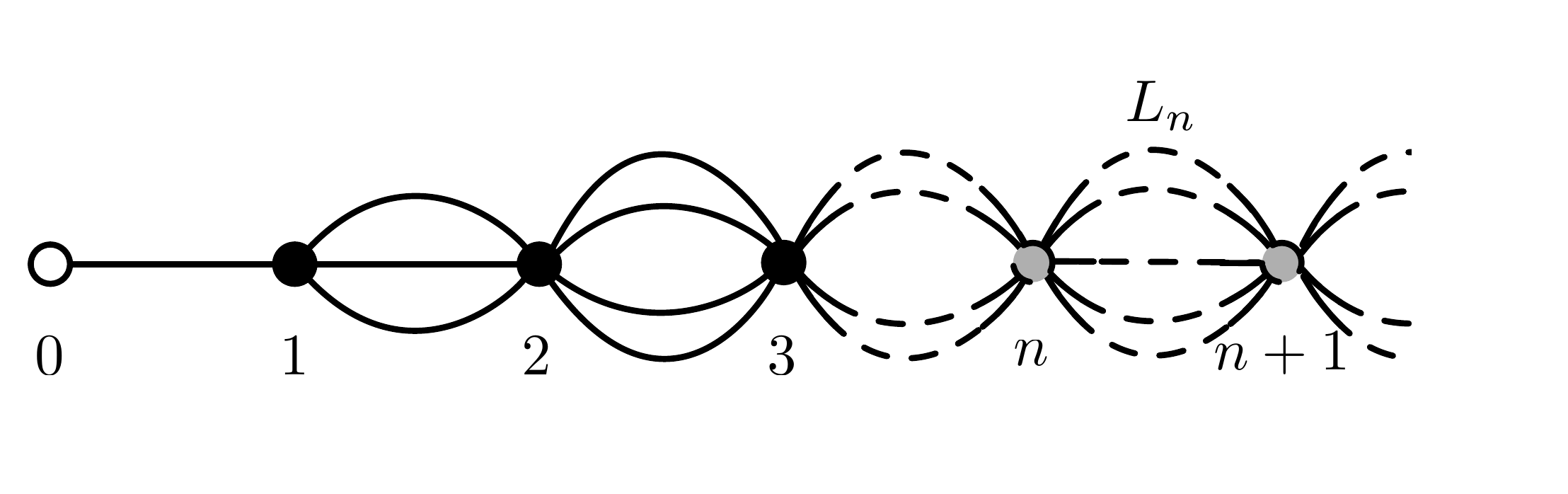} 
\caption{An example of a multigraph}
\label{multigraph}
\end{SCfigure}
The authors in \cite{Durhuus:2009sm} first proved that these reduced ensembles of the UICT have $\dha = \ds= 2$, both on average and almost surely saturating the left hand side of \eqref{ds_inequalities}. Then they showed that the spectral dimension of UICT is bounded above by the spectral dimension of multigraphs and therefore the UICT are recurrent (their argument becomes clear in the next paragraph). Here we see an example of an injection, where information about the spectral dimension is not completely lost. For the time being, a complete proof for the lower bound of the spectral dimension of the UICT is lacking. However numerical simulations indicate that it should be equal to 2. While the multigraph ensemble only gives an upper bound for the spectral dimension of the UICT, it is believed that both have the same spectral dimension. The UICT is consistent with the upper bound \eqref{ds_inequalities} even though the causal triangulations do not necessarily satisfy the assumption under which \eqref{ds_inequalities} is derived.

\subsection{Graph resistance}
\label{resistance} 
To understand further features of the mappings between causal triangulations, trees and multigraphs we introduce the notion of {\it graph resistance} $R_G$. It is defined by considering the graph as an electric network where each edge has resistance one \cite{Lyons:2012}. One distinguishes two cases: The recurrent case ($\ds\leq 2$) where a random walker ``faces'' infinite resistance to escape to infinity and returns to the root with probability 1; and the transient (or non-recurrent) case ($\ds \geq 2$) where finite resistance to infinity implies return probability strictly less than one (see also section \ref{ds and dh}). 

By Rayleigh's monotonicity principle 
\footnote{Intuitively, Rayleigh's monotonicity principle states that removing an edge increases the effective resistance, whereas contracting an edge, i.e. identifying its two endpoints and removing the resulting loop, decreases the effective resistance.}
\cite[section 2.4]{Lyons:2012}, the resistance from the root to infinity of the two-dimensional causal triangulation $R_{\tr}$ is bounded by
\beq \label{resistance_ineq}
R_{M} \leq R_{\tr} \leq R_{T},
\eeq
where $R_{M} $ and $R_{T}$ are the resistances of the corresponding multigraph and tree respectively. Given that the resistance of recurrent multigraphs is infinite this inequality serves as a proof that the two-dimensional UICT is recurrent and $\ds \leq 2$ almost surely. Furthermore it  implies that the recurrent multigraph ensemble and the generic tree ensemble are two extreme cases used to bound the spectral dimension of UICT and saturate the left and right hand side of \eqref{ds_inequalities} respectively, as we have already mentioned. It is believed that the spectral dimension of UICT is two and that thus multigraphs provide a tight bound.

\section{Galton-Watson and simply generated trees}
\label{GW trees and simply generated trees}
In the previous section we found that random planar trees have a special role in two-dimensional quantum gravity models due to their bijection with two-dimensional CDT. In particular we are interested in trees which are generated via a branching process. For this reason, in this section we focus on Galton-Watson \cite{Harris:1963, Athreya:2004} and simply generated trees and study their properties (we mostly follow \cite{Janson:2011sg, Jonsson:2011cn}).

Let us first fix some notation. Let $\C{T}_N$ be the set of all ordered rooted trees of size $N$ and denote by $\C{T}_f$ and $\C{T}_{\infty}$ the set of all rooted finite and infinite trees respectively. Then the set of all rooted trees, both finite and infinite is $\C{T} = \C{T}_f \cup \C{T}_{\infty}$. We also denote by $T_N$ and $T$ an element of the sets $\C{T}_N$ and $\C{T}$ respectively. 

\subsection{Galton-Watson trees}
\label{GW trees}
A \textit{Galton-Watson} process is a branching process which is determined by the \textit{offspring probabilities} $p_k, \ k=0,1,2,\ldots$, which are the probabilities that a single ``parent" will have $k$ ``children". The offspring probabilities satisfy the condition  $\sum _{k=0}^{\infty} p_k =1$. The process describes the evolution of a population of particles. It starts at time 0 with $Z_0$ particles each of which splits independently of the others into a random number of offspring according to the probability law $p_k$. The total number of particles in the first generation is denoted by $Z_1$. Every particle in the first generation splits independently according to $p_k$ and produces the second generation, with total number of particles $Z_2$, and so on. We should mention that the number of ``children" produced by a single ``parent" at any given time is independent of the history of the process and of the existence of other particles in the ``parent's" generation. If we represent the offspring by vertices and link them to the parent vertex we generate a tree. To consider rooted trees we start with population 1 of the zero-generation, i.e. $Z_0=1$, and attach a link to the single parent whose other end is incident to the root. Then notice that the offspring at the $n$-th generation correspond to vertices at height $n+1$ in the corresponding rooted tree, i.e. $Z_n = |D_{n+1} (T)|$. 

A very useful tool in the analysis of the Galton-Watson process is the generating function of the offspring probabilities
\beq
f(s) := \sum _{k=0}^{\infty} p_k s^k,
\eeq
analytic for $|s| \leq 1$. Let $P_n(i,j)$ be the probability that starting from $i$ particles, the $n$-th generation has population $j$. Then the following identities are in order \cite{Athreya:2004}
\bea
f_1(s) \equiv f(s) = \sum _{k=0}^{\infty} P_1 (1,k) s^k, \qquad f_n(s) :=   \sum _{k=0}^{\infty} P_n (1,k) s^k, \\ 
\sum _{k=0}^{\infty} P_1(i,k) s^k = \left ( f(s)\right )^i,  \qquad \sum _{k=0}^{\infty} P_n(i,k) s^k = \left ( f_n(s)\right )^i.
\eea
From these expressions we should note that the generating function $f(s)$ iterates as $f_{n+1} (s) = f(f_n(s))$. In addition the expectation value of offspring in the next generation is given by
\beq \label{mean_offspring}
m =  f' (1) = \sum _{k=1}^{\infty} k p_k.
\eeq 
The mean offspring value $m$ categorises the process into three cases \cite{Athreya:2004}: 
\begin{itemize}
\item \textit{Subcritical}, $f'(1) <1$, where the process dies out exponentially fast with probability one, therefore the corresponding tree is finite.
\item \textit{Critical}, $f'(1) =1$, where the process dies out with probability one, therefore the corresponding tree is finite too.
\item \textit{Supercritical}, $f'(1) >1$, where the process has a non-zero probability to survive to infinity, therefore the corresponding tree might be infinite.
\end{itemize} 
Additionally, the process is called \textit{generic} if the generating function $f(s)$ is analytic at $s=1$ \cite{Durhuus:2006vk, Durhuus:2009zz}. Next, we present an important lemma for the critical Galton-Watson tree \cite{Athreya:2004}.
\begin{lemma}
If $m = f'(1) =1$ and $\sigma ^2 = f''(1) <\infty$, then
\beq
\frac{1}{1-f_n(s)} = \frac{1}{1-s} + \frac{n f''(1)}{2} + o (n), \qquad \text{as} \ n\to \infty,
\eeq
uniformly for $0\leq s<1$. 
\end{lemma}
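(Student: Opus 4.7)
The plan is to reduce the problem to a telescoping sum via the iteration $f_{n+1}=f\circ f_n$ and then control the resulting error uniformly in $s$. First I would Taylor-expand $f$ around $s=1$: using $f(1)=1$, $f'(1)=1$, $f''(1)=\sigma^2$, one obtains $1-f(s) = (1-s) - (\sigma^2/2)(1-s)^2 + o((1-s)^2)$ as $s\to 1$. Inverting $1-f(s)$, subtracting $(1-s)^{-1}$, and collecting the geometric series yields
$$\frac{1}{1-f(s)} \;=\; \frac{1}{1-s} + \frac{\sigma^2}{2} + \eta(s),$$
where $\eta(s)\to 0$ as $s\to 1$. A direct check shows that the singularities of the two reciprocals cancel, so $\eta$, extended by $\eta(1):=0$, is continuous on $[0,1]$ and therefore bounded.

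Iterating this one-step identity along the orbit of $s$ and summing telescopically gives
$$\frac{1}{1-f_n(s)} - \frac{1}{1-s} \;=\; \sum_{k=0}^{n-1}\left(\frac{\sigma^2}{2}+\eta(f_k(s))\right) \;=\; \frac{n\sigma^2}{2} + \sum_{k=0}^{n-1}\eta(f_k(s)),$$
with $f_0(s):=s$. The lemma will therefore follow once I establish that the error $\sum_{k=0}^{n-1}\eta(f_k(s)) = o(n)$ \emph{uniformly} in $s\in[0,1)$.

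This uniform control is the step I expect to be the main obstacle, because for $s$ close to $1$ the iterates $f_k(s)$ need not approach $1$ rapidly. The plan is to exploit monotonicity of $f$ together with criticality. Assuming $\sigma^2>0$ (the degenerate case $\sigma^2=0$ forces $p_1=1$, whence $f_n=\mathrm{id}$ and the identity is trivial), strict convexity of $f$ combined with $f(1)=1$, $f'(1)=1$ yields $f(s)>s$ on $[0,1)$, so the monotone sequence $f_k(0)$ increases to $1$. Given $\varepsilon>0$, I would choose $s_0<1$ with $|\eta|<\varepsilon$ on $[s_0,1]$ and then $K=K(\varepsilon)$ with $f_K(0)\geq s_0$. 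Monotonicity of each $f_k$ gives $f_k(s)\geq f_k(0)\geq s_0$ for every $k\geq K$ and every $s\in[0,1)$, hence $|\eta(f_k(s))|<\varepsilon$ on the tail of the sum. Bounding the first $K$ terms by $\|\eta\|_\infty$ yields $\big|\sum_{k=0}^{n-1}\eta(f_k(s))\big|\leq K\|\eta\|_\infty + (n-K)\varepsilon$ uniformly in $s$, which delivers the required $o(n)$.
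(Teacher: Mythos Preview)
Your argument is correct and is precisely the classical proof of this result as given in the branching-process literature (Athreya and Ney, \emph{Branching Processes}, Chapter~I). The paper itself does not supply a proof; it merely quotes the lemma from that reference. Your handling of the uniform $o(n)$ control---using monotonicity of $f$ to push all iterates $f_k(s)$ above $f_k(0)\nearrow 1$ and thereby into the region where $|\eta|<\varepsilon$---is exactly the standard device, and the degenerate case $\sigma^2=0$ is disposed of correctly.
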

Applying this lemma for $s=0$ we deduce the probability of survival of the $n$-th generation
\beq
\BB{P} (Z_n > 0) = 1- f_n(0) = \frac{2}{nf''(1)} + o (1/n), \qquad \text{for large} \ n.
\eeq

The Galton-Watson process yields a probability measure on the set of all finite trees, $\C{T}_f $,
\beq \label{muGW}
\mu _{GW} (T) = \prod_{v \in V(T)\backslash r} p_{\sigma (v)-1}, \qquad \text{for} \qquad T \in \C{T}_f 
\eeq
and the ensemble $(\C{T}_f, \mu _{GW})$ is called a Galton-Watson tree. 
If we \textit{condition} the size of the tree to be $|T|=N$, then these random trees are called \textit{conditioned} Galton-Watson trees and the probability distribution $\mu_N$ on the set of trees with fixed size $N, \ \C{T}_N$, is given by
\beq \label{muN}
\mu_{N} (T)= \frac{\mu _{GW} (T)}{\mu(\C{T}_N)}, \qquad \text{for} \qquad T \in \C{T}_N 
\eeq
where $\mu(\C{T}_N) = \sum _{T\in \C{T}_N}\mu _{GW} (T)$. 

\textit{Uniform trees}: Of special interest is the case where the offspring probabilities are given by \cite{Durhuus:2003pa, Durhuus:2009sm}
\beq \label{pk_induce_uniform_measure}
p_k = 2^{-k-1}, \qquad k\geq 0.
\eeq
One can easily verify that $f(1) = f'(1) =1$, therefore the process is generic and critical. The interesting feature of this process is that it induces a \textit{uniform measure} on the set of trees with fixed size $N$, $\C{T}_N$. This can be seen from \eqref{muGW} and \eqref{muN} as follows. Noting that $\sum _{v \in V(T_N)\backslash r} \s(v) = 2N-1$, we get
\bea
\mu _{GW} (T) &=& 2^{-2N+1}, \\
\mu (\C{T}_{N}) &=& \sum _{T\in \C{T}_N}\mu _{GW} (T) = 2^{-2N+1} C(N-1),
\eea
where $C(N-1) = \frac{1}{N}{2N-2 \choose N-1}$ is the $N-1$-th Catalan number which enumerates the number of trees of size $N$ \cite[sections 1.2.2, 3.1.2]{Drmota:2009}.  
Thus
\beq \label{uniform_measure}
\mu_{N} (T) = \frac{1}{C(N-1)}, \qquad \text{for} \qquad T \in \C{T}_N,
\eeq
 is a uniform measure on the set $\C{T}_N$. The ensemble  $(\C{T}_N, 1/C(N-1))$ generated by \eqref{pk_induce_uniform_measure} is called the uniform generic critical random tree.
  
Under the prescription of section \ref{from triangulations to trees}, we can relate a planar causal triangulation to a rooted planar tree. In \cite{Durhuus:2009sm}, it was shown that the measure $\rho _N(\tr)$ on the set of causal triangulations of area $2N-2$, $\trs _{2N-2}$, equals the uniform measure \eqref{uniform_measure} on the set of trees of size $N$. The thermodynamic limit of $\rho _N(\tr)$, $N \to \infty$, exists and converges to the uniform measure on the set of infinite causal triangulation, which we have called the uniform infinite causal triangulation (UICT) ensemble.


\subsection{Simply generated trees}
\label{simply generated trees}

Simply generated trees are tree ensembles which are generated by a sequence of non-negative number, $\{w_n\}, \ n\geq1$, the \textit{branching weights}. We then define the \textit{weight} of a finite tree $T\in \C{T}_f$ to be
\beq
w(T) := \prod _{v \in V(T)\backslash {r}} w_{\sigma (v)}
\eeq
To avoid trivialities we assume $w_1 > 0$ and $w_n>0$ for some $n \geq 3$.   
Now we introduce the \textit{finite volume partition function} defined by
\beq
Z_N = \sum _{T\in \C{T}_N} w(T)
\eeq
By this construction, it becomes clear that the probability of picking an element $T$ from the set of trees of size $N$, $\C{T}_N$, is given by the distribution
\beq \label{Prob_of_T_N}
\nu _N(T)=  \frac{w(T)}{Z_N}, \qquad \text{for} \qquad T \in \C{T}_N.
\eeq
Therefore the set of trees $\C{T}_N$ equipped with the probability law \eqref{Prob_of_T_N} defines a tree ensemble. 

A special case of branching weights is when $\sum _{n=1}^{\infty} w_n = 1$, and the sequence $\{w_n\}$ is a probability weight sequence. In this case the probability of picking a finite tree from the set $\C{T}_f$ is given by its weight, i.e. $\nu _f(T) = w(T)$. In addition the finite volume partition function $Z_N$ becomes the probability of picking a tree of size $N$ from the set $\C{T}_f$, i.e. $Z_N(T) = \nu_f(\{ T : |T|=N\}) = \nu (\C{T}_N)$. Therefore it is easy to notice that the simply generated random tree ($\C{T}_N, \nu_N(T)$) with probability weight sequence is equivalent to a Galton-Watson tree $(\C{T}_N,\mu_N(T))$ conditioned on size $N$ with the following identification of probability distributions
\beq \label{GW_equiv_sgt}
\nu_f(T) = \mu _{GW} (T), \qquad \nu (\C{T}_N) \equiv \mu (\C{T}_N), \qquad \nu _N(T) \equiv \mu _N (T).
\eeq

An important feature of simply generated trees is that the branching weights can be rescaled without changing the simply generated random tree. In particular, let $a, b >0$ and rescale the branching weights by $\tilde w _n := a b^{n-1} w _n$. Then the weight for every tree in $\C{T}_N$  is shifted by $\tilde w (T) = a^N b^{N-1} w (T)$. Moreover the finite volume partition function changes at the same amount, $\tilde Z_N = a^N b^{N-1} Z_N$, leaving the probability \eqref{Prob_of_T_N} unchanged. Therefore the rescaled weight sequence $\{\tilde w_k\}$ defines the same simply generated random tree $(\C{T}_N, \nu_N(T))$ as $\{w_k\}$. We use this equivalence to rescale tree ensembles $(\C{T}_N, \nu_N(T))$ such that $\sum _{k=1}^{\infty}\tilde w_k =1$ and relate them to Galton-Watson trees according to \eqref{GW_equiv_sgt}. 

\subsection{Critical generic random tree}
\label{critical generic random tree}
\label{CGRT}
To study the thermodynamic limit, $N\to\infty$, and classify infinite random trees we need to introduce the generating functions of the finite volume partition functions and branching weights being 
\bea
\label{partition_gen_fun}
\C{Z} (s) &=& \sum _{N=1}^{\infty} Z_N s ^N, \\ 
\label{weights_gen_fun}
g(z) &=& \sum _{n=0} ^{\infty} w_{n+1} z^n,
\eea
with radius of convergence $s_0$ and $\rho$ respectively. These two generating functions are related through the recursion relation (see figure \ref{Z_decomposition} for a graphical interpretation)
\begin{SCfigure}
\includegraphics[scale=0.5]{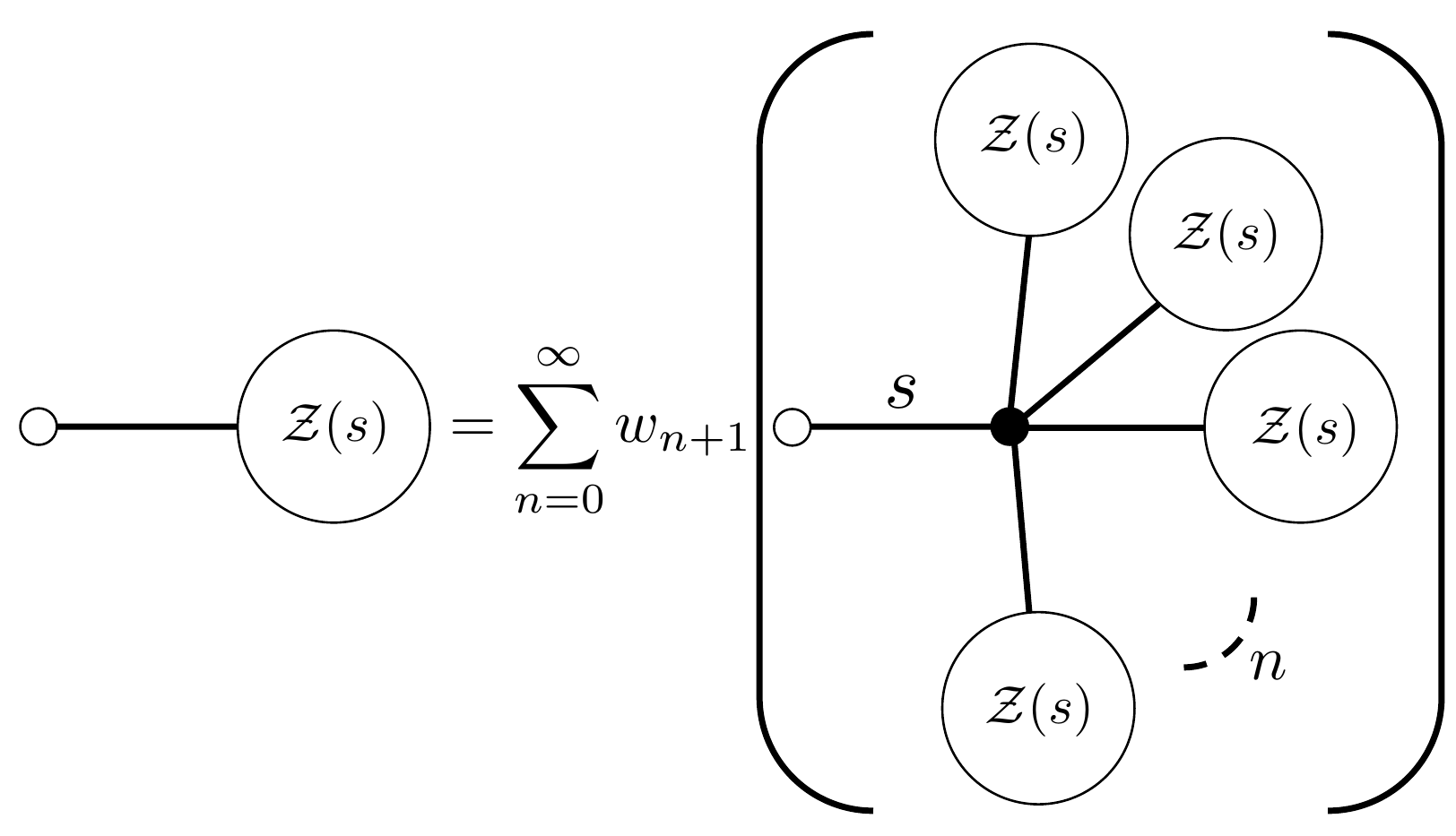}
\caption{A graphical interpretation of eq. \eqref{Z_and_g_relation}.}
\label{Z_decomposition}
\end{SCfigure}
\beq \label{Z_and_g_relation}
\C{Z} (s) = s g (\C{Z} (s))
\eeq
We also define $\C{Z}_0 = \lim_{s \to s_0} \C{Z}(s)$. Now we would like to find the equivalent probability weights which correspond to Galton-Watson trees as explained previously. To do so, we use the following lemma \cite{Janson:2011sg}. 
\begin{lemma}
There exists a probability weight sequence equivalent to $\{w_n\}$ if and only if $\rho >0$. In this case, the probability weight sequences equivalent to $\{w_n\}$ are given~by
\beq
p_n = \frac{z^n w_{n+1}}{g(z)}
\eeq
for any $z>0$ such that $g(z)<\infty$.
\end{lemma}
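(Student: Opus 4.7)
The plan is to reduce the lemma to the rescaling freedom of simply generated trees that was established just above: two branching weight sequences $\{w_n\}$ and $\{\tilde w_n\}$ give rise to the same random tree $(\C{T}_N,\nu_N)$ whenever $\tilde w_n = a\,b^{n-1}w_n$ for some $a,b>0$. An equivalent probability weight sequence is then simply a rescaling $\{\tilde w_n\}$ of $\{w_n\}$ of this form that additionally satisfies $\sum_{n=1}^\infty \tilde w_n = 1$.

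First I would carry out the defining summation. Substituting $\tilde w_n = ab^{n-1}w_n$ and using the definition \eqref{weights_gen_fun} of $g$ gives
\beq
\sum_{n=1}^{\infty}\tilde w_n \;=\; a\sum_{n=1}^{\infty} b^{n-1}w_n \;=\; a\sum_{m=0}^{\infty} b^{m}w_{m+1} \;=\; a\,g(b),
\eeq
so the normalisation condition becomes the single equation $a\,g(b)=1$. This already encodes both directions of the ``if and only if''.

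For the ``if'' direction, assume $\rho>0$ and pick any $z>0$ with $g(z)<\infty$; set $b=z$ and $a=1/g(z)$. Then $\{\tilde w_n\}$ is a probability weight sequence equivalent to $\{w_n\}$. Under the identification $\nu_f(T)=\mu_{GW}(T)$ of \eqref{GW_equiv_sgt}, the offspring probabilities of the associated Galton-Watson tree are read off from \eqref{muGW} and \eqref{muN} via $p_{\sigma(v)-1}=\tilde w_{\sigma(v)}$, i.e.\ $p_k=\tilde w_{k+1}$. Substituting the rescaling formula and shifting the index gives exactly
\beq
p_n \;=\; \tilde w_{n+1} \;=\; \frac{z^{n}w_{n+1}}{g(z)},
\eeq
matching the statement of the lemma.

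For the ``only if'' direction, suppose an equivalent probability weight sequence $\{\tilde w_n\}$ exists; then by the rescaling characterisation there are constants $a,b>0$ with $a\,g(b)=1$, so in particular $g(b)<\infty$, forcing $\rho\geq b>0$. I would close by remarking that different admissible choices of $z$ really do produce distinct probability weight sequences in general, but they all yield the same simply generated random tree by the rescaling invariance observed in the previous subsection, which is why the parametrisation by $z$ is natural. The only slightly delicate point is the bookkeeping between the index of the branching weights $w_n$ (starting at $n=1$, indexed by degree $\sigma(v)$) and the offspring probabilities $p_n$ (starting at $n=0$, indexed by number of children); everything else is a one-line computation using the geometric-type substitution that defines $g(z)$.
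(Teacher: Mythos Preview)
Your argument is correct and is the standard proof of this lemma. Note that the paper itself does not supply a proof here: it merely states the result and attributes it to \cite{Janson:2011sg}, so there is no in-paper argument to compare against.

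One small point worth making explicit: your ``only if'' direction, and the claim that \emph{all} equivalent probability weight sequences are of the stated form, both rely on the converse of the rescaling observation --- namely that if two weight sequences define the same $\nu_N$ for every $N$, then they must be related by $\tilde w_n = ab^{n-1}w_n$. The paper only records the forward direction (rescaling preserves $\nu_N$). The converse is easy to establish by comparing the weights of a few small trees, and in Janson's treatment ``equivalent'' is in fact \emph{defined} to mean related by such a rescaling, so this is not a substantive gap; but a sentence acknowledging the point would make your argument self-contained.
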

We apply this lemma for $z = \C{Z}_0$ and find
\beq \label{offspring_prob}
p_n = \frac{\C{Z}_0^nw_{n+1}}{ g \left (\C{Z}_0 \right )} \stackrel{\eqref{Z_and_g_relation}}{=} s_0 w_{n+1} \C{Z}_0^{n-1}.
\eeq
Thus the generating function of the offspring probabilities \eqref{offspring_prob} is given by
\beq \label{gener_fun_prob_weights}
f(z) = \sum _{n=0}^{\infty} p_n z^n = s_0 \sum _{n=0}^{\infty} w_{n+1} \C{Z}_0^{n-1} z^{n} = s_0 \C{Z}_0^{-1} g(\C{Z}_0 z)
\eeq
As we have already commented, if the generating function \eqref{gener_fun_prob_weights} is \textit{analytic in a neighbourhood of the unit disc},  the corresponding tree ensemble is called \textit{generic}. From the right hand side of \eqref{gener_fun_prob_weights} we observe that this definition is equivalent to $\C{Z}_0 < \rho$. Ensembles that do not meet this criterion are called \textit{non-generic}. Also note, that if $\rho = \infty$ then we always have a generic ensemble. 

As we have explained the mean offspring value is given by $m=f'(1)$ and now takes the form
\beq \label{mean_offsping}
m = \sum _{n=1}^{\infty} n p_n = s_0 \sum _{n=1}^{\infty} n w_{n+1} \C{Z}_0 ^{n-1} = s_0 g'(\C{Z}_0) = \C{Z}_0\frac{g'(\C{Z}_0)}{g(\C{Z}_0)} = 1 -\frac{g(\C{Z}_0)}{\C{Z}'(s_0)}
\eeq
where the last two equalities are obtained by using \eqref{Z_and_g_relation} at $s=s_0$. From the very last expression we notice that the process can be either critical or sub-critical. Additionally, for generic trees one can differentiate \eqref{Z_and_g_relation} with respect to $\C{Z}$ and get $\C{Z}_0 g'(\C{Z}_0)=g(\C{Z}_0)$ at $s=s_0$ since $\C{Z}_0<\rho$.  Looking at \eqref{mean_offsping} we conclude that generic trees always have $m=1$ and therefore correspond to critical Galton-Watson process. On the other hand, non-generic trees can be  either critical or sub-critical Galton-Watson processes.

An important and crucial feature of the critical size-conditioned  Galton-Watson tree  is that in the thermodynamic limit it converges to an infinite tree ensemble $\C{S}$, with a unique infinite path from the root (to infinity), called the spine, where the outgrowths from the vertices along the spine are independently and identically distributed (i.i.d.). Formally the following theorem holds \cite{Durhuus:2006vk, Durhuus:2009zz, Durhuus:2009sm}.  
\begin{theorem}
Assume that $\mu_N$ is defined as above as a probability measure on $\C{T}$ where $\{p_n\}$ defines a generic and critical Galton-Watson process. Then, $\mu_N$, converges weakly as $N\to \infty$ to a probability measure $\mu_{\infty}$ on $\C{T}$ concentrated on the set of infinite trees with a \textit{unique infinite spine} $\C{S}$. The outgrowths from the vertices on the spine are independent critical Galton-Watson trees with offspring probabilities given by \eqref{offspring_prob}. In addition, the probability law for finding a vertex on the spine with $k$ finite branches is
\beq
\BB{P}_s (k) = s_0 (k+1) w_{k+2} \C{Z}_0^{k}
\eeq
which has a generating function
\beq
\sum _{k=0}^{\infty} \BB{P}_s (k) x^{k} = f'(x). 
\eeq
\end{theorem}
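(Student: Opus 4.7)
The plan is to prove weak convergence by verifying convergence on the generating family of cylinder events. Fix $h \geq 1$ and a finite rooted planar tree $T_0$ of height exactly $h$ with $k := |S_h(T_0)| \geq 1$, and consider the cylinder
\[ U_h(T_0) = \{T \in \C{T} : B_h(T,r) = T_0\} . \]
Since such cylinders generate the Borel $\sigma$-algebra on $\C{T}$, weak convergence reduces to showing $\mu_N(U_h(T_0)) \to \mu_\infty(U_h(T_0))$ for every such $T_0$, where $\mu_\infty$ is the probability measure described in the theorem (well-defined by Kolmogorov extension once cylinder probabilities are prescribed).

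For the finite-$N$ side, every tree $T \in U_h(T_0)$ with $|T| = N$ arises by attaching to each of the $k$ vertices of $S_h(T_0)$ an independent Galton--Watson subtree, the sizes of these subtrees summing to $N - |T_0| + k$. Let $\tau_1, \ldots, \tau_k$ be i.i.d.\ copies of the total progeny of an unconditioned GW tree with offspring distribution $\{p_n\}$, and set
\[ \mu_{GW}^{<h}(T_0) := \prod_{w \in V(T_0) \setminus r,\, h(w) < h} p_{\sigma(w)-1} . \]
Then \eqref{muGW} and \eqref{muN} give
\[ \mu_N(U_h(T_0)) = \frac{\mu_{GW}^{<h}(T_0)\, \BB{P}(\tau_1 + \cdots + \tau_k = N - |T_0| + k)}{\BB{P}(\tau = N)} . \]

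The key quantitative step is the asymptotic of this probability ratio. The Otter--Dwass (cycle) identity gives $\BB{P}(\tau_1 + \cdots + \tau_k = n) = (k/n)\, \BB{P}(S_n = n - k)$, where $S_n$ is the sum of $n$ i.i.d.\ copies of the offspring distribution. Since the process is critical, $f'(1) = 1$, and genericity forces $f$ to be analytic at $s = 1$ so that in particular $\s^2 := f''(1) < \infty$, the local central limit theorem applies and yields $\BB{P}(S_n = n - j) \sim 1/(\s\sqrt{2\pi n})$ uniformly for bounded $j$. Consequently $\BB{P}(\tau = n) \sim (\s\sqrt{2\pi})^{-1} n^{-3/2}$ and
\[ \frac{\BB{P}(\tau_1 + \cdots + \tau_k = N - |T_0| + k)}{\BB{P}(\tau = N)} \longrightarrow k \qquad \text{as } N \to \infty , \]
whence $\mu_N(U_h(T_0)) \to k \cdot \mu_{GW}^{<h}(T_0)$.

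It remains to check that $\mu_\infty(U_h(T_0))$ equals this limit and simultaneously to read off the asserted structure. Under $\mu_\infty$ the unique infinite spine intersects $S_h(T_0)$ in exactly one vertex $v$; for each of the $k$ choices of $v$ the contribution is computed as follows. Every non-root spine vertex $v_i$ on the path $r = v_0, v_1, \ldots, v_h = v$ with $1 \leq i \leq h-1$ contributes the probability $\BB{P}_s(\sigma(v_i) - 2)/(\sigma(v_i) - 1)$ of having the prescribed total number of children (size-biased) and of the next spine vertex being the correct one (uniform among its children); by the identity $\BB{P}_s(j) = (j+1)p_{j+1}$ this collapses to $p_{\sigma(v_i)-1}$, precisely the GW weight. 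Every non-spine vertex $w$ at level $< h$ lies in a non-spine GW subtree and contributes $p_{\sigma(w)-1}$, while $v$ itself and every non-spine vertex at level $h$ carries no forward constraint and contributes $1$. Multiplying these factors and summing over the $k$ choices of $v$ gives exactly $k \cdot \mu_{GW}^{<h}(T_0)$, matching the finite-$N$ limit; the same bookkeeping verifies that outgrowths from the spine are independent critical GW trees with offspring law \eqref{offspring_prob}, and $\sum_{k \geq 0} \BB{P}_s(k) x^k = \sum_{k \geq 0} (k+1) p_{k+1} x^k = f'(x)$ is immediate. The main technical obstacle is the local CLT input: it depends on the finite second moment (provided by genericity) and requires care if the offspring support lies on a proper sublattice of $\BB{Z}_{\geq 0}$, but once in hand, the factor $k$ it produces is exactly what selects the size-biased spine distribution $\BB{P}_s$ and makes the two descriptions agree.
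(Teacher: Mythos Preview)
The paper does not actually prove this theorem; it is quoted as a known result with citations to Durhuus et al.\ (2006, 2009). Your argument is correct and is essentially the standard proof of this classical fact.

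One remark on comparison with the cited sources: those papers obtain the key asymptotic $Z_N \sim c\, s_0^{-N} N^{-3/2}$ by singularity analysis of the generating function $\C{Z}(s)$ near its square-root branch point $s=s_0$, rather than through the Otter--Dwass identity together with the local CLT as you do. The two routes are equivalent and both deliver the factor $k$ in the limit that is responsible for the size-biasing along the spine. Your caveat about periodicity is well placed; in the generating-function approach the analogous hypothesis is that $s_0$ is the unique dominant singularity on the circle $|s|=s_0$, and in either setup the general (periodic) case is handled by restricting to the appropriate arithmetic progression of $N$. A minor cosmetic point: you need not invoke Kolmogorov extension to build $\mu_\infty$, since the spine-plus-GW-outgrowths description already defines an explicit random tree whose law is $\mu_\infty$.
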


Critical Galton-Watson trees with finite variance $f''(1)<\infty$ define a generic ensemble, whereas those with infinite variance, $f''(1)=\infty$ belong to the critical non-generic ensembles. 

We call  the ensemble $(\C{S}, \mu _{\infty})$ a generic random tree (GRT). Some useful results on the attributes of the tree ensembles are summarised by
\begin{lemma}
\label{Critical_GW_StandardResults}%
For any critical Galton-Watson ensemble and related  generic random tree ensemble,
\bea 
\expect {\abs{D_k}}{\infty} &=& (k-1)f''(1)+1,\quad k\ge 1,\label{Done}\\
\expect{|B_k|}{\CGW}&=&k,\quad k\ge 1,\label{BGW}\\
\expect{|B_k|}{\infty}&=&\half k(k-1)f''(1)+k,\quad k\ge 1,\label{BT}\\
 \expect {\abs{D_k}^{-1}}{\infty} &=& \mu_{\CGW}(  D_k(T)> 0),\quad k\ge 1.\label{Dinv}\eea
\end{lemma}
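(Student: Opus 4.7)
The plan is to exploit the spine decomposition of the generic random tree measure $\mu_\infty$ established in the preceding theorem, combined with the criticality $m = f'(1) = 1$. Under $\mu_{\CGW}$, the branching process satisfies $\expect{Z_n}{\CGW} = m^n = 1$ for every $n \geq 0$, which in the tree conventions of this chapter (where $v_1$ at height one is the generation-zero vertex) translates to $\expect{|D_l|}{\CGW} = 1$ for every $l \geq 1$. Summing yields \eqref{BGW} immediately because $|B_k| = \sum_{l=1}^{k} |D_l|$.

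For \eqref{Done} I would decompose $|D_k|$ along the spine as
\[
|D_k| \;=\; 1 + \sum_{j=1}^{k-1}\sum_{i=1}^{X_j} N_{j,i},
\]
where the leading $1$ is the spine edge $(v_{k-1}, v_k)$, $X_j$ is the number of finite branches at the spine vertex $v_j$, and $N_{j,i}$ is the contribution to $D_k$ from the $i$-th such branch. Differentiating the generating function $\sum_n \BB{P}_s(n)\, x^n = f'(x)$ at $x = 1$ gives $\expect{X_j}{} = f''(1)$. Each finite branch is an independent critical Galton--Watson tree rooted at height $j+1$, so $N_{j,i}$ counts the population at generation $k-j-1$ of a critical branching process; hence $\expect{N_{j,i}}{} = m^{k-j-1} = 1$ for every $j \in \{1,\dots,k-1\}$ (the boundary case $j = k-1$ merely recording the connecting edge from $v_{k-1}$ to the branch root). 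Wald's identity together with the independence of the branches then gives $\expect{|D_k|}{\infty} = 1 + (k-1)f''(1)$, and summing once more yields \eqref{BT}:
\[
\expect{|B_k|}{\infty} \;=\; \sum_{l=1}^{k}\bigl[(l-1)f''(1) + 1\bigr] \;=\; \tfrac{1}{2}k(k-1) f''(1) + k.
\]

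The subtlest identity is \eqref{Dinv}, which I would derive from a Kesten-type size-biased relation. Writing $T|_{\leq k}$ for the truncation of a tree to heights at most $k$, the claim is that
\[
\mu_{\infty}\bigl(T|_{\leq k}\bigr) \;=\; |S_k(T)|\,\mu_{\CGW}\bigl(T|_{\leq k}\bigr)
\]
on the set of trees of height at least $k$. The algebraic kernel is the identity $\BB{P}_s(n) = (n+1)\, p_{n+1}$ which one reads off by comparing the formulae for $\BB{P}_s$ and $p_n = s_0 w_{n+1} \C{Z}_0^{n-1}$: at every spine vertex, the joint probability of having $n+1$ children with a prescribed one of them designated as the spine child factorises as $\BB{P}_s(n)/(n+1) = p_{n+1}$, so the weight contributed along any specific spine path to height $k$ coincides with the unbiased Galton--Watson weight of the tree, and summing over the $|S_k(T)|$ admissible spine paths produces the stated size-biasing factor. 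Applying this with $F(T) = 1/|D_k(T)|$ and using $|D_k| = |S_k|$ in a tree gives
\[
\expect{|D_k|^{-1}}{\infty} \;=\; \expect{|S_k|\,|S_k|^{-1}\,\mathbf{1}_{\{|S_k|>0\}}}{\CGW} \;=\; \mu_{\CGW}\bigl(D_k(T) > 0\bigr),
\]
which is \eqref{Dinv}. The main obstacle will be stating and establishing this size-biased identity cleanly, since it requires fixing a convention that distributes the spine child uniformly among the siblings at every spine vertex so that the combinatorial factors $(n+1)$ and $1/(n+1)$ cancel exactly, a bookkeeping step that the other three identities do not demand.
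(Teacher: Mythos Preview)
Your proposal is correct and follows essentially the same route as the cited reference: the paper itself gives no proof here, deferring \eqref{Done}--\eqref{BT} to Appendix~2 of \cite{Durhuus:2006vk} and \eqref{Dinv} to Lemmas~4--5 of the same paper, and those arguments are precisely the spine decomposition plus Wald's identity for the first three and the Kesten size-biasing relation $\mu_\infty(T|_{\leq k}) = |S_k(T)|\,\mu_{\CGW}(T|_{\leq k})$ for the last. Your derivation of the size-biasing identity via $\BB{P}_s(n)/(n+1) = p_{n+1}$ is exactly the content of Lemma~4 in \cite{Durhuus:2006vk}, so nothing is missing.
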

\begin{proof}The proofs of \eqref{Done},  \eqref{BGW} and \eqref{BT}  are given in \cite{Durhuus:2006vk}, Appendix 2. The result \eqref{Dinv} uses Lemma 4 and  the proof of  Lemma 5 of \cite{Durhuus:2006vk}. \end{proof}

We will find it particularly useful to consider the \textit{generalised uniform} process $U$ for which%
\bea  \label{Udist}
p^U_k &=&
\begin{cases} 
b,&k=0,\\      
b^{k-1}(1-b)^2,& k\geq 1,
\end{cases}
\eea
with $0<b<1$ 
\footnote{The special case $b=1/2$  reduces \eqref{Udist} to \eqref{pk_induce_uniform_measure} and defines the uniform generic critical random tree and the corresponding UICT \cite{Durhuus:2009sm}.} \cite{Harris:1963, Athreya:2004}. The generating function is

\bea f^U(x)=\sum_{k=0}^\infty p^U_k x^k=\frac{b+(1-2b)x}{1-bx}.\eea
The $r$'th iterate of $f^U$ is  
 \bea f^U_r(x)=\frac{rb - (rb + b - 1)x}{1 - b + rb - rbx} \label{Riterate}\eea
and 
\beq {f^U}'(1) =1,\qquad {f^U}''(1)=\frac{2b}{1-b}.\eeq
We will denote by $\infty U$ the GRT measure associated with $U$ (equivalently this is the Galton-Watson process described by $U$ and constrained never to die out).
It follows from Lemma \ref{Critical_GW_StandardResults} and \eqref{Riterate} that  
\bea \expect {\abs{D_k}^{-1}}{\infty U} &=& \frac{1}{1+ (k-1)b(1-b)^{-1}} =\frac{1}{1+{f^U}''(1)(k-1)/2},\,\, k\ge 1.\label{U:Dinv}\eea

Before we close this section, for completeness sake, we would like to comment on the thermodynamic limit of the measure of conditioned sub-critical non-generic trees. The authors in \cite{Jonsson:2011cn}  proved the existence of the infinite volume measure and showed that it is concentrated on the set of trees of finite diameter with precisely one vertex of infinite degree and the rest of the tree is distributed as a sub-critical Galton-Watson process.

\section{Conclusion and outlook}
\label{graphs_outlook}
In this chapter we argued why the theory of random infinite graphs is essential for studying fractal aspects of random geometry, e.g. the spectral and Hausdorff dimensions. We introduced a useful bijection between two-dimensional causal triangulations and planar trees and concentrated on a particular ensemble of causal triangulations, the UICT, and its relation to the GRT and the theory of branching processes. However, the definitions we presented are not sufficient to deal with the scale dependent spectral dimension problem. Definition \eqref{ds_graph_def} is valid only at large diffusion time, when the walk is much longer than the discretised scale (the length of the edge), and therefore probes only the long distance features of the graph/geometry. In fact, the values we presented for the several ensembles of graphs remain fixed, describe the long scale characteristics of the random graph and are irrelevant to the short scales. Additionally, one would like to perform the continuum limit, probe the short distance physics and show that the dynamical reduction of the spectral dimension is not an artefact of the discretisation. 

The above considerations raise two immediate questions. Firstly, whether there is a rigorous definition of the continuum limit and scale dependent spectral dimension in the context of graph ensembles and secondly whether there might be a reduced version of the full CDT model which is analytically tractable and yet displays behaviour similar to the full model, at  least as far as the spectral dimension is concerned. The first question is answered in the next chapter, while in chapters \ref{multigraphs} and \ref{physics} we deal with the second point.  
\end{chapter}

\clearemptydoublepage
\begin{chapter}{Continuum Random Combs}
\label{combs}

The terminology ``combs" refers to simplified models of trees with comb-like structure (see figure \ref{comb}). They may not be closely related to random triangulations, but they serve as useful toy models to apply new methods to the study of the fractal properties  of random geometry. For example, the authors in \cite{Durhuus:2005fq} used these simplified models of trees to study the spectral dimension of combs by introducing generating function methods. This work was the starting point for applying generating function techniques for the spectral dimension to other graph ensembles too. Random combs were also used as simplified models to explore the thermodynamic properties of the ensemble measure, i.e. the convergence of the probability measure on the set of finite graphs of size $N$, as $N\to \infty$ \cite{Durhuus:2009zz}. The comb-like structure was also important for the study of biased random walks on random combs which serve as an example of transient walks \cite{Elliot:2007br}. 

One can argue that combs serve as an instructive playground for employing new methods. This principle is followed in \cite{Atkin:2011ak}, which we also discuss in this chapter, where random combs were used as simplified toy-models to search for a rigorous definition of the continuum limit of graph ensembles and of the phenomenon of scale dependent spectral dimension.

The basic idea behind any rigorous formulation is very simple. Consider a comb which has a half-line structure up to some characteristic distance and beyond this scale the structure changes drastically. As we have mentioned, the definition \eqref{ds_graph_def} of the spectral dimension probes only the long distance characteristics of the graph, which implies that the value of the spectral dimension encodes information only for the extra structure of this comb. If we could have a definition which is sensitive in both regions of this comb separated by the characteristic scale, then a random walk within the first region would ``feel" a one-dimensional structure, whereas when the walk passes into the second region ``explores" extra structure, which implies different (larger) spectral dimension. This is a heuristic picture of what we would like to model. In addition, we should keep in mind that the short distance characteristics are sensitive to the cut-off scale, i.e. the lattice spacing $a$, which should be taken to zero, i.e. $a\to0$, defining the continuum limit. These two points will be our guide to the study for formal definitions. 

In this chapter we review the results of \cite{Atkin:2011ak}. Particularly, we rigorously formulate the heuristic arguments made above and then apply this formulation to a simple comb ensemble and present results for another two random combs. We finally conclude with a few remarks.

\section{Definitions}
\label{combs_definitions}

The structure of a comb consists of a half line regarded as a graph, called the \textit{spine} of the comb and denoted by $S = \{s_0, s_1, s_2, \ldots\}$, and finite or infinite linear chains of vertices, called the teeth of the comb $T_n = \{t_{n_0}, t_{n_1}, \ldots \},  \ n = 1, 2,\ldots$, which are attached to the vertices $s_n, \ n=1, 2,\ldots$, on the spine except the root $r=s_0$ (figure \ref{comb}). We further assume for convenience that every tooth is attached to the spine at its endpoint. One can see from this structure that the root is the only vertex on the spine which has degree 1 and any other vertex on the spine has degree 3 at most.  

\begin{figure}[t]
  \begin{center}
    \includegraphics[scale=0.5]{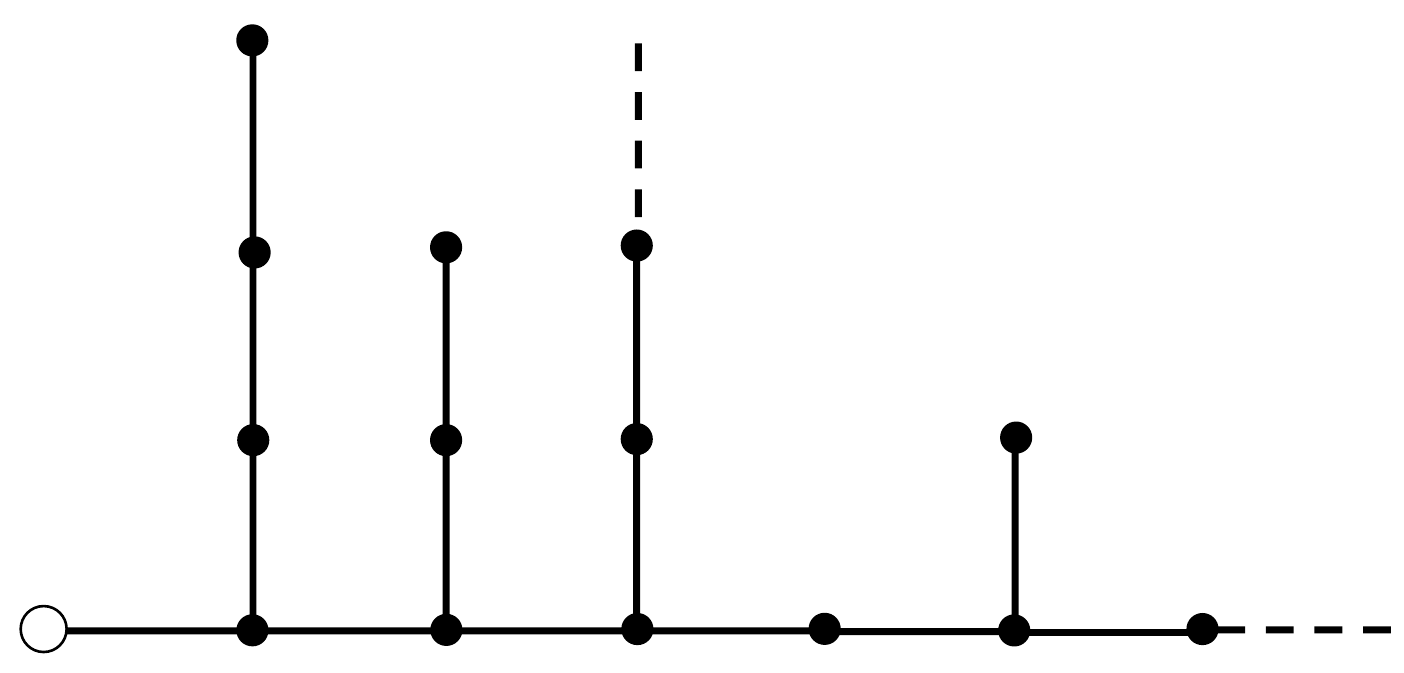}
     \caption{An example of a comb.}
    \label{comb}
  \end{center}
\end{figure}

In addition, we denote by $C_k$ the truncated comb obtained by removing the links $(s_0, s_1), \ldots, (s_{k-1}, s_k)$, the teeth $T_1, \ldots, T_k$ and relabelling the remaining vertices on the spine such that the $k$-vertex of $C$ becomes the root of $C_k$.  For convenient purposes we denote by $C=\infty$  the empty comb in which the spine has no teeth, i.e. a half-line, and $C=*$  the full comb in which every vertex on the spine is attached to an infinite tooth, e.g. $T_k = \infty$ for all $k\geq1$.

A random comb $\R{C} = (\C{C}, \mu)$ is defined as the set $\C{C}$ of all combs equipped with a probability measure $\mu$ on $\C{C}$. The latter is determined by letting the length of the teeth be identically and independently distributed by the measure $\varpi$.  Thus the set $\C{A}$ which consists of combs with teeth at vertices $s_1, s_2,\ldots ,s_k$ having lengths $\ell_1,\ell_2,\ldots ,\ell_k$ has measure
\beq
\mu (\C{A})=\prod_{j=1}^k \varpi(\ell_j).
\eeq

\subsection{Random walks on combs}

In section \ref{ds and dh}, we introduced the generating functions  \eqref{Q_and_P_def} for random walks on graphs. In particular, for a fixed comb $C$, the generating function for first return has a useful decomposition; a random walk steps from the root $r$ to vertex $s_1$ with probability one. At vertex $s_1$, the random walker might step to one of its neighbour vertices with probability $1/3$. If he returns to the root the random walk is over contributing $ (1-x)/3$ to $P_C(x)$. However, being at vertex $s_1$, his next step might be either into the tooth $T_{1}$ or into the truncated comb $C_1$. In either case, the random walker diffuses until he reaches vertex $1$ for the the $n$-th time, contributing to the generating function a factor $\left (\third \left (P_{T_1}(x)+P_{C_1}(x)\right) \right )^n$, where he steps back to the root with probability $1/3$ and the random walk is over. This process implies the recursion relation
\beq \label{rec_rel_P_c}
P_C(x) = \frac{1}{3}(1-x) \sum _{n=0}^{\infty} \left (\third \left (P_{T_1}(x)+P_{C_1}(x)\right) \right )^n = \frac{1-x}{3- P_{T_1}(x)-P_{C_1}(x)}.
\eeq
This result can be applied to both the empty comb (half line) and the full comb. The latter case \eqref{rec_rel_P_c} reads
\beq
P_{*}(x) = \frac{1-x}{3-P_{\infty}(x)-P_{*}(x)}
\eeq
which can be solved for $P_{*}(x)$, since we know $P_{\infty}$(x) from \eqref{P_infty}, yielding
\beq
Q_{*}(x) = x ^{-1/4}, \qquad x \to 0.
\eeq

It is now convenient to state three lemmas which were introduced and proven in \cite{Durhuus:2005fq} and will be useful in our discussion soon. 

\begin{lemma}[Monotonicity lemma A]
\label{MonoLem1}
The generating function $P_C(x)$ is a monotonic increasing function of $P_{T_k}(x)$ and $P_{C_{k}}(x)$ for any $k\geq 1$.
\end{lemma}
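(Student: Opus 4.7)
The plan is to prove the statement by induction on $k$, using the basic recursion \eqref{rec_rel_P_c} as both the base case and the inductive step. The key observation is that for $0\le x\le 1$ every first-return generating function is a probability generating function evaluated at $s=\sqrt{1-x}\in[0,1]$, so $0\le P_{T_k}(x), P_{C_k}(x)\le 1$; hence the denominator $3-P_{T_1}(x)-P_{C_1}(x)\ge 1>0$ and everything in sight is well defined.

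For the base case $k=1$, I would differentiate the right-hand side of \eqref{rec_rel_P_c} directly, or simply argue that
\[
P_C(x)=\frac{1-x}{3-P_{T_1}(x)-P_{C_1}(x)}
\]
is a strictly increasing function of $P_{T_1}(x)$ and of $P_{C_1}(x)$, since increasing either quantity shrinks the (positive) denominator while leaving the (nonnegative) numerator unchanged. This settles $k=1$ for both arguments.

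For the inductive step, note that the very same recursion, applied to the truncated comb $C_{k-1}$ rather than to $C$, yields
\[
P_{C_{k-1}}(x)=\frac{1-x}{3-P_{T_k}(x)-P_{C_k}(x)},
\]
so $P_{C_{k-1}}$ is a monotone increasing function of both $P_{T_k}$ and $P_{C_k}$. By the induction hypothesis $P_C$ is a monotone increasing function of $P_{C_{k-1}}$, and composition of monotone increasing functions is monotone increasing; this gives monotonicity of $P_C$ in $P_{T_k}$ and $P_{C_k}$. Treating $P_{T_j}$ and $P_{C_j}$ for $j<k$ as held fixed (they are independent inputs parametrising different combs), the induction closes cleanly.

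The only subtlety, which I would flag rather than call a real obstacle, is to justify that the iterated substitution never drives any denominator to zero. This follows from the uniform bound $P_{T_j}(x)+P_{C_j}(x)\le 2$ on $[0,1]$, itself a consequence of the probabilistic interpretation of $P_G(x)$ as an evaluation of a subprobability generating function. Once this is in place, the rest is bookkeeping: the statement for the spine variable $P_{C_k}$ follows from the chain of compositions $P_C\leftarrow P_{C_1}\leftarrow\cdots\leftarrow P_{C_{k-1}}$, while the statement for the tooth variable $P_{T_k}$ uses the same chain terminated at $P_{C_{k-1}}$, which depends increasingly on $P_{T_k}$ at the last step.
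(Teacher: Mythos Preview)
Your proof is correct and follows essentially the same route as the paper and its reference \cite{Durhuus:2005fq}: the recursion \eqref{rec_rel_P_c} immediately gives monotonicity at level $k=1$, and then the chain $P_C\leftarrow P_{C_1}\leftarrow\cdots\leftarrow P_{C_{k-1}}$ together with composition of increasing functions handles general $k$. The paper's own multigraph analogue (Lemma~\ref{monotonicity}) is proved by exactly this argument, and your bound $P_{T_j}(x)+P_{C_j}(x)\le 2$ ensuring positive denominators is the right justification for well-definedness throughout.
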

\begin{lemma}[Monotonicity lemma B]
\label{MonoLem2}
The generating function $P_C(x)$ is a decreasing function of the length, $\ell_k$, of the tooth $T_k$ for any $k\geq 1$.
\end{lemma}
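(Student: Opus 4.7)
My plan is to reduce the claim to a one-dimensional calculation on the tooth and then conclude via Monotonicity Lemma A. Varying $\ell_k$ changes only $P_{T_k}(x)$: the generating function $P_{C_k}(x)$ depends only on the teeth $T_{k+1},T_{k+2},\dots$ and the continuing spine (since the truncated comb $C_k$ excludes $T_k$ by definition), and the $P_{T_j}(x)$ with $j\neq k$ trivially do not depend on $\ell_k$. Lemma A then tells us that $P_C(x)$ is monotonic increasing in $P_{T_k}(x)$, so it is enough to prove that $P_{T_k}(x)$ is a strictly decreasing function of $\ell_k$ for every $x\in(0,1)$.

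To study a tooth in isolation, let $a_\ell(x)$ be the generating function for the first hitting time of $s_k$ by a random walker started at the vertex $t_0$ of a tooth of length $\ell$ (adjacent to $s_k$). Decomposing the walk at $t_0$ exactly as in the half-line calculation around \eqref{P_half_line}, I would obtain
\begin{equation}
a_1(x)=\sqrt{1-x},\qquad a_\ell(x)=\frac{\sqrt{1-x}}{2-\sqrt{1-x}\,a_{\ell-1}(x)}\quad (\ell\ge 2),
\end{equation}
where $a_{\ell-1}(x)$ enters the recursion because the first return from $t_1$ to $t_0$ (without hitting $s_k$) is structurally identical to $a_{\ell-1}$ with $t_0$ playing the role of $s_k$. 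Together with the initial step from $s_k$ into the tooth this gives $P_{T_k}(x)=\sqrt{1-x}\,a_{\ell_k}(x)$, so the task reduces to showing $a_{\ell+1}(x)<a_\ell(x)$ for every $\ell\ge 1$ and every $x\in(0,1)$.

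I would do this by induction. The base case is immediate: $a_2(x)=\sqrt{1-x}/(1+x)<\sqrt{1-x}=a_1(x)$ for $x>0$. For the inductive step, observe that the map $u\mapsto \sqrt{1-x}/(2-\sqrt{1-x}\,u)$ is strictly increasing on the interval on which the denominator is positive, and the bound $0<a_\ell(x)\le 1$ (which follows from $a_\ell(x)$ being the generating function of an honest first-passage time with $x>0$) keeps us safely in that regime. Hence $a_\ell<a_{\ell-1}$ yields $a_{\ell+1}<a_\ell$, completing the induction; composing with Lemma A then delivers the claim.

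I do not anticipate any serious obstacle. The only point that deserves care is the bookkeeping at the junction between the tooth and the spine: $s_k$ has degree three in $C$ while $t_0$ has degree one or two according to whether $\ell_k=1$ or $\ell_k\ge 2$. These local degree counts are precisely what separate the base case from the recursion, and once they are correctly pinned down the argument reduces to the same one-line algebraic monotonicity that produced $P_\infty(x)$ in the excerpt.
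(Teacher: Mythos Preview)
Your argument is correct. The reduction to Lemma~A is clean, and the inductive monotonicity of the tooth first-passage generating function is sound: the recursion $a_\ell=\sqrt{1-x}/(2-\sqrt{1-x}\,a_{\ell-1})$ is equivalent to the paper's recursion \eqref{PhalflineL} for $P_\ell$ (with $P_\ell=\sqrt{1-x}\,a_\ell$), and the map is strictly increasing on the relevant range, so $P_1>P_2>\cdots$ follows by induction from $P_0=1>P_1=1-x$.

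The paper itself does not prove this lemma; it simply cites \cite{Durhuus:2005fq}. What the paper does provide (in Appendix~\ref{Appendix_multigraphs}) is the closed form \eqref{Psegment}, namely $P_\ell(x)=1-\sqrt{x}\,\tanh\!\big(\ell\,\mathrm{artanh}\sqrt{x}\big)$, from which the monotone decrease in $\ell$ is immediate since $\tanh$ is increasing. That route is shorter once the closed form is in hand; your inductive proof has the virtue of being self-contained and not requiring the explicit solution.
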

\begin{lemma}[Rearrangement lemma]
\label{RearrangeLem1}
Let $C'$ be the comb obtained from $C$ by swapping the teeth $T_k$ and $T_{k+1}$. Then $P_C(x)>P_{C'}(x)$ if and only if $ P_{T_k}(x)> P_{T_{k+1}}(x)$.
\end{lemma}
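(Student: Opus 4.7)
The plan is to reduce the comparison between $P_C(x)$ and $P_{C'}(x)$ to a purely local calculation at the two swapped teeth, exploiting only the recursion \eqref{rec_rel_P_c} together with the monotonicity lemmas already established. First I would observe that, since $C$ and $C'$ differ only in the $k$-th and $(k+1)$-th teeth, the truncated combs satisfy $C_{k+1}=C'_{k+1}$ as combs, so $P_{C_{k+1}}(x)=P_{C'_{k+1}}(x)=:\alpha$. Moreover, the teeth $T_1,\ldots,T_{k-1}$ are shared between $C$ and $C'$, so by iterating \eqref{rec_rel_P_c} and invoking Lemma \ref{MonoLem1} at each of the $k-1$ steps, the comparison $P_C(x)\gtrless P_{C'}(x)$ becomes equivalent to the comparison $P_{C_{k-1}}(x)\gtrless P_{C'_{k-1}}(x)$.

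Next, writing $a=P_{T_k}(x)$, $b=P_{T_{k+1}}(x)$, and $\alpha$ as above, I would apply \eqref{rec_rel_P_c} twice on each side to obtain
\beq
P_{C_{k-1}}(x)=\frac{1-x}{3-a-\frac{1-x}{3-b-\alpha}},\qquad P_{C'_{k-1}}(x)=\frac{1-x}{3-b-\frac{1-x}{3-a-\alpha}}.
\eeq
A short common-denominator calculation then produces the difference $P_{C_{k-1}}(x)-P_{C'_{k-1}}(x)$ with a clean factor $(a-b)$ pulled out, leaving a residual factor of the form $(3-a-\alpha)(3-b-\alpha)-(1-x)$ divided by a manifestly positive denominator. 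Thus the lemma reduces to showing that this residual factor is strictly positive for the relevant range of $x$.

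The positivity step is where the only real care is needed. Since $a$, $b$ and $\alpha$ are first-return generating functions evaluated at $s=\sqrt{1-x}\in[0,1]$, each lies in $[0,1]$ (as $\sum_t p_G^1(t)\le 1$). Consequently the factors $3-a-\alpha$ and $3-b-\alpha$ are each at least $1$, and $(3-a-\alpha)(3-b-\alpha)\ge 1>1-x$ for every $x\in(0,1]$. At the edge $x=0$ the bound becomes non-strict, but there the statement is vacuous: in the recurrent case $P_{T_k}(0)=P_{T_{k+1}}(0)=1$, so neither side of the ``iff'' can hold strictly, while in the transient case at least one of $a$, $b$, $\alpha$ is strictly less than $1$ and strict positivity is restored.

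The main obstacle I expect is not algebraic but organisational: the first-step reduction to $P_{C_{k-1}}$ versus $P_{C'_{k-1}}$ uses Lemma \ref{MonoLem1} iteratively, and one must check that each intermediate map $y\mapsto (1-x)/(3-P_{T_j}(x)-y)$ is genuinely strictly increasing in $y$, which requires $3-P_{T_j}(x)-y>0$ at every level. This is immediate once the $[0,1]$ bounds on generating functions are in place, so in practice the proof is quite short; the real conceptual content sits entirely in the single-step identity that factors out $(a-b)$.
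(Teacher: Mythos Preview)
Your proof is correct and follows essentially the same strategy as the paper: reduce to the local situation at the two swapped positions using monotonicity, apply the recursion \eqref{rec_rel_P_c} twice, and compute the difference to extract the factor $(a-b)$ times a manifestly positive quantity. The paper does not spell out the comb version (it cites \cite{Durhuus:2005fq}) but proves the multigraph analogue, Lemma~\ref{rearrangement}, in exactly this way; your positivity estimate $(3-a-\alpha)(3-b-\alpha)\ge 1>1-x$ is the clean way to finish, and your remark about the $x=0$ edge case is fine, though for combs all the relevant walks are recurrent so the ``transient'' branch never arises.
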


As a consequence of these Lemmas one sees that the generating function of any fixed comb $C$ lies between the half line and the full comb, i.e.
\beq
\label{P_and_Q_c_bounds} 
P_*(x) \leq P_C(x)\leq P_\infty(x),  \quad \text{and} \quad x^{-\quarter}\leq Q_C(x)\leq x^{-\half},
\eeq
as $x\to0$, which implies that its spectral dimension (provided it exists) should satisfy $1\leq\ds \leq 3/2$ \cite{Durhuus:2005fq}.

\subsection{Two-point functions}
Let us introduce the probability $p_C(t;n)$ that a random walk that starts at the root at time $t'=0$ is at vertex $s_n$ on the spine at time $t'=t$ having not returned to the root in the time interval $0< t' \leq t$. The two-point function $G_C(x;n)$ is defined as the generating function for these probabilities, i.e.  
\bea
G_C(x;n) = \sum^{\infty}_{t=1} (1-x)^{t/2} p_C(t;n).
\eea
$G_C(x;n)$ has a useful decomposition in terms of first return generating functions of the truncated combs $C_k, \ 0 \leq k \leq n-1$ \cite{Durhuus:2005fq}. The idea is the following. Any random walk $\Omega$ on a comb $C$ which contributes to the two-point function can be decomposed into a sequence of $n$ random walks $\Omega_1, \Omega_2, \ldots, \Omega_n$. A walk contributing to $\Omega_k$ consists of the part of the walk in $\Omega$ from $s_{k-1}$ to $s_k$ which does not revisit $s_{k-1}$ and might reach $s_k$ multiple times. We now observe that if we add a last step to $\Omega_k$ back to vertex $s_{k-1}$ we reconstruct a walk which returns back to the vertex $s_{k-1}$ for the first time. This walk contributes to $P_{C_{k-1}} (x)$. The extra step contributes to the generating function a factor $(1-x)^{1/2}/\sigma(k)$. To compensate the addition of the extra step we should divide by the same amount. Therefore we write 
\bea \label{G_decomp_P}
G_C(x;n)= \prod _{k=0}^{n-1} \frac{P_{C_k}(x)/\sigma(k)}{(1-x)^{1/2}/\sigma(k+1)} = 
\sigma(n)(1-x)^{-n/2}\prod_{k=0}^{n-1}P_{C_k}(x).
\eea
Using Lemma \ref{MonoLem2} we obtain the bounds,
\bea \label{G_bounds}
\frac{G_*(x;n)}{3} \leq \frac{G_C(x;n)}{\sigma(n)}\leq \frac{G_\infty(x;n)}{2}.
\eea

In addition, it is useful to introduce the probability, $r_C(t;n)$, that a random walk that starts at the root at time $t'=0$ is at the vertex $s_n$ on the spine for the \textit{first} time at time $t'=t$ without visiting the root in the intermediate time $0< t'\leq t$. The corresponding \textit{modified} two-point function is thus defined by 
\bea 
G^{(0)}_C(x;n) = \sum^{\infty}_{t=1} (1-x)^{t/2} r_C(t;n)
\eea
which satisfies the bounds   
\bea \label{mG_bounds}
G^{(0)}_{*}(x;n)\leq G^{(0)}_{C}(x;n) \leq G^{(0)}_{\infty}(x;n)
\eea
and is analogous to \eqref{G_bounds}. The proof of \eqref{mG_bounds} is similar to the proof of \eqref{G_bounds} and is described in Appendix \ref{Appendix_combs}.


\section{Defining the continuum limit}
\label{defining the continuum limit}

In this section we formulate the heuristic picture we described in the introduction. Clearly if we consider short random walks on a graph we only see the local discrete structure and cannot expect any scaling behaviour. For this reason, we need a model in which all walks are long in graph units but there must be a characteristic distance scale $\La$ which is continuously variable and sets a distance scale relative to which walks can either be short or long. Hence $\La$ should also be large in graph units. We introduce the characteristic scale in the ensemble through the measure $\varpi(\ell ; \Lambda_i)$ which might be a function of more than one characteristic scales $\Lambda_i$. 
 
To define the continuum limit we scale the discrete quantities and relate them to their continuum counterparts, i.e. $x=a\xi$ and $\La _i= a^{-\Delta_i} \lambda_i^{\Delta_i}$. Then the continuum generating function is defined by \cite{Atkin:2011ak, Atkin:2011ks}

\begin{lemma}\label{Lemma:scaling}
Assume that there exist constants $\Delta_\mu$ and $\Delta$ such that
\beq  \label{tildeQ:definition}
\tilde Q(\xi;\lambda)=\lim_{a\to 0} a^{\Delta_\mu} \avg{Q\left(x=a \xi ;\Lambda=a^{-\Delta}\lambda^\Delta\right)}_{\mu}
\eeq
exists and is non-zero and the combination $\xi\lambda$ is dimensionless. Then \\
i) there exists a $\tau_0(\xi;\lambda)$ such that 
\bea
 \tilde Q(\xi;\lambda)(1-e^{-\xi\lambda})< \lim_{a\to 0} a^{\Delta_\mu}\sum_{t=0}^{\lfloor\tau_0/a\rfloor}
\BB{P}\left(\{\omega(t)=n\, \vert \, \omega(0)=n\right\})\,(1-a \xi )^{\half t} <  \tilde Q(\xi;\lambda) \nn\\
\eea
and $\tau_0(\infty;\lambda)=\lambda$;\\
ii) there exists a $\tau_1(\xi;\lambda)$ such that 
\bea
\tilde Q(\xi;\lambda)-e\lambda^\half< \lim_{a\to 0} a^{\Delta_\mu}\sum_{t=\lceil\tau_1/a\rceil}^{\infty}
\BB{P}\left(\{\omega(t)=n\, \vert \, \omega(0)=n\right\})\,(1-a \xi )^{\half t} <  \tilde Q(\xi;\lambda) \nn\\
\eea
and $\tau_1(0;\lambda)=\lambda$.
\end{lemma}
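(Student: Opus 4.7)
The plan is to recognise both parts of the lemma as existence claims about a cut-off time at which the truncated continuum generating function reaches a prescribed fraction of $\tilde Q(\xi;\lambda)$, and to establish them via intermediate-value arguments applied to a continuous tail functional. The upper bound in each of (i) and (ii) is immediate: each partial or tail sum is a truncation of the full non-negative series whose $a\to 0$ limit (after multiplication by $a^{\Delta_\mu}$ and $\mu$-averaging) equals $\tilde Q(\xi;\lambda)$ by hypothesis~\eqref{tildeQ:definition}, so dropping non-negative terms produces a strictly smaller value. All of the work therefore lies in the lower bounds.

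Concretely, for fixed $\xi,\lambda>0$, I would introduce the continuum tail
\bea
F(\tau):=\lim_{a\to 0} a^{\Delta_\mu}\avg{\sum_{t=\lceil\tau/a\rceil}^{\infty}\BB{P}(\omega(t)=n|\omega(0)=n)(1-a\xi)^{t/2}}_{\mu},
\eea
which under \eqref{tildeQ:definition} is well-defined for every $\tau\ge 0$, non-increasing, with $F(0)=\tilde Q(\xi;\lambda)$ and $F(\tau)\to 0$ as $\tau\to\infty$. Continuity in $\tau$ follows because in the $a\to 0$ limit the discrete sum becomes a Riemann integral of the continuum return probability density $\tilde P(\sigma;\lambda)e^{-\xi\sigma/2}$ over $[\tau,\infty)$, which is continuous in the lower limit. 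Part (i) then reduces to demanding $F(\tau_0)<\tilde Q e^{-\xi\lambda}$, which the intermediate value theorem supplies with $\tau_0$ any value exceeding the unique solution $\tau_0^{\ast}$ of $F(\tau_0^{\ast})=\tilde Q e^{-\xi\lambda}$; part (ii) reduces to $F(\tau_1)>\tilde Q-e\lambda^{1/2}$, similarly solvable (trivially so when the right-hand side is non-positive). The asymptotic identifications $\tau_0(\infty;\lambda)=\lambda$ and $\tau_1(0;\lambda)=\lambda$ are then verified by directly inspecting the extreme-$\xi$ regimes of $F$ and noting that, by dimensional analysis, the only length scale still present in each limit is $\lambda$ itself (all other scales are either $a$, which vanishes, or $\xi^{-1}$, which is degenerate).

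The main technical obstacle is establishing the existence and continuity of $F(\tau)$. Since the ensemble measure $\mu$ depends on $\Lambda=a^{-\Delta}\lambda^{\Delta}$ and hence on $a$, interchanging the $a\to 0$ limit, the $\mu$-average and the truncation requires uniform control in $a$. A clean route is to observe that the truncated summand is non-negative and dominated by the corresponding term of $Q_G(a\xi)$, so uniform integrability of the family $\{a^{\Delta_\mu}Q_G(a\xi)\}$ with respect to the $\mu$-ensembles follows from the $\tau=0$ convergence supplied by~\eqref{tildeQ:definition}; a monotone (or dominated) convergence argument then lets the truncation be passed inside the limit for every $\tau>0$. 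Once $F$ is known to be well-defined, continuous and monotone, the intermediate-value step producing $\tau_0$ and $\tau_1$ is routine.
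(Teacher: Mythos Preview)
The thesis does not actually prove this lemma; it explicitly skips the proof and refers the reader to \cite{Atkin:2011ak, Atkin:2011ks}. So there is no in-paper proof to compare against, and I will assess your argument on its own and against what the form of the statement suggests the referenced proof does.

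Your intermediate-value strategy is a reasonable route to the bare existence of \emph{some} $\tau_0,\tau_1$ satisfying the two-sided bounds, but it has two weaknesses. First, the continuity and even the existence of your tail functional $F(\tau)$ for every $\tau$ is not granted by the hypothesis: the lemma only assumes that the full series has a scaling limit, not that a continuum return density $\tilde P(\sigma;\lambda)$ exists, and since the measure $\mu$ itself depends on $a$ through $\Lambda$ you are not in a standard dominated-convergence setting. Your uniform-integrability sketch does not quite close this; convergence of $a^{\Delta_\mu}\avg{Q}$ is convergence of an average over an $a$-dependent measure, which does not by itself yield the termwise control you need.

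Second, and more seriously, the identifications $\tau_0(\infty;\lambda)=\lambda$ and $\tau_1(0;\lambda)=\lambda$ cannot be obtained by dimensional analysis. Dimensional analysis tells you only that $\tau_i=\lambda\, f_i(\xi\lambda)$ for some dimensionless $f_i$; it cannot fix $f_0(\infty)=f_1(0)=1$. In fact your IVT construction does not even produce a unique $\tau_i$, so there is nothing to take a limit of. The very specific constants appearing in the bounds, $e^{-\xi\lambda}$ in (i) and $e\,\lambda^{1/2}$ in (ii), are the fingerprint of a direct estimate rather than an abstract existence argument: for (i) one uses the spectral monotonicity $p(t+T)\le p(t)$ (valid for reversible walks on bipartite graphs along even times) to get $\mathrm{tail}_T(x)\le (1-x)^{T/2}Q(x)$, whose scaling limit is $e^{-\xi\tau_0/2}\tilde Q$; for (ii) one bounds the head sum by $\sum_{t\le T}p(t)$, which for the ensembles in question grows like $\sqrt{T}$, and after scaling gives a bound of order $\sqrt{\tau_1}$ independent of $\xi$. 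With these explicit estimates the choice $\tau_i=\lambda$ (or a fixed multiple thereof, depending on conventions) emerges directly, and the asymptotic values $\tau_0(\infty;\lambda)=\lambda$, $\tau_1(0;\lambda)=\lambda$ are built into the construction rather than deduced after the fact. Your proposal is missing this explicit mechanism.
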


So, as $\xi\to\infty$ we see that $ \tilde Q(\xi;\lambda)$ describes walks of continuum duration less than $\lambda$ and that as $\xi\to 0$, $ \tilde Q(\xi;\lambda)$ describes walks of continuum duration greater than $\lambda$ (provided it diverges in that limit). The spectral dimensions $d_s^0$ and $d_s^\infty$ in the short and long walk limits respectively are then defined by
\bea 
\label{dsShort}
d^0_s&=& 2\left(1+\lim_{\xi\to\infty} \frac{\log(\tilde Q(\xi;\lambda))}{\log \xi}\right),\\
\label{dsLong}
d^\infty_s&=& 2\left(1+\lim_{\xi\to 0} \frac{\log(\tilde Q(\xi;\lambda))}{\log \xi}\right),
\eea
provided these limits exist. Lemma \ref{Lemma:scaling} and \eqref{dsShort}-\eqref{dsLong} yield the spectral dimension in the case of recurrent random walks. In chapter \ref{multigraphs} we need to apply this formalism for transient graphs with  $\ds\leq 4$. In this case an exactly analogous result relates $ \partial_\xi \tilde Q(\xi;\lambda)$ and $ \partial_x  Q(x;\Lambda)$ such that 
\bea 
d^0_s&=& 2\left(2+\lim_{\xi\to\infty} \frac{\log|\partial_\xi\tilde Q(\xi;\lambda)|}{\log \xi}\right),\nn\\
d^\infty_s&=& 2\left(2+\lim_{\xi\to 0} \frac{\log|\partial_\xi\tilde Q(\xi;\lambda)|}{\log \xi}\right).\label{trdsdef}
\eea

We skip the proof of Lemma \ref{Lemma:scaling} and the interested reader is referred to \cite{Atkin:2011ak, Atkin:2011ks} for further details. Here we are mainly interested in investigating graph ensembles for which the continuum limit \eqref{tildeQ:definition} exists and hence the spectral dimension at short and long distances can be rigorously defined via \eqref{dsShort} and \eqref{dsLong}  respectively. In the next section we will demonstrate in detail how this process works in a  simple comb ensemble. 


\section{A simple continuum comb - the first toy model} 
\label{SimpleExample}
We now apply the above description to the first \textit{toy model} of a random comb and show that walks on different scales give rise to different values of the spectral dimension at long and short distances. 
We start with a simple random comb which can have either infinite or no teeth. Hence the measure of the length of the teeth is given by
\bea \label{easycomb}
\varpi(\ell;\La)&=&\begin{cases}
1-\frac{1}{\La},&\ell=0,\\ \frac{1}{\La},& \ell=\infty,\\
0,&\rm{otherwise.}
\end{cases}
\eea
The introduction of the characteristic length scale indicates that the infinite teeth have an average separation of $\La$. Intuitively we would expect that if a random walker did not move further than a distance of order $\La$ from the root it would not ``feel'' the teeth and therefore would ``experience'' a half-line structure, i.e. $\ds=1$. If however it were allowed to explore the entire comb it would see something roughly equivalent to a full comb and so ``feel'' a larger spectral dimension. 
To formulate this intuition we proceed in two steps. First we sufficiently bound $\bar{Q}(x;\La)$ from above and below so that these bounds depend on $\La$. Second we apply the continuum limit to obtain bounds for $\tilde{Q}(\xi; \lambda)$. 

To determine the lower bound we apply Jensen's inequality 
\footnote{Let $f$ be a convex function and $X$ a random variable with probability distribution $u$ and expectation $\avg{X}$. Then Jensen's inequality states \cite[Section V.8]{Feller:1965b}
\beq \label{Jensens_ineq}
\avg{f(X)} \geq f(\avg{X}).
\eeq}
to \eqref{rec_rel_P_c} and get a lower bound on the ensemble average generating function 
\beq \label{barP}
\bar P(x ; \La) \geq \frac{1-x}{3 - \bar P_T(x;\La) - \bar P(x; \La)},
\eeq
where $\bar P_T(x; \La) = \sum _{\ell=0}^{\infty} \varpi(\ell; \La) P_\ell(x)$ is the mean first return probability  generating function of the teeth of the comb defined by $\varpi(\ell; \La)$. Rearranging \eqref{barP} we obtain
\beq
\bar P(x ; \La) \geq 1- \sqrt{1+x-\bar P_T(x ; \La)}
\eeq
which leads to a lower bound of the mean return probability generating function \cite{Durhuus:2005fq}, \cite[Lemma 6]{Atkin:2011ak},
\beq \label{barQ_lower_bound1}
\bar Q(x ; \La) \geq \left (1+x-\bar P_T(x ; \La) \right )^{-\half}.
\eeq
For the random comb defined by the measure \eqref{easycomb} we get
\bea
\bar{P}_T(x;\La) = 1 - \frac{1}{\La}(1-P_{\infty}(x)) = 1 - \frac{\sqrt{x}}{\La} 
\eea
 which implies
\bea \label{barQ_lower_b}
\bar{Q}(x;\La) \geq \left(\frac{\sqrt{x}}{\La}+x\right)^{-\half}.
\eea



We now proceed with the determination of an upper bound on  $\bar{Q}(x;\La)$. We follow the line of proof in \cite{Durhuus:2005fq} and 
compare a typical comb in the ensemble with the comb consisting of a finite number of infinite teeth at regular intervals. For this reason we define the event
\bea
\C A(D,k) = \{C:D_i \leq D:i=0,...,k\},
\eea
where $D_i$ is the distance between the $i$ and $i+1$ infinite teeth and then write,
\bea 
\label{Qintegral}
\bar{Q}(x;\La) &=& \int_{\C{C}} Q_C(x;\La) d\mu \nn\\
 &=& \int_{\C{C}\backslash\C{A}(D,k)} Q_C(x;\La) d\mu + \int_{\C{A}(D,k)} Q_C(x;\La) d\mu.
\eea
Since the $D_i$ are independently distributed, the probability of the event $\C{A}(D,k)$ is 
\beq \mu(\C A(D,k))=(1-(1-1/\La)^D)^k.\eeq

Consider a comb $C \in \C A(D,k)$; then by Lemmas \ref{MonoLem1},  \ref{MonoLem2} and  \ref{RearrangeLem1}, 
\beq P_C(x;\La) \leq P_{C'}(x),\eeq
where $C'$ is the comb obtained by removing all teeth beyond the $k$ tooth and moving the remaining teeth so that the spacing between each is $D$. Now we split the walks contributing to $P_{C'}(x)$ into  two sets. The set $\Omega_1$ consists of all walks which go no further than the vertex at distance $Dk-1$  from the root on the spine. The second set, $\Omega_2$, consists of those walks which go at least as far as the $Dk$ vertex on the spine. Hence
\beq 
P_{C'}(x) = P^{(\Omega_1)}_{C'}(x) + P^{(\Omega_2)}_{C'}(x).
\eeq 

Noting that the walks contributing to $P^{(\Omega_1)}_{C'}(x)$ do not go beyond the last tooth, using a slight modification of the Lemmas \ref{MonoLem1}-\ref{RearrangeLem1} (see \cite[Lemmas 1-3]{Atkin:2011ak, Atkin:2011ks}) we have
 \beq
 P^{(\Omega_1)}_{C'}(x) \leq P_{*D}(x),\label{bd1}
 \eeq 
 where $*D$ is the comb having infinite teeth regularly spaced and separated by a distance $D$. 
 
 To bound $P^{(\Omega_2)}_{C'}(x)$ from above we use a result proven in \cite{Durhuus:2005fq} and \cite[Lemma 4]{Atkin:2011ak} which reads
 \beq \label{P2bound}
 P_{C}^{(\Omega_2)}(x) \leq 3 x^{-1/2} G_{C}^0(x;N)^2.
 \eeq 
Using \eqref{bd1}, \eqref{P2bound} and \eqref{mG_bounds} we have,
\beq\label{PCupper1}
P_C(x;\La) \leq P_{{*D}}(x) +  3x^{-\half} G^{(0)}_\infty(x; Dk)^2
\eeq
uniformly in $\C A$. $P_{*D}(x)$ and $G^{(0)}_\infty(x; n)$ are given in Appendix \ref{Appendix_combs}. 
Now set $D = \lfloor\tilde{D}\rfloor$ and $k=\lceil\tilde{k}\rceil$, where,
\beq 
\tilde{D} = 2 \La|\log x \La^2|, \qquad \tilde{k}= (x\La^2)^{-1/2}.
\eeq
Since $G^{(0)}_\infty(x; n)$ is manifestly a monotonic decreasing function of $n$  and $ P_{*D}(x)  $ an increasing function of $D$,
\beq\label{bd100}
\bar{Q}(x;\La) \leq x^{-1/2} (1-(1-(1-1/\La)^{\tilde{D}-1})^{\tilde{k}+1}) + Q_{U}(x)(1-(1-1/\La)^{\tilde{D}})^{\tilde{k}}
\eeq
where we have used \eqref{P_and_Q_c_bounds} and
\beq Q_U(x) = \left[1- P_{*\tilde{D}}(x) - 3x^{-\half} G^{(0)}_\infty(x; (\tilde{D}-1)\tilde{k})^2\right]^{-1}.
\eeq

Having bounded $\bar Q(x)$ on both sides, we now apply the continuum limit by setting $x=a \xi$ and $\Lambda = a^{-\half} \lambda ^{\half}$. After a few lines of algebra it becomes evident that the most singular term is of order $a^{-1/2}$ as $a \to 0$. Therefore the continuum limit \eqref{tildeQ:definition} is applied to \eqref{barQ_lower_b} and \eqref{bd100}  for  $\Delta_\mu = 1/2$ and using \eqref{G0_cont} and \eqref{1-P_cont} 
it gives
\beq \label{SimpleUB} 
\xi ^{-\half} \left (1+(\xi \lambda)^{-\half} \right )^{-\half} \leq \tilde{Q}(\xi;\lambda) \leq \xi^{-1/2} F(\xi\lambda),
\eeq
where
\bea F(\xi \lambda)=
\begin{cases}
1+o\left ((\xi \lambda)^{-1}\right), &\xi \lambda\to \infty, \\
(\xi \lambda)^\quarter\sqrt{2\abs{\log (\xi \lambda)}}+O\left ((\xi \lambda )^\half \right), &\xi \lambda \to 0. 
\end{cases}
\eea
It then follows from \eqref{dsShort}, \eqref{dsLong} and \eqref{SimpleUB} that
\bea d_s^0=1,\qquad
d_s^\infty=\threehalves.
\eea
This is the first important result of this chapter. Starting with a random comb with a relative simple structure we demonstrated how our intuition about short and long \textit{continuum} walks can be rigorously formulated giving rise to a scale dependent spectral dimension which varies according to the probing scale. 


\section{Combs with power law measures}
In this section we attempt to generalise the measure on the teeth of the random comb to a power law of the form,
\bea \label{powerdist}
\varpi(\ell;\La)&=&
\begin{cases}
1-\frac{1}{\La},&\ell=0,\\ 
\frac{1}{\La}C_{\alpha} \ell ^{-\alpha},&\ell>0,
\end{cases}
\eea
where $C_\alpha$ is a normalisation constant and as before $\La$ plays the role of a characteristic distance scale.  We restrict our attention in the range $1<\alpha<2$ as it is known that for $\alpha\geq 2$ the random comb has $\ds=1$ in the sense of \eqref{ds_via_Q} \cite{Durhuus:2005fq} and therefore it is not possible to get a spectral dimension deviating from 1 on any  scale. 

The process proceeds with the same steps as described in the previous section with minor modifications. To keep the discussion as simple as possible we are going to skip the technical details of the proofs from now on and we only sketch the guidelines of the argument. We refer the interested reader to \cite{Atkin:2011ak, Atkin:2011ks} for further details. 


The starting point for computing a lower bound on the return probability generating function is expression \eqref{barQ_lower_bound1}, from which it becomes evident that one needs an upper bound on $1-\bar{P}_T(x)$ for the measure \eqref{powerdist}.  
Such an upper bound is determined by the cumulative probability function $\chi(u;\La_i) = \sum^{[u]}_{\ell=0} \varpi(\ell;\La_i)$ \cite[Lemma 7]{Atkin:2011ak}. For the measure \eqref{powerdist} it takes the explicit form
\bea
1-\bar{P}_T(x;\La) \leq  m_\infty(x)\sqrt{x}\left(\frac{b_1}{\La}m_\infty(x)^{\alpha-2}+\frac{b_2}{\La}m_\infty(x)^{\alpha-1}+\frac{b_3}{\La}\right),
\eea
where  $b_{1,2,3}$ are constants depending only on $\alpha$ with $b_1>0$ and $m_\infty(x) \equiv \half \log \frac{1+\sqrt{x}}{1-\sqrt{x}}$.


An upper bound is obtained by a slight modification of the proof of the upper bound in the previous section. Here we replace the infinite tooth with a  \textit{long} tooth defined as the tooth whose length is greater than $H$. The probability that a long tooth occurs is   
\beq \label{prob_long_tooth}
p = \sum^{\infty}_{\ell = H+1} \varpi(\ell;\La_i).  
\eeq
We define the event as before
\bea
\label{Aevent}
\C A(D,k) = \{C:D_i \leq D:i=0,...,k\}
\eea
where now $D_i$ is the distance between the $i$ and $i+1$ \textit{long} teeth.  Since the $D_i$ are independently distributed
\beq
\label{nuD} \mu(\C A(D,k))=(1-(1-p)^D)^k.
\eeq
The proof proceeds as before, starting from \eqref{Qintegral}, bounding $P_{C\in \C{A}(D,k)}(x;\La_i)$ from above by the comb whose all teeth except the first $k$ long teeth have been removed and the remaining teeth have been truncated to have length $H$ at equal distances $D$. Further, we split the random walks in two sets, those which never reach vertex $Dk$ on the spine and those which go beyond $Dk$.  
We finally arrive at \cite[Lemma 8]{Atkin:2011ak}
\beq
\label{GeneralUB}
\bar{Q}(x;\La_i) \leq x^{-1/2} (1-(1-(1-p)^{{D}})^{{k}}) + Q_{U}(x)(1-(1-p)^{{D}})^{{k}} ,
\eeq
where
\beq
\label{GeneralQU}
Q_U(x) = \left[1- P_{{H},*{D}}(x) - 3x^{-\half} G^{(0)}_\infty(x; {D}{k})^2\right]^{-1},
\eeq
and $C=H, *D$ denotes  the comb with teeth of length $H$ equally spaced at intervals of $D$.

We now specialise to the power law measure \eqref{powerdist} and set $H=\lfloor\tilde{H}\rfloor$, $D = \lfloor\tilde{D}\rfloor$ and $k=\lceil\tilde{k}\rceil$, where
\bea
\label{LambdaK}
\tilde{H} &=& x^{-1/2}, \nn\\
\tilde{D} &=& (\Delta'+1) \frac{\alpha-1}{C_\alpha} x^{\Delta'-1/2} \La |\log x \La^{1/\Delta'}|,\\
\tilde{k} &=& (x \La^{1/\Delta'})^{-\Delta'}.\nn 
\eea

Choosing  $\La= a^{-\Delta'} \lambda^{\Delta'}$ with $\Delta' = 1-\alpha/2$, scaling the expressions $P_{H, *D}$ and $G_\infty^{(0)}$ and taking the continuum limit, yields the continuum return generating function,
\beq
\label{PowerLB}
\xi ^{-1/2}\left(1+b_1 (\xi\lambda)^{-(1-\alpha/2)}\right)^{-1/2} \leq \tilde{Q}(\xi ; \lambda)\leq \xi^{-1/2} F(\xi \lambda)
\eeq
where
\beq
F(\xi \lambda) =
\begin{cases}
1+o\left ((\xi \lambda)^{-1}\right), &\xi \lambda \to \infty,\\
c\, (\xi \lambda)^{1/2-\alpha/4}  \sqrt{\abs{\log (\xi \lambda)}}+O\left((\xi \lambda)^{\Delta'}\right), &\xi \lambda \to 0.
\end{cases}
\eeq
Expression \eqref{PowerLB} implies that the spectral dimension varies from the value $(4-\alpha)/2 >1$ at the long walk limit, i.e. $\xi \to 0$, to  $1$ at short distances, i.e. $\xi \to \infty$. In summary, the random comb with the power law measure \eqref{powerdist} for the tooth length exhibits 
\beq 
d_s^0=1,\qquad d_s^\infty=2-\frac{\alpha}{2}.
\eeq


\section{Multiple Scales}
\label{multiple_scales}

In the light of the results so far we investigate the case of a random comb which has more than one characteristic scale. This is achieved through a generalisation of \eqref{powerdist}  having a double power law distribution,
\bea \label{2powerdist}
\varpi(\ell;\La_i)&=&
\begin{cases}
1-\La_1^{-1}-\La_2^{-1},&\ell=0,\\ 
\frac{1}{\La_1}C_{1} \ell ^{-\alpha_1}+\frac{1}{\La_2} C_{2} \ell ^{-\alpha_2} ,& \ell>0.
\end{cases}
\eea
To keep the discussion simple and unambiguous we assume without loss of generality that the continuum length scales $\lambda_i$ satisfy the hierarchy $\lambda_1< \lambda_{2}$ and that $1<\alpha_i<2$.  

To find the continuum generating function, we follow the steps of the previous section subject to modifications due to the measure \eqref{2powerdist}.  For example the lower bound on $\bar Q(x;\Lambda)$ depends on the upper bound
\beq 
1-\bar{P}_T(x;\La_i) \leq  m_\infty(x)\sqrt{x}\sum_{i=1}^2\left(\frac{b_{1i}}{\La_i}m_\infty(x)^{\alpha_i-2}+\frac{b_{2i}}{\La_i}m_\infty(x)^{\alpha_i-1}+\frac{b_{3i}}{\La_i}\right).
\eeq
An upper bound is obtained by repeating the discussion from expression \eqref{prob_long_tooth} to \eqref{GeneralUB} subject to the measure \eqref{2powerdist} and setting $H=\lfloor\tilde{D}\rfloor$, $D = \lfloor\tilde{D}\rfloor$ and $k=\lfloor \tilde{k}\rfloor$, where
\bea
\label{DoubleLambdaK}
\tilde{H} &=& x^{-1/2} \nn\\
\tilde{D} &=& \beta x^{-1/2} G(x \La_1^{1/\Delta'_1},x \La_2^{1/\Delta'_2})^{-1} |\log x \La_1^{1/\Delta'_1}|, \\
\tilde{k} &=& G(x \La_1^{1/\Delta'_1},x \La_2^{1/\Delta'_2}) \nn 
\eea
where we have introduced the function,
\beq
G(\upsilon_1,\upsilon_2) = \frac{C_1}{\alpha_1-1} \upsilon_1^{-\Delta'_1}+\frac{C_2}{\alpha_2-1} \upsilon_2^{-\Delta'_2} .
\eeq

Scaling $x=a \xi$  and $\La_i=a^{-\Delta'_i} \lambda_i^{\Delta'_i}$, where $\Delta'_i = 1-\alpha_i/2$, we get the continuum return generating function
\bea
\label{DoubleLB}
{\xi^{-1/2}\left(c_0 +c_1 (\xi \lambda_1)^{-(1-\alpha_1/2)}+c_2( \xi\lambda_2)^{-(1-\alpha_2/2)} \right)^{-1/2} \leq \tilde{Q}(\xi ; \lambda_i)} \leq \nn\\
\xi^{-1/2}\Bigg[1-(1-(\xi \lambda_1)^{-s\beta})^G  \nn\\
+ \frac{ (1-(\xi \lambda_1)^{-s\beta})^{G-1} }{-\gamma+\sqrt{\gamma^2+1+2\gamma \coth(|\log(\xi \lambda_1)^\beta|/G)} - 3\mathrm{cosech}^2(|\log(\xi \lambda_1)^\beta|(1-1/G))} \Bigg] 
\eea
in which we have suppressed the arguments of $G(\upsilon_1,\upsilon_2)$.  

We carefully examine  \eqref{DoubleLB} and deduce the behaviour of $\tilde{Q}(\xi;\lambda_i)$ on various length scales. 
\begin{itemize}
\item[1.] Taking the walk length to be smaller than the smaller characteristic scale $\lambda _1$, meaning $\xi\gg\lambda_1^{-1}$, both upper and lower bounds of $\tilde Q(\xi;\lambda_i)$ are dominated by the $\xi^{-\half}$ behaviour. Hence taking the short walk limit, $\xi\to\infty$, gives $d_s^0=1$ as in the previous sections independent of the relative relation between $\alpha_1$ and $\alpha_2$.

\item[2.] Consider the case with $\alpha_1 <  \alpha_2$. Then  both upper and lower bounds of $\tilde Q(\xi;\lambda_i)$ are dominated by the $\xi^{-\alpha_1/4}$ behaviour in the long walk limit, $\xi\to 0$,  leading to $d_s^\infty=2-\alpha_1/2 $. There is no regime in which $\alpha_2$ controls the behaviour.  

\item[3.] Finally, consider the case where $\alpha_2 <  \alpha_1$. The behaviour of $\tilde Q(\xi;\lambda_i)$ is now dominated by $\xi^{-\alpha_2/4}$ in long walk limit. However, one observes an intermediate regime $\lambda_3^{-1}\ll \xi\ll \lambda_1^{-1}$, where the scale $\lambda_3^{-1}$ is given by
\beq 
\lambda_3^{-1}=\lambda_1^{(2-\alpha_1)/(\alpha_1-\alpha_2)}\lambda_2^{(2-\alpha_2)/(\alpha_2-\alpha_1)}.
\eeq
For continuum random walks much larger than $\lambda _1$, but still much shorter than $\lambda_3$, the  asymptotic behaviour $\xi^{-\alpha_1/4}$ is dominant
\footnote{There will be corrections of order $\xi ^\beta$, but we can choose $\beta$ to suppress those corrections.}.  
In this case, although the system exhibits a reduction from $d_s^\infty = 2-\alpha_2/2>1$ to $d_s^0=1$, there seems to be an additional regime of walks of length $\lambda_1\ll\xi ^{-1}\ll\lambda_3$ in which the system manifests a spectral dimension $\delta_s=2-\alpha_1/2$. We will refer to a spectral dimension that appears in this way as an {\emph{apparent spectral dimension}} and denote it by $\delta_s$ rather than $d_s$. The reason is that any statement for the behaviour of $\tilde Q(\xi; \lambda_i)$ in the intermediate scales is weak by itself, since the upper and lower bounds might differ in these scales. However, the hierarchy of scales $\lambda _1 \ll \lambda _2$ gives rise to a constant value of the (apparent) spectral dimension over an extended regime, which is considered only an intermediate plateau between $d_s^\infty$ and $d_s^0$.

This scenario, in which the spectral dimension exhibits an intermediate plateau, has also been considered in the asymptotic safety scenario of gravity \cite{Reuter:2011ah} and multifractional spacetimes \cite{Calcagni:2012qn, Calcagni:2012rm} (further details are presented in section \ref{other_approaches}). However, the latter approaches possess a different spectral dimension profile, where the value of the spectral dimension in the intermediate plateau is less than the values  $d_s^\infty$ and $d_s^0$~
\footnote{We comment later that diffusion on multifractional spacetimes can imitate a monodically increasing profile with one intermediate plateau by an appropriate choice of the fractional charges (section \ref{other_approaches}).}. 
\end{itemize}

\section{Conclusion and outlook}
\label{combs_outlook}
In this chapter we obtained the continuum formulation and results which are significant for several reasons. First, we showed analytically that there exists indeed a class of graph models which exhibit the phenomenon of scale dependent spectral dimension. This is a rather non-trivial result, if one considers the level of randomness -- we are dealing with \emph{random} walks on \emph{random} graphs. Second, the fact that the running spectral dimension survives the continuum limit indicates that this phenomenon is not due to discretisation artefacts weakening the criticism of the numerical results of CDT. 

Despite the importance of these results, one might ask how relevant the random combs are to quantum gravity models. The point is that combs are not realistic toy models, since they include no information about the dynamics of the random geometry. To elaborate this point, consider both the two-dimensional causal triangulations (the UICT) and generic random tree (GRT). Both ensembles are evolved by a \textit{local} weight, which implies that the evolution of the spatial volume with time is generated by a Hamiltonian. In contrast, the random combs with power law tooth length distributions considered in this chapter have no local growth law and the number of vertices at height $k$ does not induce any information on the set of vertices at height $k+1$, disallowing a Hamiltonian description of the random comb
\footnote{However we should notice that the random comb with tooth length distribution of the form $\varpi(\ell) \propto e^{-\ell}$ admits a local growth and Hamiltonian description, but it does not exhibit any reduction of the spectral dimension.}.

Therefore, one realises the need of proceeding with more realistic graph toy models which capture more features of the CDT geometry and include dynamics too. Being armed with the formalism described in this chapter, this is now possible in the view of mappings defined in section \ref{from triangulations to trees}.
\end{chapter}

\clearemptydoublepage
\begin{chapter}{Continuum Random Multigraphs}
\label{multigraphs}

At the end of chapter \ref{graphs} we set two important questions. In chapter \ref{combs}, we  answered the first question by developing a rigorous definition of the continuum limit of random graphs which can yield a scale dependent spectral dimension. However the toy models we studied there can be considered as ``kinematical" because they do not have a completely \textit{local growth} rule and are therefore not directly related to any model of quantum gravity. In principle, one would like to apply those techniques to more realistic toy models and address the second question raised previously. It is thus the aim of this chapter to introduce such realistic models, so called multigraph ensembles, which are approximations of CDTs, inherit some of their features and include dynamics too. We analyse properties of the spectral dimension of a variety of multigraph ensembles.    
We start the discussion by defining the multigraph ensemble, its probability measure and stating some of its properties. Then we apply the continuum limit formalism for the recurrent multigraphs.

In section \ref{transient case} we study the fractal properties of transient multigraphs, reduced models of higher-dimensional CDT. We then proceed by discussing the lessons from the instructive recurrent case where we have full analytical control on the ensemble measure. This discussion serves as the motivation for the assumptions we adopt in the study of the continuum limit of the transient multigraph ensemble.

\section{Multigraphs, causal triangulations and trees}
A multigraph $M$ is defined by introducing a mapping which acts on a rooted infinite causal triangulation $\tr$ by collapsing all space-like edges at a fixed distance $k, k\geq1$ from the root and identifying all vertices at this distance $k$
\footnote{Notice that the multigraph construction is valid for both two-dimensional and higher-dimensional CDT due to the time-slicing structure.}.
In the resulting multigraph a vertex $k$ has neighbours $k\pm1$, except the vertex $0$ (the root $r$) which has $1$ as a neighbour, and there are $L_k(M) \geq1$ (time-like) edges connecting $k$ and $k+1$ (see figure \ref{ct_to_multigraphs}). 
\begin{figure}
\centering
\includegraphics[scale=0.3]{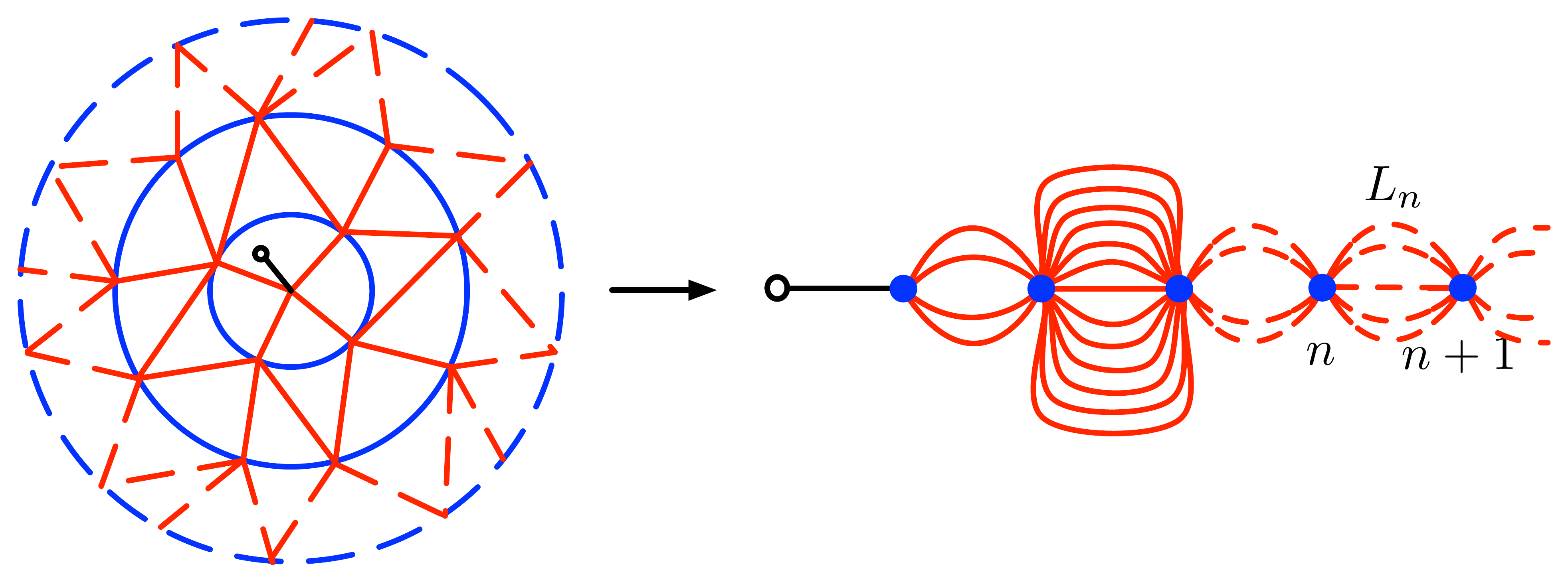}
\caption{Illustration of how to obtain a multigraph from a causal triangulation.}
\label{ct_to_multigraphs}
\end{figure}
We denote the set of multigraphs by $\C M$.  A multigraph $M\in \C M$ is completely described by listing the number of edges $\{L_k(M), k=0,1,\ldots\}$.  
A multigraph ensemble $\R M=\{\C M,\chi\}$ consists of the set of graphs $\C M$ together with a probability measure $\chi(\{M\in \C M: \C A\}) $ for the (finite) event $\C A$. 

Given the bijection between causal triangulations and trees described in section \ref{from triangulations to trees}, multigraphs can also be obtained by defining the mapping $\gamma: \C T_\infty\to\C M$  which acts on a tree $T$ by  identifying all vertices $v\in S_k(T)$ with the single vertex $v$ but retaining all the edges. The resulting multigraph ensemble inherits its measure from the measure on $\C T_\infty$ so that  for  integers $0\le k_1<\dots<k_n$ and positive integers $m_1,\dots, m_n$ 
\bea \label{RRequivGRT}
\lefteqn{\chi(\{M\in \C M: L_{k_i}(M)=m_i,\, i=1\ldots n\})}  \nn\\
&=&\mu(\{T\in\C T_\infty:\abs{D_{k_i+1}(T)}=m_i,\, i=1\ldots n\}).
\eea
This mapping is well defined provided that the measure on $\C T_\infty$ is supported on trees with a unique path to infinity.  It is convenient to use trees to define some of  the ensembles of multigraphs we will be considering because we can thereby exploit many standard results presented in section \ref{GW trees and simply generated trees}. To clarify the terminology, whenever we use the term ``recurrent multigraph ensemble'', we refer to the random multigraph, which is induced by the GRT having measure $\infty U$ and the corresponding two-dimensional infinite causal triangulation and exhibits recurrent random walks, in contrast to the ``transient multigraphs'' which are approximations of higher-dimensional CDT and exhibit transient random walks, as we explore in later sections.
 
One should notice that different causal triangulations or planar trees might induce the same multigraph. Therefore when the mapping is applied some information is lost  and the inverse mapping does not correspond to a unique causal triangulation or tree. As a result those mappings define an \textit{injection} from causal triangulations and trees to multigraphs.

\section{Random walks on multigraphs}
\label{random walk on mulrigraphs}

Given a fixed multigraph $M$ the probability for a random walker at $n$ to step next to $n+1$ is given by $p_n(M) = L_n/(L_{n-1} + L_n)$, according to \eqref{prob_next_step}, and the probability that the next step is to $n-1$ is $1-p_n(M)$ (note that the probability to move from the root to vertex $1$ is one). Then we decompose the random walk into two pieces; a step from vertex  $n$ to  $n+1$, then a random walk returning to $n+1$ and a final step from $n+1$ to n at time $t$. Because of this decomposition and the chain structure the generating function of first return to the root satisfies the following recursion relation \cite{Durhuus:2009sm}
\beq \label{P_recurr_mgraph}
P_{M_n}(x)=\frac{(1-x)(1-p_{n+1}(M))}{1-p_{n+1}(M)P_{M_{n+1}}(x)},
\eeq
where $M_n$ is the multigraph obtained from $M$ by removing the first $n$ vertices and all edges attached to them and relabelling the remaining multigraph 
\footnote{For instructive reasons, we apply this formalism to rederive some known results in Appendix \ref{Appendix_multigraphs}.}.
Introducing the notation $\eta _{M_n} \equiv Q_{M_n}/L_n$, expression \eqref{P_recurr_mgraph} can be rearranged to give
\beq \label{eta_rec_rel}
\eta _{M_n}(x) =\eta _{M_{n+1}}(x) + \frac{1}{L_n} -x L_{n} \eta _{M_n}(x)\eta _{M_{n+1}}(x).
\eeq

Recursion relation \eqref{P_recurr_mgraph} leads  to the following useful results.
\begin{lemma}{\bf (Monotonicity)}
\label{monotonicity} 
For any $M\in\C M$ and $M'\in\C M$ which are identical except that  $L_k({M'})=L_k({M})-1$ for some $k>0$,
\beq P_{M}(x) < P_{M'}(x).\eeq
\end{lemma}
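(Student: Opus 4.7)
The natural approach is to work with the rescaled quantity $\eta_{M_n}(x)=Q_{M_n}(x)/L_n$, whose recursion~\eqref{eta_rec_rel} can be solved algebraically to give
\begin{equation}
\eta_{M_n}(x)=\frac{\eta_{M_{n+1}}(x)+1/L_n}{1+xL_n\,\eta_{M_{n+1}}(x)}.
\end{equation}
The advantage over the $P_{M_n}$-recursion is that here the map from level $n+1$ to level $n$ depends only on $L_n$ (not on $L_{n+1}$), so varying a single edge multiplicity $L_k$ is controlled by a single monotonicity at each level, rather than the competing effects that arise in $P_{M_n}=(1-x)(1-p_{n+1})/(1-p_{n+1}P_{M_{n+1}})$, where both $p_{n+1}$ and $P_{M_{n+1}}$ would shift simultaneously for $n<k$.

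Since $M$ and $M'$ agree on all $L_j$ with $j\ne k$, the truncated multigraphs $M_n$ and $M'_n$ coincide for every $n\ge k+1$, and hence $\eta_{M_n}(x)=\eta_{M'_n}(x)$ on that range. I would then carry out a reverse induction downward from $n=k$ to $n=0$, based on two elementary monotonicity properties of the closed-form recursion. First, $\eta_{M_n}$ is strictly decreasing in $L_n$ with $\eta_{M_{n+1}}$ held fixed: a short differentiation shows the numerator of $\partial\eta_{M_n}/\partial L_n$ equals $-(1/L_n^2+2x\eta_{M_{n+1}}/L_n+x\eta_{M_{n+1}}^2)<0$, so the discrete shift $L_k(M')=L_k(M)-1$ at level~$k$ forces $\eta_{M'_k}(x)>\eta_{M_k}(x)$. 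Second, $\eta_{M_n}$ is strictly increasing in $\eta_{M_{n+1}}$ for $x\in[0,1)$, as a direct computation gives $\partial\eta_{M_n}/\partial\eta_{M_{n+1}}=(1-x)/(1+xL_n\eta_{M_{n+1}})^2>0$; so at each subsequent level $0\le n<k$, where $L_n$ is unchanged, the inductive hypothesis $\eta_{M'_{n+1}}(x)>\eta_{M_{n+1}}(x)$ propagates down. Because $k>0$ implies $L_0(M)=L_0(M')$, the strict inequality $\eta_{M'_0}(x)>\eta_{M_0}(x)$ translates into $Q_{M'}(x)>Q_M(x)$, and via $P=1-1/Q$ this delivers $P_M(x)<P_{M'}(x)$.

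The proof is essentially bookkeeping once the right variable is chosen: the only care needed concerns the range of $x$, since strict positivity of $\partial\eta_{M_n}/\partial\eta_{M_{n+1}}$ requires $x<1$ and the identity $P=1-1/Q$ is meaningful only there (positivity of the iterates, $\eta_{M_n}\ge 1/L_n>0$, is ensured inductively). Thus the hardest conceptual step is really the initial decision to pass from $P_{M_n}$ to $\eta_{M_n}$; after that the monotone dependence of the recursion on its two inputs and the hypothesis $k>0$ combine to give the result immediately.
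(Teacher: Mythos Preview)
Your proof is correct and cleaner than the paper's. The paper works directly with the $P$-recursion \eqref{P_recurr_mgraph}: it observes that $P_{M_{n-1}}$ is increasing in $P_{M_n}$ and thereby reduces to the case $k=1$, then applies \eqref{P_recurr_mgraph} twice in succession to express $P_M$ and $P_{M'}$ in terms of $P_{M_2}$, $L_0$, $L_1$ (and implicitly $L_2$) and computes the difference explicitly. The two-step computation is needed precisely because, as you note, $p_{n+1}=L_{n+1}/(L_n+L_{n+1})$ couples two consecutive edge counts, so the change in $L_1$ affects both $p_1$ and $p_2$. Your switch to $\eta_{M_n}$ dissolves that coupling: the recursion $\eta_{M_n}=(\eta_{M_{n+1}}+1/L_n)/(1+xL_n\eta_{M_{n+1}})$ involves only $L_n$, so the perturbation at level $k$ is handled by a single derivative in $L_k$, and the propagation down to $n=0$ by a single derivative in $\eta_{M_{n+1}}$. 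What the paper's route buys is that it stays entirely in the $P$-variables used elsewhere; what yours buys is that no explicit algebraic computation of a difference is required, and the argument works uniformly for all $k>0$ without a separate base case.
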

\begin{proof} First note that from \eqref{P_recurr_mgraph} $P_{M_{k-1}}(x)$ is a monotonically increasing function of $P_{M_{k}}(x)$ and therefore $P_M(x)$ is a monotonically increasing function of $P_{M_{k}}(x)$; it then suffices to prove the $k=1$ case. This is easily done by applying \eqref{P_recurr_mgraph} twice in succession to express both $P_M(x)$ and $P_{M'}(x)$ in terms of $P_{M_2}(x)$, $L_0(M)$, $L_1(M)$  and  computing the difference. Note that the lemma is not true for the case $k=0$.
\end{proof}
\begin{lemma}{\bf (Rearrangement)}
\label{rearrangement}
For any $M\in\C M$ and $M'\in\C M$ which are identical except that $L_k(M) = L_{k+1} (M')$ and $L_{k+1}(M) = L_{k} (M')$, $k>0$, then
\beq
P_M(x) \leq P_{M'} (x)  \qquad \text{only if} \qquad L_k (M) \geq L_{k+1} (M). 
\eeq
\end{lemma}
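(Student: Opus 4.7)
The plan is to reduce the comparison of $P_M$ and $P_{M'}$ to a purely local algebraic comparison involving only the swapped pair $(L_k,L_{k+1})$, and then to evaluate that comparison by direct computation. Since $M$ and $M'$ agree at every position except $k$ and $k+1$, we immediately have $P_{M_{k+2}}(x)=P_{M'_{k+2}}(x)=:Y$. For indices $j\le k-2$ the recursion \eqref{P_recurr_mgraph} uses only $p_{j+1}$, which depends on $L_j$ and $L_{j+1}$; as these coincide for $M$ and $M'$, on this range the recursion is the same map $Q\mapsto(1-x)(1-p)/(1-pQ)$ applied to $P_{M_{j+1}}$ versus $P_{M'_{j+1}}$, and this map is strictly increasing in $Q$. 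Iterating gives the equivalence $P_M\le P_{M'}\Leftrightarrow P_{M_{k-1}}\le P_{M'_{k-1}}$. One cannot push the reduction further, because the recursion at level $k-1$ uses $p_k(M)=L_k(M)/(L_{k-1}+L_k(M))$, which is precisely where the swap first matters and so differs between $M$ and $M'$.

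Second, I would apply the recursion twice to obtain explicit expressions. Writing $\ell=L_{k-1}$, $\alpha=L_k(M)$, $\beta=L_{k+1}(M)$, $\gamma=L_{k+2}$ and $u=\gamma(1-Y)$, two successive applications yield $P_{M_{k-1}}(x)=(1-x)\ell/A_M(x)$ and $P_{M'_{k-1}}(x)=(1-x)\ell/A_{M'}(x)$ with $A_M=(\ell+\alpha)-\alpha P_{M_k}$, $A_{M'}=(\ell+\beta)-\beta P_{M'_k}$, and
\begin{equation*}
P_{M_k}=\frac{(1-x)\alpha(\beta+u)}{D_M},\qquad P_{M'_k}=\frac{(1-x)\beta(\alpha+u)}{D_{M'}},
\end{equation*}
where $D_M=(\alpha+\beta)(\beta+u)-(1-x)\beta^2=\alpha\beta+u(\alpha+\beta)+x\beta^2$ and $D_{M'}$ is obtained by swapping $\alpha\leftrightarrow\beta$; both are manifestly positive. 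Since $P_{M_{k-1}}$ is a decreasing function of $A_M$, the problem reduces to determining the sign of $A_M-A_{M'}=(\alpha-\beta)-(\alpha P_{M_k}-\beta P_{M'_k})$.

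The main obstacle is the algebraic identity that makes this sign transparent. Placing $A_M-A_{M'}$ over the common denominator $D_MD_{M'}>0$, one must show that the numerator factors as $(\alpha-\beta)$ times a non-negative polynomial in $x$ and $u$. The cleanest route is to expand $\alpha^2(\beta+u)D_{M'}-\beta^2(\alpha+u)D_M$, pull out the obvious factor $(\alpha-\beta)$, and examine the remaining bracket; at $x=0$ this bracket collapses to the perfect square $[\alpha\beta+u(\alpha+\beta)]^2$ (which also explains why $P_{M_{k-1}}(0)=P_{M'_{k-1}}(0)$ in the recurrent case), and the residual $x$-dependent contributions assemble into $x\cdot u(\alpha+\beta)[2\alpha\beta+u(\alpha+\beta)]+x^2[\alpha^2\beta^2+\alpha\beta(\alpha^2+\alpha\beta+\beta^2)+u(\alpha+\beta)(\alpha^2+\beta^2)]$, which is manifestly non-negative and strictly positive for $x>0$. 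Granted this positivity, $A_M-A_{M'}$ has the sign of $\alpha-\beta$; combined with the monotone reduction of the first step, this gives $P_M\le P_{M'}$ if and only if $L_k(M)\ge L_{k+1}(M)$, in particular establishing the ``only if'' direction claimed by the lemma.
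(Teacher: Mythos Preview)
Your proof is correct and follows the same strategy as the paper: reduce via the monotonicity of the recursion map $Q\mapsto (1-x)(1-p)/(1-pQ)$ to the first level at which the swap matters, then compute the difference explicitly. The paper phrases the reduction as ``it suffices to prove the case $k=1$'' and then simply says ``apply \eqref{P_recurr_mgraph} twice and compute the difference''; your explicit factorisation of the numerator as $(\alpha-\beta)\cdot x$ times a manifestly positive polynomial is exactly the computation the paper leaves to the reader, and in fact establishes the full ``if and only if'' rather than just the ``only if''.
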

\begin{proof}
The proof is similar to the proof above. It suffices to prove the case $k=1$. We apply twice \eqref{P_recurr_mgraph} and get both $P_M(x)$ and $P_{M'}(x)$ in terms of $P_{M_3}(x)$, $L_i(M)$, $i = 0, \ldots, 3$  and  computing the difference. Note that the lemma is not true for the case $k=0$.
\end{proof}
An immediate consequence is that
\beq \label{eta:upper}
\eta_M(x) \equiv \frac{Q_M(x)}{L_0(M)}<\frac{1}{ x^{1/2}}.
\eeq
To obtain this we reduce all $L_{n>0}$ to 1, repeatedly applying Lemma \ref{monotonicity} to obtain the upper bound 
\beq P_M(x)<P_{M^*}(x)\eeq
where $M^*$ is the graph with edge numbers  $\{L_0(M),1,1,1,\ldots\}$ and then compute  $P_{M^*}(x)$ explicitly by the methods of \cite{Durhuus:2005fq}. 
 Since by definition $Q_M(x)\ge 1$ we also have 
that
\beq
\eta_M(x)\ge\frac{1}{L_0(M)}.\label{eta:lower}
\eeq


\section{Scale dependent spectral dimension in the recurrent case}  
\label{Scale dependent spectral dimension in the recurrent case}

The model we consider in this section is a multigraph ensemble whose measure is related to the generalised uniform GRT measure $\infty U$ through \eqref{RRequivGRT} and whose graph distance scale is therefore set by the parameter $b$. The weighting of the multigraphs in this ensemble can be related to an  action for the corresponding  CDT ensemble which contains a coupling to the absolute value of the scalar curvature. To see this first note that the probability for a finite tree $T$ in the $U$ ensemble, defined by \eqref{Udist}, is given by
\bea 
b^{\sum_{v\in T}\abs{\sigma(v)-2}}\,(1-b)^{2\sum_{v\in T}1-\delta_{\sigma(v),1}}.
\eea
The quantity $\abs{\sigma(v)-2}$ is in fact the absolute value of the two-dimensional scalar curvature; small $b$ suppresses all values of vertex degree except $\sigma(v)=2$. The causal triangulation $\tr$ which is in bijection to this tree  has probability
\bea 
b^{\sum_{v\in \tr}\abs{\sigma (f(v))-2}}\,(1-b)^{2\sum_{v\in \tr}1-\delta_{\sigma (f(v)),1}},
\eea
where $\sigma(f(v))$ is the number of  `forward' edges connecting vertex $v$ to vertices whose distance from the root is one greater than that of $v$.  Taking into account the causal constraint the effect of small $b$ on the triangulation is to suppress all vertex degrees except $\sigma(v)=6$ so that $-\log b$ plays the role of a coupling to a term $\sum_v\abs{R_v}$, essentially the integral of the absolute value of the scalar curvature, in the action for the CDT.

The main result of  this section is
\begin{theorem}
\label{VariableSpectralDimensionRecurrent}
The scaling limit of the multigraph ensemble  with measure $\infty U$
has spectral dimension $d_s^0=1$ at short distances and  $d_s^\infty=2$ at long distances.
\end{theorem}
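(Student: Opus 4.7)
The plan is to bound the ensemble average $\bar Q(x;b):=\langle Q_M(x)\rangle$ between two functions of a single scaling variable and then apply Lemma \ref{Lemma:scaling}. First I would iterate the recursion \eqref{eta_rec_rel} for $\eta_{M_n}(x)=Q_{M_n}(x)/L_n$ to obtain the telescoping identity
\beq
\eta_{M_0}(x)=\sum_{n=0}^{N-1}\frac{1}{L_n}\;-\;x\sum_{n=0}^{N-1}L_n\,\eta_{M_n}(x)\,\eta_{M_{n+1}}(x)\;+\;\eta_{M_N}(x),
\eeq
in which the remainder term is controlled uniformly by \eqref{eta:upper}. The ensemble average of the first sum can be computed exactly using the independence of outgrowths from the spine in $\infty U$ together with \eqref{U:Dinv} under the identification $L_n=|D_{n+1}(T)|$, giving $\lla L_n^{-1}\rra_{\infty U}=(1-b)/(1-b+nb)$, whose partial sums behave linearly in $n$ for $n\ll (1-b)/b$ and logarithmically for $n\gg (1-b)/b$, identifying the intrinsic crossover scale.

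For a lower bound on $\bar Q(x;b)$ I would apply Jensen's inequality to the quadratic term in the recursion, in the spirit of Section \ref{SimpleExample}, turning it into a self-consistent inequality for $\bar\eta$. For the upper bound I would truncate the telescoping identity at a distance $N^*(x,b)$ chosen so that the deterministic accumulated resistance $\sum_{n<N^*}\lla L_n^{-1}\rra$ saturates $x^{-1/2}$, beyond which the quadratic subtraction in the telescoping identity cuts off the sum, and control the contribution from walks reaching $N^*$ using monotonicity and rearrangement (Lemmas \ref{monotonicity}, \ref{rearrangement}). Both bounds would then take the scaling form
\beq
\bar Q(x;b)\,\simeq\,\frac{1}{b}\,F\!\left(\frac{x}{b^2}\right),\qquad F(y)\sim y^{-1/2}\quad\text{as } y\to\infty,\qquad F(y)\sim\abs{\log y}\quad\text{as } y\to 0.
\eeq

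Finally, I would set $x=a\xi$ and tune $b=c\sqrt{a}\,\lambda^{-1/2}$ (which sends $b\to 0$ while keeping the crossover at fixed continuum distance), so that $x/b^2=\xi\lambda/c^2$ is dimensionless and $\Delta_\mu=\Delta=1/2$. Then
\beq
\tilde Q(\xi;\lambda)=\lim_{a\to 0}a^{1/2}\,\bar Q(a\xi;\,c\sqrt{a}\,\lambda^{-1/2})
\eeq
exists, and the two asymptotics of $F$ yield $\tilde Q(\xi;\lambda)\sim\xi^{-1/2}$ as $\xi\to\infty$ and $\tilde Q(\xi;\lambda)\sim\lambda^{1/2}\abs{\log(\xi\lambda)}$ as $\xi\to 0$. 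By \eqref{dsShort}--\eqref{dsLong} these deliver $d_s^0=1$ and $d_s^\infty=2$ respectively.

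The principal obstacle I expect is the nonlinear term $xL_n\,\eta_{M_n}\eta_{M_{n+1}}$ in \eqref{eta_rec_rel}: because $\eta_{M_n}$ depends on all $L_k$ with $k\ge n$, the factors are correlated with $L_n$ and the ensemble average does not factorise. Jensen's inequality produces only one direction cleanly, so matching the upper and lower bounds in the scaling limit will require exploiting the Markov-like independence of successive spine-level outgrowths under $\infty U$ (each $L_n$ depending only on the GRT vertices at heights $n$ and $n+1$), combined with the monotonicity and rearrangement properties, to show that the non-factorising corrections are subleading in $a$ after taking the continuum limit.
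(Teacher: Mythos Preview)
Your overall strategy---iterating \eqref{eta_rec_rel}, identifying the crossover scale via $\lla L_n^{-1}\rra_{\infty U}$, and scaling with $\Delta_\mu=\Delta=\tfrac12$---matches the paper, and your final asymptotics for $\tilde Q$ are exactly what is established there. The main differences are in how the bounds are actually obtained. For the \emph{lower} bound as $\xi\to 0$ the paper does not try to control the correlated quadratic term by independence arguments; instead it writes the iterated form $\eta_0>\sum_{k\le N}L_k^{-1}\exp\!\big(-x\sum_{m\le k}L_m\eta_{m+1}\big)$, replaces every $\eta_{m+1}$ in the exponent by the crude monotonicity bound $x^{-1/2}$ from \eqref{eta:upper}, and only then takes expectations with Jensen. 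This linearises the exponent in $L_m$ and completely sidesteps the correlation obstacle you flag. Your proposal to apply Jensen to the recursion ``in the spirit of Section~\ref{SimpleExample}'' would run into exactly the difficulty you anticipate, since $p_{n+1}$ depends on two correlated random variables.

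For the \emph{upper} bound as $\xi\to 0$ the paper does not use the telescoping subtraction at all; dropping the negative quadratic term only gives $\eta_0<\eta_N+\sum_{n<N}L_n^{-1}$ and the uncontrolled remainder $\eta_N<x^{-1/2}$ is too large. What the paper does instead is a walk decomposition: using $\sum_{n\le R}Q(x;n)<2/x$ one finds a vertex $v\le R$ with $Q(x;v)<2/(xR)$, and splitting walks by whether they reach $v$ yields $Q_M(x)<\tfrac{1}{xR}+\sum_{n\le R}L_n^{-1}$. Taking expectations with \eqref{U:Dinv} and choosing $R=bx^{-1}$ then gives the logarithmic upper bound. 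Finally, the short-distance \emph{lower} bound does not come from the scaling function $F$ but from a separate argument: restrict to the event that the first $N=b^{-1}$ vertices all have $L_n=1$ (probability $(1-b)^{2N}$) and use the explicit segment formula \eqref{PhalflineLf}. These three devices---the $x^{-1/2}$ replacement in the exponent, the $2/(xR)$ two-point bound, and the single-edge restriction---are the missing concrete ingredients in your sketch.
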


\begin{proof}  Taking $\Lambda=b^{-1}$ and $\Delta_\mu=\Delta=\half$ in \eqref{tildeQ:definition} we obtain
   \bea\label{newthm1}
\tilde Q(\xi;\lambda)&\sim&
\begin{cases}
 \xi^{-\half},&\xi\gg\lambda^{-1},\\ \lambda^\half\abs{\log \lambda\xi},& \xi \ll\lambda^{-1},
\end{cases}
\eea
the proof of which follows. The theorem then follows from definitions in section \ref{defining the continuum limit},   
\eqref{dsShort} and \eqref{dsLong}.
\end{proof}

\subsection{Short distance behaviour: $\xi\to\infty$}

By the monotonicity lemma we have 
\beq Q(x;b)<x^{-\half}\eeq
from which it follows immediately that
\beq \tilde Q(\xi;\lambda) < \xi^{-\half}.\eeq
We can obtain a lower bound for the expectation value by considering only the contribution of graphs for which the first $N$ vertices have only one edge and walks which get no further than $N$; then using \eqref{PhalflineLf} we get
\beq 
\expect{Q(x;b)}{\infty U}>(1-b)^{2N}x^{-\half}\frac{(1+x^\half)^N-(1-x^\half)^N}{(1+x^\half)^N+(1-x^\half)^N}.
\eeq
Setting $N=b^{-1}$ we find 
\beq  \tilde Q(\xi;\lambda) > \xi^{-\half} e^{-2}\tanh(\sqrt{\xi\lambda})\eeq
which establishes  the first part of \eqref{newthm1}.

\subsection{Lower bound as $\xi\to 0$}

From now on to improve legibility  we will adopt the following simplified notation whenever it does not lead to ambiguity; for $\eta_{M_n}(x)$ we will write $\eta_n(x)$ and for $L_n(M)$ we will write $L_n$. We will also suppress the second argument $b$ in $\eta_n(x)$.

Rearranging \eqref{eta_rec_rel} with respect to $\eta_n(x)$ we obtain
\beq 
\eta_n(x)=\frac{\eta_{n+1}(x)+\frac{1}{L_n}}{1+xL_n\eta_{n+1}(x)}\label{start}
\eeq
which can be iterated to give 
\bea 
\eta_n(x)&=&\frac{\eta_{N}(x)}{ \prod_{k=n}^{N-1}(1+xL_k\eta_{k+1}(x))}   +\sum_{k=n}^{N-1}\frac{1}{L_k}\frac{1}{\prod_{m=n}^{k}(1+xL_m\eta_{m+1}(x))}\nn\\
		&>&\sum_{k=n}^{N}\frac{1}{L_k}\exp \left (-\sum_{m=n}^{k}xL_m\eta_{m+1}(x)\right ),\label{EtaLower}
\eea
where we have used $\eta_N(x)\ge 1/L_N$. Using the monotonicity lemma \eqref{eta:upper}  and Jensen's inequality \eqref{Jensens_ineq}  gives
\beq 
\expect{\eta_0(x;b)}{\infty U}>
\sum_{n=0}^{N}\frac{1}{\expect{L_n}{\infty U}}\prod_{k=0}^{n}e^{-\sqrt{x}\expect{L_k}{\infty U}}.
\eeq
Using the results in Lemma \ref{Critical_GW_StandardResults} and \eqref{U:Dinv} gives 
\bea 
\expect{\eta_0(x;b)}{\infty U}&>&
\sum_{n=0}^{N}\frac{1}{1+{f^U}''(1)n/2}
e^{-\sqrt{x}\sum_{k=0}^{n}\left (1+{f^U}''(1)k \right )}\\
&=&\sum_{n=0}^{N}\frac{1}{1+{f^U}''(1)n/2}
e^{-\sqrt{x}(n+1)(1+\half {f^U}''(1)n)}\nn\\
&>&e^{-\sqrt{x}(N+1)(1+\half {f^U}''(1)N)}  \sum_{n=0}^{N}\frac{1}{1+{f^U}''(1)n/2}\nn\\
&>&\frac{2}{{f^U}''(1)} e^{-\sqrt{x}(N+1)(1+\half {f^U}''(1)N)}\log \left(1+{f^U}''(1)N/2\right).
\eea
Now let $N=\lfloor b^{-\half} x^{-\quarter}\rfloor$ and set  $b=a^\half\lambda^{-\half}$, $x=a\xi$ so
\bea 
\tilde Q(\xi;\lambda)&=&\lim_{a\to 0}a^\half \expect{\eta_0\left (x=a\xi; b = a^{1/2}\lambda^{-1/2} \right)}{\infty U} \\
&>&
{\lambda^\half}e^{-1-(\xi\lambda)^\quarter}\log\left( 1+\frac{1}{(\xi\lambda)^\quarter}\right)\label{longlower}
\eea
which diverges logarithmically as $\xi\to 0$.

\subsection{Upper bound as $\xi\to 0$}

This proceeds by a fairly standard argument.
Let $p_t(r;n)$ be the probability that a walk starting at the root at time zero is at vertex $n$ at time $t$ and define
\beq 
Q(x;n)=\sum_t p_t(r;n)  (1-x)^{\half t}.
\eeq
Then
\beq \sum_{n\in B(R)} Q(x;n) <\frac{2}{x}\eeq
so it follows that there is a vertex $v\le R$ such that
\beq Q(x;v)< \frac{2}{x R}.\eeq
Now consider the walks contributing to $Q_M(x)$ and split them into two sets; $\Omega_1$ consisting of those reaching no further than $v-1$,  and $\Omega_2$ consisting of those reaching at least as far as $v$  (see Figure \ref{omega}). Then we have
\begin{figure}
\begin{center}
\includegraphics[width=14cm]{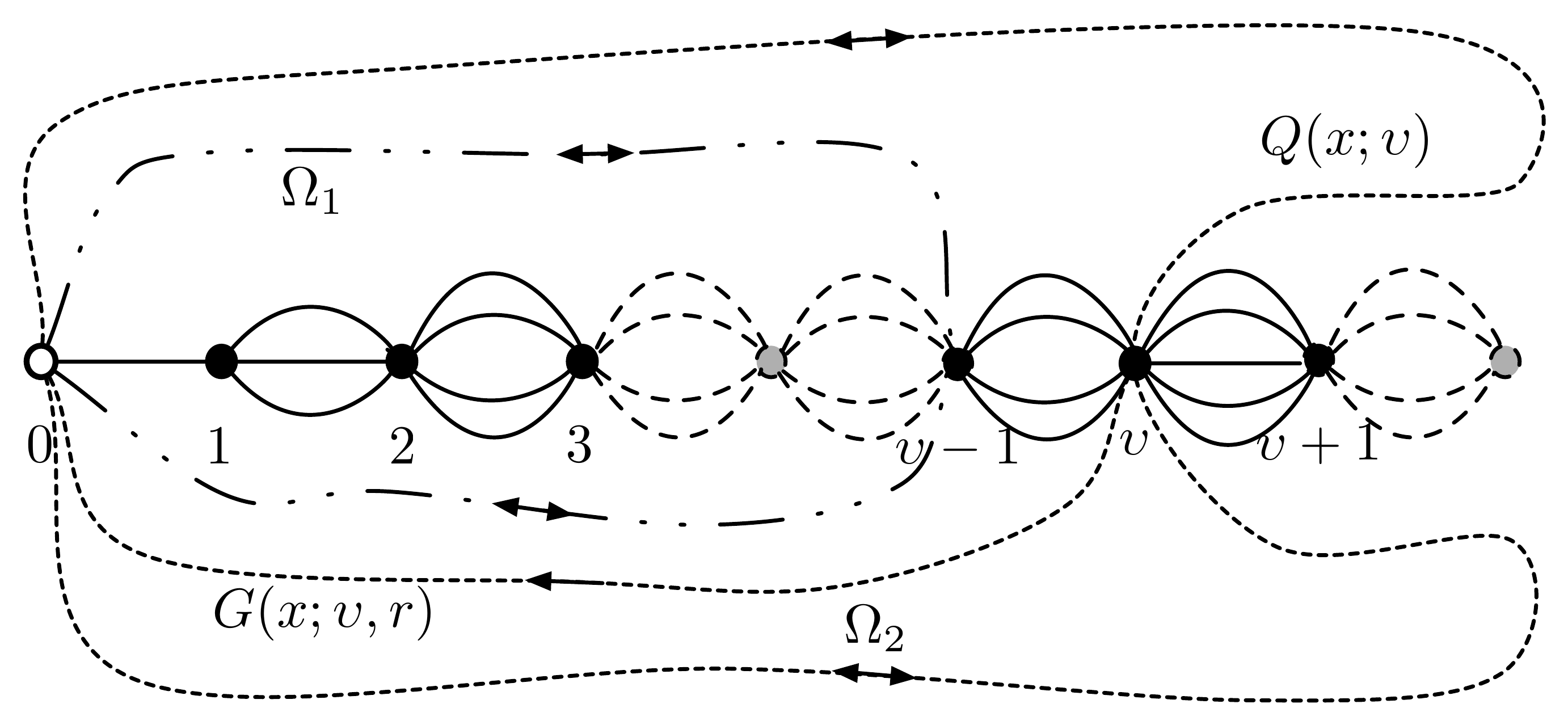}
\caption{Decomposition of a random walk into sets $\Omega _1$ and $\Omega _2$; $\Omega _1$ (two dots-one dash line) corresponds to walks that do not move beyond vertex $\upsilon -1$. $\Omega _2$ (dashed line) corresponds to walks that reach at least vertex $\upsilon$. Double arrows encode the fact that random walk can return to the point of origin and single arrow means that the walker cannot revisit its starting point.}
\label{omega}
\end{center}
\end{figure}
\beq 
Q_M(x) = Q_M^{(\Omega_1)}(x)+ Q_M^{(\Omega_2)}(x).
\eeq
$Q_M^{(\Omega_2)}(x)$ can be written
\beq 
Q_M^{(\Omega_2)}(x)=Q(x;v)\frac{L_{v-1}}{L_v+L_{v-1}} G(x;v,r)
\eeq
where $G(x;v,r) $ generates walks which leave $v$ and return to $r$ without visiting $v$ again. $G(x;v,r)$ can be bounded by decomposing the walks following similar arguments that led to \eqref{G_decomp_P}, i.e. leave $v$, go to $v-1$, do any  number of returns to $v-1$, leave $v-1$ for the last time, go to $v-2$ etc which gives
\beq 
G(x;v,r)=(1-x)^{-\half}P(x;v)\frac{L_{v-2}}{L_{v-1}}G(x;v-1,r)
\eeq
where $P(x;v)$ is the first return generating function for walks that leave $v$ towards the root. Iterating gives (where the root is labelled 0)
\bea 
G(x;v,r)	&=&\prod_{k=v}^2(1-x)^{-\half}P(x;k)\frac{L_{k-2}}{L_{k-1}} G(x;1,r)\\
			&=& \frac{L_{0}}{L_{v-1}}(1-x)^{-\half(v-2)}\prod_{k=v}^2P(x;k),
\eea
where $G(x;1,r) = \sqrt{1-x}$. One can then use the monotonicity Lemma to bound the $P(x;k)$ by reducing the multigraph to  $M_k^*=\{1,1,1,...,L_{k-1}(M),L_{k}(M),...\}$  which yields  
\beq
P(x;k) \leq P^*(x;k)=\frac{(1-x)\frac{ L_{k-1}}{1+L_{k-1}} }{1-\frac{1}{1+L_{k-1}} P_{k-1}(x)  } \leq (1-x)
\eeq 
where in the last inequality we used that $P_{k-1}(x)$, the first return generating function for walks on the line segment  of length $k-1$, is  bounded above by 1. Thus we have
\bea 
Q_M^{(\Omega_2)}(x) &<& \frac{2}{x R}\frac{L_{0}}{L_v+L_{v-1}}          (1-x)^{v/2} \\
					   &<& \frac{1}{x R}.   
\eea
$Q_M^{(\Omega_1)}(x)$ is bounded using \eqref{start} by
\beq Q_M^{(\Omega_1)}(x)=\eta_ 0^{(\Omega_1)}<\eta_ {v-2}^{(\Omega_1)}+\sum_{n=0}^{v-3}\frac{1}{L_n}.\eeq
Now
\bea\eta_{ v-2}^ {(\Omega_1)}	&=&\frac{1}{L_{v-2}}\;\frac{1}{1-\frac{L_{v-2}(1-x)}{L_{v-1}+L_{v-2}}} \\
			&=&\frac{1}{L_{v-2}} \;\frac{L_{v-1}+L_{v-2}}{L_{v-1}+xL_{v-2}}\\
			&<&\frac{1}{L_{v-2}}+\frac{1}{L_{v-1}}
\eea
so
\beq 
Q_M^{(\Omega_1)}(x)	<\sum_{n=0}^{v-1}\frac{1}{L_n} < \sum_{n=0}^{R}\frac{1}{L_n} 
\eeq
and altogether
\beq \label{upperbound}
Q_M(x)<\frac{1}{x R} +\sum_{n=0}^{R}\frac{1}{L_n}.
\eeq
Taking expectation values
\bea 
\expect{Q(x;b)}{\infty U}	&<&\frac{1}{x R}   %
+\sum_{n=0}^{R}\frac{1}{1+{f^U}''(1)n/2}\\
			&<& \frac{1}{x R}    %
			+1+\frac{2}{{f^U}''(1)}\log( 1+{f^U}''(1)R/2 ).
\eea
Finally let $R=bx^{-1}$   and set $b=a^\half\lambda^{-\half}$, $x=a\xi$ so  
\bea \tilde Q(\xi;\lambda)&=&\lim_{a\to 0}a^\half \avg{Q \left (x=a\xi; b = a^{1/2}\lambda^{-1/2} \right)}_{\infty U}\\
&<&{\lambda^\half}\left(1+ \log\left(   1+\frac{1}{\xi\lambda}  \right) \right).\eea
Together with \eqref{longlower} this establishes the second part of  \eqref{newthm1}.

\section{Properties of spectral dimension in the transient case}
\label{transient case}
 
Throughout this section we will assume that with measure $\mu=1$  there exist constants $c$ and $N_0$ such that
\beq |B_{2N}(M)|<c |B_N(M)|,\quad N>N_0.\label{Ball:condition}\eeq
This rules out exponential growth for example. 
Recall from definition \eqref{dh_def} that a graph $G$ has Hausdorff dimension $\dha$ if
\beq |B_N(G)|\sim N^{\dha}\eeq
and an ensemble $\R M=\{\C M,\mu\}$ has annealed Hausdorff dimension $\dha$ if
\beq \expect{|B_N(M)|}{\mu}\sim N^{\dha}.\eeq
We will confine our considerations to the case $2\le \dha \le 4$ as being the regime of most physical interest. 

\subsection{Resistance in transient multigraphs}
\label{Resistance in transient multigraphs}

Here we note a lemma which describes the relationship between the resistance to infinity on $M$ and transience.

\begin{lemma}
The electrical resistance, assuming each edge has resistance 1, from $n$ to infinity is given by 
\beq 
\label{multigraphs_resistance}
\eta_n(0)=\sum_{k=n}^\infty \frac{1}{L_k}.
\eeq
If $\eta_n(0)$ is finite the graph is transient.
\end{lemma}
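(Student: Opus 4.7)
The plan is to identify $\eta_n(0)$ with the effective resistance from the root of $M_n$ to infinity (in the network where every edge carries unit resistance) and then compute that resistance explicitly by series-parallel reduction. For the identification I would invoke the standard dictionary between reversible Markov chains and electrical networks (see, e.g., \cite{Lyons:2012}): for a random walk at vertex $v$ on a network with unit conductances one has $G(v,v) = \pi(v)\, R_{\text{eff}}(v \leftrightarrow \infty)$, where $G(v,v) = \sum_{t\geq 0} p_t(v,v)$ and $\pi(v) = \deg(v)$. Evaluated at the root of $M_n$, $\pi(r) = L_n$ and $G(r,r) = Q_{M_n}(0)$ by \eqref{Q_and_P_def}, hence $R_{\text{eff}}(r \leftrightarrow \infty) = Q_{M_n}(0)/L_n = \eta_n(0)$.

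Next I would compute $R_{\text{eff}}$ directly. In $M_n$ the $L_k$ edges between consecutive vertices $k$ and $k+1$ are parallel unit resistors combining to resistance $1/L_k$, and the resulting rungs are connected in series, so the effective resistance from the root to vertex $N$ is exactly $\sum_{k=n}^{N-1} 1/L_k$. By Rayleigh's monotonicity principle (cited in section \ref{resistance}), the sequence of truncated resistances is monotone in $N$; letting $N \to \infty$ yields $R_{\text{eff}}(r \leftrightarrow \infty) = \sum_{k=n}^\infty 1/L_k$, which is \eqref{multigraphs_resistance}. As a sanity check, the same relation follows from the recursion \eqref{eta_rec_rel} at $x = 0$, which simplifies to $\eta_n(0) - \eta_{n+1}(0) = 1/L_n$ and telescopes, although in the divergent case one still needs the electrical argument to fix the boundary value at infinity.

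For the transience statement, if $\eta_n(0) < \infty$ then $Q_{M_n}(0) < \infty$, so by the decomposition \eqref{Q_decomp_of_P} we have $P_{M_n}(0) < 1$, meaning the walker on $M_n$ has strictly positive probability of never returning to the root. This is precisely the definition of transience, so the walk on $M_n$ is transient; since $M$ differs from $M_n$ only by a finite initial segment, the walk on $M$ is transient as well. The only genuine subtlety in the argument is justifying the $N \to \infty$ limit in the series-parallel reduction, which is dispatched cleanly by Rayleigh's principle; no other obstacle is anticipated.
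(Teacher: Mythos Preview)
Your argument is correct, but it takes a different route from the paper's. You invoke the standard electrical-network identity $G(v,v)=\pi(v)\,R_{\mathrm{eff}}(v\leftrightarrow\infty)$ from the general theory of reversible chains and then compute $R_{\mathrm{eff}}$ by series--parallel reduction; this is clean and conceptual, and the appeal to Rayleigh monotonicity for the $N\to\infty$ limit is the right way to close it. The paper, by contrast, deliberately avoids quoting that identity and instead proves $\eta_0(0)=\sum_{k\ge 0}1/L_k$ from its own machinery: the upper bound \eqref{upperbound} (derived in the recurrent-case analysis) with $R=x^{-1}\abs{\log x}$ gives $\eta_0(0)\le\sum_k 1/L_k$ as $x\to 0$, and then the recursion \eqref{start} at $x=0$ telescopes to give the reverse inequality. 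The paper even remarks that the statement is ``a well known property of graphs'' but gives the explicit argument because the intermediate relation \eqref{eta:reln} is reused later. So your proof buys brevity and generality by outsourcing the key identity to \cite{Lyons:2012}, while the paper's proof is self-contained within the generating-function framework and showcases the utility of the bound \eqref{upperbound}; both are legitimate, and the paper's choice is mainly a matter of keeping the exposition internal.
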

\begin{proof} This is a well known property of graphs (recall section \ref{resistance}) but we give an explicit proof here specifically for multigraphs as we will need \eqref{multigraphs_resistance} later. Clearly the right hand side of \eqref{multigraphs_resistance} is the resistance by the usual laws for combining resistors in parallel and series.  
 Setting $R= x^{-1}\abs{\log x}$ in \eqref{upperbound}  gives
\beq Q_M(x)<\frac{1}{\abs{\log x}}+\sum_{n=0}^\infty\frac{1}{L_n}-\sum_{n=R}^\infty \frac{1}{L_n}\eeq
and
\beq Q_M(0)=\eta_0(0)\le \sum_{n=0}^\infty\frac{1}{L_n}\label{etazero:upper}\eeq
which shows that if the rhs of this expression is finite the graph is definitely transient. Assuming this is the case, noting from \eqref{start}  
 that 
\beq \eta_1(0)=\eta_0(0)-1/L_0,\eeq
and proceeding by induction we see that 
\bea \eta_k(0)&=&\eta_0(0)-\sum_{n=0}^{k-1}\frac{1}{L_n}.\label{eta:reln}
\eea
Setting $k=\infty$ in  \eqref{eta:reln} gives 
 \beq  \eta_{0}(0)\ge \sum_{n=0}^\infty\frac{1}{L_n}\eeq
 which together with \eqref{eta:reln} and \eqref{etazero:upper} gives \eqref{multigraphs_resistance}. Finally it follows from the definition of $\eta_M(x)$ in \eqref{eta:upper} that if $\eta_n(0)$ is finite, so is $Q_n(0)$ and therefore $M$ is transient. 
 \end{proof}
 Note that by  Jensen's inequality
  \beq \sum_{n=0}^\infty\frac{1}{L_n}=\lim_{N\to\infty}\sum_{n=0}^N\frac{1}{L_n}> \lim_{N\to\infty}\frac{N^2}{\sum_{n=0}^N L_n}\eeq
  so that only  if $\dha\ge 2$   can  the graph be transient.

The solvable case $L_n=(n+2)(n+1)$ is discussed in Appendix \ref{Solvable} as a simple illustration of all these properties.

\subsection{Universal bounds on $\eta_n'(x)$}

Differentiating the recursion \eqref{start} and iterating we obtain
\bea
\label{second}
\! \! \! \! \! \! \!  \abs{\eta_0'(x)}&=&\abs{\eta_{N}'(x)} \prod_{k=0}^{N-1}\frac{(1-x)}{(1+xL_k\eta_{k+1}(x))^2}+\nn\\
\! \! \! \! \! \! \!& &+\sum_{n=0}^{N-1} (L_n\eta_{n+1}(x)^2+\eta_{n+1}(x) )(1-x)^{-1}\prod_{k=0}^{n}\frac{(1-x)}{(1+xL_k\eta_{k+1}(x))^2} \\
\label{first}
\! \! \! \! \! \! \!\qquad\quad &=&\abs{\eta_{N}'(x)} \prod_{k=0}^{N-1}\frac{(1-xL_k\eta_{k}(x))^2}{1-x}+
\nn\\
\! \! \! \! \! \! \!& &+\sum_{n=0}^{N-1} (L_n\eta_{n+1}(x)^2+\eta_{n+1}(x) )(1-x)^{-1}\prod_{k=0}^{n}\frac{(1-xL_k\eta_{k}(x))^2}{1-x}
\eea
so
\bea \label{eta_prime_upper}
\abs{\eta_0'(x)}
&<&\abs{\eta_{N}'(x)} (1-x)^{-N}e^{-2x\sum_{k=0}^{N-1} L_k\eta_{k}(x)} +
\nn\\
& &+\sum_{n=0}^{N-1}( L_n\eta_{n+1 }(x)^2+\eta_{n+1}(x)) (1-x)^{-n-2}e^{-2x\sum_{k=0}^{n-1} L_k\eta_{k}(x)} 
 \label{etaprime:upper} 
 \eea
and
\bea \label{eta_prime_lower}
\abs{\eta_0'(x)}&>&\abs{\eta_{N}'(x)} (1-x)^N e^{-2x\sum_{k=0}^{N-1} L_k\eta_{k+1}(x)}+
\nn\\
& &+\sum_{n=0}^{N-1} (L_n\eta_{n+1}(x)^2+\eta_{n+1}(x) )(1-x)^n e^{-2x\sum_{k=0}^{n} L_k\eta_{k+1}(x)}.
 \label{etaprime:lower} 
 \eea
We see that the upper and lower bounds are essentially of the same form.  Defining
\bea 
F_N(x)&=&\sum_{k=0}^{N} L_k\eta_{k+1}(x), \label{FN:defn}\\
G_N(x)&=&\sum_{k=0}^{N} L_k\eta_{k+1}(x)^2+\eta_{k+1}(x)\label{GN:defn}
\eea
we have
\begin{lemma}\label{lemma:etaprime}
For any  $M\in \C M$
\bea 
\abs{\eta_0'(x)}&>& c\,  G_{N^-(x)-1}(x)\label{new:bound}
\eea
for $x< x_0<1$, where $c$ is a constant and $N^{-}(x)$ is the integer such that
\bea x F_{N^-(x)}(x)& >& 1\geq x F_{N^-(x)-1}(x).\label{FNminus:defn}\eea
\end{lemma}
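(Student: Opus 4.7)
The plan is to start from the already-derived lower bound \eqref{etaprime:lower}, drop the non-negative boundary term involving $\abs{\eta_N'(x)}$, and specialise to $N=N^-(x)$. This reduces the lemma to showing that the coefficient $(1-x)^n e^{-2xF_n(x)}$ of $L_n\eta_{n+1}(x)^2+\eta_{n+1}(x)$ in the resulting sum is bounded below by a positive constant, uniformly in $n\leq N^-(x)-1$ and in $x<x_0$; summing then gives $c\,G_{N^-(x)-1}(x)$ by the definition \eqref{GN:defn} of $G_N$.

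Controlling the exponential factor is immediate. Since $F_n(x)$ is monotone increasing in $n$, for $n\leq N^-(x)-1$ the defining inequality \eqref{FNminus:defn} gives $xF_n(x)\leq xF_{N^-(x)-1}(x)\leq 1$, whence $e^{-2xF_n(x)}\geq e^{-2}$. It therefore remains to bound $(1-x)^{N^-(x)-1}$ from below, i.e.\ to show that $xN^-(x)$ stays bounded as $x\to 0$.

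This is where the ball-growth assumption \eqref{Ball:condition} is crucial. Iterating \eqref{Ball:condition} yields $\abs{B_N(M)}\leq C_1 N^\alpha$ for some $\alpha,C_1>0$, and since $L_N\leq \abs{B_{N+1}(M)}$ the sequence $L_N$ is polynomially bounded. Combining the trivial estimate $\eta_{k+1}(x)\geq 1/L_{k+1}$, which holds because $Q_{M_{k+1}}(x)\geq 1$, with the definition of $N^-(x)$ and AM--GM applied to the telescoping product $\prod_{k=0}^{N-1}L_k/L_{k+1}=L_0/L_N$, gives
\[\frac{1}{x}\geq F_{N^-(x)-1}(x) \geq \sum_{k=0}^{N^-(x)-1}\frac{L_k}{L_{k+1}} \geq N^-(x)\left(\frac{L_0}{L_{N^-(x)}}\right)^{1/N^-(x)}.\]
Using \eqref{eta:upper} one also has $xF_n(x)\leq\sqrt{x}\,\abs{B_{n+1}(M)}\to 0$ as $x\to 0$ for each fixed $n$, so $N^-(x)\to\infty$, and the polynomial bound on $L_N$ then forces $(L_0/L_{N^-(x)})^{1/N^-(x)}\to 1$. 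Choosing $x_0$ small enough that this factor exceeds $\half$ yields $N^-(x)\leq 2/x$ for $x<x_0$, hence $(1-x)^{N^-(x)-1}\geq (1-x)^{2/x}$, which is bounded below by a positive constant (e.g.\ $e^{-3}$) after shrinking $x_0$ further.

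Assembling the two estimates yields $\abs{\eta_0'(x)}\geq c\,G_{N^-(x)-1}(x)$ with $c$ depending only on the constants in \eqref{Ball:condition}. The main technical obstacle is the second step: without the polynomial-growth consequence of \eqref{Ball:condition}, a super-polynomial $L_k$ (e.g.\ $L_k=k!$) would allow $N^-(x)$ to be as large as $e^{1/x}$, whereupon the factor $(1-x)^{N^-(x)}$ would collapse to zero and the argument would fail; the ball-doubling hypothesis is precisely what rules this out.
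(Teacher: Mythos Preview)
Your argument is correct and follows the same overall template as the paper: start from \eqref{etaprime:lower}, drop the boundary term, specialise $N=N^-(x)$, bound the exponential by $e^{-2}$ via \eqref{FNminus:defn}, and then show $(1-x)^{N^-(x)}$ is bounded below. The only substantive difference is in how you control $N^-(x)$.

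The paper obtains the sharper estimate $N^-(x)\le \lceil b_1 x^{-1/2}\rceil$. It does this by first using \eqref{EtaLower} to get $\eta_n(x)>e^{-1}\sum_{k=n}^{N^-(x)-1}L_k^{-1}$, then summing against $L_n$ and invoking the doubling condition \eqref{Ball:condition} together with Jensen to conclude $F_{N^-(x)-1}(x)>b_1^{-2}(N^-(x)-1)^2$. Your route instead uses only the trivial bound $\eta_{k+1}(x)\ge 1/L_{k+1}$ and an AM--GM/telescoping trick to get $F_{N^-(x)-1}(x)\gtrsim N^-(x)$, hence $N^-(x)\lesssim 1/x$. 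For the lemma as stated this is enough, since $(1-x)^{2/x}$ is bounded below. Your argument is arguably more elementary and uses \eqref{Ball:condition} only through its polynomial-growth consequence.

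However, be aware that the paper's stronger bound \eqref{Nminus:upper} is not incidental: it is reused in the proof of Theorem~\ref{ds leq dh} (to pass from $|B_{N^-(x)}|$ to a power of $x$) and in Lemma~\ref{lemma:dS:upper}. Your $N^-(x)\le 2/x$ would be too weak there. So while your proof of this lemma stands on its own, you would still need to establish $N^-(x)\lesssim x^{-1/2}$ separately before the downstream results go through.
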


\begin{proof} 
Setting $N=N^-(x)$ in   \eqref{EtaLower}, and using \eqref{FNminus:defn}
we have 
\bea 
\eta_n (x)	&>& e^{-1} \sum_{k=n}^{N^-(x)-1}\frac{1}{L_k}\label{eta:lower2}
\eea
so that (using Jensen's inequality and the condition \eqref{Ball:condition}, see \eqref{A:result1})
\bea 
\frac{1}{x}>F_{N^-(x)-1}(x)	&>& e^{-1} \sum_{n=0}^{N^-(x)-1} L_n\sum_{k=n+1}^{N^-(x)-1}\frac{1}{L_k} > b_1^{-2} (N^-(x)-1)^2,\label{Nminus:inequalities}
\eea		
where $b_1$ is a constant $O(1)$, so that 
\beq 
\lceil b_1\, x^{-\half}\rceil >N^-(x).\label{Nminus:upper}
\eeq
Lemma \ref{lemma:etaprime} then follows by setting $N=N^-(x)$ in \eqref{etaprime:lower} and using  \eqref{Nminus:upper}. For future use we note that because $\eta_k(x)<\eta_k(0)<\eta_0(0)$ we have
\beq 
\lfloor c\, x^{-1/\dha}\rfloor <N^-(x).\label{Nminus:lower}
\eeq
\end{proof}

\subsection{Relationship between Hausdorff and spectral dimensions}

Our first result is 

\begin{theorem}  \label{ds leq dh} For any  graph $M\in \C M$ such that the Hausdorff dimension $\dha$ exists and is less than 4 then, if the spectral dimension exists, $d_s\le \dha$.
  \end{theorem}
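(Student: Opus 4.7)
The plan is to establish a lower bound $|\eta_0'(x)| \geq c\,x^{d_H/2-2}$ as $x\to 0$; combined with the defining asymptotic $|\eta_0'(x)| \sim x^{d_s/2-2}$ from \eqref{trdsdef} in the transient regime $d_s\in[2,4)$, this immediately forces $d_s \leq d_H$ by matching exponents at $x\to 0$.

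First I would invoke Lemma \ref{lemma:etaprime}, which gives $|\eta_0'(x)| > c\,G_{N^-(x)-1}(x)$ with $G_{N-1}(x)=\sum_{n=0}^{N-1}[L_n\eta_{n+1}(x)^2+\eta_{n+1}(x)]$. To bound $G_{N-1}(x)$ from below, I would use the partial-sum estimate \eqref{eta:lower2}, $\eta_{n+1}(x)\geq e^{-1}\sum_{k=n+1}^{N-1}1/L_k$ for $n+1\leq N^-(x)-1$, together with Cauchy--Schwarz in the form $(N-n-1)^2\leq(|B_N|-|B_{n+1}|)\sum_{k=n+1}^{N-1}1/L_k$ and the Hausdorff upper bound $|B_m|\leq c\,m^{d_H}$, yielding $\eta_{n+1}(x)\geq c(N-n-1)^2/N^{d_H}$. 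Restricting the outer sum to $n\in[\alpha N,N-1]$ for a constant $\alpha<1$ chosen so that the doubling condition \eqref{Ball:condition} together with the Hausdorff lower bound $|B_m|\geq c\,m^{d_H}$ forces $|B_N|-|B_{\alpha N}|\geq c\,N^{d_H}$, one concludes $G_{N-1}(x)\geq c\,N^{4-d_H}$.

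Next, setting $N=N^-(x)$ and using \eqref{Nminus:upper}, $N^-(x)\leq c\,x^{-1/2}$, the matching lower bound $N^-(x)\geq c\,x^{-1/2}$ is needed, which reduces to showing $F_{N-1}(x)\leq c\,N^2$ for $N\leq c\,x^{-1/2}$. Using $\eta_{k+1}(x)\leq T_{k+1}:=\sum_{j>k}1/L_j$ and Abel summation yields the identity $\sum_{k=0}^{N-1}L_k T_{k+1}=|B_N|T_N+\sum_{j=1}^{N-1}|B_j|/L_j$. The first summand is $O(N^2)$ via the a priori estimate $T_N\leq c\,N^{2-d_H}$ derived from doubling and Hausdorff regularity; the second summand must be bounded by a second Abel-summation trick exploiting the same regularity. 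Putting everything together gives $|\eta_0'(x)|\geq c\,x^{(d_H-4)/2}=c\,x^{d_H/2-2}$, and hence $d_s\leq d_H$.

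The hardest step will be the $O(N^2)$ estimate on $\sum_{j<N}|B_j|/L_j$: one has only cumulative control $\sum_{k<j}L_k=|B_j|$, and the naive bound $L_j\geq 1$ yields the far-too-weak $O(N^{d_H+1})$. The sharper $N^2$ bound exploits the fact that on average $L_j$ and $|B_j|/j$ are comparable, so heuristically $|B_j|/L_j$ behaves like $j$ and the sum is $O(N^2)$; turning this into a rigorous statement from doubling plus the Hausdorff condition alone is the main technical content, and if it fails one recovers only the weaker conclusion $d_s\leq 4-2(4-d_H)/d_H$ coming from the trivial scale $N^-(x)\geq c\,x^{-1/d_H}$ of \eqref{Nminus:lower}.
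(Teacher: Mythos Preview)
Your worry about the ``hardest step'' is well founded: the lower bound $N^-(x)\ge c\,x^{-1/2}$, equivalently $F_{N-1}(x)\le cN^2$, is \emph{not} derivable from the hypotheses of the theorem. In the paper's notation $F_{N-1}(x)\le B_{N-1}^{(1)}$, and $B_N^{(1)}\sim N^{2+\gamma}$ with $\gamma=\rho$ whenever the resistance exponent $\rho$ exists (Lemma~\ref{constraints}). Since $\rho>0$ is perfectly compatible with $\dha$ existing and satisfying the doubling condition~\eqref{Ball:condition}, your Abel-summation bound on $\sum_{j<N}|B_j|/L_j$ must fail on such graphs: there the sum genuinely grows like $N^{2+\rho}$, not $N^2$. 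Your route therefore only yields the weaker conclusion you flag at the end, and the sharp result $d_s\le\dha$ is out of reach by this method.

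The paper avoids the issue by two changes. First, a different Cauchy--Schwarz: writing $F_N(x)=\sum_k\sqrt{L_k}\cdot\sqrt{L_k}\,\eta_{k+1}(x)$ gives directly
\[
G_N(x)\;>\;\frac{F_N(x)^2}{|B_N|},
\]
which ties $G$ to $F$ rather than to $N^{4-\dha}$. Second, instead of trying to pin down $N^-(x)$ on both sides, the paper works along the \emph{sequence} $\C X=\{x_k:\,x_kF_{k-1}(x_k)=1\}$, where by construction $N^-(x_k)=k$ and $F_{k-1}(x_k)=1/x_k$ exactly. Combining these,
\[
|\eta_0'(x_k)|\;>\;c\,G_{k-1}(x_k)\;>\;\frac{c}{x_k^2\,|B_k|}\;\ge\;\frac{c}{x_k^2\,k^{\dha}}\psi(k),
\]
and now only the \emph{upper} bound $k\le b_1 x_k^{-1/2}$ of~\eqref{Nminus:upper} is needed to convert $k^{\dha}$ into $x_k^{-\dha/2}$. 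The argument then concludes by contradiction: if $d_s$ exists with $d_s>\dha$, the asymptotic $|\eta_0'(x)|\sim x^{d_s/2-2}$ is violated along the infinite sequence $x_k\to 0$. The key point is that by choosing the evaluation points so that $F_{N^-(x)-1}(x)$ is known exactly, the paper never needs the two-sided estimate $N^-(x)\asymp x^{-1/2}$ that blocks your approach.
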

\begin{proof}
The proof is by contradiction. First define the set of numbers
\beq 
\C X=\{x_k: 1=x_kF_{k-1}(x_k), k=1,2,3\ldots\}.
\eeq
Applying Cauchy-Schwarz inequality to \eqref{GN:defn} we have
\beq 
G_N(x)> \frac{F_N(x)^2}{|B_N|}
\eeq
so  applying  Lemma \ref{lemma:etaprime} gives
\bea
\abs{\eta_0'(x_k)}	&>&\frac {c}{x_k^2 |B_k|}.\label{etaprime:Xk}
\eea
We assume that $\dha$ exists for $M$ so that
\bea\label{eqthm8} \abs{\eta_0'(x_k)}&>&\frac {c}{x_k^2 k^{\dha}}\psi(k),\eea
where $ \psi(k)$ denotes a generic logarithmically varying function of $k$. %
Using  \eqref{Nminus:upper} and \eqref{Nminus:lower} (the latter if necessary to bound the logarithmic part) the right hand side is bounded below  by $c x_k^{-2+\dha/2}\psi(x_k)$. Now assume that $d_s$ exists in which case $\abs{\eta_0'(x_k)}< c' x^{-2+d_s/2}_k$;   however if $d_s>\dha$ there exist an infinite number of values $x\in\C X$ arbitrarily close to zero which contradict this. Therefore $d_s\le \dha$.
 
Note that for $\dha=4$ we get $\abs{\eta_0'(x_k)}>\psi(x_k)$. If $\psi(x)$ is logarithmically diverging as $x\to 0$ then again we can conclude that $d_s\le \dha$; the case when $ \eta_0'(0)$ is finite and $\dha=4$ is more subtle and we will not pursue it here.
\end{proof}

We can obtain more specific information about the spectral dimension by being more specific about the properties of the ensemble $\R M=\{\C M,\mu\}$. Define
\bea 
B_N^{(1)}&=&\sum_{k=0}^NL_k\sum_{n=k+1}^\infty \frac{1}{L_n},\\
\overline B_N^{(2)}&=&\sum_{k=0}^NL_k\sum_{m=k}^N \frac{1}{L_m}\sum_{n=k+1}^N \frac{1}{L_n}.	
\eea
Then we have the following
\begin{lemma}\label{lemma:dS:upper}
Given a graph $M\in\C M$ such that $\dha$ exists and
\bea B_N^{(1)}\sim N^{2+\gamma},\qquad 
\overline B_N^{(2)}\sim N^{4-\dha+\delta},	\label{lemma:dS:upper:conditions}
\eea
then the spectral dimension if it exists must satisfy
\bea d_s
&\le&\dha-\frac{2\delta-(4-\dha)\gamma}{2+\gamma}.\label{FullSpectralDim}
\eea
\end{lemma}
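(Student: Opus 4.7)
The plan is to run the strategy of Theorem~\ref{ds leq dh} but extract sharper information from $G_N(x)$ by exploiting the full assumption \eqref{lemma:dS:upper:conditions} instead of just Cauchy--Schwarz. As in the earlier argument I would introduce the sequence
\[
\mathcal{X}=\{x_k:\; x_k F_{k-1}(x_k)=1,\ k=1,2,\ldots\},
\]
which is well-defined, strictly decreasing and accumulates at zero because $F_{k-1}$ is continuous, monotonic and satisfies $F_{k-1}(0)=B^{(1)}_{k-1}\to\infty$. The defining relation forces $N^{-}(x_k)=k$, so Lemma~\ref{lemma:etaprime} immediately gives $|\eta_0'(x_k)|> c\,G_{k-1}(x_k)$, and the rest of the proof reduces to estimating $G_{k-1}(x_k)$ from below in terms of the assumed scaling of $\overline B^{(2)}_N$, and $x_k$ in terms of $k$ via the assumed scaling of $B^{(1)}_N$.

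For the first step, I would discard the linear-in-$\eta$ part of $G_{k-1}$ and insert the pointwise lower bound $\eta_{n+1}(x_k)>e^{-1}\sum_{m=n+1}^{k-1}1/L_m$ from \eqref{eta:lower2} (which is legitimate because $N^{-}(x_k)=k$). Expanding the resulting square and comparing with the definition of $\overline B^{(2)}_{k-1}$ shows the two quantities differ only by the $m=n$ contribution in the first inner sum, which is $\sum_{n=0}^{k-1}\sum_{l=n+1}^{k-1}1/L_l$; under \eqref{Ball:condition} this is subleading relative to $\overline B^{(2)}_{k-1}\sim k^{4-\dha+\delta}$, yielding $|\eta_0'(x_k)|\geq c_1 k^{4-\dha+\delta}$ up to slowly varying factors.

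For the second step, monotonicity of $F_{k-1}$ in $x$ gives $F_{k-1}(x_k)\leq F_{k-1}(0)=B^{(1)}_{k-1}\sim k^{2+\gamma}$, which combined with $x_k F_{k-1}(x_k)=1$ yields $k\geq c_2 x_k^{-1/(2+\gamma)}$. Substituting back produces
\[
|\eta_0'(x_k)|\;\geq\; c_3\, x_k^{-(4-\dha+\delta)/(2+\gamma)}.
\]
If $d_s$ exists, definition \eqref{trdsdef} gives $|\eta_0'(x)|\leq c'x^{-2+d_s/2}\psi_+(x)$ with $\psi_+$ slowly varying at $0$. Evaluating this bound along the sequence $x_k\to 0$ and comparing exponents forces $-(4-\dha+\delta)/(2+\gamma)\geq -2+d_s/2$, which on rearrangement is exactly \eqref{FullSpectralDim}:
\[
d_s\;\leq\; \frac{2\dha+4\gamma-2\delta}{2+\gamma}\;=\;\dha-\frac{2\delta-(4-\dha)\gamma}{2+\gamma}.
\]

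The main obstacle I anticipate is the identification of $G_{k-1}(x_k)$ with $\overline B^{(2)}_{k-1}$ at the level of exponents: one must verify that the correction term $\sum_{n,l}1/L_l$ from the $m=n$ piece is genuinely subleading for all admissible $2\leq\dha\leq 4$, and that the slowly varying factors collected along the way (from \eqref{Ball:condition}, from the lower bound on $\eta_{n+1}$ and from the definition of $d_s$) do not contaminate the exponent comparison. Both points should be controllable using \eqref{Ball:condition}, which excludes the pathological $L_k$ rearrangements that would spoil these identifications.
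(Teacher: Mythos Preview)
Your proposal is correct and follows essentially the same route as the paper: invoke Lemma~\ref{lemma:etaprime}, bound $G_{N^-(x)-1}(x)$ from below by (something scaling like) $\overline B^{(2)}_{N^-(x)-1}$ via the pointwise lower bound \eqref{eta:lower2} on $\eta_{n+1}$, and bound $F_{N^-(x)}(x)$ from above by $B^{(1)}_{N^-(x)}$ to convert the $N$-scaling into an $x$-scaling. The only cosmetic difference is that the paper works directly with $N^-(x)$ for all small $x$ (obtaining $|\eta_0'(x)|>\underline c\,x^{-2+\alpha/2}|\log x|^{\underline c'}$), whereas you pass through the discrete sequence $\{x_k\}$ as in Theorem~\ref{ds leq dh}; both lead to the same exponent comparison. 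Your explicit discussion of the $m=n$ correction term is a point the paper passes over silently, simply writing $G_{N^-(x)-1}(x)\ge e^{-2}\overline B^{(2)}_{N^-(x)-1}$.
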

\begin{proof}
The proof uses Lemma \ref{lemma:etaprime} to show that
\beq  \abs{\eta_0'(x)} > \underline{c}\, x^{-2+\alpha/2}\abs{\log x}^{\underline c'}\eeq
where   
$\alpha$ is given by the right hand side of \eqref{FullSpectralDim}. Firstly by combining \eqref{eta:lower2} and the definition of $G_N(x)$ we have
\beq G_{N^-(x)-1}(x)\ge e^{-2} \overline B_{N^-(x)-1}^{(2)},\eeq
while from the definition of $N^-(x)$ and the fact that $\eta_n(x)$ is a decreasing function of $x$ we get   \eqref{FNminus:defn} 
\bea \frac{1}{x}<F_{N^-(x)}(x)	&<& B_{N^-(x)}^{(1)}.\eea
Lemma \ref{lemma:dS:upper} follows by combining these two results with the conditions \eqref{lemma:dS:upper:conditions}
and Lemma \ref{lemma:etaprime}. 
Again, the special case $\dha=d_s=4$ is  more subtle because $\eta_0'(0)$ might be finite and we will not pursue it here.

\end{proof}

 Now define
\bea 
 B_N^{(2)}=\sum_{k=0}^NL_k\sum_{m=k}^\infty \frac{1}{L_m}\sum_{n=k+1}^\infty \frac{1}{L_n}\,.	\eea
Then
\begin{lemma}\label{lemma:dS:lower}
Given a graph $M\in\C M$ such that $\dha$ exists and there exists a $N_0>0$ such that for $N>N_0$
\bea \eta_N(0)\sim N^{2-\dha+\rho},\qquad
 B_N^{(2)}\sim N^{4-\dha+\delta'},	\label{lemma:dS:lower:conditions}
\eea
or an ensemble $\R M=\{\C M,\mu\}$ such that 
\bea \expect{\eta_N(0)}{\mu}\sim N^{2-\dha+\rho},\qquad
 \expect{B_N^{(2)}}{\mu}\sim N^{4-\dha+\delta'},	
\eea
then the spectral dimension is bounded by
\bea d_s
&\geq&\dha- \frac{(4-\dha)\rho-(2-\dha)\delta'}{2+\delta'-\rho}\label{FullSpectralDim2}
\eea
provided that $\rho\leq \delta'+1$.
\end{lemma}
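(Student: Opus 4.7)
My plan is to mirror the proof of Lemma~\ref{lemma:dS:upper}: whereas that lemma bounded $d_s$ from above by lower-bounding $|\eta_0'(x)|$, here I would obtain the lower bound on $d_s$ by upper-bounding $|\eta_0'(x)|$, so as to deduce the claim from \eqref{trdsdef}. The natural starting point is therefore the dual inequality \eqref{eta_prime_upper}, used together with the lower bound \eqref{eta:lower2} and the two scaling ans\"atze.

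First I would pass to the limit $N\to\infty$ in \eqref{eta_prime_upper}, showing that the boundary term $|\eta_N'(x)|(1-x)^{-N}$ vanishes once the exponential dominates; transience (guaranteed by finiteness of $\eta_N(0)$) together with monotonicity of $\eta_k(x)$ makes this routine. Using $\eta_k(x)\le\eta_k(0)$ in the coefficients then gives
\beq
\abs{\eta_0'(x)}\le c\sum_{n=0}^\infty\bigl(L_n\eta_{n+1}(0)^2+\eta_{n+1}(0)\bigr)(1-x)^{-n-2}e^{-2x\sum_{k=0}^{n-1}L_k\eta_k(x)}.
\eeq
For the exponent, which controls the effective cutoff of the sum, one cannot use $\eta_k(x)\le\eta_k(0)$ (wrong direction); instead I would invoke the lower bound $\eta_k(x)\ge e^{-1}(\eta_k(0)-\eta_{N^+(x)}(0))$ from \eqref{eta:lower2}, with $N^+(x)$ defined so that the exponent becomes $O(1)$ at $n=N^+(x)$. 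Splitting the sum at $N^+(x)$, the pre-cutoff piece is bounded by $\sum_{n\le N^+}\bigl(L_n\eta_{n+1}(0)^2+\eta_{n+1}(0)\bigr)\sim B_{N^+(x)}^{(2)}\sim (N^+(x))^{4-\dha+\delta'}$ by the second ans\"atz, while the tail $n>N^+(x)$ contributes no more owing to the exponential suppression.

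The key technical step is identifying the scaling of $N^+(x)$. The combination $B_N^{(2)}/\eta_N(0)\sim N^{2+\delta'-\rho}$, which acts as an ``effective diffusion resistance'' built directly from the two ans\"atze, sets the balance $x\cdot N^{2+\delta'-\rho}\sim 1$ and hence $N^+(x)\sim x^{-1/(2+\delta'-\rho)}$; the proviso $\rho\le\delta'+1$ is exactly what keeps the exponent $2+\delta'-\rho$ positive so that $N^+(x)\to\infty$ as $x\to 0$. Substituting yields
\beq
\abs{\eta_0'(x)}\lesssim x^{-(4-\dha+\delta')/(2+\delta'-\rho)},
\eeq
which by \eqref{trdsdef} is equivalent to \eqref{FullSpectralDim2}. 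For the ensemble version I would take $\expect{\cdot}{\mu}$ before the splitting step and use Jensen's inequality to transfer the ans\"atze from $\expect{\eta_N(0)}{\mu}$ and $\expect{B_N^{(2)}}{\mu}$ to the full bound, applying $\expect{Y^{-1}}{\mu}\ge\expect{Y}{\mu}^{-1}$ where required on terms involving $1/L_n$.

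I expect the main obstacle to be this identification of the cutoff scale: the given conditions pin down $\eta_N(0)$ and $B_N^{(2)}$ but not $B_N^{(1)}$ or $F_N(x)$ directly, so extracting the precise scaling $N^+(x)\sim x^{-1/(2+\delta'-\rho)}$ demands a careful joint use of both ans\"atze together with \eqref{eta:lower2} to sandwich $\sum_{k=0}^{n-1}L_k\eta_k(x)$. A secondary hurdle is handling the exchange of expectation and infinite sum in the ensemble case, where Jensen's inequality has to be applied to different factors in different directions in order not to lose the sharp exponent.
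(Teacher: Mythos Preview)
Your overall strategy --- upper-bound $|\eta_0'(x)|$ via \eqref{eta_prime_upper} and read off $d_s$ from the exponent --- is the right one, but you are making the estimate harder than necessary, and the route you sketch has a genuine gap at exactly the point you flag.

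The paper never sends $N\to\infty$ in \eqref{eta_prime_upper}. It keeps $N$ finite and handles the boundary term with a one-line observation: $\eta_N(x)$ is a finite, convex, decreasing function on $[0,1)$, so $|\eta_N'(x)|<\eta_N(0)/x$. The exponential factors are then simply discarded (they are $\le 1$), and $G_N(x)<G_N(0)=B_N^{(2)}$. This produces
\beq
|\eta_0'(x)|<(1-x)^{-N}\left(\frac{\eta_N(0)}{x}+B_N^{(2)}\right),
\eeq
a bound involving only the two quantities whose scaling is assumed. Choosing $N=x^{-1/(2+\delta'-\rho)}$ balances the two bracketed terms; the proviso $\rho\le\delta'+1$ is precisely what keeps $(1-x)^{-N}$ from contributing more than a logarithm. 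For the ensemble statement one takes $\expect{\cdot}{\mu}$ directly in this linear inequality --- no Jensen manoeuvres are needed.

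The gap in your plan is that to make the exponential cut off the infinite sum you must \emph{lower}-bound $\sum_{k<n}L_k\eta_k(x)$ at positive $x$, and neither hypothesis tells you anything about this quantity. Your proposed balance $x\cdot B_N^{(2)}/\eta_N(0)\sim 1$ is not a statement about the exponent $xF_{n-1}(x)$; it is in fact the balance of the two terms in the finite-$N$ bound above, arrived at by the wrong route. Carrying your argument through would require control over $B_N^{(1)}$ or $F_N(x)$, which is not assumed here. Similarly, your claim that the boundary term vanishes as $N\to\infty$ is not routine: the factor $(1-x)^{-N}$ grows like $e^{Nx}$, and the competing exponential involves $\eta_k(x)$ rather than $\eta_k(0)$, so the cancellation is delicate. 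Keeping $N$ finite and using the convexity bound on $|\eta_N'(x)|$ sidesteps all of this.
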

\begin{proof} Note that since $\eta_N(x)$ is a finite convex decreasing  function in $x=[0,1)$ 
\beq \abs{\eta_{N}'(x) } < \frac{\eta_{N}(0)}{x},\eeq
and $G_N(x)<G_N(0)=B_N^{(2)}$,
 combining this with \eqref{etaprime:upper} gives
\bea 
\abs{\eta_0'(x)} &<& (1-x)^{-N}\left(\frac{\eta_{N}(0)}{x}+B_N^{(2)} \right). \label{etaprime:upper:1} 
\eea
Choosing $N=x^{-\frac{1}{2+\delta'-\rho}} $ gives
\beq \abs{\eta_0'(x)}<   {\overline c}\, x^{-\frac{4-\dha+\delta'}{2+\delta'-\rho}}\abs{\log x}^{\overline c'}\eeq
for $x<x_0$ and provided that $\rho\leq \delta'+1$. The result for $d_s$ follows. In the case of the ensemble average we simply take the expectation value in
\eqref{etaprime:upper:1} before proceeding as before.  
\end{proof}

There are a number of constraints on and relations between the quantities
$\rho$, $\dha$, $\gamma$, $\delta$ and $\delta'$ which are summarized by

\begin{lemma}\label{constraints}
For any graph $M\in\C M$ such that $\dha$ and $\rho$ exist
\bea \rho\ge0,\quad\delta'\ge0,\quad \delta'\ge 2\gamma, \eea
and for any graph $M\in\C M$ such that $\dha$ and $\rho$ exist
\bea \gamma=\rho,\quad \delta'=2\rho,\quad \delta=2\rho.\eea
\end{lemma}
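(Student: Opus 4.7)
The plan is to prove all six assertions by elementary estimates: Cauchy--Schwarz combined with the ball--doubling assumption \eqref{Ball:condition} yields the three inequalities, while Abel summation, exploiting the regularity supplied by the joint existence of $\dha$ and $\rho$, yields the three equalities.

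First, for $\rho\ge 0$ I would apply Cauchy--Schwarz on the dyadic window $[N+1,2N]$ in the form
\[
N^{2}\le\Bigl(\sum_{k=N+1}^{2N}L_k\Bigr)\Bigl(\sum_{k=N+1}^{2N}\tfrac{1}{L_k}\Bigr)\le (c-1)\,|B_N|\,\eta_{N+1}(0),
\]
where the right--most inequality uses \eqref{Ball:condition}. Since $|B_N|\sim N^{\dha}$ this forces $\eta_{N+1}(0)\gtrsim N^{2-\dha}$, i.e.\ $\rho\ge 0$. For $\delta'\ge 2\gamma$, Cauchy--Schwarz on the defining sum of $B_N^{(1)}$ gives
\[
(B_N^{(1)})^{2}\le |B_N|\sum_{k=0}^{N}L_k\,\eta_{k+1}(0)^{2}\le |B_N|\,B_N^{(2)},
\]
where the last step uses $\eta_k(0)\ge\eta_{k+1}(0)$ to compare with $B_N^{(2)}=\sum L_k\,\eta_k(0)\,\eta_{k+1}(0)$; inserting the scalings rearranges to $\delta'\ge 2\gamma$. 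Finally $\delta'\ge 0$ follows once I note that $B_N^{(1)}\ge |B_N|\,\eta_{N+1}(0)\sim N^{2+\rho}$ supplies $\gamma\ge\rho\ge 0$, whence $\delta'\ge 2\gamma\ge 0$.

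For the equalities, the matching lower bounds $\gamma\ge\rho$ and $\delta\ge 2\rho$ come from the same monotone bounds restricted to $k\in[N/2,N]$ (for $\gamma$) or to $k\le N/2$ (for $\delta$, where the inner sums $\sum_{m=k}^{N}1/L_m$ are comparable to $\eta_k(0)$ because $\eta_k(0)\gg\eta_{N+1}(0)$ in that range). For the matching upper bounds I would use Abel summation on
\[
B_N^{(1)}=\sum_{k=0}^{N}L_k\,\eta_{k+1}(0),
\]
with partial sums $S_k=|B_k|\sim k^{\dha}$ and second factor $\eta_{k+1}(0)\sim k^{2-\dha+\rho}$; both the boundary term $|B_N|\eta_{N+1}(0)$ and the telescoped interior $\sum_{k\ge 1}|B_{k-1}|/L_k$ are of order $N^{2+\rho}$, giving $\gamma\le\rho$. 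The identical manipulation applied to $B_N^{(2)}$ and $\overline B_N^{(2)}$, now with $\eta_k(0)^{2}\sim k^{4-2\dha+2\rho}$, yields $\delta'\le 2\rho$ and $\delta\le 2\rho$.

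The main technical obstacle is the careful handling of the slowly--varying corrections concealed in ``$\sim$''. Several steps above involve subtracting comparable quantities (for instance $|B_{2N}|-|B_N|$, or $\eta_k(0)-\eta_{N+1}(0)$ on an intermediate range of $k$), and the conclusions depend on these differences not being swamped by their slowly--varying prefactors. The ball--doubling condition \eqref{Ball:condition}, together with the power--law regularity implicit in the existence of both $\dha$ and $\rho$, is exactly what prevents such cancellations, but making this quantitative --- especially in separating the range $k\le N/2$ from $k>N/2$ in $\overline B_N^{(2)}$, and in confirming that boundary and interior contributions in the Abel summation are of the same order rather than cancelling --- is the technically delicate heart of the proof.
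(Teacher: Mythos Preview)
Your proposal is correct and follows essentially the same route as the paper: Cauchy--Schwarz/Jensen on a dyadic window for $\rho\ge 0$, Cauchy--Schwarz on $B_N^{(1)}$ for $\delta'\ge 2\gamma$, and monotonicity of $\eta_k(0)$ together with the doubling condition for the equalities. The one organisational difference is in the upper bounds for $\gamma$ and $\delta'$: where you Abel--sum first and then (as you correctly flag in your final paragraph) still need a dyadic split to control the telescoped interior $\sum_k |B_{k-1}|/L_k$, the paper skips the Abel step and iterates the dyadic split directly on $B_N^{(1)}$, writing
\[
B_N^{(1)} < B_{\lceil N/2\rceil}^{(1)} + \eta_{\lceil N/2\rceil}(0)\bigl(|B_N|-|B_{\lceil N/2\rceil}|\bigr)
\]
and summing the resulting geometric series in $r$ over the scales $N/2^r$. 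This avoids having to bound $\sum_k |B_{k-1}|/L_k$ as a separate object and makes the slowly--varying issue you worry about a little more transparent (each dyadic shell contributes $\sim N^{2+\rho}2^{-r(2+\rho)}$, so the sum converges since $2+\rho>0$), but at the level of ideas the two arguments are the same.
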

The proofs are elementary manipulations and outlined in Appendix \ref{Simple}.

The main result of this section is
\begin{theorem}\label{Theorem:dS-dH} For  any graph $M\in \C M$  such that $\dha<4$ and $\rho$ exist the spectral dimension is given by
\beq \label{ds_transient_rho}
d_s=\frac{2\dha}{2+\rho}.
\eeq
\end{theorem}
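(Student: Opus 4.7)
The plan is to prove the theorem by reducing it to a direct substitution into the matching pair of bounds already established in Lemmas~\ref{lemma:dS:upper} and~\ref{lemma:dS:lower}, using the algebraic identities provided by Lemma~\ref{constraints}. No new analysis of the random walk or the resistance should be needed beyond what these lemmas supply; the work is essentially to check that the hypotheses align and that the two bounds collapse onto the same value $2\dha/(2+\rho)$.

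First I would verify the hypotheses. Since $M\in\C M$ is assumed to have both $\dha$ and $\rho$ defined, Lemma~\ref{constraints} immediately yields the identifications $\gamma=\rho$, $\delta=2\rho$, and $\delta'=2\rho$, so the volume quantities $B_N^{(1)}$, $\overline B_N^{(2)}$ and $B_N^{(2)}$ all have the required power-law growth rates needed to invoke Lemmas~\ref{lemma:dS:upper} and~\ref{lemma:dS:lower}. In particular, the side condition $\rho\le\delta'+1$ for Lemma~\ref{lemma:dS:lower} becomes $\rho\le 2\rho+1$, which holds automatically because $\rho\ge 0$, and the hypothesis $\dha<4$ is given.

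Next I would substitute into the two bounds. Putting $\gamma=\rho$, $\delta=2\rho$ into the upper bound of Lemma~\ref{lemma:dS:upper} gives
\[
d_s \;\le\; \dha-\frac{2(2\rho)-(4-\dha)\rho}{2+\rho}
       \;=\; \dha-\frac{\dha\,\rho}{2+\rho}
       \;=\; \frac{2\dha}{2+\rho},
\]
and putting $\delta'=2\rho$ into the lower bound of Lemma~\ref{lemma:dS:lower} gives
\[
d_s \;\ge\; \dha-\frac{(4-\dha)\rho-(2-\dha)(2\rho)}{2+2\rho-\rho}
       \;=\; \dha-\frac{\dha\,\rho}{2+\rho}
       \;=\; \frac{2\dha}{2+\rho}.
\]
Since the lower and upper bounds coincide and $d_s$ is defined (by the transient-case prescription of Section~\ref{ds and dh}) as the common asymptotic exponent controlling the leading divergence of $\eta_0'(x)$, the two bounds force $d_s=2\dha/(2+\rho)$, proving the theorem.

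The main obstacle is not algebraic but conceptual: one must be satisfied that Lemma~\ref{constraints} genuinely forces $\delta=\delta'=2\rho$ and $\gamma=\rho$ under only the existence of $\dha$ and $\rho$, so that \emph{both} inequalities sharpen to the same constant and there is no logarithmic slack left over. A subsidiary concern is the borderline case $\dha=4$, which the theorem statement explicitly excludes; because the upper-bound argument in Theorem~\ref{ds leq dh} and Lemma~\ref{lemma:dS:upper} fails to control the regime where $\eta_0'(0)$ may be finite, any attempt to extend the result to $\dha=4$ would require a separate treatment and lies outside the present scope.
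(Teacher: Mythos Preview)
Your proposal is correct and follows essentially the same route as the paper's own proof, which simply says the theorem follows from the upper and lower bounds in Lemmas~\ref{lemma:dS:upper} and~\ref{lemma:dS:lower} together with the relations in Lemma~\ref{constraints}. You have just spelled out the substitutions and the side-condition check explicitly, which the paper leaves to the reader.
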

\begin{proof} The theorem follows from the upper and lower bounds in Lemmas \ref{lemma:dS:upper}
 and \ref{lemma:dS:lower}
 and using the relations between $\delta$, $\delta'$, $\gamma$ and $\rho$ in Lemma \ref{constraints}.

\end{proof}

It is an immediate corollary of Theorem \ref{Theorem:dS-dH}  that $\rho=0$ is a necessary and sufficient  condition for 
$\ds=\dha$.

\section{Scale dependent spectral dimension in the transient case} 
\label{scale dependence in transient case}

In this section we extend our results from section \ref{Scale dependent spectral dimension in the recurrent case} regarding the scale dependent spectral dimension in the recurrent case to the transient case. In particular, we are interested in the situation, motivated by the numerical simulations of four-dimensional CDT, where there is a scale dependent spectral dimension varying from $d_s^0=2$ at short distances to $d_s^\infty=4$ at long distances  (see chapter \ref{motivation}).

\subsection{Lessons from the recurrent case and simulations}
Unlike in two dimensions where the measure of the multigraph ensemble is obtained analytically, the situation in higher dimensions is more complicated and only numerical results are available. However to proceed we need some information about the measure of the multigraph ensemble and it becomes necessary to introduce some assumptions. 
The insights we gained from the two-dimensional model and the numerical results from computer simulations can be our guide for these ansatz. 

Both the recurrent and transient cases, indicate that the spectral dimension is determined by only two relevant characteristics; i) the \textit{volume} growth, i.e. the growth in the number of time-like edges with distance from the root, and the ii) the \textit{resistance} growth, i.e. the behaviour of the graph resistance. This statement is justified from the spirit of proofs in \cite{Durhuus:2009sm, Giasemidis:2012rf} and becomes manifest in expression \eqref{ds_transient_rho}, which shows that the spectral dimension depends only on the Hausdorff dimension, defined by the volume growth, and the anomalous exponent $\rho$ of the resistance growth. So, our assumptions should be related to these two quantities. Next, we should determine their functional form.
 
Recall from section \ref{GW trees and simply generated trees} that in two-dimensional CDT, the ensemble average of the number of time-like edges at distance $N$ from the root and the volume of a ball of radius $N$ are given by 
\bea 
\label{connectivity_avg}
\left \langle L_n \right \rangle _{\mu} &=&  n \, {f^U}''(1) + 1, \qquad  n\geq 1, \\
\label{volume_avg}
\langle \left | B_n| \right \rangle _{\mu} &\equiv& \left \langle \sum_{k=0}^{n-1} L_k \right \rangle _{\mu} = \half n (n-1) {f^U}''(1) + n, \qquad n\geq 1,
\eea
In addition,  computer simulations in four-dimensional CDT \cite{Ambjorn:2004qm, Ambjorn:2005db, Ambjorn:2007jv} show that for a triangulation $\tr$ with maximum distance $t$ from the root the number of four-simplices is 
\footnote{We use ``$\simeq$" to denote equality up to a multiplicative constant.}
\beq
\avg{|B_t|}_{Z} = \avg{N_4(t)}_Z  \simeq t^4.
\eeq
The set of causal triangulations is characterised by its bulk variables $N_i(t)\equiv N_i(T(t))$, $i=0,1,2,3,4$ which denote the number of $i$-simplices of this section of the triangulation. We can further distinguish these variables, for example, there are two different four-simplices and we have $N_4(t)=N_4^{(4,1)}(t)+N_4^{(3,2)}(t)$. Further, there are three different types of three-simplices and two different types of triangles and links (i.e.\ space-like and time-like). These ten different bulk variables are related by seven topological constraints, leading to only three independent variables \cite{Ambjorn:2001cv}. From these topological relations one has for example for the number of time-like links $N_1^{\mathrm{TL}}(t)$ (up to boundary terms) that
\beq
N_1^{\mathrm{TL}} =2 N_0(t) +\frac{1}{2}N_4^{(3,2)}(t)  - 3 \chi(T(t)).
\eeq
and using $N_0(t) \leq N_4(t)/5$ and $N_4^{(3,2)}(t) \leq N_4(t)$ we get
\beq \label{timelike links_simulations}
\avg{N_1^{\mathrm{TL}}(t) }_Z \leq \text{const.} \avg{N_4(t)}_Z\simeq t^4.
\eeq
Therefore, we  make the following ansatz for the behaviour of $\avg{L_N}_\mu$ in analogy to the two-dimensional case
\beq \label{avg:LN}
\avg{L_N}_\mu\simeq\nu N^3 +N,
\eeq
which implies
\beq \label{avg:BN}
cN\avg{L_N}_\mu<\avg{|B_N|}_\mu<c' N\avg{L_N}_\mu,
\eeq
where $c<c'$ are positive constants, and therefore is consistent with Monte-Carlo results \eqref{timelike links_simulations}. \eqref{avg:LN} is a natural generalisation of \eqref{connectivity_avg}, where $\nu$ takes the role of ${f^U}''(1)$.  The $N^2$ sub-leading term is absent in \eqref{avg:LN}, as to survive the continuum limit it would have to couple to $\sqrt{\nu}$, which in turn would imply its appearance in the Euclidean Einstein-Regge action.  
If $\nu$ is small in \eqref{avg:LN} then loosely speaking at very large distances the Hausdorff dimension is 4 while at short distances the linear term dominates and the volume growth appears to be two-dimensional.  

The next assumptions concern the size of the fluctuations in $L_N$. First, it is proven in the two-dimensional causal triangulations (UICT) \cite{Durhuus:2009sm} that the upward fluctuations in $L_N$ are controlled by $L_N \leq \avg{L_N}_{\mu} \log (f''(1) N)$, for $N > N_0$, for almost all graphs. Similarly we adopt
\beq
L_N \leq \avg{L_N}_{\mu} \psi \left (\sqrt{\nu} N^{1-\epsilon/2} \right).
\eeq
The last assumption should control the downward fluctuations and should be also related to the resistance. For this reason we assume
\beq
R(N) \leq \frac{N}{\avg{L_N}_{\mu}} \psi_{+} \left (\sqrt{\nu} N^{1-\epsilon/2} \right ). 
\eeq 
Let us summarise the ansatz of the multigraph ensemble 

\begin{assume}\label{assumptions4d} 
The multigraph ensemble $\R{M}=\{\C{M},\mu\}$ satisfies
\newcounter{saveenum}
\begin{enumerate}
\item \beq \label{assumption_i}\avg{L_N}_\mu\simeq\nu N^{3-\epsilon} +N,\eeq
\setcounter{saveenum}{\value{enumi}}
\end{enumerate}
with $\epsilon>0$ being arbitrarily small and for $\mu$-almost all multigraphs there exists a $N_0>0$ such that for $N>N_0$
\begin{enumerate}\setcounter{enumi}{\value{saveenum}}
\item \beq \label{assumption_ii}\eta_N (0) \leq \frac{N}{\avg{L_N}_\mu}\psi_+(\sqrt{\nu} N^{1-\epsilon/2}), 
\eeq
\item \beq \label{assumption_iii}L_N \leq \avg{L_N}_{\mu} \psi (\sqrt{\nu} N^{1-\epsilon/2}),  
\eeq
\end{enumerate}
where $\psi (x)$ and $\psi_+(x)$ are functions which diverge and vary slowly at $x=0$ and $x=\infty$. 
\end{assume}

The introduction of the arbitrarily small constant  $\epsilon>0$ is for  technical reasons and for all practical purposes one can think of it as being zero. 
Notice that under these assumptions the multigraph ensemble $\R{M}=\{\mathcal{M},\mu\}$ is almost surely transient.

\subsection{Continuum limit in the transient case}
Having introduced the assumptions which determine the ensemble measure, we are ready to apply the continuum limit formalism in this ensemble of transient multigraphs. 

\begin{theorem}\label{VariableSpectralDimensionTransient}
A multigraph ensemble $\R{M}=\{\C{M},\mu\}$ which satisfies Assumption \ref{assumptions4d} has $d_s^0=2$ at short distances while at long distances $d_s^\infty=4-\epsilon$ for $\epsilon>0$ arbitrarily small.
\end{theorem}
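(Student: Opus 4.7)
The plan is to sandwich the ensemble average $\langle |\eta_0'(x)|\rangle_\mu$ between upper and lower bounds which share a common continuum limit, and then read off the running spectral dimension using the transient version of Lemma \ref{Lemma:scaling} given by \eqref{trdsdef}. Assumption \ref{assumptions4d}(i) shows that $\langle L_N\rangle_\mu$ crosses over from a linear regime to the power-law regime $\nu N^{3-\epsilon}$ at the graph distance $\Lambda\sim\nu^{-1/(2-\epsilon)}$, so the natural continuum scaling is
\begin{equation*}
x=a\xi,\qquad\Lambda=a^{-1/2}\lambda^{1/2},
\end{equation*}
under which the crossover $N\sim\Lambda$ translates into $\xi\lambda\sim 1$ in continuum units.

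For the upper bound I would start from the inequality \eqref{etaprime:upper:1} from the proof of Lemma \ref{lemma:dS:lower},
\begin{equation*}
|\eta_0'(x)|\;\le\;(1-x)^{-N}\!\left(\frac{\eta_N(0)}{x}+B_N^{(2)}\right),
\end{equation*}
insert the almost-sure estimate \eqref{assumption_ii} into $\eta_N(0)$, and bound $B_N^{(2)}$ from above by using Assumption \ref{assumptions4d}(iii) on the $L_k$ in the numerator together with Jensen's inequality $\langle 1/L_m\rangle_\mu\ge 1/\langle L_m\rangle_\mu$ on each factor $1/L_m$. Optimising over $N$ by balancing the two terms in the bracket picks out $N_\star(x)\sim x^{-1/2}$ up to slowly varying factors and yields an upper bound of the schematic form $\langle|\eta_0'(x)|\rangle_\mu\le \Psi_{+}\,/(x\,\langle L_{N_\star(x)}\rangle_\mu)$, with $\Psi_{+}$ slowly varying.

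For the lower bound I would work with \eqref{etaprime:lower}, retain only the $L_n\eta_{n+1}(x)^2$ contribution to $G_N(x)$, substitute the elementary estimate \eqref{eta:lower2} for $\eta_{n+1}(x)$, and use the Cauchy--Schwarz step $G_N(x)\ge F_N(x)^{2}/|B_N|$ from the proof of Theorem \ref{ds leq dh}, together with Assumption \ref{assumptions4d}(iii) to control the remaining $L_n$. The choice $N=N^{-}(x)$ defined by \eqref{FNminus:defn} keeps the exponential prefactor in \eqref{etaprime:lower} bounded below by $e^{-1}$ and delivers a matching lower bound. Setting $\Delta_\mu=\epsilon/2-1$ in \eqref{tildeQ:definition} and passing to the limit $a\to 0$, both bounds collapse to the common form
\begin{equation*}
|\partial_\xi\tilde Q(\xi;\lambda)|\;\simeq\;\xi^{-\epsilon/2}\,h(\xi\lambda),
\end{equation*}
where the scaling function $h(u)$ tends to a positive constant as $u\to 0$ and behaves like $u^{-(1-\epsilon/2)}$ as $u\to\infty$. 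Inserting these asymptotics into \eqref{trdsdef} delivers $d_s^\infty=2(2-\epsilon/2)=4-\epsilon$ at long distances and $d_s^0=2(2-1)=2$ at short distances.

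The main obstacle is the passage from the one-point input of Assumption \ref{assumptions4d}(i), together with the almost-sure fluctuation bounds (ii) and (iii), to matching ensemble-average bounds on the multilinear quantities $F_N(x)$, $G_N(x)$ and $B_N^{(2)}$. The correlations between successive $L_k$, and between $L_k$ and the subsequent $\eta_{k+1}(x)$, are not directly controlled by the assumption, and taming them without losing powers of the slowly varying functions $\psi,\psi_{+}$ is precisely what forces the arbitrarily small slack $\epsilon>0$ to appear in $d_s^\infty=4-\epsilon$ rather than the naive value $4$.
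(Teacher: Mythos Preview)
Your overall architecture matches the paper's proof closely: both the upper and lower bounds proceed from \eqref{etaprime:upper:1} and Lemma~\ref{lemma:etaprime} respectively, the Cauchy--Schwarz step $G_N(x)\ge F_N(x)^2/|B_N|$ is exactly what the paper uses, and the choice $N\sim x^{-1/2}$ is correct. The continuum result and the reading of $d_s^0$, $d_s^\infty$ from \eqref{trdsdef} are fine.

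There is, however, a genuine error in your upper bound. You propose to bound $B_N^{(2)}$ from above using Jensen's inequality $\langle 1/L_m\rangle_\mu\ge 1/\langle L_m\rangle_\mu$ on each factor $1/L_m$; but that inequality runs the wrong way for an upper bound. What the paper actually does is observe that $B_N^{(2)}=\sum_{k\le N}L_k\,\eta_k(0)\,\eta_{k+1}(0)$, apply the \emph{almost-sure} bound of Assumption~\ref{assumptions4d}(ii) directly to each $\eta_k(0)$, and only then take the expectation (so that the sole remaining average is $\langle L_k\rangle_\mu$, controlled by \eqref{avg:BN}). Assumption~(iii) is not needed at this step; Jensen enters only in the lower bound, where it is applied to $1/|B_{N^*}|$ and does go the right way. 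Your diagnosis in the final paragraph is also slightly off: the almost-sure nature of (ii) and (iii) is precisely what sidesteps the correlation issue you worry about, and the $\epsilon>0$ is not there to absorb correlation losses but to keep $d_s^\infty$ strictly below $4$, the borderline at which $\eta_0'(0)$ can be finite and the argument breaks down (cf.\ the remark after Lemma~\ref{lemma:dS:upper}).

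A minor point: the paper parametrises the scale by the renormalised coupling $G$ via $\nu=a^{1-\epsilon/2}/(b'G)$ and scales $\langle|Q'|\rangle_\mu$ by a factor $a/G$, rather than defining a $\Delta_\mu$ for $Q$ itself; your $\Delta_\mu=\epsilon/2-1$ does not match this convention, though the resulting $|\partial_\xi\tilde Q|$ has the correct asymptotics.
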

\begin{proof}
To prove the theorem we need the following lemma which we will prove in the following two  subsections
\begin{lemma} \label{lm:6.1}
For a multigraph ensemble $\R{M}=\{\C{M},\mu\}$ satisfying Assumption \ref{assumptions4d} 
\beq \label{Qprime_bounds}
 c_-  \frac{1}{\nu b' x^{\epsilon/2} +x} < \avg{\abs{\eta_0'(x)}} _{\mu}< c_+ \frac{1}{\nu b' x^{\epsilon/2}+x}  \psi_+^2\left( \sqrt{\frac{\nu b'}{x^{1-\epsilon/2}}} \right),
\eeq
where $c_-, c_+,b'$ are positive constants.
\end{lemma}
Defining the scaling limit as 
\beq
\abs{ \tilde Q'(\xi;G)} =\lim_{a\to 0} \left (\frac{a}{G} \right )\avg{\left |Q'\left(x=a\xi ; \nu=\frac{a^{1-\epsilon/2}}{b'G}\right)\right |}_{\mu} 
 \eeq
 and using Lemma \ref{lm:6.1} gives
\bea
c_-\frac{1}{\xi^{\epsilon /2}+G\xi } < \abs{\tilde Q'(\xi;G)} < c_+ \frac{1}{\xi^{\epsilon/2} +G\xi }  \psi_+^2\left( \sqrt{\frac{1}{G\xi^{1-\epsilon/2}}} \right).
\eea
For  short walks or equivalently $ \xi\gg G^{-1}$ we see that $\abs{\tilde Q'(\xi;G)}\sim\xi^{-1}$ giving $d_s^0=2$, while  for long walks or $ \xi\ll G^{-1}$, the $\abs{\tilde Q'(\xi;G)}\sim\xi^{-\epsilon/2}$ leads to $d_s^\infty=4-\epsilon$ which completes the proof of the main theorem.
\end{proof}

\subsubsection{Lower bound}
We now prove the lower bound of Lemma \ref{lm:6.1}. We begin with Lemma \ref{lemma:etaprime}  and proceed  as in the proof of Theorem \ref{ds leq dh} by applying the Cauchy-Schwarz inequality to get
\bea \label{dimred lower bound1}
\abs{\eta '_0(x)} &>& c \frac{\left (\sum _{n=0} ^{N^{-}-1}L_n \eta_{n+1}(x) \right )^2}{\sum _{n=0}^{N^{-}-1}L_n} \\
                           &=& c \frac{\left (\sum _{n=0} ^{N^{-}}L_n \eta_{n+1}(x) - L_{N^-}\eta _{N^{-}+1}(x) \right )^2}{\sum _{n=0}^{N^{-}-1}L_n}.
\eea
We recall \eqref{FNminus:defn} and \eqref{Nminus:upper} to bound the sums in the numerator and denominator respectively. In addition we use the fact that $\eta_{N^{-}+1}(x) < \eta_{N^{-}}(x)<\eta_{N^{-}}(0)$ and assumptions \eqref{assumption_ii} and \eqref{assumption_iii} to get
\bea
 \abs{\eta '_0(x)} &>& c \frac{\left (\frac{1}{x} - N^{-}(x) \psi \left ( \sqrt{\nu} (N^{-})^{1-\epsilon/2}\right ) \psi_+\left ( \sqrt{\nu} (N^{-})^{1-\epsilon/2}\right ) \right )^2}{\sum _{n=0}^{N^{*}}L_n} \\
                           &>& c \frac{\left (1-x N^{*}(x) \psi \left ( \sqrt{\nu} (N^{-})^{1-\epsilon/2}\right ) \psi_+\left (\sqrt{\nu} (N^{-})^{1-\epsilon/2}\right)  \right )^2}{x^2\sum _{n=0}^{N^{*}}L_n} ,
\eea
where $N^* = \lceil b_1x^{-\half} \rceil$.  Since $\psi(x)$, $\psi_+(x)$ are slowly varying functions the second term in the numerator is sub-leading as $x\to0$.
Taking the expectation value and applying Jensen's inequality we find for $x<x_0<1$
\beq
 \avg{\abs{\eta '_0(x)}} _{\mu} > c \frac{1}{x^2\avg{\sum _{n=0}^{N^{*}}L_n}_{\mu}} 
\eeq
which, together with $\avg{|B_N|}_\mu<c' N\avg{L_N}_\mu$, implies the lower bound of Lemma \ref{lm:6.1} with $b'=b_1^{2-\epsilon}$.

\subsubsection{Upper bound}

To prove the upper bound we first note that from \eqref{etaprime:upper:1} 
\beq
\abs{\eta_0'(x)} < (1-x)^{-N} \left(   \frac{  \eta_{N}(0)  }{x}+B_N^{(2)} \right)
\eeq
for any $N$. Taking expectation values we now get
 \bea
\avg{\abs{\eta_0'(x)}}_\mu &<&(1-x)^{-N}  \left(   \frac{\avg{\eta_{N}(0)}_{\mu}}{x}+  \avg{B_N^{(2)}}_{\mu} \right)  \\
                                     &<& (1-x)^{-N}  \left (\frac{N}{x \avg{L_N}_{\mu}} + \frac{N^3}{\avg{L_N}_{\mu}} \right ) \psi ^2_{+}(\sqrt{\nu}N^{1-\epsilon/2})
\eea
where we used the fact that $\avg{B_N^{(2)}}_{\mu} < \textrm{const} + c_3 \frac{N^3}{\avg{L_n}_{\mu}} \psi ^2_{+}(\sqrt{\nu}N^{1-\epsilon/2})$; to prove this  proceed similarly to the upper bound of \eqref{B1:upper} using \eqref{avg:BN} together with \eqref{assumption_ii}. Choosing $N=\lceil b_1 x^{-\half}\rceil$ gives the upper bound of Lemma \ref{lm:6.1}.

\section{Conclusion and outlook}
\label{multigraphs_outlook}
In this chapter we discussed multigraph ensembles motivated by their close relationship to various causal quantum gravity models. In particular this approach is well suited to studying the spectral dimension and exploring its possible scale dependence in causal quantum gravity. We studied two ensembles, the recurrent and transient, which correspond to reduced models of two- and higher-dimensional CDTs respectively.  

We explained that the measure on the recurrent multigraph is induced by the generalised uniform measure on infinite causal triangulations or, equivalently, a critical Galton- Watson process conditioned on non-extinction. This multigraph ensemble has Hausdorff dimension $\dha=2$. We show that by scaling the variance of the Galton-Watson process to zero at the same time as one scales the walk length to infinity (cf. \eqref{tildeQ:definition}) one obtains a continuum limit with a scale dependent spectral dimension which is $d_s^\infty=2$ at large scales and $d_s^0=1$ at small scales. Here $1/\sqrt{\lambda}$ is related to the rescaled second moment ${f^U}''(1)$ of the branching process and $\lambda$ determines the scale separating the short and the long walk limit. Regarding the physical interpretation of this model two comments are in order:
\begin{enumerate}
\item In pure two-dimensional CDT (UICT) there is no dependence on  Newton's constant due to the Gauss-Bonnet theorem. Hence, there is no length scale such as the Planck length in the model. This is  reflected in the fact that the uniform measure on infinite causal triangulations corresponds to a critical Galton-Watson process with off-spring distribution $p_k=2^{-k-1}$ which has $f''(1)=2$ fixed (section \ref{GW trees}). 
On the other hand as discussed in Section \ref{Scale dependent spectral dimension in the recurrent case}  the model with arbitrary ${f^U}''(1)$ can be thought of as describing CDT with a weight in the action coupling to the absolute value of the curvature \cite{Durhuus:2009sm} and $\sqrt{\lambda}$ acquires a physical description as the renormalised two-dimensional analogue of the gravitational constant $G^{(2)}$. 
\item Another point of interest is the dynamics of the model. As we explained in the beginning, the  model of random combs with scale dependent spectral dimension is a purely kinematic model proposed to show the existence of the scaling limit in a simplified context. On the other hand, the multigraph ensemble introduced in Section \ref{Scale dependent spectral dimension in the recurrent case} is directly related to CDT. It was shown in \cite{Sisko:2012an} that the rescaled  length process $l(t)=2a L_{[t/a]}/f''(1)$ of the multigraph  is described by the usual CDT Hamiltonian in the continuum limit
\beq
\hat{H}=-2\frac{\partial}{\partial l}-l \frac{\partial^2}{\partial l^2}+2 \mu l,
\eeq
where $\mu$ is the cosmological constant.
\end{enumerate}

Given that our aim is to introduce a model for dynamical dimensional reduction in four-dimensional CDT, we are led to consider multigraphs which have $\ds \geq2$, i.e. multigraph ensembles with transient walks, restricting ourselves to the physically interesting regime with $2\leq \dha \leq 4$. Before applying the continuum limit formalism we study the long distance properties. The main results are that for any multigraph $M$ such that $\dha$ and $\ds$ exist one has
\beq
d_s\leq\dha.
\eeq
If in addition the resistance exponent $\rho$  exists, then
\beq
d_s=\frac{2\dha}{2+\rho}.
\eeq
This implies that  $\rho=0$, which is a purely geometrical condition on the distribution of edges,  is a necessary and sufficient condition for $\ds=\dha$. It is interesting to notice in this context how multigraphs with $\rho=0$ attain the upper bound in \eqref{ds_inequalities}.

To perform the continuum limit in transient ensembles we have to adopt some assumptions because of the absence of analytical results in four-dimensional CDT. However those assumptions are not ad-hoc but are guided by the two-dimensional recurrent model. We then propose a model of a multigraph ensemble with scale dependent spectral dimension in the transient regime. In particular, we assume that the measure $\mu$  satisfies
\beq 
\avg{L_N}_\mu\simeq\nu N^3 +N,
\eeq
in addition to two more technical properties stated in assumptions \eqref{assumption_ii} and \eqref{assumption_iii}. It is then shown that this multigraph ensemble has a scale dependent spectral dimension with $d_s^0=2$ at short scales while at long scales $d_s^\infty=4$. One should notice that the assumption \eqref{assumption_i} implies the effective reduction of the Hausdorff dimension from $4$ to $2$ under scaling 
\footnote{Even-though a scale-dependent Hausdorff dimension has not been observed in computer simulations.},
which determines the reduction of the spectral dimension in the ensemble. This implies that the resistance exponent $\rho$ effectively remains zero 
\footnote{However one could in principle modify the assumptions which determine the ensemble and assign a scale dependence on $\rho$ keeping $\dha$ equal to the topological dimension at all scales (see sub-section \ref{Relation to the multigraph model} for further comments).}.

The result of Theorem \ref{VariableSpectralDimensionTransient} realises the goal we set at the beginning. It captures analytically the phenomenon of scale dependent spectral dimension in a graph model which inherits some characteristics of the full CDT.  However one can argue more about the physical implications of this result. For example, one might ask the following questions. What are the common degrees of freedom between multigraphs and CDTs and what are their physical interpretation? Can we extract the profile of the spectral dimension and compare it with computer simulations in both four-dimensional and three-dimensional CDT? Is the multigraph approximation adequate for analytically exploring the relation between CDT and other approaches to quantum gravity?

We attempt to answer these questions in the next chapter where we discuss and analyse the physical implications of the formalism developed and the models introduced so far.

\end{chapter}

\clearemptydoublepage
\begin{chapter}{Physical Implications}
\label{physics}

Having dealt with a considerable amount of technical work in the previous chapters, we are now ready to discuss the implications of our results and extract more physical information from our formalism. In order to better understand the physics which is underlined in our methods we have to address a crucial point; the validity of the multigraph approximation. In other words, how adequate is the radial approximation?
The reason that multigraphs serve as realistic models is twofold.  Firstly, as commented on in section \ref{scale dependence in transient case}, it is analytically proven in \cite{Durhuus:2009sm} that $L_N$  is bounded above by logarithmic fluctuations around the average for almost all graphs in the ensemble, i.e. $L_N \leq c N \log N$ for large $N$, where $c >1$. In other words, the number of space-like edges at finite height $N$, $\abs{S_N}$, remains finite since $L_N = \abs{S_N} +\abs{S_{N+1}}$. Thus, omitting the space-like edges in the reduced model does not affect the random walk at large times and therefore the value of the spectral dimension of the causal triangulation. Secondly, this intuition has been turned into a rigorous argument. As we have repeatedly mentioned,  the recurrent multigraphs bound above the spectral dimension of the UICT and numerical simulations indicate that this bound is tight, suggesting that both two-dimensional CDT and recurrent multigraphs share the same value of spectral dimension. These arguments suggest that the multigraph approximation does not affect the spectral dimension of CDT.

In the previous chapter we pursued this intuition further in the four-dimensional model and showed that it can account for the behaviour of the spectral dimension observed in numerical simulations. In this chapter we present further evidence and agreement with the Monte-Carlo results of four-dimensional CDT in the physical phase where vertices of arbitrary high degree are not observed. In essence we argue that a reduced model based on an ensemble of multigraphs obtained from radial reduction of the CDTs carries all the information needed about spectral dimension; it does not of course carry information about everything else as many degrees of freedom have been integrated out.

Next, we discuss the three-dimensional model and compare our results with numerical results presented in \cite{Benedetti:2009ge, Kommu:2011wd}. Lastly, we review the status of the dynamical dimensional reduction which has also been observed in other approaches to quantum gravity. We compare our findings and point out possible links between those approaches and CDT-like models.

\section{Further insights into the four-dimensional model}
\label{4dim}
The methodology to find the scale-dependent spectral dimension in the previous chapter can be summarised as follows. First, we sufficiently bound the ensemble average generating function $\avg{Q_M(x)}_{\mu}$, second, apply an appropriate scaling, take the continuum limit and finally extract the value of the spectral dimension at different scales. This process presupposes the application of a Tauberian theorem at the final step. However we could equivalently follow another slightly different method for determining the scale dependent spectral dimension which is described as follows. We may write expression \eqref{Qprime_bounds} in the compact form 
\beq \label{Qprime_compact}
\avg{|Q'_M(x)|}_{\mu (\nu)} \sim \frac{1}{\nu x^{\epsilon/2} + x}.
\eeq
We can now extract the average return probability as a function of large walk length. In particular, from the definition of the return generating function \eqref{Q_and_P_def_x} we get
\bea \label{1-x_times_Qprime}
(1-x)\avg{\abs{Q_M'(x)}}_{\meas} = \sum _{t=0}^{\infty} \frac{t}{2} \avg{p_M(t)}_{\meas}  (1-x)^{t/2}.
\eea
The left hand side of \eqref{1-x_times_Qprime} reads from \eqref{Qprime_compact}
\bea
(1-x)\avg{\abs{Q_M'(x)}}_{\meas} \sim L \left ( \frac{1}{1-\sqrt{1-x}}\right ) \, \, x^{-\epsilon/2} \qquad \text{as} \qquad x \to 0,
\eea 
where $L \left ( \frac{1}{1-\sqrt{1-x}}\right ) = \frac{1-x}{\nu + x^{1-\epsilon/2}}$ and $L(x)$ is a slowly varying function at infinity. Using a Tauberian theorem \cite[chapter XIII]{Feller:1965b} one gets (setting $\epsilon$ to zero in the expressions below to simplify the discussion)
\bea      \label{pav}
\avg{p_M(t)}_{\meas}  \sim 
\frac{2}{ t^2}  \left( \frac{\nu+1}{(1-1/t)^{2}} -1\right)^{-1} 
\eea
as $t \rar \infty $. Scaling $t(a) = \lfloor \sigma/a \rfloor$ and $\nu(a) = a/G$ as before one obtains the return probability density of the continuous diffusion time $\sigma$ through
\bea \label{pav-scale}
\tilde{P}(\sigma) \equiv \avg{P(\s)}_{\mu(\nu)} =\lim _{a \rar 0} \left ( \frac{a}{G} \right )^{-1} \avg{p_M(t)}_{\mu(\nu )} \sim   \frac{2 G^2}{\sigma^2}  \frac{1}{ 1 + 2G / \sigma}.
\eea
This is precisely expression \eqref{intro-P} which was conjectured in \cite{Ambjorn:2005db} as the behaviour of the continuum return probability density for diffusion on four-dimensional CDT.  It yields the scale dependent continuum spectral dimension (defined in \eqref{ds_def})
\bea \label{Ds_contin4}
D_s(\sigma) \equiv -2\frac{d \log \tilde P(\sigma)}{d \log \sigma} = 4\left(1- \frac{1}{2+ \sigma/G} \right).
\eea
The functional form of \eqref{Ds_contin4} is consistent with the numerical results in \eqref{DsJan}.

One can notice either from Lemma \ref{lm:6.1} or from expressions \eqref{pav-scale}, \eqref{Ds_contin4}  that the scale separating the regime of $d_s^{\infty}=2$ and $d_s^{\infty}=4$ is set by the rescaled $G=a/\nu$. Viewing the multigraph ensemble as a model of four-dimensional causal quantum gravity one can interpret $G$ as the normalised Newton's constant. While $G$ sets a scale on the duration of the walk, it is $\sqrt{G}$ that corresponds to the extent of the walk distance on the graph which can be identified with the Planck length $l_P$, being dimensionally consistent.

\section{Three-dimensional model}
\label{3dim}
The reduced multigraph approximation of the four-dimensional CDT seemed to be rather successful. In view of the numerical results of three-dimensional CDT \cite{Benedetti:2009ge, Kommu:2011wd} we would also like to apply our methods to the three-dimensional model.  Due to the absence of an analytical model for higher-dimensional CDT we must adopt similar assumptions which are justified with the same arguments as before and adjusted in the case of three dimensions.  
These assumptions can be summarised as follows 
\bea \label{connectivity_ansatz_3}
\left \langle L_N \right \rangle _{\mu (\nu_3)} &\simeq& \nu_3 N^{2} +N, \\
R(N) &\leq & \frac{N}{ \avg{L_N}_{\mu(\nu_3)}}  \psi _{+}(\sqrt{\nu_3N}), \\
\label{connectivity_upper_bound_3}
L_N &\leq& \avg{L_N}_{\mu(\nu_3)} \psi (\sqrt{\nu_3 N} ).
\eea
Here $\nu_3$ is the three-dimensional inverse bare Newton's constant, which, as we will shortly see, sets a length scale on the scale dependent spectral dimension.   

Following the arguments of section \ref{scale dependence in transient case} but under the assumptions \eqref{connectivity_ansatz_3}-\eqref{connectivity_upper_bound_3} we find
\beq \label{Qprime_compact3}
\avg{|Q'(x)|}_{\mu(\nu_3)} \sim \frac{1}{\sqrt{x} (\nu_3+\sqrt{x})}.
\eeq
Special attention has to be paid to defining the correct scaling limit of the discrete random graph model in which both the walk length as well as the characteristic length scale diverge in a ``double scaling limit'', as was analysed in previous chapters. In particular we apply the scaling $x= a\xi$ and $\nu_3 = \sqrt{a}/G_{3}$ with scaling exponent $\Delta_{\mu (\nu_3)} = 1$
\beq \label{Qprime_contin3}
\abs{ \tilde Q'(\xi; G_{3}) }\equiv \lim _{a \rar 0 } \left(a/G_{3}\right) \avg{\abs{Q_M' (x=a \xi)}}_{\mu\left (\nu_3=\sqrt{a}/G_{3}\right)}
\eeq
which leads to 
\beq \label{Qprime_cont_3dim}
 \abs{ \tilde Q'(\xi; G_{3}) } \sim
\begin{cases}
\xi ^{-1}, &\xi >> G^{-2}_{3},\\
\xi ^{-1 /2}, &\xi <<  G^{-2}_{3},
\end{cases} 
\eeq
where $G_{3}$ is the renormalised three-dimensional Newton's constant. This result implies $d_s^{\infty}=3$ at large distances while $d_s^{0} = 2$ at short scales, which agrees with the numerical results in \cite{Benedetti:2009ge, Kommu:2011wd}. Although $G_{3}^2$ sets a scale on the {\it duration} of the walk, it is its square root which corresponds to the {\it length extent} on the graph and is identified with the Planck length in three dimensions, which is consistent by dimensional analysis 
\footnote{Recall the relationship between Planck length and Newton's constant in $d$ topological dimensions
\beq
{l_{P(d)}}^{d-2} = \frac{l_P^2}{G} G_{d} \Rightarrow l_{P(3)}= \frac{\hbar}{c^3} G_{3}
\eeq}.

As we explained in the previous section, we can equivalently apply to \eqref{Qprime_compact3} a Tauberian theorem and find the average return probability density
\beq \label{return probability_3}
\avg{p(t)}_{\mu(\nu_3)} \sim \frac{2}{\sqrt{\pi}} \frac{1}{t^{3/2}} \frac{\left (1-1/t \right )^2}{\nu_3 + \left (1-(1-1/t)^2 \right )^{1/2}}
\eeq
at large $t$. Next, we apply the scaling  as in \eqref{Qprime_contin3} and determine the return probability density $\tilde P(\sigma, G_3)$ for continuous diffusion time $\sigma$.  
In this case one has
\beq \label{heat_trace_2+1}
\tilde P(\sigma, G_3)  \equiv  \avg{P(\s)}_{\mu(\nu_3)} = \lim_{a\to 0} a^{-1} \avg{p(t=\lfloor \sigma/a \rfloor)}_{\mu(\nu_3)} \simeq \frac{1}{\sigma^{3/2}\left (1+\sqrt{2} G_{3}/\sqrt{\sigma}\right )}
\eeq 
 which implies that the scale dependent spectral dimension is given by
 \beq \label{Ds_2+1}
 D_s(\sigma) \equiv -2\frac{d \log \tilde P(\sigma,G_3)}{d \log \sigma} = 3 - \frac{1}{1+ \sqrt{\sigma/(2 G_{3}^2)}}
 = \begin{cases}
 2, \qquad &\sigma \to 0, \\
 3, \qquad &\sigma \to \infty,
 \end{cases}
 \eeq
which confirms the dynamical dimensional reduction observed in numerical simulations of three-dimensional CDT\cite{Benedetti:2009ge, Kommu:2011wd}. 

However we should note that, while having the correct limits, this result is slightly different from the rational or exponential fits suggested in \cite{Benedetti:2009ge, Kommu:2011wd} to explain the numerical data. 
In particular, the authors in \cite{Benedetti:2009ge}  achieve a better fit using a function of the form
\beq \label{darios_fit}
D_s(\sigma)  = a + b e^{-c \sigma}
\eeq   
instead of using the rational profile $D_s(\sigma) = a + \frac{b}{c+\sigma}$ which was used to fit the numerical data  in four-dimensional CDT \cite{Ambjorn:2005db} and successfully confirmed by the reduced multigraph approximation. One  observes that none of these fits is in agreement with the interpolation function \eqref{Ds_2+1}.  Of course the rational fit is ``much closer" in a sense to \eqref{Ds_2+1} but the crucial difference is the appearance of  $\sqrt{\sigma}$ in the denominator of \eqref{Ds_2+1}
 \footnote{One could argue that if we scaled time differently, e.g. $t=\lfloor\sigma ^{\kappa}/a \rfloor$, we could adjust $\kappa$ such that we get the rational function \eqref{frac_fit}. In this case the spectral dimension yields
 \beq
D_s(\sigma) = \kappa\left( 3- \frac{1}{1+\frac{\sigma ^{\kappa/2}}{\sqrt{2}G_{3}} }\right ),
\eeq  
 which implies that for $\kappa = 2$ the spectral dimension takes the values $6$ and $4$ in the IR and UV limits respectively. These values are non-physical. Therefore we conclude that the only physical solution is for the unique choice $\kappa = 1$.}.

For this reason we fit the spectral dimension \eqref{Ds_2+1} to the numerical data 
\footnote{We are thankful to Dario Benedetti, Joe Hanson and Rajesh Kommu for providing their data, for their correspondence and their useful clarifications in data analysis.}.
We start with the numerical results presented in \cite{Benedetti:2009ge}. The data consists of eight ensembles, each having a fixed number of simplices. Here we focus on the ensembles with $N=70,000$ and $N=200,000$ simplices as in \cite{Benedetti:2009ge}. In addition, we are interested in the region where the spectral dimension is unaffected from discreteness and finite size/curvature effects and is also increasing with diffusion time, which means $20\leq \sigma \leq300$, similarly to \cite{Benedetti:2009ge}. The latter condition is imposed because our fitting function is a strictly increasing function of the diffusion time. 

At this point we should emphasise that all the following fits are monotonically increasing functions from 2 to 3. However the numerical results in \cite{Benedetti:2009ge} led the authors to argue that the fit might not have a large scale limit free of quantum effects. This means that the spectral dimension might reach the topological value 3 when quantum effects are still important, then increase above 3 for some diffusion time $\sigma$ and finally decrease to 3 again at diffusion time large enough for quantum corrections to be neglected and small enough for curvature effects of the classical geometry to be sub-leading (the ``bump" effect). One can argue that this kind of behaviour is an artefact of the data analysis. Others have assumed this effect is real and studied the similarities with the three-dimensional foliation-defining scalar Ho\v rava-Lifshitz gravity where the scalar, which diffuses, is a physical propagating degree of freedom  \cite{Sotiriou:2011mu}. In any case the spectral dimension \eqref{Ds_2+1} derived from the multigraph model is clearly a monotonic function and cannot imitate such behaviour whether it is physical or an artefact. 

Therefore, we silently adopt the assumption that the quantum effects are negligible at about $\s=300$ and apply the exponential, rational fits and the fit from the multigraph model (the ``multigraph'' fit from now on), each having three free parameters, i.e. 
\bea
\label{expo_fit}
D_s^{\text{expo}}(\s) &=& a+b e^{-c \s}, \\
\label{frac_fit}
D_s^{\text{frac}}(\s) &=& a+\frac{b}{c+\s},
\eea
\beq
\label{multi_fit}
D_s^{\text{multi}}(\s) = a+\frac{b}{c+\sqrt{\s}},
\eeq
leading to the following values \vspace{2mm} \newline 
\begin{tabular}{|l|c|c|c|c|c|}
\hline
$N=70k$& $a$                                & $b$                                     & $c $                                    & SSE      & $R^2$ \\ \hline
\scriptsize{exponential} & 2.981 ($\pm 0.001$)  & -0.857 ($\pm 0.002$)     & 0.0139 ($\pm0.0001$)   & 0.00176 & 0.9998 \\ \hline 
\scriptsize{fractional}   & 3.145 ($\pm 0.007$)  & -54.99 ($\pm 1.94$)       & 45.11 ($\pm 2.2$)        & 0.02914 & 0.9963 \\ \hline 
\scriptsize{multigraph} & 3.385 ($\pm 0.024$)  & -7.823 ($\pm 0.577$)     & 2.585 ($\pm 0.425$)      & 0.05978 & 0.9924 \\ \hline  \hline
$N=200k$& $a$                              & $b$                                     & $c $                                    & SSE      & $R^2$ \\ \hline
\scriptsize{exponential} & 3.052 ($\pm 0.001$)  & -1.015 ($\pm 0.003$)     & 0.0167 ($\pm0.0001$)   & 0.00149 & 0.9998 \\ \hline 
\scriptsize{fractional}   & 3.205 ($\pm 0.006$)  & -45.04 ($\pm 1.46 $)      & 28.64 ($\pm 1.61$)          & 0.0372 & 0.996 \\ \hline 
\scriptsize{multigraph}& 3.401 ($\pm 0.019$)  & -5.833 ($\pm 0.374$)     & 0.6078 ($\pm 0.289$)      & 0.07974 & 0.9915 \\ \hline
\end{tabular}

\vspace{2mm} 
In the last two columns we also provide the sum of squared errors of prediction (SSE) and the $R^2$ of each fit respectively. The better the fit to the numerical data the closer SSE tends to zero and $R^2$ to one. Here some comments are in order. First, we reproduce the values of the parameters for the exponential fit reported in \cite{Benedetti:2009ge}. Second, we observe that the best fit is indeed given by the exponential function as argued by Benedetti and Henson, while the second best is the rational fit. Even though the fit from the multigraph model seems to be the least appropriate, we elaborate on it and study its residuals and errors.

Figure \ref{70k_multi_ds} shows the ``multigraph'' fit to the  data points for the ensemble with size $N=70,000$ simplices and figure \ref{70k_multi_res} illustrates the  corresponding residuals, i.e. the difference between the numerical and analytical value. We notice that the ``multigraph'' fit is qualitatively close to the numerical data. To make quantitative statements we must compare the residuals to the data-point errors. These errors were not provided for every single data-point but for a selection of them and are of order of $\pm0.02$.  Comparing with the residuals we conclude that the fit from the multigraph approximation is good enough except for the small and large values of the diffusion time. The discrepancy for small values of $\s$ might be due to the discreteness effects. Although we cut off the rapid oscillating data points which are dominant for $\s < 20$, the discreteness effects might still play a (non-observable) role altering slightly the value of the spectral dimension for $\s <25$. The discrepancy for values at $\s \simeq 300$ is at the edge of the error bars and is due to the finite size effects. Looking closely at the data points for $270\leq \s \leq 300$, we see that the spectral dimension peaks at $\s = 289$ and then starts decreasing slowly because of the finite size effects. Clearly, for $\s \geq 270$ the spectral dimension starts getting contributions from the finite-size effects.
\begin{figure}
\begin{center}
\includegraphics[scale=0.6]{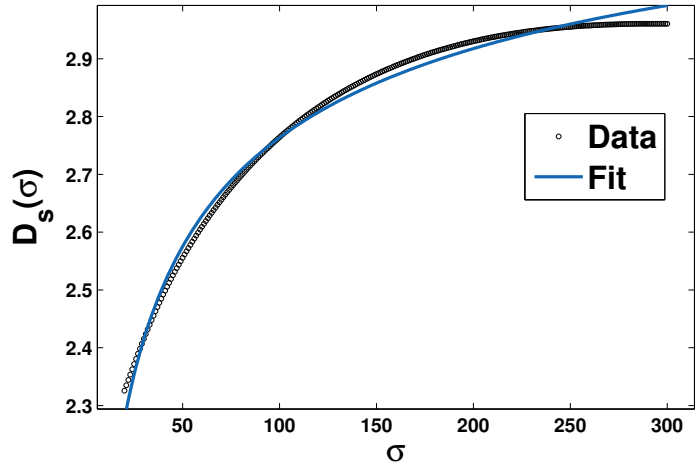}
\caption{The spectral dimension as a function of diffusion time - $N=70k$ simplices.}
\label{70k_multi_ds}
\end{center}
\end{figure}

\begin{figure}
\begin{subfigure}{0.4 \textwidth}
\includegraphics[scale=0.48]{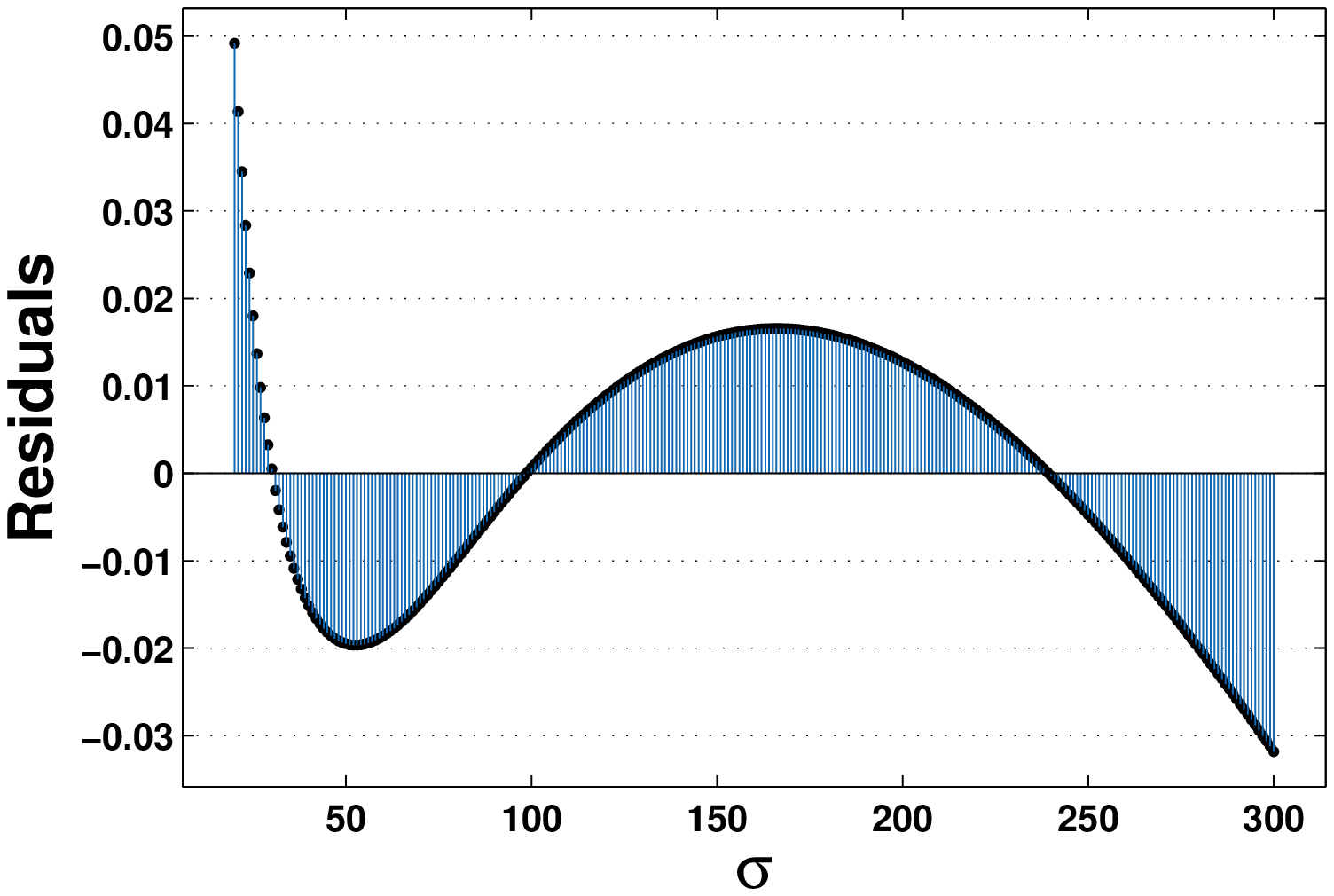}
\caption{$N=70k$ simplices}
\label{70k_multi_res}
\end{subfigure}
\qquad \qquad
\begin{subfigure}{0.4 \textwidth}
\includegraphics[scale=0.48]{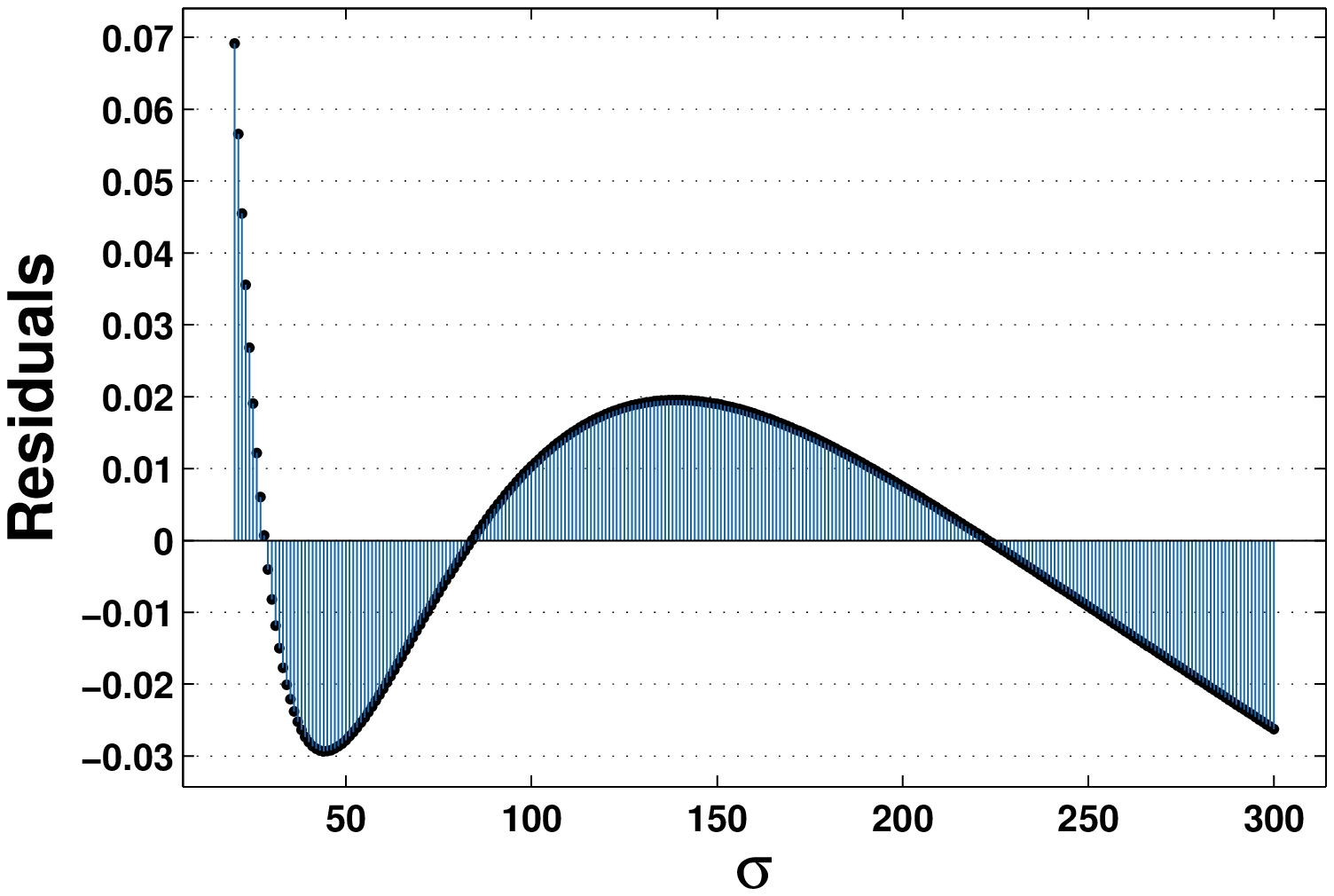}
\caption{$N=200k$ simplices}
\label{200k_multi_res}
\end{subfigure}
\caption{Residuals}
\end{figure}

Figure \ref{200k_multi_ds} also presents the spectral dimension from the ensemble with geometries of size $N=200,000$. Here the errors are estimated to be of order $\pm0.035$. From the residual plot \ref{200k_multi_res} we observe that for small values of $\s <25$ there is significant difference between the data and the ``multigraph'' fit. The discrepancy falls off quickly for $\s \geq 25$ which leads us to assume that it is due to discreteness effects as we argued in the last paragraph. For larger values of $\s$ the residuals are within the error-bars. This is consistent with the previous discussion because in this case the finite size effects have not kicked in yet. Actually, a close look at the data points reveals that the spectral dimension increases beyond $\s =300$ and peaks at $\s=403$, when the finite-size effects take over. 
\begin{figure}
\begin{center}
\includegraphics[scale=0.6]{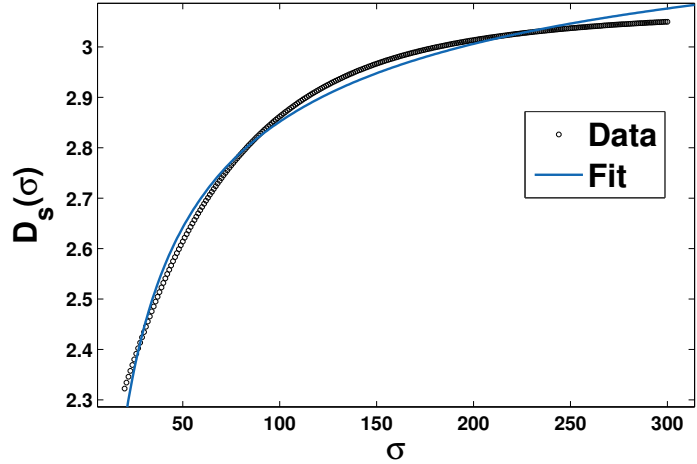}
\caption{The spectral dimension as a function of diffusion time - $N\!=\!200k$ simplices.}
\label{200k_multi_ds}
\end{center}
\end{figure}

Another independent simulation was performed in \cite{Kommu:2011wd}, where the author applies both fits to the numerical data
\bea\label{rational_Kommu}
D_s(\sigma) &=& 3.03 - \frac{10.51}{17.87+\sigma}, \\
\label{exp_Kommu}
D_s(\sigma) &=& 3.19 - 0.97 e^{-0.013 \sigma}.
\eea 
 
Our goal is to apply the ``multigraph'' fit to these data-points too. After our request, we received two ensembles of geometries, the first one had 501 members of size $N=82,000$ 3-simplices approximately and the second ensemble had 482 members of size $N=110,000$ 3-simplices approximately. The data was in form of the return probability of each geometry for $0\leq\sigma\leq600$. So we had to convert this information to the ensemble average spectral dimension. Our data analysis consists of the following steps. First we determined the ensemble average return probability, $\bar P_N(\s)$. Second, we extracted the spectral dimension by using two methods. 

In the first method we applied a discretised version of the definition \eqref{ds_def} and extracted the plots in figure \ref{Ds_my_analysis}. 
In the first ensemble, we observe that  the oscillating points vanish around $\s=170$, then the spectral dimension remains approximately constant in the region $170<\s\leq 300$ and decreases for $\s >300$ due to finite size effects. In the second ensemble, we notice that the oscillating points due to the discreteness  vanish only when the spectral dimension starts decreasing due to finite volume. This  is very similar to the plot \cite[4 (b)]{Anderson:2011bj} which corresponds to the spectral dimension in 2+1 dimensional CDT too. In these plots there is no regime where the spectral dimension is free from both lattice and compactness effects. As a result we are not able to apply the ``multigraph'' fit to these data-points. 
\begin{figure}[h]
\begin{subfigure}{0.5 \textwidth}
\includegraphics[scale=0.45]{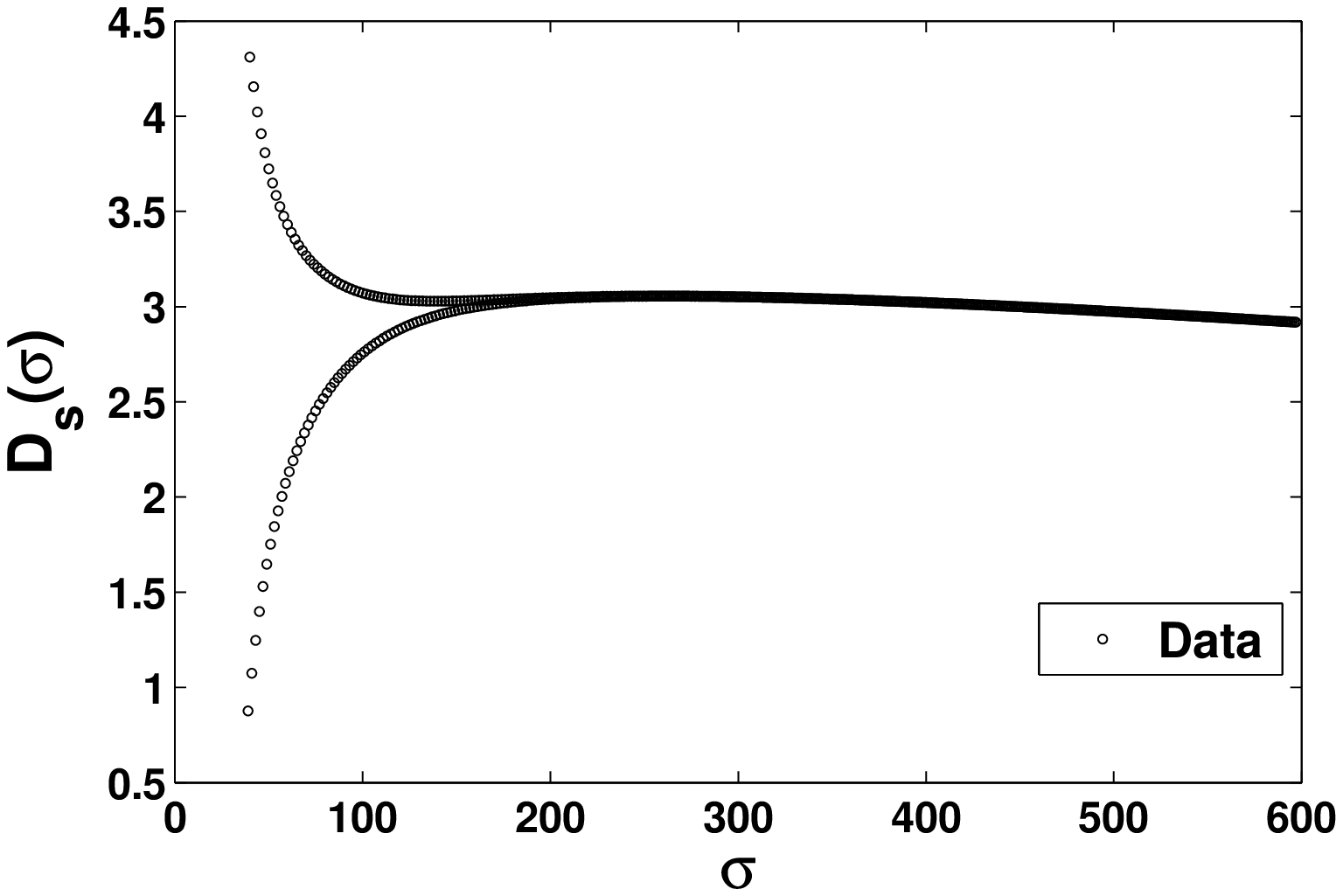}
\caption{Ensemble 1: 3-simplices of size 82k}
\end{subfigure}
\begin{subfigure}{0.5 \textwidth}
\includegraphics[scale=0.45]{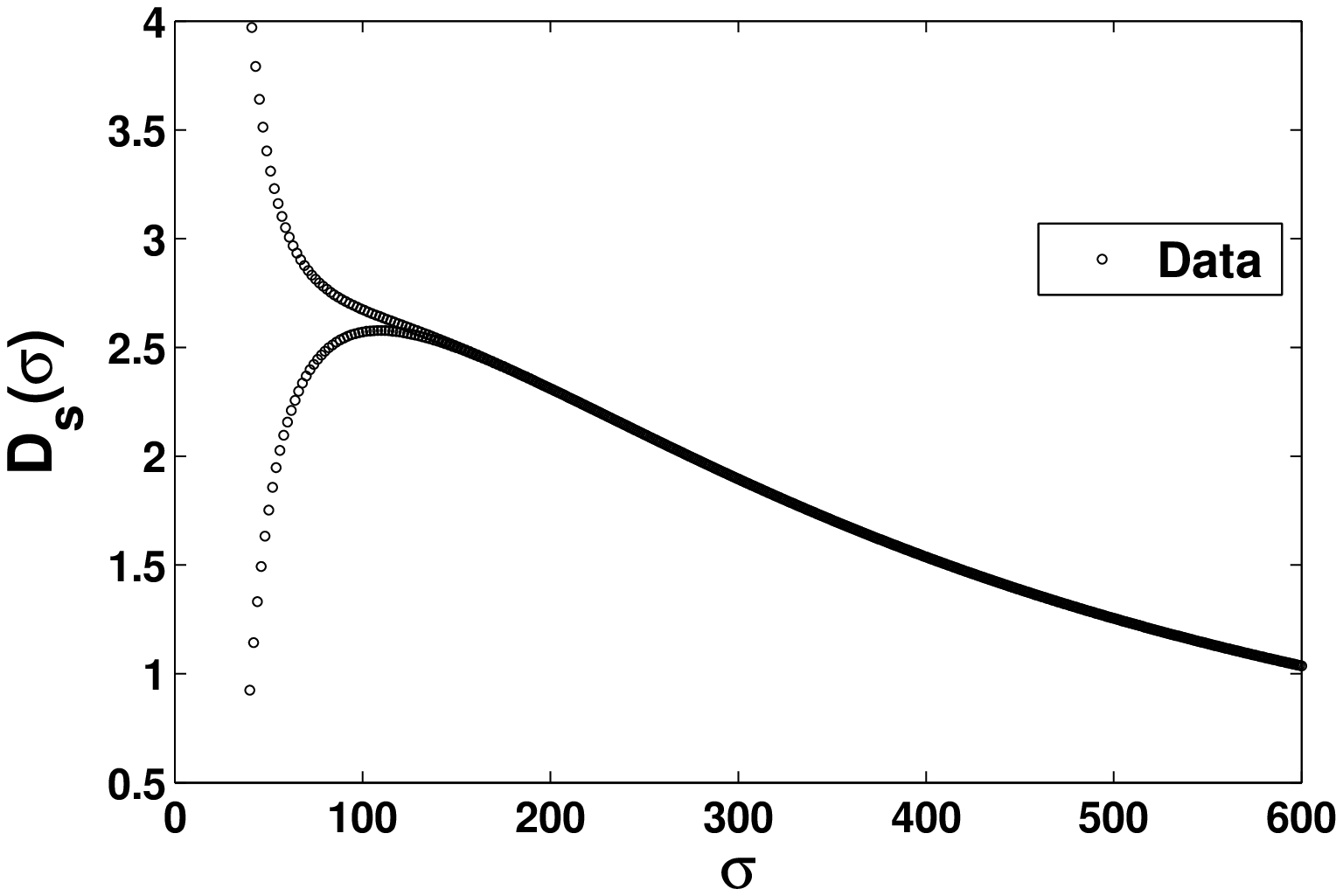}
\caption{Ensemble 2: 3-simplices of size 110k}
\end{subfigure}
\caption{Applying a discretised version of the definition \eqref{ds_def} to find the spectral dimension. $40\leq\s\leq600$.}
\label{Ds_my_analysis}
\end{figure}

In principle, we want to make sure that our results are not due to our (limited knowledge of) data analysis methods. For this reason we apply a second method for determining the spectral dimension used by the author in \cite{Kommu:2011wd}. That is, we plot $(-2 \log \bar P(\s), \log \s)$ and compute the spectral dimension through the slope of successive points leading to figure \ref{Ds_Kommus_analysis}. 
\begin{figure}[h]
\begin{subfigure}{0.4 \textwidth}
\includegraphics[scale=0.45]{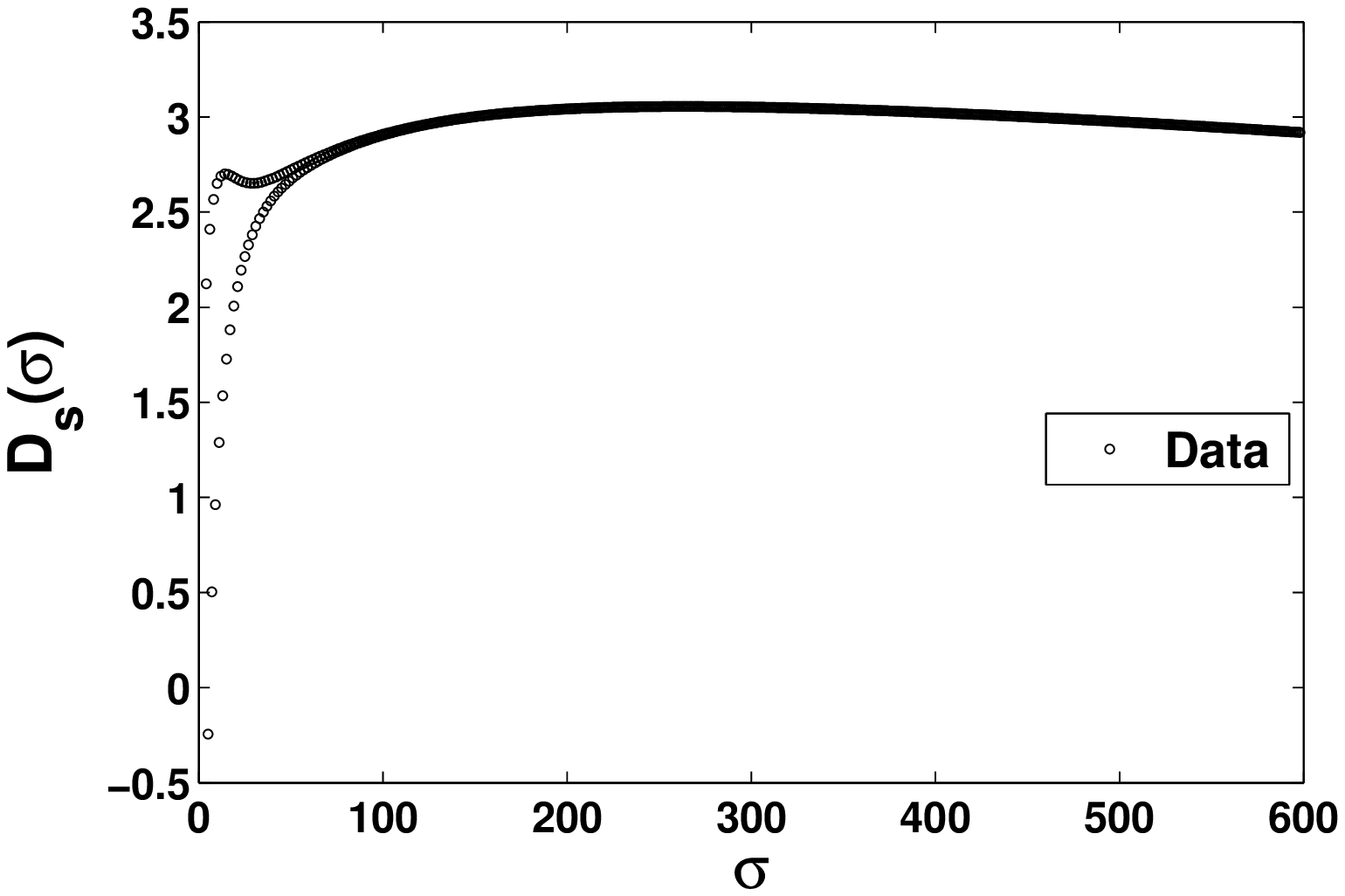}
\caption{For 5 successive points}
\end{subfigure}
\qquad \qquad
\begin{subfigure}{0.4 \textwidth}
\includegraphics[scale=0.45]{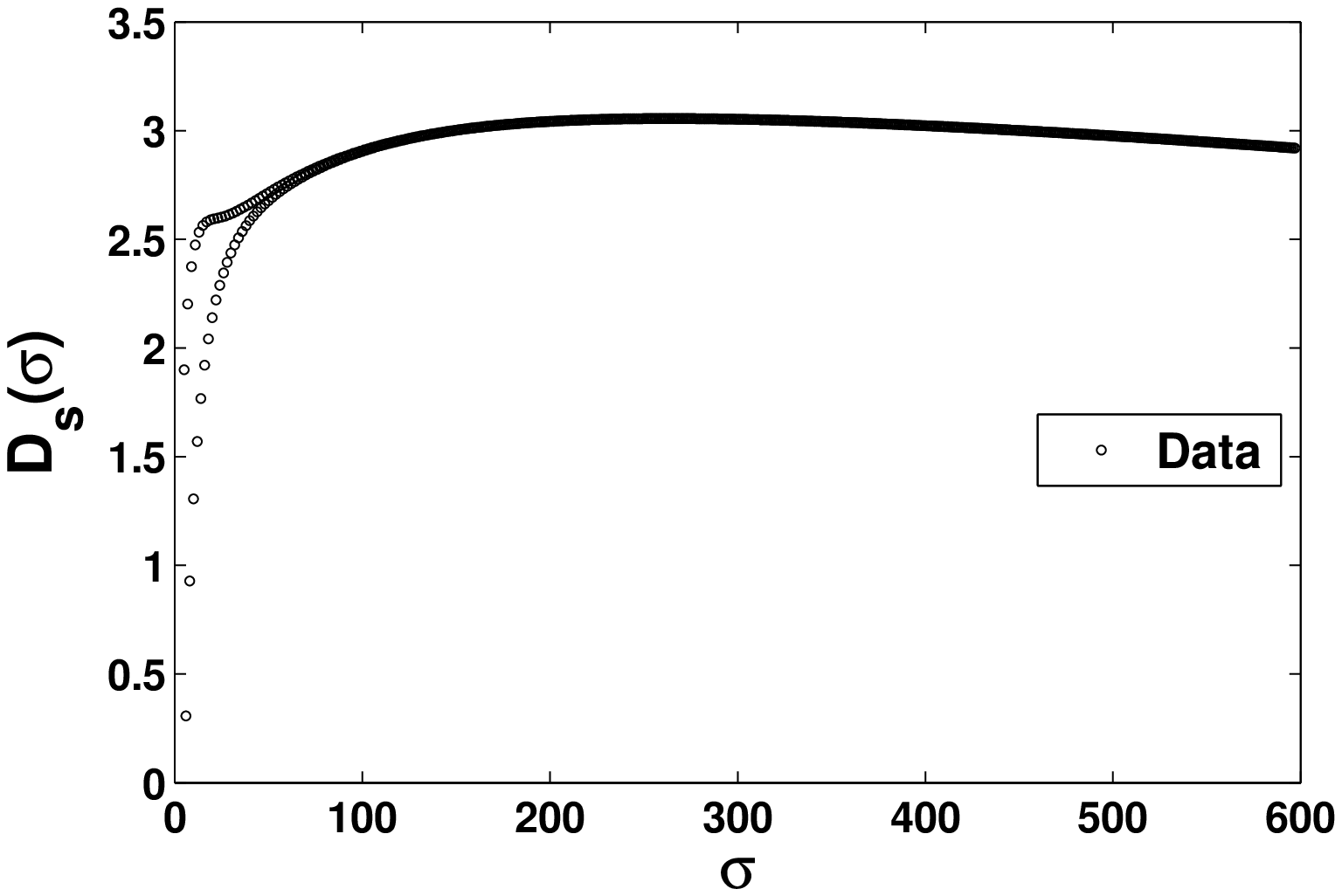}
\caption{For 7 successive points}
\label{82k_Kommus_analysis_step_7}
\end{subfigure}
\caption{Spectral dimension of ensemble 1 through the slope of $(-2 \log \bar P_N(\s), \log \s)$. $1\leq\s\leq600$. By increasing the step of successive points the oscillating pattern is smoothed out.}
\label{Ds_Kommus_analysis}
\end{figure}
Using 5 or 7 successive points for finding the slope, has the side effect of smoothing the data. As a consequence the large oscillations in plot \ref{Ds_my_analysis} have smoothed out in plot \ref{Ds_Kommus_analysis}. However they have not vanished completely. Now we observe that the envelope of the oscillating points oscillates for $\s < 40$ but starts increasing for $\sigma \geq 40$. For the sake of the analysis, we interpret the oscillation of the envelope as the region where discreteness effects are still present and the increase of the envelope at $40\leq\s\leq 250$ as the region which is free from both discreteness and finite-volume effects (figure \ref{82k_multi_fit1}). Despite this vague interpretation of the data-points we attempt to apply the three fits.
\begin{figure}[h]
\begin{center}
\includegraphics[scale=0.6]{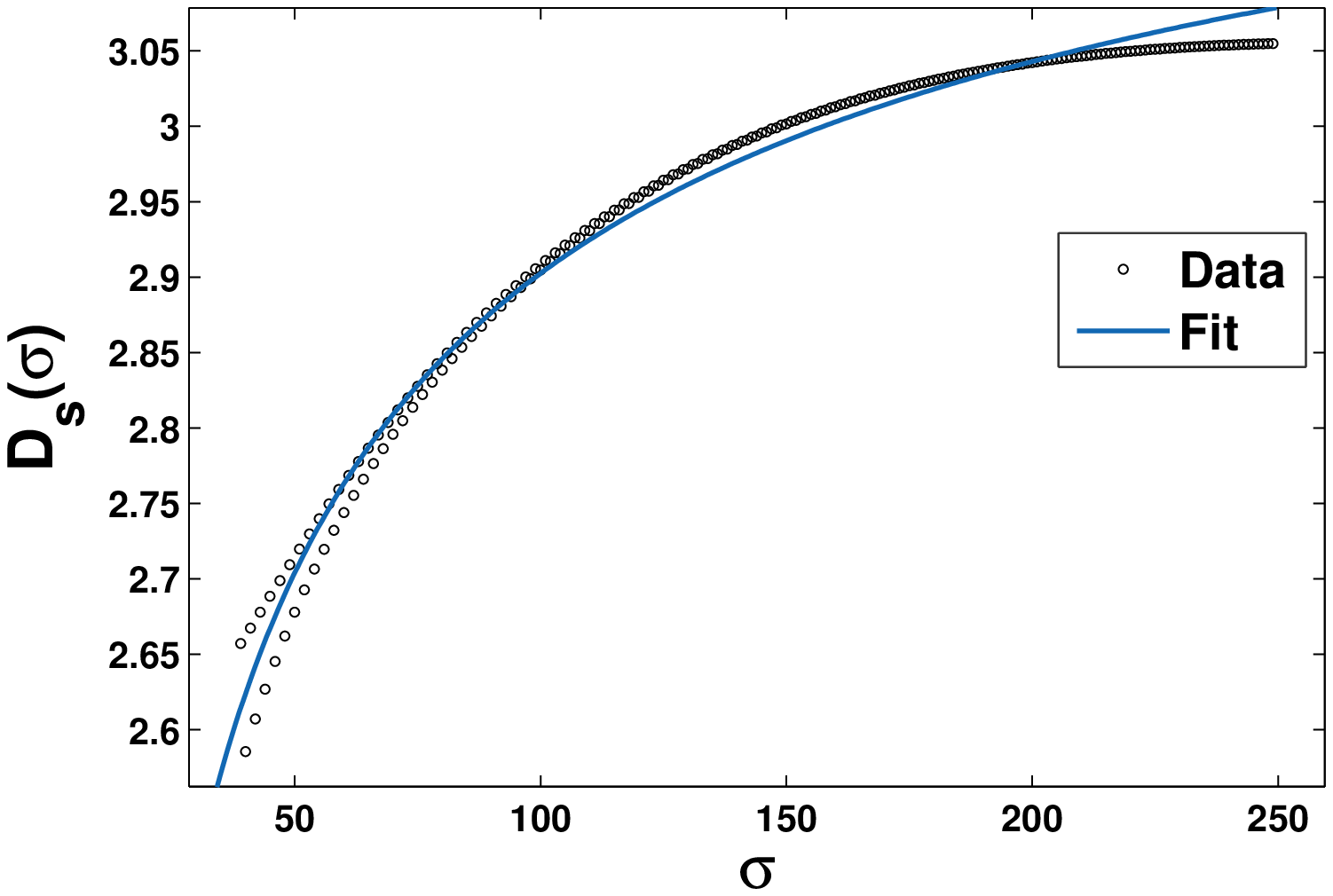}
\end{center}
\caption{Zooming in on the region $40\leq\s\leq250$ of the plot \ref{82k_Kommus_analysis_step_7}. We also apply the ``multigraph'' fit.}
\label{82k_multi_fit1}
\end{figure}
The results are the following. \vspace{2mm} \newline
\begin{tabular}{|l|c|c|c|c|c|}
\hline
$N=70k$& $a$                                & $b$                                     & $c $                                    & SSE      & $R^2$ \\ \hline
\scriptsize{exponential} & 3.073 ($\pm 0.004$)  & -0.861 ($\pm 0.017$)     & 0.01653 ($\pm0.0004$)   & 0.01313 & 0.9955 \\ \hline 
\scriptsize{fractional}   & 3.2 ($\pm 0.009$)  & -35.06 ($\pm 2.23$)       & 19.76 ($\pm 3.42$)        & 0.02216 & 0.9924 \\ \hline 
\scriptsize{multigraph} & 3.325 ($\pm 0.02$)  & -3.676 ($\pm 0.348$)     & -1.193 ($\pm 0.387$)      & 0.02725 & 0.9905 \\ \hline  
\end{tabular}
\vspace{2mm} \newline
By observing the parameters, a few remarks follow. First, we expect to find an approximating match between these values and those in \eqref{rational_Kommu}-\eqref{exp_Kommu}. This would signal some sort of agreement with the results of \cite{Kommu:2011wd}. We observe that this is the case for the exponential fit, however the disagreement of the parameters in the fractional fit is more significant 
\footnote{We thank R. Kommu for letting us know that the ensembles we examine are different from those analysed in \cite{Kommu:2011wd}.}.
Secondly, the best fit of \eqref{multi_fit} gives a negative value of the parameter $c$, which is supposed to be positive. If we restrict it to be positive the best fit gives $c=3\cdot 10^{-5}$, which is negligible compared to the second term in the denominator of \eqref{multi_fit} for $40\leq \s \leq250 $. Therefore, for both values of $c$ the functional  form of the ``multigraph'' fit \eqref{multi_fit} is effectively altered. 

To conclude, we return to figure \ref{Ds_my_analysis}. Both plots indicate that the ensembles we have analysed have limited volume because finite-size effects become dominant before discreteness effects vanish. This is the reason why we do not observe a genuine region where the spectral dimension increases without an oscillating pattern. This interpretation does not contradict the results in \cite{Kommu:2011wd}, since we have not been given the same ensembles. For this reason, we believe that the fitting procedure is not conclusive and one should repeat the above analysis for larger volumes of geometries.


\section{Lessons from other approaches beyond CDT}
\label{other_approaches}
The dynamical reduction of the spectral dimension is not an exclusive feature of CDT. It has been observed in other approaches to quantum gravity too (summarised nicely in \cite{Carlip:2009km, Carlip:2009kf}). In this section we review the basic features of other models stating the similarities and differences with CDT. 

\subsection{Asymptotic safety scenario}
\label{ASG}
The asymptotic safety scenario was conjectured by S. Weinberg \cite{Weinberg:1979ud} and refers to the possibility that gravity has a non-Gaussian UV fixed point where the theory is asymptotically safe but not free. In addition, the theory must possess a finite number of relevant operators otherwise it looses its predictive power. The basic idea for asymptotically safe gravity is that the couplings and the effective metric run with the energy scale $k$. Then one searches for the evolution of the parameters following exact renormalisation group (ERG) techniques \cite{Rosten:2010vm}. This line of research has accumulated good evidence for the existence of a non-Gaussian fixed point (NGFP) for a number of different truncations (see \cite{Niedermaier:2006ns, Reuter:2006as, Benedetti:2009rx} and references therein). It has also been argued recently that the UV-stable surface is finite in four dimensions for generic $f(R)$-action \cite{Falls:2013bv, Benedetti:2013jk}. These pieces of evidence indicate that the asymptotic safety scenario is a viable candidate for quantum gravity.

Asymptotically safe gravity can be considered as a complimentary picture to lattice regularisation methods, for example CDT, in the Wilsonian spirit. In this picture a continuum quantum field theory is associated with the lattice description of a system at a second-order phase transition. In this sense, the appearance of an UV second order phase transition in the CDT phase diagram, as explained in chapter \ref{motivation}, provides further evidence for the existence of the non-Gaussian fixed point (see \cite[section 1]{Ambjorn:2012jv}, \cite[section 3.3]{Ambjorn:2013tki} for further discussion). 

Soon after the first numerical evidence of a scale dependent spectral dimension \cite{Ambjorn:2005db}, another calculation by Launcher and Reuter determined the spectral dimension using ERG techniques in the Einstein-Hilbert truncation \cite{Lauscher:2005qz}. They reported a dynamical reduction of the spectral dimension which follows the relation
\beq
D_s(\s ; d) =
\begin{cases}
d, &\s \to \infty,\\
d/2 &\s \to 0.
\end{cases}
\eeq 
where $d$ is the topological dimension of space-time. Their argument goes as follows. Since the effective metric depends on the scale $k$, the Laplace-Beltrami operator in the diffusion equation \eqref{diffusion_eq} acquires an energy dependence too, i.e.   
\beq \label{F(k2)}
\Delta (k) = F(k^2) \Delta (k_0)
\eeq
where $F(k^2)$ is an interpolation function which relates the Laplace-Beltrami operator at a fixed reference scale $k_0$ and the scale $k$. This relation modifies the return probability density to
\footnote{Although one should, in principle, compute the average return probability density $\avg{P(\s)}_Z$, as explained in chapter \ref{motivation}, in asymptotically safe theories of gravity, the following approximation is used
\beq
\avg{\C{O}(\gamma_{\mu \nu})} \approx \C{O} (\avg{g_{\mu \nu}}_k)
\eeq
where $\gamma_{\mu \nu}$ is the microscopic metric close to the UV fixed point and $\avg{g_{\mu \nu}}_k$ is the average (smooth) metric which solves the effective field equations at scale $k$. The approximation is good given that the operator $\C{O}$ involves momenta of order $k$ \cite{Lauscher:2005qz}.} 
\beq \label{heat_trace_F}
P(\s) = \int \frac{d^dk}{(2\pi)^d} e ^{-k^2 F(k^2)\s}.
\eeq
It turns out that the interpolation function has a power law behaviour $F(k^2) \simeq k^{\delta(k)}$ along the RG trajectory. Substituting such scaling behaviour into \eqref{ds_def}, we are led to a spectral dimension of the form
\beq \label{Ds_ERG}
D_s(\s) = \frac{2d}{2+\delta}.
\eeq 
In the classical regime, $F(k^2) = 1$ and $\delta = 0$. Along the RG trajectory the running of $\delta$ exponent is given by
\beq
\delta(k) = 2 + \beta_{\lambda_k} (g_N(k), \lambda_k) / \lambda_k
\eeq
where we have defined the \textit{dimensionless} cosmological, $\lambda_k = \La k^{-2}$,  and Newton's constants, $g_N(k) \equiv G_N(k) k^{d-2}$,  and $\beta_{\lambda_k} (g_k, \lambda_k)$ is the $\lambda_k$'s beta function. Therefore, we get $\delta = 2$ at the non-Gaussian fixed point where the beta functions vanish. 

This result gives rise to the following remarks:
\begin{itemize}
\item[1.] The above result can be generalised to the full (untrucated) effective action in four dimensions, assuming that the theory possess a non-Gaussian fixed point.  In this  case, it can be shown by a general argument that $\delta = 2$ at the fixed point \cite{Lauscher:2005qz}.
\item[2.] In \cite{Lauscher:2001ya}, the authors gave another intuitive argument which also unveils the two-dimensional nature at short distances  as seen by a test-graviton propagator. Their argument was the first report of dynamical dimensional reduction from four in the IR to two in the UV regime, even though they did not go through the determination of the spectral dimension. It is instructive to briefly review this argument here \cite{Lauscher:2005qz, Niedermaier:2006ns}.    
Starting with a scale-dependent action, we note that the inverse Newton's constant $1/G_N(k)$ can be treated as a wave-function renormalisation of the metric, which according to renormalisation group (RG) arguments is related to the anomalous dimension of the coupling under the RG flow $\eta_N = \frac{\partial \ln G_N(k)}{\partial \ln k}$. Next, we introduce the dimensionless Newton's constant, $g_N(k) = G_N(k) k^{d-2}$. The beta function $\beta(g_N)$ of the the dimensionless Newton's constant, which describes the running of the coupling under the (RG) flow, takes the form
\beq
\beta(g_N) \equiv k \frac{\partial g_N}{\partial k} = (d-2+\eta_N(g_N, \ldots) )g_N.
\eeq
When $\beta(g_N^*) = 0$, the coupling becomes independent of the energy scale and the theory is said to be at a fixed point $g_N=g_N^*$. This is true when $\eta _N(g^*_N, \ldots) = 2-d$. Additionally the anomalous dimension of a field contributes to the propagator as $(1/p^2)^{1-\eta_N/2}$. Therefore, for $\eta_N = 2-d$, the propagator contributes a factor $p^{-d}$, which makes a typical integral $\int \frac{d^dp}{p^d}$ to diverge only logarithmically. However, such a logarithmic  behaviour is characteristic of a two-dimensional theory.

One should notice that the effective dimension extracted from this argument is always 2 independent of the topological dimension $d$. The argument is consistent with previous analysis only at $d=4$. The difference is in the fact that the graviton propagator is modified due to RG flow of the Newton's constant, which scales as $G_N \simeq k^{2-d}$ at the vicinity of the NGFP, whereas the reduction of the spectral dimension is due to the RG running of the cosmological constant which scales as $\La_k \simeq k^2$ near the NGFP \cite{Lauscher:2005qz}. An intriguing observation is that in the multigraph approximation we achieved the reduction of the spectral dimension due to the scaling/renormalisation of Newton's constant, as imprinted in assumption \eqref{assumption_i} or expression \eqref{pav-scale}, and not by the scaling of the cosmological constant in contrast to the asymptotic safety scenario. 

\item[3.] Expression \eqref{Ds_ERG} implies that in $d=3$ dimensions the spectral dimension flows from $3$ to $3/2$, which is inconsistent, at first sight, with the interpretation of the results coming from the computer simulations \cite{Benedetti:2009ge, Kommu:2011wd}. However, Reuter and Saueressig, working in the Einstein-Hilbert truncation, considered three scaling regimes for the interpolation function, $F(k^2)$; the classical, where $\delta = 0$, the semi-classical, where $\delta = d$ and the non-Gaussian fixed point regime, where $\delta =2$ \cite{Reuter:2011ah}. In such a scenario, the spectral dimension reduces from the value of $d$ in the IR, to a  value less than $d/2$ at semi-classical scales and then increases to the UV value $d/2$. In order to resolve the mismatch with three-dimensional CDT data, they argued that the Monte-Carlo data should not be extrapolated to the deep UV, since simulations do not probe the Planck scale. Further, they showed that the dimensional flow of their model is a good fit to the numerical data in the classical regime, but they further argued that simulations are inadequate to  probe the semi-classical and quantum regimes, where the spectral dimension is expected to fall to the value (less than) $d/2$, according to their description.  
Following similar arguments, the authors in \cite{Rechenberger:2012pm} derived qualitatively similar results in the $R^2$ truncation analysis.
\end{itemize}

\subsubsection{Relation to the multigraph model}
\label{Relation to the multigraph model}
One may observe an intriguing similarity between the expressions of the spectral dimension \eqref{ds_transient_rho} and \eqref{Ds_ERG}. Here, we comment on potential similarities and differences that arise between asymptotic safety and CDT-like models. At first sight, there are three apparent differences. The first expression relates the \textit{graph} spectral dimension to the Hausdorff dimension and the anomalous exponent of resistance, $\rho$, whereas the second one provides a relation among the \textit{scale dependent} spectral dimension, topological dimension and the running exponent $\delta(k)$. In \cite{Reuter:2011ah}, it was argued that the Hausdorff dimension equals the topological dimension at any scale in Einstein-Hilbert truncation.  In addition, giving scale dependence to $\rho$, under appropriate ansatz for the multigraph ensemble, both expressions can potentially describe the reduction of the spectral dimension. Therefore, it suffices to investigate the relation between the anomalous exponent $\rho$ and the running exponent $\delta(k)$.

To begin with, we recall some arguments from \cite{Reuter:2011ah}. The \textit{walk} dimension, $d_w$, is defined via the average square displacement of a random walk, i.e. $\avg{r^2}\sim \s^{2/d_w}$. A diffusion process is regular if $d_w=2$, whereas for $d_w \neq 2$ it is anomalous. For standard fractals, it is known that
\beq \label{ds-dw-dh}
\frac{D_s}{2} = \frac{\dha}{D_w}
\eeq
where $D_s$ and $D_w$ are changing with scale and $d_H$ is fixed. A random walk is characterised as recurrent if $D_w > \dha$ and non-recurrent or transient if $D_w<\dha$. From \eqref{Ds_ERG}, \eqref{ds-dw-dh} and the fact that $\dha=d$ we conclude that $D_w = 2 +\delta$. It becomes evident that $\delta$ controls the recurrent/transient character of the diffusion. 
In particular when $\delta = 0$ (IR), the random walk is transient in $d=4$. In the semi-classical regime where $\delta=4$, the random walk is recurrent, whereas in the NGFP regime, $\delta=2$, the diffusion is marginally recurrent.  
 
Let us recall the role of the exponent $\rho$ through the definition of resistance \eqref{lemma:dS:lower:conditions}, $R(N) \sim N^{2-\dha+\rho}$, for large $N$. As we have already explained in section \ref{resistance}, the random walk is transient (recurrent) when the resistance is finite (infinite). Thus, we note that $\rho$ also determines the recurrent character of the random walk with $\rho_c = \dha-2$ to be the critical value, similarly to $\delta$.

Despite the similarities there is a crucial difference between the two descriptions. On the one hand, asymptotic safety has $\dha=d$ at all scales and the walk dimension flows, implying a regular diffusion at large scales and an anomalous one along the RG trajectory. On the other hand, the random walk on the radially reduced CDT is always regular and assumption \eqref{assumption_i} implies that the Hausdorff dimension varies under scaling. We may bridge the two pictures by studying features of anomalous random walks on graphs and defining the continuum limit. We leave this study to future work.

\subsection{Ho\v{r}ava-Lifshitz gravity}
\label{HLG}
As we have repeatedly mentioned, Einstein's theory of general relativity is perturbatively non-renormalisable and therefore should be treated as an effective theory which breaks down at some energy scale. In an attempt to remedy this, physicists considered theories beyond general relativity, adding higher order curvature terms \cite{Stelle:1977hd}. In this case the theory restores perturbative renormalisability, but it also becomes non-unitary, due to the higher order time derivatives. However, if one added higher order spatial derivatives without adding higher order time derivatives one could achieve both renormalisability and unitarity. Obviously, such a construction breaks Lorentz invariance. 

In \cite{Horava:2009uw}, P. Ho\v{r}ava introduced a gravity model, where time and space have an anisotropic scaling, which is characterised by the (scale dependent) dynamical critical exponent $z$,
\beq
[x] = [k]^{-z}, \qquad [t] = [k]^{-1}.
\eeq
At large scales $z=1$ in order that the theory recovers Lorentz invariance, whereas at short distance $z=3$ so that the theory is power-counting renormalisable in $d=4$ dimensions. Following the ADM decomposition of the metric in standard general relativity, the dynamical variables become the lapse scalar, $N$, the shift vector, $N_i$, and the spatial metric, $g_{ij}$. 
This decomposition between space and time implies a preferred foliation of space-time. The symmetries of the theory must also respect this foliation structure and therefore the theory is not invariant under standard diffeomorphisms, but a more restricted set, the \textit{foliation preserving diffeomorphisms}. 

As a result of the new structure of space-time, the spatial component of the Laplace-Beltrami operator must be modified. To explain the difference we write the diffusion equation \eqref{diffusion_eq} as follows 
\beq \label{diffusion_eq_horava1}
\frac{\partial K_g({\bf y},\tau, {\bf y_0}, \tau_0,\s)}{\partial \s} = \left (\frac{\partial ^2}{\partial \tau^2} + \partial _i \partial _i \right ) K_g({\bf y},\tau, {\bf y_0}, \tau_0,\s)
\eeq 
where $\tau$ is Euclidean time. The anisotropic scaling does not affect the time component but modifies the spatial dimensions \cite{Horava:2009if}
\beq \label{diffusion_eq_horava2}
\frac{\partial K_g({\bf y},\tau, {\bf y_0}, \tau_0,\s)}{\partial \s} = \left (\frac{\partial ^2}{\partial \tau^2} + (-1)^{z+1}(\partial _i \partial _i )^z\right ) K_g({\bf y},\tau, {\bf y_0}, \tau_0,\s).
\eeq 

The modified diffusion equation \eqref{diffusion_eq_horava2} can be solved and implies that the return probability density takes the form
\beq
P(\s) \simeq \frac{1}{\s ^{(1+D/z)/2}},
\eeq
which implies under the definition of the spectral dimension \eqref{ds_def}
\beq
D_s(\s) = 1+ \frac{D}{z},
\eeq
where $D$ is the number of spatial dimensions. One observes that, in $d=3+1$, the spectral dimension varies from $4$ in the IR, where $z=1$, to $2$ in the UV, where $z=3$, being consistent with the result coming from CDT simulations. 

The aforementioned dynamical reduction in Ho\v{r}ava-Lifshitz theories is not the only common feature with CDT. The foliation structure of space-time also resembles the time-sliced structure of the triangulations imposed by CDT. Thus, it was soon realised that Ho\v{r}ava-Lifshitz gravity might be the continuum counterpart of CDT and studying the relation between the two theories is an active research program \cite{Ambjorn:2010hu, Anderson:2011bj, Sotiriou:2011mu, Ambjorn:2013joa}. In particular, it was argued in \cite{Ambjorn:2010hu} that the phase diagram of four-dimensional CDT matches qualitatively the phase diagram of an effective Lifshitz theory, providing extra evidence for the link between these two theories. In addition, three-dimensional  projectable Ho\v{r}ava-Lifshitz gravity was quantised using the CDT formalism, and computer simulations indicate the existence of an extended-geometry phase \cite{Anderson:2011bj}. Finally, very lately, it was proved that two-dimensional CDT, which is analytically solvable, shares the same continuum Hamiltonian obtained by quantising two-dimensional (projectable) Ho\v{r}ava-Lifshitz gravity \cite{Ambjorn:2013joa}.

\subsubsection{From spectral dimension to dispersion relation}
Three-dimensional gravity provides an interesting and fruitful playground to study the relation among different approaches to quantum gravity, as one can see from the above discussion and refs. \cite{Anderson:2011bj, Sotiriou:2011mu}.    
Since we now have an analytical model which fits successfully the CDT data, we further explore the relationship between CDT and Ho\v rava-Lifshitz gravity. In particular,  one can extract information about the dispersion relation from the spectral dimension and vice versa as shown in \cite{Sotiriou:2011aa}. We apply these methods to the spectral dimension \eqref{Ds_2+1}. 

In order to make contact with Ho\v rava-Lifshitz gravity, we start with a dispersion relation $\omega = \Omega (k)$ which is in general Lorentz-violating and define $\Omega (k)^2 = f(k^2)$. After some mathematical manipulations described in \cite{Sotiriou:2011aa}, one can relate the heat trace $P (\sigma)$ to the Laplace transform of the function $k(\Omega)$ in the variable $\Omega^2$, i.e.
\beq
\frac{D}{\sqrt{\s}}P(\s) \simeq  \int _{0}^{\infty} k(\Omega)^D e^{-\s \Omega^2} d\Omega^2.
\eeq
Therefore the inverse Laplace transform determines $k(\Omega)$
\beq \label{k(Omega)}
k(\Omega)^D \simeq \frac{1}{2\pi i} \int _{C} \frac{D}{\sqrt{\s}}P(\s) e^{\Omega^2\s}d\s.
\eeq

In our case, we substitute the heat trace \eqref{heat_trace_2+1} for $d=2+1$ gravity and get
\beq \label{inverse_laplace_result}
 k(\Omega)^2 \simeq \left (-1 + \frac{2}{\sqrt{\pi}} w  + e^{w^2}\text{Erfc}(w) \right ) /G_3^2
 \eeq
 where $w = \sqrt{2} G_3\Omega $ and $\text{Erfc}$ is the complementary error function, defined by 
 \beq \label{Erfc}
 \text{Erfc}(z) = 1- \text{Erf}(z) = \frac{2}{\sqrt{\pi}} \int_{z}^{\infty} e^{-r^2}dt.
 \eeq
One observes that it is rather complicated to invert \eqref{inverse_laplace_result} and find $\Omega(k)$. For this reason we focus our study in three regions where  \eqref{inverse_laplace_result} is simplified and can be inverted. 

First, we study the limit $w>>1$ or $\omega = \Omega(k) >>1/G_{3}$. In this limit the $\text{Erfc}(w)$ function has the following expansion
\beq \label{Erfc_w_infinity_expansion}
\text{Erfc}(w) \simeq e^{-w^2} \left ( \frac{1}{\sqrt{\pi}w} - \frac{1}{2\sqrt{\pi}w^3}+o(w^{-4})\right )
\eeq
and \eqref{inverse_laplace_result} admits the solution
\beq \label{w_infinity_expansion}
w(k) = \frac{\sqrt{\pi}G_3^2}{2} \left (k^2 +  c_1\right ),
\eeq
where all $c_i$'s are irrelevant constants.

Second, we study the region $w \simeq 1$ or equivalently $\Omega \simeq 1/G_{3}$. In this region the $\text{Erfc}(w)$ function has the expansion
\beq \label{erfc_w_unit_expansion}
\text{Erfc}(w)\simeq \text{Erfc}(1) - \frac{2(w-1)}{e \sqrt{\pi}} + O \left ((w-1)^2\right),
\eeq
where $\text{Erfc}(1) \simeq 0.157$ and  \eqref{inverse_laplace_result} has a solution of the form 
\beq \label{w_unit_expansion}
w(k) \simeq c_2 G_3^2 k^2+ c_3.
\eeq

Last, we consider the limit $w<<1$. The expansion of $\text{Erfc}(w)$ function in this limit has the form
\beq \label{erfc_w_zero_expansion}
\text{Erfc}(w) \simeq 1-\frac{2 w}{\sqrt{\pi}} + \frac{2 w^3}{3\sqrt{\pi}} + O (w^4)
\eeq
which implies
\beq \label{w_zero_expansion}
w(k) \simeq c_4 G_3 k.
\eeq

In 2+1 dimensions the inverse of the three-dimensional Newton's constant $1/G_{3}$ is equal to the three-dimensional Planck mass (in natural units). Thus, the three limits we considered  describe the dispersion relation,  $\Omega(k)$, for a test particle at energies much larger than, of order of and much less than the Planck mass respectively. In other words they correspond to trans-Planckian, Planckian and IR regimes respectively. Summarising the above results in one expression
\bea \label{dis_rel_summary}
\omega = \Omega (k) \simeq
\begin{cases}
k^2/G_3 + \text{const}/G_3, \qquad &k >> \frac{1}{G_{3}} = M_{P(3)},\\
k^2/G_3 + \text{const}/G_3, \qquad & k \simeq \frac{1}{G_{3}} = M_{P(3)}, \\
k, \qquad \qquad \qquad  &k << \frac{1}{G_{3}} = M_{P(3)}. 
\end{cases}
\eea 
We see that the dispersion relation appears to have common quadratic behaviour in both the trans-Planckian and Planckian limits while in the IR it increases linearly, as expected. 

We can now relate our results to the dispersion relation which originates from the foliation-defining scalar of $2+1$ Ho\v rava-Lifshitz gravity and  is given by \cite{Sotiriou:2011mu}
\beq \label{disp_rel_3dHL}
\omega(k) ^2 = A k^2\frac{1+B k^2+C k^4}{1+D k^2}.
\eeq 
It is readily seen that it has the same asymptotic behaviour as the dispersion relation \eqref{dis_rel_summary} when $k\to 0$ and $k\to \infty$. However, the fact that these two dispersion relations agree in the UV and IR limits is not surprising. It originates from the fact they both share the same UV and IR values of the spectral dimension. In principle, since we are equipped with an analytical CDT-like model which explains satisfactorily the numerical results, we can further study the relation between Ho\v rava-Lifshitz gravity and CDT-like models.

To conclude this section we summarise our findings. On one hand we argued that the multigraph approximation might provide a satisfactory fit to numerical data ignoring the bump effect. On the other hand $2+1$ Ho\v rava-Lifshitz gravity considered in \cite{Sotiriou:2011mu} also explains, under an appropriate adjustment of its free parameters, the $2+1$ CDT data including the ``bump". However, the four free parameters of the model \eqref{disp_rel_3dHL} in \cite{Sotiriou:2011mu} provide enough freedom to fit the numerical data without the ``bump" too. In this case we may adjust these parameters to fit the ``multigraph'' profile, offering a potential analytical link between the two models.

\subsection{Multi-fractional spacetimes}

The idea of the \textit{multi-fractional} nature of space-time, i.e. fractional geometry with multiple characteristic scales, has been extensively studied by G. Calcagni (see \cite{Calcagni:2012hb} and references therein for an introduction). Here we present only a few elements of the theory in order to introduce some terminology and notation and stress the similarities with the multigraph model. 

A fractional space-time is defined by an embedding into a Minkowski (or Euclidean) space-time $M^d$ and is equipped with a Lebesgue-Stieltjes (factorisable) measure,
\beq \label{LS_measure}
d \varrho _{\alpha}(x) = d^dx \upsilon _{\alpha}(x),
\eeq 
an appropriately modified Laplace-Beltrami operator and calculus  \cite{Calcagni:2012hb}. For example, the measure can take the form
\beq
\upsilon_{\alpha} (x) = \prod _{\mu}   \upsilon _{\alpha} (x^{\mu}) = \prod _{\mu} \frac{|x^{\mu}|^{a_{\mu}-1}}{\Gamma(a_{\mu})}
\eeq
where $\mu = 0, \ldots, d-1$ and the parameters $0<\alpha _\mu \leq1$ are the \textit{fractional charges}. It is customary to consider the ``isotropic'' case where $\alpha _{\mu} = \alpha$ for any $\mu$. Under the measure \eqref{LS_measure}, common integration is modified to
\beq
\int _{A}d^dx \to \int _A d\varrho_{\alpha} (x).
\eeq

Multifractional features arise when we consider multiple copies of (isotropic) fractional geometries with different measure, i.e. multiple fractional (isotropic) charges $\alpha _n, \ n=1, \ldots, N$. Now the diffusion equation is modified to incorporate the multifractional structure. We consider the simplest case where the diffusion equation becomes
\beq \label{multifrac_dif_eq}
\left (\frac{\partial}{\partial \s} - \sum _{n=1}^N \zeta_n(\ell) \C{K}_{\alpha_n}^E \right ) P(x, x', \s) = 0
\eeq
where $\C{K}_{\alpha_n}^E$ is the Euclidean fractional Laplace-Beltrami operator, describing a diffusion process which takes place on a space-time with $N-1$ characteristic scales $\ell_1< \ldots < \ell_{N-1}$. It has been shown that the effective fractional charge is given by \cite{Calcagni:2012qn, Calcagni:2012rm}
\beq
\alpha _{N-1}^{\text{eff}}(\ell) = \frac{1+\sum _{n=1}^{N-1}\zeta_n(\ell)\alpha _n}{1+\sum _{n=1}^{N-1}\zeta_n(\ell)}, \qquad \text{where} \qquad \zeta_{n} (\ell)= \left (\frac{\ell_n}{\ell-\ell_{n-1}} \right )^2.
\eeq
The particular example \eqref{multifrac_dif_eq} implies a simple profile for the spectral dimension flow
\beq
D_s(\ell) = d \cdot \alpha _{N-1}^{\text{eff}}(\ell).
\eeq

\subsubsection{Relation to the continuum random multigraph and comb}
For $N=2$, there is only one characteristic scale $\ell_1$ which separates the two regimes. The profile of the spectral dimension becomes for $d=4$ 
\beq \label{Ds_N=1_mfrac}
D_s(\ell) = 4 \cdot \frac{1+ \zeta_1(\ell) \alpha_1}{1+\zeta_1(\ell)}, \qquad \text{where} \qquad \zeta_1=\left( \frac{\ell_1}{\ell}\right)^2.
\eeq
We observe that for scales much larger than the characteristic scale $\ell_1$, we have $D_s(\ell >> \ell_1) = 4$. On the other hand, the short distance limit, i.e. $\ell<<\ell_1$ yields $D_s(\ell<<\ell_1) =4 \alpha_1$. Choosing $\alpha_1=1/2$, the dynamical dimensional reduction \eqref{Ds_N=1_mfrac} is consistent with both CDT computer simulations and the multigraph model. Additionally, we can make explicit contact with the spectral dimension profile \eqref{Ds_contin4} by writing expression \eqref{Ds_N=1_mfrac} as
\beq 
D_s(\ell) = 4 \left (1-\frac{1}{2+2 (\ell_1/\ell)^{-2}} \right )
\eeq
and rescaling the fictitious diffusion time  $\s = \ell^2 \bar \s$, where $\bar \s$ is a dimensionless parameter \cite{Calcagni:2012qn, Calcagni:2012rm}. As a result, the two profiles of the spectral dimension are identical and the two regimes are separated by one scale, which, in the multigraph model, obtains a physical interpretation as the renormalised Newton's constant.  
 
Next, we consider the case $N=3$, which implies two characteristic scales $\ell_1 << \ell_2$ and three plateaux in the profile of the spectral dimension. In  essence, the effective fractional charge becomes
\beq
\alpha_2^{\text{eff}}(\ell) = \frac{1+\zeta_1(\ell)\alpha_1+\zeta_2(\ell)\alpha_2}{1+\zeta_1(\ell)+\zeta_2(\ell)}.
\eeq
The spectral dimension is scale dependent with profile $D_s(\ell) = d \cdot \alpha_2^{\text{eff}}(\ell)$. Thus, we observe that there are three regimes, the long distance (or IR), where $\ell>>\ell_2>>\ell_1$ which implies $D_s = d$, the intermediate regime, where $\ell_2 >> \ell >>\ell_1$, which gives $D_s = d \alpha_2$ and the short scale regime (or UV), characterised by $\ell_2>>\ell_1>>\ell$, which results in $D_s = d \alpha_1$. We recall that the fractional charges $\alpha_i, \ i=1, \ 2$ can take any value in the interval $(0,1]$. In \cite{Calcagni:2012qn, Calcagni:2012rm}, the author chooses $\alpha _1= 1/2$ and $\alpha_2=1/3$ to reproduce the results from the asymptotic safety analysis \cite{Reuter:2011ah, Rechenberger:2012pm, Calcagni:2013vsa} (see also the third comment in section \ref{ASG}). 

The above choice of $\alpha_1, \alpha_2$ leads to a profile where the intermediate plateau has lower value than the UV and the IR values. Had we chosen $\alpha_1 < \alpha _2$, we would have obtained a monotonically increasing profile with one intermediate plateau, similar to the continuum random comb with double characteristic scales (section \ref{multiple_scales}).

\subsection{Loop quantum gravity, spin foams and others}
\label{LQG}
The evidence for a scale dependent spectral dimension does not stop here and extends to other proposals for quantum gravity which have fewer common features with CDT. For example, loop quantum gravity (LQG) also reports qualitatively similar results. In particular, the area spectrum of LQG, $A_j \simeq l_P^2\sqrt{j(j+1)}$, induces a scaling behaviour for the three-dimensional spatial metric, which modifies the Laplace-Beltrami operator and the heat trace similar to \eqref{F(k2)} and \eqref{heat_trace_F} respectively, with
\beq
F_{LQG} (k^2) = \sqrt{\frac{k^4(k^2_0+E_P^2)}{k^4_0(k^2+E_P^2)}}+1.
\eeq
The computation follows the same arguments as those exposed in the asymptotic safety scenario. In the context of LQG the spectral dimension of the spatial sector flows from the value of 3 in the IR to the value of 1.5 at intermediate scales and then increases to 2 in the deep UV limit \cite{Modesto:2008jz}. This behaviour resembles the intermediate semi-classical plateau reported in asymptotic safety. However this model suggests that the effective dimension at short distance is three-dimensional instead of 2 as observed in other approaches.  

Additionally, different four-dimensional spin-foam models have different area spectrums and therefore induce different scaling behaviour for the metric. For example, three models were considered in \cite{Modesto:2008jz, Modesto:2009kq} with area spectrums
\beq \label{spectrum_sf}
A_j = 
\begin{cases}
l_P^2 j, \\
l_P^2 (2j+1),\\
l_P^2 \sqrt{j(j+1)},
\end{cases}
\eeq
which modify the following scaling for the Laplace-Beltrami operator 
\footnote{The $+1$ term in expressions \eqref{Fsf} is added by hand by the author \cite{Modesto:2008jz, Modesto:2009kq}. He argues that this modification does not change the UV behaviour, but rather improves the IR limit allowing him to trade the $k\to k_0$ limit for the $k\to0$ limit in the calculations.}
\beq \label{Fsf}
F_{sf} (k^2) = 
\begin{cases}
k^2/k_0^2+1, \\
\frac{k^2(k_0^2+2E_P^2)}{k^2_0(k^2+2E_P^2)}+1, \\
\sqrt{\frac{k^4(k^2_0+E_P^2)}{k^4_0(k^2+E_P^2)}}+1.
\end{cases}
\eeq 
Under the first scaling, the spectral dimension  varies from four at large scales to two at Planck scales, in agreement with previous observations from other approaches. The second scaling implies that the spectral dimension is 4 at large distances, runs to 2 at scales of order of the Planck scale, and increases to 4 again in the deep UV limit, i.e. energies much larger than the Planck energy. Finally, the third scaling gives a scale dependent spectral dimension which reduces from 4 in the IR to 2 in the UV, but increases to $8/3$ in the deep UV limit.  

Further indications for a dynamical reduction of the spectral dimension have been found in the context of $\kappa$-Minkowski \textit{non-commutative space-time} \cite{Benedetti:2008gu} and quantum space-times with \textit{minimal length} \cite{Modesto:2009qc}.  In these approaches quantum effects modify the classical geometry, which requires new structure and calculus to describe it. Therefore the spectral dimension of the quantum space-time is not determined through an ensemble average but as a diffusion on the modified quantum geometry which is described with new calculus (which is reflected on the Laplace-Beltrami operator) and/or initial conditions \cite{Calcagni:2012rm}. 

The idea that space-time  is non-commutative at Planck scales is not new. In \cite{Benedetti:2008gu}, Benedetti studied the diffusion process on a non-commutative Minkowski space-time,   
and found a scale dependent spectral dimension which flows from 4 in the IR to 3 in the UV, which is in agreement with the LQG calculation but differs from other approaches. 

Finally, the authors in \cite{Modesto:2009qc} considered diffusion on a quantum space-time, which emerges as the average quantum geometric  fluctuations, with a minimal length scale and flat background metric.    
Due to the presence of the minimal length in the geometry, the initial condition \eqref{initial_condition} of the diffusion has to be modified into a Gaussian profile with width of minimal length. They reported a change of the effective dimension from 4 at distances much larger than the minimal length to 2 at scales of order of the minimal length.

\section{Conclusion and outlook}

The multigraph ensemble describes radially reduced four-dimensional CDT and provides some physical insight into the degrees of freedom which determine  the spectral dimension in the physical phase. Firstly the fact that the  behaviour of the return probability density \eqref{intro-P} implied by simulations \cite{Ambjorn:2005db} can be reproduced strongly suggests that the detailed structure of the spatial hypersurfaces is not important; it is the behaviour of the number of time-like edges $L_n$ which is crucial. To determine the spectral dimension on the multigraphs it is sufficient to know the volume growth and the resistance behaviour  reflected in the assumptions \eqref{assumption_i}-\eqref{assumption_iii}. These are all motivated from robust results in lower-dimensional studies \cite{Giasemidis:2012rf, Durhuus:2009sm}  but it is a non-trivial result that the continuum limit exists and that one can perform it exactly to obtain the return probability density \eqref{pav-scale} and show that there is a scale dependent spectral dimension varying from four at large scales to two at small scales.

These results show that the intuition about random walks on sliced graphs described earlier leads to a consistent picture in which the computer observations of a scale dependent spectral dimension can be related to the relatively simple question of the distribution of time-like edges in the CDT providing evidence that they could be a real continuum physical phenomenon rather  than a consequence of finite size effects. 

Having obtained a successful description for the numerical simulations of four-dimensional CDT, we proceeded by applying this formalism to three dimensions. The reason is that three-dimensional gravity might be simpler, but not trivial, and could become a useful field in the study of toy models of quantum gravity. In this case our results do not indicate an immediate agreement with the proposed fits to Monte Carlo output. After analysing the numerical data, we argued that the ``multigraph'' fit is consistent with computer simulations.  

Next, we discussed how the phenomenon of dynamical reduction is observed in other approaches to quantum gravity, most notably the asymptotic-safety scenario and Ho\v{r}ava-Lifshitz gravity. This common feature is the starting point for further exploration of the potential relations among these models, which appear to have deeper connections with CDT. Although there is agreement in four dimensions, the three-dimensional quantum gravity serves as a toy model to investigate the differences between those approaches. The three-dimensional model also gave us the opportunity to explore the relation to  Ho\v{r}ava-Lifshitz gravity. In addition, we stated that either asymptotic safety or Ho\v{r}ava-Lifshitz might serve as the continuum counterpart of CDT. This depends on whether the second-order transition point of CDT is isotropic or anisotropic respectively, as already discussed in chapter \ref{motivation}. We reported further indications of a running effective dimension coming from fundamentally different proposals for quantum geometry, where  a new description for the space-time is needed. 

Despite their differences, the approaches beyond CDT considered here follow the same strategy in computing  the spectral dimension. The first principles that characterise every model modify accordingly the Laplace-Beltrami operator (see for example equations \eqref{F(k2)}, \eqref{diffusion_eq_horava2} and \eqref{multifrac_dif_eq}) or the initial conditions which lead to a modified diffusion equation. The test-particle diffuses on an (averaged) effective geometry, which represents the quantum space-time. Among these models, some report plateaux of constant value of the spectral dimension and values less than 2 at some (intermediate) energy scale. 

These scenarios are in contrast to the diffusion process as considered in CDT and in multigraph approximation at two levels; first CDT does not single out only one representative averaged geometry, but considers diffusion on all possible configurations, following the ``sum over all histories'' prescription of quantum mechanics. In other words, the diffusion process in CDT and the multigraph model is a regular random walk on fractal geometry, in contrast to other approaches which consider anomalous diffusion process on smooth flat averaged geometries. Second, the modification of the diffusion equation might result in pathologies. In particular, the authors in \cite{Calcagni:2013vsa} showed explicitly that the solution to these modified diffusion equations is not positive semi-definite and the heat trace looses its probabilistic interpretation
\footnote{To remedy this, the authors introduced a new set of diffusion equations which restore the positivity of the  probability density without changing the profiles of the spectral dimension. The new modified diffusion equations include either non-linear diffusion time or non-trivial source terms and are constructed subject to the physical principles of each approach.}. 
It is important to underline that the multigraph approximation and the results coming from it do not suffer from this problem because we consider regular random walks on discrete geometries and then define the continuum limit. For these two reasons, we believe that the random walk on continuum multigraph ensembles captures the complete physical content of diffusion on the quantum geometry, because it is well-defined, from the mathematical point of view, and probes the entire quantum regime and its dynamics. 

Finally, the interpretation of the CDT-data differs among diverse approaches; the interpolation function in CDT is taken to be monotonically increasing from 2 to 4, without any intermediate plateaux. This interpretation has also been followed in our methods so far.

\end{chapter}

\clearemptydoublepage
\begin{chapter}{Summary and Outlook}
\label{conclusions}

Quantisation of gravity has led to significant advances in theoretical physics. The early attempts at quantisation unveiled that the theory is perturbatively ill-defined in four dimensions. To resolve the problem physicists have been considering several methods. One way to approach the problem in a unified manner is by embodying gravity in a extended theory with enlarged symmetry, e.g. superstring theory. Such treatment modifies the number of space-time dimensions. Other approaches attempt to define quantum gravity non-perturbatively exploiting conventional tools of quantum field theory. In these approaches dimensionality of space-time is not constant but varies dynamically. Both cases indicate that the number of space-time dimensions seems to be a crucial parameter in quantising the theory. 

A non-perturbative way to define the gravitational path-integral is through a lattice regularisation. This line of research has led the CDT approach to quantum gravity to be considered as a reliable model.  One of the observables that can be defined is the effective dimensionality of the fractal geometries that arise at the quantum regime. The spectral dimension is such an effective measure. Computer simulations of four-dimensional CDT unveiled an intriguing result, the spectral dimension of space-time varies dynamically from 4 in the classical limit to 2 in the quantum regime. However the interpretation of the numerical data is under constant debate. Despite the derivation of similar results from other approaches to quantum gravity, an analytical model was needed to systematically understand the mechanism of dimensional reduction within the CDT line of research. The main objective of this thesis is to present such an analytic approach.

We started, in chapter \ref{motivation}, by reviewing the CDT approach to quantum gravity and introducing the phenomenon of scale dependent spectral dimension. To accomplish our goal we use the (unconventional for quantum gravity research) toolbox of random infinite graphs. In chapter \ref{graphs} we introduced the relevant definitions and argued why the theory of random infinite graphs is essential in discretised models of quantum gravity. The generating function technique for the determination of the spectral dimension is the principal method in our research. Further, we introduced the Hausdorff dimension and the notion of graph resistance, which both encode further characteristics of graphs. We also discussed the relationship between tree graphs and branching processes and elaborated on the theoretical aspects of the latter. We concluded this mathematical introduction by underlining the importance of random trees with a unique infinite spine, i.e. the critical generic Galton-Watson tree conditioned on non-extinction.

In chapter \ref{combs} we exploited the simplicity of random combs to investigate the continuum limit of the generating function of return probabilities. To accommodate the varying spectral dimension we introduced a characteristic length scale in the ensemble measure. Intuitively, random walks shorter than this scale experience a different graph structure and thus a different spectral dimension than longer random walks. We formally defined the notion of short and long walk lengths and then verified analytically that there are indeed ensembles of random geometries for which $i)$ the continuum limit of the generating function can be rigorously defined and $ii)$ a scale dependent spectral dimension emerges through the continuum formalism. We examined three comb ensembles and all of them exhibit the reduction of the spectral dimension from a value greater than one at large distances to one at short scales. 
The simplicity of random combs enabled us to study the mathematical subtleties in examining the continuum limit of discrete random geometries and defining a running spectral dimension of continuum graphs. However, they lack physical content and are not sufficient to describe the dynamical dimensional reduction observed in the computer simulations. 

It becomes evident that we should go beyond random combs to more realistic graphs which have a dynamical/local growth law. One possibility would to be to study the generic random tree (GRT) because of its bijection to uniform infinite causal triangulations (UICT). 
However the fractal structure of trees is essentially different from that of causal triangulations, which is imprinted in the different values of spectral dimension. This signals that GRT does not encode the right degrees of freedom to describe the spectral dimension of CDT. Another useful mapping is the multigraph approximation, which we studied extensively in chapter \ref{multigraphs}. We argued that these radial reduced ensembles play a significant role in the proof of the spectral dimension of the UICT and there is accumulated evidence that both ensembles share the same value for the spectral dimension. We started with the random recurrent multigraph, because its measure is subject to analytical control. The offspring probability of the generalised uniform Galton-Watson process induces a measure on the GRT, generalised UICT and random recurrent multigraph too. We applied the continuum formalism to the recurrent multigraph and proved that the spectral dimension varies from 2 at large scales to 1 at small distances. We interpreted this result as the dynamical dimensional reduction of two-dimensional CDT which has an absolute value of the curvature term in the action.

Another advantage of the multigraph approximation is that it is valid for higher-dimensional CDT as well. In this case the random walk becomes non-recurrent. Before applying the continuum formalism to non-recurrent multigraphs we explored their properties because little was known about them. The main result of this study is given by Theorem \ref{Theorem:dS-dH}, which tells us that the spectral dimension is only related to the volume growth, through the Hausdorff dimension, and the resistance growth, through the anomalous exponent of graph resistance, $\rho$. In addition, the spectral dimension of transient multigraphs equals the Hausdorff dimension if and only if $\rho=0$. Higher-dimensional CDT is not subject to analytical results and the measure of the corresponding multigraph ensembles cannot be determined in contrast to two-dimensions. However, in order to apply the continuum formalism we need to specify a few characteristics of the ensemble. We argued why these characteristics must be related to the volume and resistance growth. The lessons and experience from the UICT guided us to adopt the exact form of our assumptions. A key point in our argument is that the fluctuations of spatial hyper-surfaces are bounded from above similarly to the UICT, which is analytically proven, and as observed numerically in computer simulations of higher-dimensional CDT. Having gained information about the ensemble of radially reduced four-dimensional CDT we applied the continuum limit and found that the spectral dimension varies from 4 at long distances to 2 at short scales. 

In chapter \ref{physics}, we focused on the physical implications of our methods. By applying  a Tauberian theorem we were able to determine the ensemble average of  return probability of discrete random walks. Scaling the latter and taking the continuum limit we found the ensemble average return probability density of continuous diffusion, expression \eqref{pav-scale}, which has the same functional form with the one conjectured purely from numerical data. This is the main result of this chapter and one of the main conclusions of the thesis. In other words, our intuition about the multigraph approximation  and its validity in higher dimensions has been substantiated. The next step was to study three-dimensional CDT. We adjusted our assumptions to the three-dimensional model and repeated our formalism. The dynamical dimensional reduction from 3 in the classical regime to 2 in the UV is consistent with numerical results. We found the functional form of the reduction of the spectral dimension, which does not agree with the best fits on the numerical data reported in the literature. We applied the functional form derived from the multigraph approximation to the data-points. The fit is qualitatively good and the residuals are within the error-bars of the data-points. This result increases our confidence in the validity of the multigraph approximation and of our assumptions. Put differently, the agreement with the numerical results, assures us that the multigraph approximation carries those degrees of freedom which are responsible for the dynamical reduction of the spectral dimension observed in computer simulations of CDT. On the other hand, the approximation cannot replace the dynamics of the full CDT, since many (spatial) degrees of freedom have been integrated out. However it is a good approximation to better understand the mechanism behind the reduction of the spectral dimension as observed in computer simulations.

An intriguing fact about the dynamical dimensional reduction is that it has been verified by other approaches to quantum gravity too. The physical reason for this similarity is that the reduction of the effective dimension might account for the regularisation of the theory in the UV limit. So, it is reasonable in some sense that non-perturbative approaches with no extra symmetries or degrees of freedom present such a mechanism. We introduced the underlying elements of some approaches. We also investigated the potential similarities and differences between these approaches and the CDT and/or the multigraph model. We commented that our formalism is well-defined and probes the full quantum regime and its dynamics in contrast to most continuum approaches which consider a modified diffusion process, which might lead to an ill-defined probabilistic interpretation, on a single averaged flat geometry. 
\vspace{4mm}

Our methods and results give rise to new research projects in two main directions. One direction is the field of mathematical physics. Within this field there is great interest and activity on the fractal properties of random (infinite) graphs and surfaces. For example, mathematicians use probabilistic techniques to determine the spectral dimension of trees, e.g. the invasion percolation tree \cite{Omer:2008ip} or the incipient cluster on trees \cite{Barlow:2006rw}. The generating function method, defined via \eqref{ds_via_Q}, might be applied to these problems that would be of interest to mathematicians too. Beyond the spectral dimension one may also explore the properties of the causal triangulations which are in bijection with these new types of random trees.

Our results might be useful and find applications within the community of quantum gravity too. In particular, while our study only requires the averages of functions of $L_n$ it indicates that further light may be shed on the mechanisms of dynamical dimensional reduction in four-dimensional CDT by investigating the distribution and correlations of the $L_n$ in the numerical simulations. Understanding these distributions would help towards an analytical solution of the full four-dimensional model. The analytical control on the numerical data might also open up new ways to bridge CDT with continuum non-perturbative approaches. Although we have made a step towards this direction there is much more to be done. As we have already commented, one can methodically study the relation between asymptotic safety and models of discretised causal quantum gravity by studying anomalous random walks on ``flat'' multigraphs. Another, more ambitious, research program would be to study the cosmological consequences of the dynamical reduction of the effective space-time dimension. Inspired by the fact that this phenomenon is shared by a plethora of approaches to quantum gravity, it is interesting to investigate the consequences of this mechanism on quantum cosmology and particularly on inflationary models.

\vspace{4mm}

Quantum gravity is an inconclusive and fascinating field of study, which enhances physics with new tools from mathematics and gives rise to diverse and/or similar theoretical phenomena. This latter apparent paradox is at the core of scientific progress. To this respect, we conclude this thesis by quoting P. Bergmann \cite{Rovelli:2004qg}: 
``\textit{In view of the great difficulties of this program, I consider it a very positive thing that so many different approaches are being brought to bear on the problem. To be sure, the approaches, we hope, will converge to one goal.\textit}"

\end{chapter}
\clearemptydoublepage

\appendix
\begin{chapter}{}
\label{Appendix_combs}
\vspace{-1cm}
\section{Generating functions of basic combs}
The first part of the appendix consists of supplementary material to chapter \ref{combs}. For further details we refer the reader to \cite{Durhuus:2005fq}.

We begin with the proof of \eqref{mG_bounds}. We first aim to write the modified two-point function, $G_C^{(0)}(x;n)$, as a product of first return generating functions of random walks restricted to not reach vertex $s_n$, similarly to expression \eqref{G_decomp_P}. 
We denote the set of walks that contribute to $G_C^{(0)}(x;n)$ by $\Omega ^{(0)}$. 
We now decompose $\Omega ^{(0)}$ into a sequence of $n-1$ random walks $\Omega ^{(0)}_k, \ k=1,\ldots, n-1$, which go at most as far as vertex $s_{n-1}$, and a final step from vertex $s_{n-1}$ to $s_n$. $\Omega ^{(0)}_k, \ k=1,\ldots, n-1$, is a random walk from $s_{k-1}$ to $s_{k}$ which is identical to the part of $\Omega ^{(0)}$ which leaves vertex $s_{k-1}$ for the last time going to $s_{k}$, returns to $s_k$ multiple times until it leaves $s_k$ for the last time. Adding a last step to the random walk $\Omega ^{(0)}_k, \ k=1, \ldots, n-1$, back to the vertex $s_{k-1}$, we reconstruct a random walk which returns to the vertex $s_{k-1}$ for the first time without visiting vertex $s_{n}$. This is equivalent to the first return random walk, $\Omega^{<n-k-1}$, which starts from the root of the truncated comb $C_{k-1}$ and does not reach vertex $n-k-1$ of $C_{k-1}$. We denote the corresponding generating function $P_{C_{k-1}}^{(\Omega^{<n-k-1})}(x)$. The extra step contributes a factor of $\sqrt{1-x}/\sigma(k)$, hence we divide out by the same amount. Finally, when the random walk leaves vertex $s_{n-1}$ for the last time has only one possibility, to step to vertex $s_{n}$, contributing $\sqrt{1-x}/\sigma(n-1)$ to $G^{(0)}(x;n)$. Therefore we write
\bea \label{G0_decomp_P}
G^{(0)}(x;n) &=&  \frac{\sqrt{1-x}}{\sigma(n-1)}\prod _{k=0}^{n-2} \frac{P_{C_k}^{(\Omega^{<n-k})}(x)/\sigma(k)}{(1-x)^{1/2}/\sigma(k+1)} \nn\\
                   &=& (1-x)^{-(n-2)/2}\prod_{k=0}^{n-2}P_{C_k}^{(\Omega^{<n-k})}(x).
\eea

Next we use a slight modification of Lemma \ref{MonoLem2} which states that the generating function $P_{C_k}^{(\Omega^{<n-k})}(x)$ is a decreasing function of the length of the teeth, $\ell_j, j\geq1$,~i.e.
\beq \label{P_restr_bounds}
P_{*_k}^{(\Omega^{<n-k})}(x) \leq P_{C_k}^{(\Omega^{<n-k})}(x) \leq P_{\infty_k}^{(\Omega^{<n-k})}(x).
\eeq
Considering the decomposition \eqref{G0_decomp_P} and the bounds \eqref{P_restr_bounds} we end up with \eqref{mG_bounds}. 

Further generating functions of the half-line ($C=\infty$) are presented in Appendix \ref{Appendix_multigraphs} derived in the spirit of multigraphs. Applying \eqref{PhalflineLf} to \eqref{G0_decomp_P}, we also derive the modified generating function of the half-line 
\beq \label{GinfL}
G_\infty^{(0)}(x;n)=(1-x)^{n/2}\frac{2\sqrt{x}}{(1+\sqrt{x})^{n}-(1-\sqrt{x})^{n}} 
\eeq
which is a strictly decreasing function of $n$.    

We also need to determine the first return generating function for the comb with teeth of length $\ell$ equally spaced at intervals of $n$, denoted by $P_{\ell, *n} (x)$. The result is obtained by decomposing the walks contributing to $P_{\ell, *n} (x)$ into two sets; $\Omega_1 \equiv \Omega^{<n} $ which consists of walks which do not reach vertex $s_n$ and $\Omega_2$  which consists those walks that move beyond $s_n$,
\beq
P_{\ell, *n} (x) = P_\infty^{(\Omega^{<n})}(x) + \frac{\left(G_\infty^{(0)}(x;n)\right)^2}{3 - P_{\ell}(x) - P_{\ell, *n} (x)  -  P_\infty^{(\Omega^{<n})}(x)}.
\eeq
Solving with respect to $P_{\ell, *n} (x)\leq1$, it yields
\bea
P_{\ell, *n} (x)= \frac{3-P_{\ell}(x)}{2} - \half \left[\left(3-P_{\ell}(x)-2P_\infty^{(\Omega^{<n})}(x)\right)^2-4G_\infty^{(0)}(x;n)^2\right]^\half\label{Pelln}
\eea
and $P_{*n}(x)$ is obtained by setting $\ell=\infty$ in this formula. Also $P_{\ell, *n} (x)$ is a strictly decreasing function of $\ell$ and increasing function of $n$, viewed as continuous positive semi-definite real variables.

Finally, the continuum limit of the following generating functions under the scaling $x=a\xi$ and $\Lambda = a^{-\Delta}\lambda^{\Delta}$ is essential for obtaining $\tilde Q(\xi ; \lambda)$ and is given by
\begin{align} 
\label{G0_cont}
\lim_{a\to 0} \; \; & a^{-\half}G_\infty^{(0)}\left(x=a\xi; n=a^{-\half}\rho_1 \right)=\xi^\half \mathrm{cosech}\left (\rho_1\xi^\half \right), \\
\label{1-P_cont}
\lim_{a\to 0} \; \; & a^{-\half}\left(1-P_{(\ell=a^{-\half}\rho_2), *(n=a^{-\half}\rho_3)} (x=a\xi)\right)=-\half\xi^\half\tanh(\rho_2\xi^\half) \nn\\ 
&\qquad \quad  +\half\xi^\half\left[4+4\tanh\rho_2\xi^\half\coth\rho_3\xi^\half+\tanh^2\rho_2\xi^\half\right]^\half,
\end{align}
where $\ell, n$ are functions of $x, \La$ and $\rho_i, \ i=1,2,3$, are functions of $\xi, \lambda$, depending on the choice of $\tilde H, \tilde D$ and $\tilde k$.  
\end{chapter}

\begin{chapter}{}
\label{Appendix_multigraphs}

\vspace{-1cm}
\section{Basic solvable examples}
\label{Solvable} 

We  give two exactly solvable examples to illustrate some of the features derived in chapter \ref{multigraphs}.

\subsection{Recurrent case: spectral dimension of the half line}
First we present some results for the half line, known in the literature \cite{Durhuus:2005fq}, from the multigraph point of view, that is, we consider 
it as a special multigraph with 
$\{L_k=1,k=0,1,...\}$. As we see in Lemma \ref{monotonicity} the half line plays an important role in providing certain upper bounds for the return probability on any multigraph.

Since the random walk on the half line has to leave the root with probability one and otherwise can move to either neighbour  with probability $1/2$  the generating function for the first return probability \eqref{P_recurr_mgraph} satisfies
\beq \label{Phalfline}
P_\infty(x) =\frac{1-x}{2-P_\infty(x)}.
\eeq
which agrees with \eqref{P_half_line}, as expected. From this we get that $P_{\infty}(x) = 1- \sqrt{x}$ and $Q_{\infty}(x) = x^{-1/2}$,  
which diverges as $x\!\to\! 0$ 
and the spectral dimension is $d_s=1$. From the multigraph point of view, one also has the trivial result that $|B(N)|=\sum^N_{k=0} L_k=N+1$ and thus that $d_s=\dha$.

If  instead of the half line we consider a line segment of length $\ell$, then \eqref{P_recurr_mgraph} becomes
\beq \label{PhalflineL}
P_\ell(x) =\frac{1-x}{2-P_{\ell-1}(x)}
\eeq
for $\ell\geq 1$ and $P_0(x)=1$. This relation can be iterated to give 
\cite{Durhuus:2005fq}
\beq \label{Psegment}
P_\ell(x) =1- \sqrt{x} \frac{(1+\sqrt{x})^\ell - (1-\sqrt{x})^\ell}{(1+\sqrt{x})^\ell + (1-\sqrt{x})^\ell}.
\eeq
Similarly the contribution to the first return probability on the full half line from walks that do not extend beyond $N$ is 
\beq \label{PhalflineLf}
P^{(\Omega^{<N})}_{\infty}(x) =1- \sqrt{x} \frac{(1+\sqrt{x})^N +(1-\sqrt{x})^N}{(1+\sqrt{x})^N - (1-\sqrt{x})^N}.
\eeq

\subsection{Non-recurrent case: spectral dimension of a multigraph with $L_k\sim k^2$}
As an explicit example of a non-recurrent graph we consider a (fixed) multigraph $M=\{L_k,k=0,1,...\}$ with $L_k =(k +1)(k+2)$. This particular multigraph is rather special as we will see in the following. Note that the probability for a random walker at vertex $k+1$ of $M$ to go forward is 
\bea
p_{k+1} = \frac{L_{k+1}}{L_{k} +L_{k+1}}= \frac{k+3}{2(k+2)}
\eea
and the probability of returning from $k+2$ to $k+1$,
\bea
q_{k+1} =1- p_{k+2} = \frac{L_{k+1}}{L_{k+1} +L_{k+2}}= \frac{k+2}{2(k+3)}.
\eea
We observe that 
 \bea
p_k q_k = p_k (1-p_{k+1}) =\frac{k+2}{2(k+1)}\frac{k+1}{2(k+2)} = \frac{1}{4} 
 \eea
as was the case for the half line. Hence we can relate the first return generating function for a random walker on the multigraph $M_k$  to that on half line by just compensating for the last step of the random walk which on the half line would occur with probability $1/2$ while on $M_k$ it occurs with probability $q_k$ leading to
 \beq
P_{M_k}(x) = 2 q_k P_{\infty}(x).
 \eeq
It follows that
\bea
\label{etak2}
\eta _{M_k}(x) \equiv \frac{Q_{M_k(x)}}{L_k}= \frac{1}{L_k}\frac{1}{1-P_{M_k}(x) } =\frac{1}{(k+1)(1+(k+1)\sqrt{x})}.
\eea

We note that $\eta_k(0)=1/(k+1)$ is finite which shows that the random walk is non-recurrent. The first derivative is
\beq
-\eta _{M_0}'(x) = \frac{1}{2(1+\sqrt{x})^2 \sqrt{x}}\sim x^{-1/2} \quad \textrm{as}\quad x\to 0
\eeq
and hence the spectral dimension is $d_s=3$ while  $|B(N)|=\sum^N_{k=0} L_k\simeq N^3$  and thus $d_s=\dha$. It is straightforward to check that the resistance exponent $\rho=0$.

\section{Simple results for Lemma \ref{constraints} }\label{Simple}
Here we outline the proofs of Lemma \ref{constraints} assuming unless otherwise stated that $\dha$ exists and  $N>N_0$.
To prove that $\rho\ge 0$ note that 
\bea  \eta_{N}(0)	=\sum_{n=N}^{\infty} \frac{1}{L_n}
				>\sum_{n=N}^{2N} \frac{1}{L_n}
				>\frac{(2N-N)^2}{\sum_{n=N}^{2N} L_n}
				\sim {\rm const}\,N^{2-\dha},   \label{etalwrbnd} \label{A:result1}\eea
where we have used Jensen's inequality. To prove that $\delta'\ge 0$ note that 
\bea  B_N^{(2)}&=&\left(\sum_{k=0}^{N_0}+\sum_{k>N_0}^{N}\right) L_k\sum_{m=k}^\infty \frac{1}{L_m}\sum_{n=k+1}^\infty \frac{1}{L_n}\nonumber\\	
 &=& \textrm{const} + \sum_{k>N_0}^{N} L_k\sum_{m=k}^\infty \frac{1}{L_m}\sum_{n=k+1}^\infty \frac{1}{L_n} \eea
 and then use \eqref{etalwrbnd}; similar arguments show that $\gamma\ge0$ and $\delta\ge0$.
 Using Cauchy-Schwarz inequality
\beq
\left(\sum_{k=0}^{N-1} L_k\eta_{k+1}(0) \right)^2 < N^{\dha}\sum_{k=0}^{N-1} L_k\eta_{k+1}(0) ^2\eeq
which gives $\delta'\ge 2\gamma$.

To establish the relation between the other exponents and $\rho$, assuming it exists, we need the inequalities
%
that for any graph $M\in \C{M}$
\bea  
\! \! \! \! \! \! \! \! \! \! &(i)&\! \! \! \! \! \! \! \! \quad \eta_{N+1}(0) \left |B_N\right |< B_N^{(1)} < B_{N_1}^{(1)}+ \! \! \! \sum_{r=1}^{\lceil \log_2\frac{N}{N_0}\rceil }  \! \!  \eta_{\lceil N/2^r \rceil }(0)\left (\left |B_{\lceil N/2^{r-1}\rceil}\right | - \left | B_{\lceil N/2^r \rceil}\right |\right ) \label{B1:upper}   \\
\! \! \! \! \! \! \! \! \! \! &(ii)&\! \! \! \! \! \! \! \! \quad \frac{\left(B_N^{(1)}\right)^2}{\left | B_N \right |}<B_N^{(2)} < B_{N_1}^{(2)}+ \! \! \! \sum_{r=1}^{\lceil \log_2\frac{N}{N_0}\rceil }   \eta_{\lceil N/2^r \rceil}(0)(B_{\lceil N/2^{r-1}\rceil }^{(1)} -B_{\lceil  N/2^r \rceil}^{(1)})  \label{B2:upper}
 \eea
for $N_0\le N_1\le 2 N_0 < N$.
The proofs exploit the fact that $\eta_k(0)$ is a decreasing sequence. For example to prove $(i)$  we have the lower bound
\bea 
B_N^{(1)}>\eta_{N+1}(0)\sum_{k=0}^{N}L_k=\eta_{N+1}(0) | B_N |.
\eea
An upper bound is given by
%
\bea B_N^{(1)}	&=&\sum_{k=0}^{\lceil N/2 \rceil} L_k\eta_{k+1}(0)+\sum_{k=\lceil N/2 \rceil+1}^N  L_k\eta_{k+1}(0)\nn \\
			&<&B_{\lceil N/2 \rceil}^{(1)}+\eta_{\lceil N/2 \rceil}(0)\left (\left | B_N \right | - \left |B_{\lceil N/2\rceil}\right | \right )
\eea
and iterating to get \eqref{B1:upper}.  The proof of (ii) proceeds analogously.
Assuming that  $\rho$ exists then it follows from \eqref{B1:upper} and \eqref{B2:upper} respectively that
\beq \gamma=\rho,\quad \delta'=2\rho.\eeq
By definition $\delta\le \delta'$ and noting
 that 
\bea   \overline B_N^{(2)} 		&>& \sum_{k=0}^{\lfloor N/2 \rfloor }L_k\left(\sum_{n>\lfloor N/2\rfloor}^N \frac{1}{L_n}\right)^2\\
						&=&c \left(\frac{N}{2}\right)^\dha\left (\eta_{\lfloor N/2\rfloor}(0)-\eta_N(0)\right)^2\\
						&=& c' N^{4-\dha+2\rho}\eea
we also have  $\delta\ge2\rho$ so conclude that $\delta=2\rho$.

\end{chapter}
\clearemptydoublepage
\addcontentsline{toc}{chapter}{Bibliography}
\bibliography{random_graphs,spectral_dimension}        
\bibliographystyle{utphys}                                                     
\clearemptydoublepage

\newpage\thispagestyle{empty}\beq\phantom{page blanche}\nn\eeq
\end{document}